\providecommand{\lambdamin}{\lambda_{\mathrm{min}}}
\newcommand\MYcurrentlabel{xxx}
\newcommand{\MYstore}[2]{%
  \global\expandafter \protected@xdef \csname MYMEMORY #1 \endcsname{#2}%
}
\newcommand{\MYload}[1]{%
  \csname MYMEMORY #1 \endcsname%
}
\newcommand{\MYnewlabel}[1]{%
  \renewcommand\MYcurrentlabel{#1}%
  \MYoldlabel{#1}%
}
\newcommand{\MYdummylabel}[1]{}
\newcommand{\torestate}[1]{%
  \let\MYoldlabel\label%
  \let\label\MYnewlabel%
  #1%
  \MYstore{\MYcurrentlabel}{#1}%
  \let\label\MYoldlabel%
}
\newcommand{\restatetheorem}[1]{%
  \let\MYoldlabel\label
  \let\label\MYdummylabel
  \begin{theorem*}[Restatement of \cref{#1}]
    \MYload{#1}
  \end{theorem*}
  \let\label\MYoldlabel
}
\newcommand{\restatecorollary}[1]{%
  \let\MYoldlabel\label
  \let\label\MYdummylabel
  \begin{corollary*}[Restatement of \cref{#1}]
    \MYload{#1}
  \end{corollary*}
  \let\label\MYoldlabel
}
\newcommand{\restatelemma}[1]{%
  \let\MYoldlabel\label
  \let\label\MYdummylabel
  \begin{lemma*}[Restatement of \cref{#1}]
    \MYload{#1}
  \end{lemma*}
  \let\label\MYoldlabel
}
\newcommand{\restateprop}[1]{%
  \let\MYoldlabel\label
  \let\label\MYdummylabel
  \begin{proposition*}[Restatement of \cref{#1}]
    \MYload{#1}
  \end{proposition*}
  \let\label\MYoldlabel
}
\newcommand{\restatefact}[1]{%
  \let\MYoldlabel\label
  \let\label\MYdummylabel
  \begin{fact*}[Restatement of \cref{#1}]
    \MYload{#1}
  \end{fact*}
  \let\label\MYoldlabel
}
\newcommand{\restatedefinition}[1]{%
  \let\MYoldlabel\label
  \let\label\MYdummylabel
  \begin{definition*}[Restatement of \cref{#1}]
    \MYload{#1}
  \end{definition*}
  \let\label\MYoldlabel
}
\newcommand{\restate}[1]{%
  \let\MYoldlabel\label
  \let\label\MYdummylabel
  \MYload{#1}
  \let\label\MYoldlabel
}
\crefname{algocf}{Algorithm}{Algorithms}
\Crefname{algocf}{Algorithm}{Algorithms}
\newtheorem{theorem}{Theorem}[section]
\newtheorem{definition}[theorem]{Definition}
\newtheorem{lemma}[theorem]{Lemma}
\newtheorem{corollary}[theorem]{Corollary}
\newtheorem{claim}[theorem]{Claim}
\newtheorem{remark}[theorem]{Remark}
\newtheorem{fact}[theorem]{Fact}
\newtheorem*{theorem*}{Theorem}
\newtheorem*{lemma*}{Lemma}
\def\thm@space@setup{%
  \thm@preskip=\parskip \thm@postskip=0pt
}
\let\mathbb\varmathbb
\newcommand{\Paren}[1]{\left(#1\right)}
\newcommand{\Brac}[1]{\left[#1\right]}
\newcommand{\abs}[1]{\lvert#1\rvert}
\newcommand{\Abs}[1]{\left\lvert#1\right\rvert}
\newcommand{\card}[1]{\lvert#1\rvert}
\newcommand{\Card}[1]{\left\lvert#1\right\rvert}
\newcommand{\set}[1]{\{#1\}}
\newcommand{\Set}[1]{\left\{#1\right\}}
\newcommand{\norm}[1]{\lVert#1\rVert}
\newcommand{\Norm}[1]{\left\lVert#1\right\rVert}
\newcommand{\iprod}[1]{\langle#1\rangle}
\newcommand{\Iprod}[1]{\left\langle#1\right\rangle}
\newcommand{\Esymb}{\mathbb{E}}
\newcommand{\Psymb}{\mathbb{P}}
\DeclareMathOperator*{\E}{\Esymb}
\DeclareMathOperator*{\ProbOp}{\Psymb}
\renewcommand{\Pr}{\ProbOp}
\newcommand{\Mid}{\nonscript\;\middle\vert\nonscript\;}
\DeclareMathOperator*{\argmax}{\arg\!\max}
\newcommand{\R}{{\mathbb R}}
\newcommand{\e}{\varepsilon}
\newcommand{\seteq}{\mathrel{\mathop:}=}
\newcommand{\1}{\mathbf{1}}
\newcommand{\Tr}{\mathrm{Tr}}
\newcommand{\Var}{\mathrm{Var}}
\newcommand{\poly}{\mathrm{poly}}
\newcommand{\mper}{\,.}
\newcommand{\mcom}{\,,}
\newcommand{\Id}{\mathbb{I}}
\newcommand{\rank}{\operatorname{rank}}
\newcommand{\cA}{\mathcal A}
\newcommand{\from}{\colon}
\newcommand{\tcl}{$3$-coloring\xspace}
\newcommand{\kcl}{$k$-coloring\xspace}
\newcommand{\avg}{\operatorname{avg}}
\newcommand{\ALG}{\mathcal{A}}%
\newcommand{\AVG}{%
  \@ifnextchar\bgroup{\AVG@i}{\operatorname{AVG}}%
}
\newcommand{\AVG@i}[1]{\AVG@ii{#1}}
\newcommand{\AVG@ii}[2]{\operatorname{AVG}[#1, #2]}
\newcommand{\AVGT}{%
  \@ifnextchar\bgroup{\AVGT@i}{\operatorname{AVG}}%
}
\newcommand{\AVGT@i}[1]{\AVGT@ii{#1}}
\newcommand{\AVGT@ii}[2]{\operatorname{AVG}[#1, #2]}
\newcommand{\nAVG}{%
  \@ifnextchar\bgroup{\nAVG@i}{\widetilde{\operatorname{AVG}}}%
}
\newcommand{\nAVG@i}[1]{\nAVG@ii{#1}}
\newcommand{\nAVG@ii}[2]{\widetilde{\operatorname{AVG}}[#1, #2]}
\newcommand{\VAR}{%
  \@ifnextchar\bgroup{\VAR@i}{\operatorname{VAR}}%
}
\newcommand{\VAR@i}[1]{\VAR@ii{#1}}
\newcommand{\VAR@ii}[2]{\operatorname{VAR}[#1, #2]}
\newcommand{\VART}{%
  \@ifnextchar\bgroup{\VART@i}{\operatorname{VAR}}%
}
\newcommand{\VART@i}[1]{\VART@ii{#1}}
\newcommand{\VART@ii}[2]{\operatorname{VAR}[#1, #2]}
\newcommand{\D}[2]{d_{#1#2}}
\renewcommand{\Id}{\mathbf{I}}
\newcommand{\Allones}{\mathbf{J}}
\newcommand{\btw}[2]{\Brac{#1\mathclose{:}#2}}
\newcommand{\whp}{w.h.p.\ }
\newcommand{\Ind}{\mathds{1}}
\begin{document}

\newcommand{\FormatAuthor}[3]{
\begin{tabular}{c}
#1 \\ {\small\texttt{#2}} \\ {\small #3}
\end{tabular}
}
\title{Finding Colorings in One-Sided Expanders %
}

\author{
\begin{tabular}[h!]{cc}
  \FormatAuthor{Rares-Darius Buhai}{rares.buhai@inf.ethz.ch}{ETH Zurich}
  &
  \FormatAuthor{Yiding Hua}{yiding.hua@inf.ethz.ch}{ETH Zurich}\\\\
  \FormatAuthor{David Steurer}{dsteurer@inf.ethz.ch}{ETH Zurich}
  &
  \FormatAuthor{Andor Vári-Kakas}{avarikakas@ethz.ch}{ETH Zurich}
\end{tabular}
} %
\date{\today}

\maketitle
\thispagestyle{empty}

\begin{abstract}
  We establish new algorithmic guarantees with matching hardness results for coloring and independent set problems in one-sided expanders and related classes of graphs.
For example, given a \(3\)-colorable regular one-sided expander, we compute in polynomial time either an independent set of relative size at least \(\tfrac 12 - o(1)\)
or a proper \(3\)-coloring for all but an \(o(1)\) fraction of the vertices,
where \(o(1)\) stands for a function that tends to \(0\) with the second largest eigenvalue of the normalized adjacency matrix.
This result improves on recent seminal work of Bafna, Hsieh, and Kothari (STOC 2025) developing an algorithm that efficiently finds independent sets of relative size at least \(0.01\) in such graphs.
We also obtain an efficient \(1.6667\)-factor approximation algorithm for VERTEX COVER in sufficiently strong regular one-sided expanders, improving over a previous \((2-\e)\)-factor approximation in such graphs for an unspecified constant \(\e>0\).

We propose a new stratification of $k$-COLORING in terms of \(k\)-by-\(k\) matrices akin to predicate sets for constraint satisfaction problems.
We prove that whenever this matrix has repeated rows, the corresponding coloring problem is NP-hard for one-sided expanders under the Unique Games Conjecture.
On the other hand, if this matrix has no repeated rows, our algorithms can solve the corresponding coloring problem on one-sided expanders in polynomial time.
When this \(k\)-by-\(k\) matrix has repeated rows, we furthermore characterize the maximum fraction of vertices on which a proper $k$-coloring can be found by polynomial-time algorithms under the Unique Games Conjecture.

As starting point for our algorithmic results, we show a property of graph spectra that, to the best of our knowledge, has not been observed before:
The number of negative eigenvalues smaller than \(-\tau\) is at most \(O(1/\tau^{4})\) times the number of eigenvalues larger than \(\tau^{2}/2\).

While this result allows us to bound the number of eigenvalues bounded away from \(0\) in one-sided spectral expanders, this property alone is insufficient for our algorithmic results.
For example, given a one-sided regular expander with a balanced $3$-coloring, we can efficiently find a $3$-coloring for all but a \(o(1)\) fraction of vertices.
At the same time, if we only know that the graph has a balanced $3$-coloring and a bounded number of significant eigenvalues, it is NP-hard under the Unique Games Conjecture to find a $3$-coloring for all but a \(0.1\) fraction of vertices.

\end{abstract}
\pagestyle{plain}
\setcounter{page}{1}

\clearpage
\tableofcontents{}
\thispagestyle{empty}
\clearpage

\section{Introduction}

Finding proper \(k\)-colorings in graphs for \(k\ge 3\) is among the first computational problems shown to be NP-hard \cite{MR4679190-Karp72}.
As a consequence, much research has focused on approximation algorithms for this problem and identifying assumptions on the input that allow us to solve the problem efficiently \cite{MR819128-83, MR1369207-94, MR1344544-Blum95, MR1439867-97, MR1484153-Alon97, MR1623197-98, MR1774844-00, MR1948749-01, MR2277147-06, MR2538841-09, MR2863244-Arora11, MR3536556-David16, MR3634492-17, kumar-louis-tulsiani, MR4764817-24}.

Most relevant to this paper, Blum and Spencer~\cite{MR1344544-Blum95} and Alon and Kahale~\cite{MR1484153-Alon97} studied an average-case setting in which a $k$-coloring is planted in a random graph,
and showed that polynomial-time algorithms can recover the planted $k$-coloring with high probability.
This result was subsequently derandomized by David and Feige~\cite{MR3536556-David16}:
They obtain the same guarantees if the $k$-coloring is planted \emph{at random} in a (two-sided) expander graph, where they require all but one eigenvalue of its normalized adjacency matrix to be close to \(0\).
Furthermore, they can recover a $k$-coloring for a $0.99$ fraction of the vertices even if the $k$-coloring is planted \emph{adversarially} in such a graph.

Kumar, Louis, and Tulsiani~\cite{kumar-louis-tulsiani} provide a different derandomization result:
given a graph that admits a $3$-coloring satisfying a particular pseudo-randomness condition, their algorithms finds a $3$-coloring for a $0.99$ fraction of the vertices in time $n^{O(t)}$, where $t$ is the number of eigenvalues of the graph smaller than $-0.51$.

\paragraph{Two-sided vs one-sided expansion.}

The results of Alon-Kahale and David-Feige crucially exploit the fact that both the positive and negative part of the spectrum of the graph are severely restricted.

Recent seminal work developed the first algorithms for one-sided expander graphs, where the negative part of the spectrum is completely unrestricted \cite{bafna-hsieh-kothari}.
The current work further develops our understanding of the computational landscape of coloring and independent set problems on one-sided expander graphs and provides answers to key questions left open by this prior work.

\subsection*{Results}

Our results concern the computational tractability of the coloring and independent-set problems in one-sided expanders and related graphs.
While many combinatorial problems are computationally easy on expander graphs (e.g. \cite{arora-khot-kolla-steurer-tulsiani-vishnoi}),
the complexity landscape on these graphs reveals unexpected diversity.

We will restrict our discussion to regular graphs unless explicitly stated otherwise.
Many of our results extend in a straightforward way to non-regular graphs as well by measuring the sizes of sets in terms of the sum of vertex degrees.
(See the discussion in \cref{sec:techniques}).

\paragraph{Algorithms for independent sets and vertex covers in one-sided expanders and related graphs.}

For \(\lambda\ge 0\), we say that a graph \(G\) is a \(\lambda\)-one-side-expander if the second largest (nonnegative) eigenvalue of its normalized adjacency matrix is at most \(\lambda\).
For the coloring and independent set problems, it is important to leave the negative part of the spectrum of the adjacency matrix unrestricted.
Graphs on \(n\) vertices can have independent sets of size at least \(\alpha n\) only if they have at least one eigenvalue at most \(-\tfrac{\alpha}{1-\alpha}\).

\begin{theorem}[Large independent sets in one-sided expanders, see \Cref{cor:independentsetrecovery} for full version]
  For every \(\lambda>0\), there exists a polynomial-time algorithm that given a \(n\)-vertex regular \(\lambda\)-one-sided expander containing an independent set of size at least \((\tfrac 1 2 - \gamma) \cdot n\) for \(\gamma\in [0,\tfrac 1 {10}]\),
  outputs an independent set of size at least \(\Paren{\tfrac 12-(5+o_{\lambda\to 0}(1))\cdot \gamma}\cdot n\).
\end{theorem}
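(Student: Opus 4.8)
The plan is to set this up as a semidefinite-programming rounding argument that exploits one-sided expansion. Given a $\lambda$-one-sided expander $G$ with an independent set $I$ of size $(\tfrac12-\gamma)n$, I would first write the standard Lovász-$\vartheta$-type SDP (or the basic independent-set SDP with vectors $v_i$ for each vertex and the constraint $\langle v_i,v_j\rangle = 0$ for edges $ij$, together with $\langle v_i, v_0\rangle = \|v_i\|^2$ for a unit handle vector $v_0$). The indicator vector of $I$ is a feasible integral solution of value $(\tfrac12-\gamma)n$, so the SDP optimum $\mathrm{SDP}$ is at least this. The key structural input is that $G$ is a one-sided expander: on the subspace orthogonal to the all-ones vector, the normalized adjacency matrix $A$ satisfies $A \preceq \lambda \cdot \Id$, i.e. $\lambda \Id - A \succeq 0$ modulo the top eigenvector. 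This means the "expander" direction of the spectrum is controlled even though the negative eigenvalues (which is what actually permits a large independent set) are unrestricted.

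Next I would use one-sided expansion to argue that any near-optimal SDP solution is essentially ``flat'': because the quadratic form $\sum_{ij\in E}\|v_i - v_j\|^2$ relates to $\langle v, (\Id - A)v\rangle$ and the positive part of the spectrum is squeezed toward $\lambda$, the SDP vectors cannot spread out along the small positive eigenvalues without paying in the objective. Concretely, I expect to show that the Gram matrix of the $v_i$'s, after projecting away the all-ones direction, has most of its mass on the eigenspaces of $A$ with eigenvalue close to $-1$ (the ``bipartite-like'' part) — and that the fraction of mass not of this form is $O(\lambda)$ or $O(\sqrt\lambda)$, contributing the $o_{\lambda\to 0}(1)$ slack. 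This is where the hypothesis $\gamma \le \tfrac1{10}$ should enter: it keeps the planted independent set close enough to a half that the near-bipartite structure is forced rather than merely possible, so the rounding has something rigid to latch onto.

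For the rounding itself, I would do a hyperplane/threshold rounding against the handle vector $v_0$: output $S = \{i : \langle v_i, v_0\rangle \ge \theta\}$ for a suitable threshold $\theta$, then greedily delete one endpoint of each surviving edge to make $S$ independent. The expander-flatness from the previous step ensures that the number of surviving edges is only an $O(\gamma)$-ish fraction of $n$ (up to the $o_\lambda(1)$ terms), so the clean-up deletes at most roughly $5\gamma n + o_\lambda(1)\cdot n$ vertices, and we start from an $S$ of size about $(\tfrac12 - O(\gamma))n$; chasing the constants should yield the claimed $\tfrac12 - (5+o_{\lambda\to 0}(1))\gamma$ bound. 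The constant $5$ presumably comes from a union-bound/second-moment accounting of how edge-mass converts into deleted vertices under the threshold; I would optimize $\theta$ to minimize it.

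The main obstacle I anticipate is precisely the step of converting one-sided expansion into a usable flatness statement for SDP solutions — on a two-sided expander one would symmetrically control both tails, but here the negative spectrum is wild, so I have to show that the SDP cannot ``hide'' extra independent-set mass in spurious near-negative-one eigenspaces that don't correspond to an actual large independent set. I expect the resolution to route through the new spectral fact quoted in the abstract (the number of eigenvalues below $-\tau$ is $O(1/\tau^2)$ times the number above $\tau^2/2$): on a one-sided expander the latter count is $O(1)$, so there are only $O(1)$ genuinely negative eigenvalues, and one can afford to ``enumerate'' or project onto that bounded-dimensional dangerous subspace and argue directly there. Making that projection-and-correction argument lose only $o_{\lambda\to0}(1)$ in the independent-set size, rather than a constant, is the delicate part.
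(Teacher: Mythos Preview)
Your proposal takes a genuinely different route from the paper, and the SDP framing is a detour that obscures the actual mechanism. The paper does \emph{not} solve an SDP at all. The algorithm is purely spectral: compute the eigenspace of $\tilde A = \tfrac1d A$ corresponding to eigenvalues below some threshold $-\lambda'$, enumerate a fine net of the unit ball in that subspace, threshold each candidate at zero, and clean up with the $2$-approximate vertex cover. The point is that this subspace has bounded dimension (by the top-to-bottom threshold-rank inequality you correctly identified at the end), so the enumeration is polynomial for fixed $\lambda$.

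The core lemma you are missing is much more elementary than your ``SDP flatness'' step. Let $u\in\{\pm 1/\sqrt n\}^n$ be the signed indicator of the independent set $I$. A one-line computation gives $u^\top \tilde A u = -(1-4\gamma)$ exactly. Writing $P$ for the projector onto the span of eigenvectors with eigenvalue $\le -\lambda'$, one has
\[
-(1-4\gamma) = u^\top \tilde A u \ge -\|Pu\|^2 - \lambda'\bigl(1-\|Pu\|^2\bigr),
\]
which rearranges to $\|u-Pu\|^2 = 1-\|Pu\|^2 \le \tfrac{4\gamma}{1-\lambda'}$. So the signed indicator already lives $O(\gamma)$-close to the bottom eigenspace; no SDP or flatness argument is needed. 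Enumerating a $c\sqrt{\gamma/(1-\lambda')}$-net of that $t$-dimensional subspace finds some $\hat u$ with $\|\hat u - u\|^2 \le (4+c)\gamma/(1-\lambda')$, and the rounding lemma (threshold at zero, then delete a $2$-approximate vertex cover of the induced edges) returns an independent set of size at least $|I| - \|\hat u - u\|^2\, n$. This is where the constant $5$ comes from: $\gamma$ from the original deficit plus $4\gamma/(1-\lambda')$ from the projection bound, and $1+4=5$ in the limit $\lambda'\to 0$.

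What you describe in your final paragraph---project onto the bounded-dimensional bottom eigenspace and enumerate---\emph{is} the algorithm, not a patch for an SDP obstacle. Your threshold-rounding-against-$v_0$ step is replaced by thresholding the enumerated vector $\hat u$ directly. The SDP layer adds nothing and makes the constant-tracking harder; in particular your sketch gives no mechanism for why threshold rounding of an SDP solution would leave only $O(\gamma n)$ bad edges, whereas the eigenvector argument gives this immediately from $\|\hat u - u\|^2 \le O(\gamma)$.
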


In the above theorem, we use \(o_{\lambda\to 0}(1)\) as a placeholder for an absolute function in \(\lambda\) that tends to \(0\) as \(\lambda\to 0\).
See \cref{cor:independentsetrecovery} for an explicit form of this function.

For sufficiently small \(\lambda\), we can simplify the lower bound on the independent set size found by the algorithm to \((\tfrac 12 -6\gamma)\cdot n\).
In particular, the fraction of vertices in the independent set tends to \(\tfrac 12\) for \(\gamma\to 0\).
Even in this setting, previous algorithms for large independent sets in one-sided expander only guarantee finding sets of size at least \(C\cdot n\) for some absolute constant \(C\) \cite{bafna-hsieh-kothari}.\footnote{The explicit bound is \(C\approx \tfrac 14 \cdot (\frac 1 {12}-\tfrac 1 {16})\approx 0.005\) \cite{bafna-hsieh-kothari}.}

Since the complement of independent sets are vertex covers, a direct corollary of the above theorem is a better-than-$2$ approximation algorithm for \textsc{vertex cover} in one-sided spectral expanders.

\begin{corollary}[Vertex cover approximation in one-sided expanders]
  There exist \(\lambda_{0}>0\) and a polynomial-time algorithm for \textsc{vertex cover} that on all regular \(\lambda_{0}\)-one-sided expanders achieves an approximation ratio at most \(1.6667\).
\end{corollary}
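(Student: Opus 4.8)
The plan is to reduce to the preceding theorem by complementation. A set $C\subseteq V$ is a vertex cover of $G$ if and only if $V\setminus C$ is an independent set, and $\lvert C\rvert=n-\lvert V\setminus C\rvert$; hence if $G$ has independence ratio $\alpha^\star$, its minimum vertex cover has size $\mathrm{OPT}=(1-\alpha^\star)\,n$. Since $G$ is regular, every eigenvalue of its normalized adjacency matrix lies in $[-1,1]$, so by the eigenvalue bound recalled above (an independent set of relative size $\alpha$ forces an eigenvalue at most $-\alpha/(1-\alpha)$) we have $\alpha^\star\le\tfrac12$. Fix a small constant $\lambda_0>0$ to be chosen at the end. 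The algorithm instantiates the independent-set algorithm of the preceding theorem with $\lambda=\lambda_0$, runs it once for every $\gamma$ in the grid $\{0,\tfrac1n,\tfrac2n,\dots\}\cap[0,\tfrac1{10}]$, keeps the largest among the returned sets that are genuinely independent (checkable in polynomial time), calls it $I$, and outputs the vertex cover $C=V\setminus I$ (which is always valid, even if $I=\varnothing$).

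For the analysis we split on $\alpha^\star$. If $\alpha^\star\le\tfrac25$, then $\mathrm{OPT}\ge\tfrac35\,n$, so the trivial bound $\lvert C\rvert\le n$ already yields $\lvert C\rvert/\mathrm{OPT}\le\tfrac53<1.6667$. If $\alpha^\star>\tfrac25$, set $\gamma=\tfrac12-\alpha^\star\in[0,\tfrac1{10})$ and let $\gamma'$ be the smallest grid point with $\gamma'\ge\gamma$, so $\gamma'\le\gamma+\tfrac1n$ (and $\gamma'\le\tfrac1{10}$ for $n$ large). Since $G$ has an independent set of relative size $\alpha^\star\ge\tfrac12-\gamma'$, the preceding theorem applied with parameter $\gamma'$ returns an independent set of relative size at least $\tfrac12-(5+o_{\lambda\to0}(1))\,\gamma'$, so
\[
  \lvert C\rvert\;\le\;\Paren{\tfrac12+(5+o_{\lambda\to0}(1))\,\gamma'}\,n\;=\;\Paren{3-5\alpha^\star+o_{\lambda\to0}(1)}\,n\mper
\]
Here the $\tfrac1n$ grid error has been folded into the $o(1)$ term. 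Dividing by $\mathrm{OPT}=(1-\alpha^\star)\,n$ and using that $\alpha\mapsto\frac{3-5\alpha}{1-\alpha}$ is decreasing on $[\tfrac25,\tfrac12]$ (its derivative is $-2/(1-\alpha)^2<0$) with value $\tfrac53$ at $\alpha=\tfrac25$, we obtain $\lvert C\rvert/\mathrm{OPT}\le\tfrac53+o_{\lambda\to0}(1)$ in this case as well.

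Combining the two cases, the algorithm achieves approximation ratio $\tfrac53+o_{\lambda\to0}(1)$ on every regular $\lambda_0$-one-sided expander. Since $\tfrac53=1.\overline{6}<1.6667$, there is a fixed positive gap between $\tfrac53$ and the target ratio, and choosing $\lambda_0>0$ small enough that the error term stays below this gap completes the proof. I expect the only delicate point to be this last quantitative step: the two regimes cross over exactly at $\alpha^\star=\tfrac25$, where the guarantee is precisely $\tfrac53$, so it is essential to use the \emph{sharp} constant $5$ in the independent-set theorem — the simplified constant $6$ would only give ratio $12/7>1.6667$ — and to verify that the slack $1.6667-\tfrac53$ comfortably absorbs both the $o_{\lambda\to0}(1)$ error from the theorem and the grid-discretization error (including the boundary $\alpha^\star\approx\tfrac25$, which can always be handled by the trivial bound). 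Everything else — complementation, the regularity eigenvalue bound, monotonicity of the ratio function, and the grid search that sidesteps not knowing $\alpha^\star$ — is routine.
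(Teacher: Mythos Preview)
Your proposal is correct and is precisely the kind of argument the paper has in mind: the paper does not give an explicit proof of this corollary at all, merely stating that ``since the complement of independent sets are vertex covers, a direct corollary of the above theorem is a better-than-\(2\) approximation algorithm for \textsc{vertex cover}.'' Your complementation and case split at \(\alpha^\star=\tfrac25\) fills in exactly those details, and your observation that the sharp constant \(5\) (not the simplified \(6\)) is needed to reach \(5/3\) is on point. One minor simplification: since the net fineness in the underlying algorithm is monotone in \(\gamma\) (a finer net only helps), you could dispense with the grid search and run once with, say, \(\gamma=1/n\); for a \(\lambda_0\)-one-sided expander the relevant threshold rank is \(O_{\lambda_0}(1)\), so this is still polynomial time.
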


We remark that previous algorithms also imply a better-than-$2$ approximation ratio for \textsc{vertex cover} on one-sided spectral expanders but with an approximation ratio much closer to \(2\).

Furthermore, we remark that the above algorithms for independent sets and vertex covers in one-sided expanders extend to graphs with \emph{bounded threshold rank}.
These graphs may have more than one eigenvalue larger than \(\lambda\) but we require their number to be bounded by a constant independent of the size of the graph.

\paragraph{Coloring algorithms for one-sided expanders.}

The following theorem resolves one of the main questions left open by previous works on algorithms for 3-colorable one-sided expanders \cite{bafna-hsieh-kothari}.
For one-sided spectral expanders that have a balanced proper 3-coloring, we can find such a coloring for all but a tiny fraction of the vertices.

\begin{theorem}[Finding balanced 3-colorings in one-sided expanders, see \Cref{thm:3alg} for full version]
  For every \(\lambda>0\), there exists a polynomial-time algorithm that given an \(n\)-vertex regular \(\lambda\)-one-sided expander that admits a proper \(3\)-coloring with all color classes of size at most \((\tfrac 12 -o_{\lambda\to 0}(1))\cdot n\), outputs three independent sets that cover all but a \(o_{\lambda\to 0}(1)\) fraction of vertices.
\end{theorem}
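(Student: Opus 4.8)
The plan is to recover the proper $3$-coloring $c\colon V\to\{0,1,2\}$ directly, combining the spectral property stated in the introduction with an enumeration over a bounded-dimensional subspace and a correction step, in the spirit of spectral algorithms for planted colorings. \emph{A spectral fingerprint of the coloring.} Let $M$ be the normalized adjacency matrix and fix a small constant $\tau$ (to be chosen depending on $\lambda$ and the target error). Because a $\lambda$-one-sided expander has no eigenvalue above $\tau^2/2$ other than the trivial one once $\tau^2/2>\lambda$, the stated bound on negative eigenvalues gives that the span $W$ of eigenvectors with eigenvalue below $-\tau$ has dimension $r=O(1/\tau^2)$, independent of $n$. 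Set $\omega=e^{2\pi\mathrm i/3}$ and let $h\in\C^V$ have entries $h_v=\omega^{c(v)}$. Since the two endpoints of every edge get distinct colors and $\mathrm{Re}(\omega)=\mathrm{Re}(\omega^2)=-\tfrac12$, summing over edges gives $\iprod{h,Mh}=-\tfrac n2$ and $\snorm h=n$, so the Rayleigh quotient of $h$ is exactly $-\tfrac12$, no matter how (un)balanced the coloring is; as every eigenvalue of $M$ lies in $[-1,1]$, this forces $\snorm{P_{W}h}\ge(\tfrac12-\tau)\snorm h$ over the complex span of the negative eigenvectors, i.e.\ a constant fraction of $h$ lives in the bounded-dimensional subspace $W$.

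\emph{Enumeration and a rough coloring.} Enumerate an $\e$-net of the unit ball of the (real, $2r$-dimensional) space underlying $W$; for constant $\tau$ and $\e$ this has $n^{O(1)}$ points, and some net point $w$ — after also sweeping a global phase — satisfies $\abs{\iprod{w,h}}\ge(\tfrac1{\sqrt2}-o(1))\norm h$ as $\tau,\e\to0$, so $w$ is correlated with $h$. Rounding each coordinate of $\sqrt n\,w$ to the nearest cube root of unity and reading off the corresponding color gives an assignment agreeing with $c$ on a $1-\Omega(1)$ fraction of vertices; trying all net points and phases and retaining the candidate with the fewest monochromatic edges makes this constructive. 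This rough coloring is still off on a constant fraction of vertices — far from the target $o_{\lambda\to0}(1)$.

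\emph{Correction, and the main obstacle.} To push the error down to $o_{\lambda\to0}(1)$ I would iterate a local correction: recolor each vertex to the color least represented among its current neighbors, which in a proper $3$-coloring is the vertex's true color unless the vertex has an atypically large number of currently-miscolored neighbors. \textbf{This contraction step is the main obstacle.} In the two-sided setting of Alon--Kahale and David--Feige it follows from the expander mixing lemma, but a one-sided expander has an unrestricted negative spectrum, the mixing lemma fails, and an adversarially clustered miscolored set can fool many vertices in a single round. The resolution must use that on $\mathbf 1^\perp$ the only directions along which $M$ acts non-trivially are the $r=O(1/\tau^2)$ directions of $W$, so any clustering of the residual miscolored set is confined to that bounded-dimensional subspace: projecting it out lets the error contract, and the $W$-aligned part of the residual is peeled off by re-running the net enumeration on it (this is also where the hypothesis that no color class exceeds $(\tfrac12-o_{\lambda\to0}(1))n$ enters). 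Quantifying this — controlling how the rounding error, the net resolution $\e$, and $\tau$ compound so that the final uncolored fraction is a genuine function of $\lambda$ that tends to $0$ — is the bulk of the technical work, and it is why the theorem needs real one-sided expansion rather than merely a bounded number of significant eigenvalues, matching the hardness statement in the abstract. Finally, once the miscolored fraction is $\eta=o_{\lambda\to0}(1)$, the miscolored vertices form a vertex cover of all monochromatic edges, so a maximal matching among those edges has at most $\eta n$ edges; deleting its at most $2\eta n$ endpoints turns the three color classes into genuine independent sets covering all but an $o_{\lambda\to0}(1)$ fraction of vertices, as claimed.
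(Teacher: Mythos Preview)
Your proposal has a genuine gap, and you essentially admit it: after the enumeration step you only obtain a rough coloring correct on a $1-\Omega(1)$ fraction of vertices, and the ``correction'' meant to push the error to $o_{\lambda\to 0}(1)$ is a sketch whose key step you label ``the main obstacle'' without resolving it. The root of the problem is that the complex vector $h$ with $h_v=\omega^{c(v)}$ always has Rayleigh quotient exactly $-\tfrac12$, so it can never have more than a $\approx\tfrac12$ fraction of its mass in the bottom eigenspace $W$; there is no mechanism in your outline to upgrade this constant overlap to $1-o(1)$, and iterating local recoloring on a one-sided expander does not contract for the reasons you yourself note. Your suggestion that ``the $W$-aligned part of the residual is peeled off by re-running the net enumeration'' is not an argument, and the place you mark for the balance hypothesis is left unspecified.

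The paper takes a structurally different route that avoids this bottleneck. Instead of working with $h$, it looks at the \emph{model matrix} $M'\in\R^{3\times 3}$ with $M'_{ab}=\Pr[\chi(y)=b\mid\chi(x)=a]$ and its lift $W=ZM'B^{-1}Z^\top$. The central lemma is a \emph{variance bound}: one-sided expansion alone forces, for every pair $a,b$, that the fraction of neighbors in class $b$ is nearly constant across vertices of class $a$, namely $\Var_{x\in\chi^{-1}(a)}\bigl[d_{xb}\bigr]\le O(\lambda)$. This is proved by a direct test-vector argument against $\lambda_2$ and is the step that genuinely exploits one-sided expansion (and which has no analogue in your outline). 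The variance bound says $\|(\tilde A - W)\tfrac{Z_a}{\|Z_a\|}\|^2\le O(\lambda)$ for each indicator $Z_a$, so the nonzero eigenvectors of $W$ --- which are piecewise constant on color classes --- lie $O(\lambda)$-close to the bounded-dimensional large-eigenvalue subspace of $\tilde A$; enumerating that subspace then recovers the partition up to an $O_\gamma(\lambda)$ fraction of vertices directly, with no iterative correction. The balance hypothesis enters cleanly here: when every class has size at most $(\tfrac12-\gamma)n$, the rows of $M'$ are pairwise $\Omega(\gamma)$-separated (the off-diagonal entries are all at least $\gamma$), which is exactly what guarantees that the eigenvectors of $W$ distinguish the three classes; when one class reaches $n/2$, two rows of $M'$ coincide and the classes become indistinguishable from $W$ alone --- matching the hardness.
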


As we will discuss in the next paragraph, the restriction on the sizes of the color classes is necessary assuming the Unique Games Conjecture:
Any algorithm that could solve the above problem when all of the color classes have size at most \(n/2\) could also solve problems shown to be NP-hard under the Unique Games Conjecture.

Since the above algorithm fails for \(3\)-colorable one-sided expanders only if one of the color classes contains almost half of the vertices, we can combine the above result with the previously discussed algorithm for finding independent sets that contain almost half of the vertices in one-sided expanders.

\begin{corollary}[Independent sets for 3-colorable one-sided expanders]
  For every \(\lambda>0\), there exists a polynomial-time algorithm given a \(3\)-colorable \(n\)-vertex regular \(\lambda\)-one-sided expander, outputs either one independent set of size at least \((\tfrac 1 2-o_{\lambda\to 0}(1))\cdot n\) or three independent sets that cover all but \(o_{\lambda\to 0}(1)\cdot n\) vertices.
\end{corollary}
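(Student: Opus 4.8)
The plan is to run the algorithms of \Cref{thm:3alg} and \Cref{cor:independentsetrecovery} side by side and decide which output to return by a cheap post-hoc check. Write $\delta(\lambda)$ for the upper bound on color-class sizes under which \Cref{thm:3alg} applies --- that is, \Cref{thm:3alg} outputs three independent sets covering all but an $\e(\lambda)$ fraction of vertices whenever every color class of some proper $3$-coloring has size at most $(\tfrac12 - \delta(\lambda))\cdot n$ --- where $\delta(\lambda), \e(\lambda) = o_{\lambda\to 0}(1)$. Given a $3$-colorable regular $\lambda$-one-sided expander $G$ on $n$ vertices, the algorithm (i) runs the algorithm of \Cref{thm:3alg} to obtain three independent sets $I_1, I_2, I_3$; (ii) runs the algorithm of \Cref{cor:independentsetrecovery} to obtain an independent set $J$; and (iii) outputs $I_1, I_2, I_3$ if $\card{I_1 \cup I_2 \cup I_3} \ge (1 - \e(\lambda))\cdot n$, and outputs $J$ otherwise. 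All three steps run in polynomial time, step (iii) being a single cardinality computation.

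For correctness, note that whenever the check in step (iii) passes, the output $I_1, I_2, I_3$ is by construction three independent sets covering all but an $\e(\lambda) = o_{\lambda\to 0}(1)$ fraction of $V$, which is a valid answer. So it suffices to show that, when the check fails, $J$ is an independent set of size at least $(\tfrac12 - o_{\lambda\to 0}(1))\cdot n$. Fix any proper $3$-coloring $V = C_1 \sqcup C_2 \sqcup C_3$ of $G$. If every $\card{C_i} \le (\tfrac12 - \delta(\lambda))\cdot n$, then by \Cref{thm:3alg} the sets $I_1, I_2, I_3$ cover all but an $\e(\lambda)$ fraction of $V$ and the check would have passed; since it did not, some color class, say $C_1$, satisfies $\card{C_1} > (\tfrac12 - \delta(\lambda))\cdot n$. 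As $C_1$ is independent, $G$ contains an independent set of size at least $(\tfrac12 - \delta(\lambda))\cdot n$, so for $\lambda$ small enough that $\delta(\lambda) \le \tfrac1{10}$, \Cref{cor:independentsetrecovery}, whose hypothesis then holds with $\gamma = \delta(\lambda) \in [0,\tfrac1{10}]$, guarantees $\card{J} \ge \bigl(\tfrac12 - (5 + o_{\lambda\to 0}(1))\cdot\delta(\lambda)\bigr)\cdot n = (\tfrac12 - o_{\lambda\to 0}(1))\cdot n$, using $\delta(\lambda) \to 0$.

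For those $\lambda$ too large for $\delta(\lambda) \le \tfrac1{10}$ to hold, we may take the implicit $o_{\lambda\to 0}(1)$ function in the statement to be at least $1$, in which case a ``independent set of size at least $(\tfrac12 - o_{\lambda\to 0}(1))\cdot n$'' is witnessed by the empty set and the corollary is vacuous. Hence the argument above, valid for all sufficiently small $\lambda$, suffices.

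The argument is essentially a case split between the two cited theorems, so I do not expect a substantive obstacle. The one point that needs care is the bookkeeping of the two $o_{\lambda\to 0}(1)$ functions: the threshold $\delta(\lambda)$ driving the case split must be exactly the quantity for which \Cref{cor:independentsetrecovery} is invoked via $\gamma = \delta(\lambda)$, and one must verify both that $(5 + o_{\lambda\to 0}(1))\cdot\delta(\lambda)$ is again of the form $o_{\lambda\to 0}(1)$ and that $\delta(\lambda) \le \tfrac1{10}$ holds for all small enough $\lambda$.
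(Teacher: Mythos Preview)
Your proposal is correct and matches the paper's approach: the paper presents this corollary as an immediate combination of \Cref{thm:3alg} and \Cref{cor:independentsetrecovery}, noting that the former ``fails for 3-colorable one-sided expanders only if one of the color classes contains almost half of the vertices,'' which is precisely the case in which the latter applies. The only clarification worth adding is that in step~(ii) you should state explicitly that the algorithm of \Cref{cor:independentsetrecovery} is invoked with the fixed parameter $\gamma = \delta(\lambda)$ (since its runtime depends on $\gamma$), rather than leaving this to the correctness analysis.
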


Previous work on 3-colorable one-sided expanders are only guaranteed to find an independent set of size at least \(C\cdot n\) for a small absolute constant \(C>0\) even when the color classes have perfectly balanced sizes \cite{bafna-hsieh-kothari}.

\paragraph{Hardness results for coloring one-sided expanders and related graphs.}

As mentioned above, we obtain the following hardness result that matches the guarantees of our algorithms on one-sided expanders.
Our hardness results are formulated for almost 3-colorable graphs, which is a well-known artifact of \textsc{unique games} hardness reductions.
We expect these problems to remain intractable even for exactly 3-colorable graphs.
Moreover, we prove all of our algorithmic results for 3-colorable one-sided expanders also for the almost 3-colorable case.

\begin{theorem}[Hardness for almost 3-colorable one-sided expanders, see \Cref{thm: unbalancedhard} for full version]
  For every \(\e>0\), the following problem is NP-hard assuming the Unique Games Conjecture:
  Given a regular \(\e\)-one-sided expander with three independent sets that cover a \(1-\e\) fraction of its vertices,
  it is NP-hard to find a 3 independent sets that cover at least a \(\tfrac 12 +\e\) fraction of its vertices.
\end{theorem}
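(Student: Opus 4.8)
The plan is to prove this by a reduction from \textsc{unique games} in the long-code dictatorship-test tradition of Dinur--Mossel--Regev, engineered so that the output graph is always a regular \(\e\)-one-sided expander and so that, as a function of the value of the \textsc{unique games} instance, the ``cover by three independent sets'' objective has a completeness/soundness gap of exactly \((1-\e,\ \tfrac12+\e)\). Concretely, from a \textsc{unique games} instance \(\Psi\) of value \(\ge 1-\delta\) we want to produce a regular \(\e\)-one-sided expander with three independent sets covering a \(1-\e\) fraction of its vertices, and from one of value \(\le\delta\) a regular \(\e\)-one-sided expander in which no three independent sets cover a \(\tfrac12+\e\) fraction. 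Since ``does a given triple of sets form independent sets covering a \(\tfrac12+\e\) fraction'' is checkable in polynomial time, an algorithm solving the stated promise problem could be run on the output and its answer verified, thereby distinguishing the two cases of \textsc{unique games}; under the Unique Games Conjecture this is impossible. The imperfect completeness \(1-\e\) rather than \(1\) is the standard artifact of \textsc{unique games} reductions, is exactly what the statement allows, and (as promised in the paper) one checks that the accompanying algorithmic results apply verbatim to such ``almost \(3\)-colorable'' graphs.

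The backbone is the usual one: each variable \(u\) of \(\Psi\) carries a block of \(3^R\) vertices encoding a (possibly noised or folded) long code over the label set \([R]\), and a \textsc{unique games} constraint with permutation \(\pi_{uv}\) is replaced by the \(\pi_{uv}\)-twisted ``all coordinates disagree'' test on \([3]^R\times[3]^R\). For \emph{completeness}, a labeling \(\ell\) satisfying a \(1-\delta\) fraction of the constraints gives the three ``same-coordinate dictator'' sets \(S_c=\{(u,x): x_{\ell(u)}=c\}\), \(c\in\{0,1,2\}\): on each satisfied constraint the twisted test forbids edges inside a single \(S_c\), so after deleting the small fraction of blocks meeting an unsatisfied constraint (and a further small fraction if noise is used) the \(S_c\) become independent and jointly cover a \(1-\e\) fraction. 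For \emph{soundness}, given independent \(I_1,I_2,I_3\) covering a \(\tfrac12+\e\) fraction, an averaging step finds an \(\Omega(\e)\) fraction of blocks in which \(I_1,I_2,I_3\) cover more than a \(\tfrac12\) fraction; on such a block the indicators \(\mathbf{1}_{I_j}\) form a partial \(3\)-coloring exceeding the ``pseudorandom value'' \(\tfrac12\) of the test, so a dictatorship-test analysis (Fourier expansion together with an invariance-principle / ``majority-is-stablest''-type bound for the noised disagreement operator) exhibits a bounded set of influential coordinates; the twisting by the \(\pi_{uv}\) then lets a standard consistency argument decode a labeling of \(\Psi\) of value \(\Omega_\e(1)\), contradicting \(\mathrm{val}(\Psi)\le\delta\) once \(\delta\) is small enough. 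The constant is \(\tfrac12\) rather than something else because the relevant long code is over an alphabet of size \(3\); this is the same threshold that appears in our coloring algorithm as the bound on the color-class sizes, which is why the two results match.

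What distinguishes this from the classical coloring reductions, and what I expect to be the main obstacle, is forcing the output to be a one-sided \emph{spectral} expander: we need every eigenvalue of the normalized adjacency matrix, other than the trivial \(1\), to be at most \(\e\), while leaving the negative part of the spectrum entirely free --- indeed, in the completeness case the three large independent sets necessarily produce strongly negative eigenvalues. The difficulty is that the naive ``one block per variable'' organization makes the subspace of functions that are constant on each block invariant, with the operator acting there exactly as the normalized adjacency matrix of the \textsc{unique games} constraint graph; but hard \textsc{unique games} instances have constraint graphs that are far from spectral expanders (their threshold rank is large), so this alone already injects many positive eigenvalues close to \(1\). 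Repairing this by adding noise or random edges inside or across the blocks would destroy the dictator independent sets that completeness relies on. The plan is therefore to replace the plain block structure by a composition in which the ``outer'' \textsc{unique games} graph is first rendered expanding in a hardness-preserving, structure-preserving way --- so that the constant-on-blocks subspace no longer carries the constraint graph's spectrum --- while keeping the completeness colorings (suitably generalized from single-coordinate dictators to more mixing-robust juntas) intact and not spoiling the soundness analysis. Producing a construction that simultaneously achieves all three of one-sided expansion, the \((1-\e,\tfrac12+\e)\) gap, and regularity is the technical heart of the argument.
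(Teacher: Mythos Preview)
Your approach diverges substantially from the paper's, and the step you yourself flag as ``the technical heart of the argument'' is left as a wish rather than a construction; I would call this a genuine gap.

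The paper does \emph{not} build a direct long-code reduction from \textsc{unique games}. Instead it uses the Bansal--Khot result (\Cref{thm:propA1}) as a black box: given a hard instance $G$ on $2n$ vertices with $d_{\max}=o(n)$ in which one must distinguish ``two disjoint independent sets of size $(\tfrac12-\e)n$'' from ``no independent set of size $\gamma n$'', it embeds $G$ into a larger graph $G^*$ built from a model matrix $M$ with stationary distribution $(\tfrac12,\tfrac14,\tfrac14)$ and two identical rows. Concretely, one places copies of $G$ inside the union of two color classes whose rows coincide, then connects the resulting blocks by random biregular Ramanujan graphs according to $M$. The point is that $G$ contributes only $o(n)$ to each degree while the Ramanujan inter-block graphs contribute $\Theta(n)$; thus $G$ is a spectrally negligible perturbation of a blow-up of $M$, and \Cref{lem: blowupexpanding} gives $\lambda_2(G^*)=o(1)$ directly. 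Completeness comes from the two large independent sets in $G$ (plus the third block), and soundness is a one-line pigeonhole: a $(3,\tfrac12-\gamma)$-coloring of $G^*$ must place $\Omega(\gamma n)$ properly colored vertices inside some copy of $G$, yielding an independent set there.

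Your plan instead tries to force the long-code test graph itself to be a one-sided expander by ``rendering the outer \textsc{unique games} graph expanding in a hardness-preserving way''. This is exactly where the argument is missing: \textsc{unique games} on spectral expanders is \emph{easy} \cite{arora-khot-kolla-steurer-tulsiani-vishnoi}, so you cannot simply assume or preprocess the constraint graph into an expander without destroying the hardness you are reducing from. You also note that noise or random edges would kill the dictator colorings. You have correctly diagnosed the obstruction, but you have not proposed a mechanism that resolves it; the vague ``composition'' and ``mixing-robust juntas'' are not a construction. The paper's insight---do the UG analysis once, upstream, via Bansal--Khot, and then make expansion a purely structural afterthought by drowning the hard instance in a sea of Ramanujan edges---is precisely what lets one sidestep the tension you identified.
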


One might also expect that our algorithmic result for 3-colorable one-sided expanders extends to graphs that have a bounded number of significant eigenvalues.
However, this expectation is wrong in the sense that such algorithms would also refute the Unique Games Conjecture.

\begin{theorem}[Hardness for almost 3-colorable graphs with few significant eigenvalues, see \Cref{thm: lowtopthresholdrankhard} for full version]
  For every \(\e>0\), there exists a constant \(R\ge 1\) such that the following problem is NP-hard assuming the Unique Games Conjecture:
  Given a regular graph \(G\) with the following properties,
  \begin{enumerate}
  \item \(G\) has a balanced 3-coloring for all but an \(\e\) fraction of vertices, i.e., three disjoint independent sets of size at least \((1/3-\e)\cdot n\) each,
  \item \(G\) has at most \emph{two} (positive) eigenvalues larger than \(\e\),
  \item \(G\) has at most \(R\) eigenvalues larger than \(\e\) in absolute value,
  \end{enumerate}
  find three independent sets that cover at least a \(0.9\) fraction of the vertices.
\end{theorem}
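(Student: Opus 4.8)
The plan is to give a reduction from \textsc{unique games} to the promised coloring-search problem, so the construction will resemble the long-code / dictatorship-test reductions behind prior hardness of approximate coloring, but with substantial modifications to control the spectrum of the produced graph. Given a \textsc{unique games} instance $\mathcal U$ with label set $[L]$, the natural first attempt attaches to each vertex a cloud carrying a $\Z_3$-symmetric dictatorship gadget and, for each constraint $(u,v)$ with permutation $\pi$, joins the clouds of $u$ and $v$ by a noise-correlated bipartite gadget twisted by $\pi$; the $\Z_3$-symmetry is what will eventually make the completeness coloring balanced. We will take the gadget dense enough that the whole graph has degree $\Theta(n)$, so that ``an $\e$ fraction of bad vertices'' is a far stronger condition than ``an $\e$ fraction of bad edges''; this density is also what rules out solving the problem with the known low-threshold-rank $2$-CSP algorithms, which only control the bad-edge fraction, and is therefore consistent with the target hardness.

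For completeness, a labeling of value $1-\eta$ induces the honest coloring that gives each cloud the color pattern of its label's dictator; the $\Z_3$-symmetry makes this coloring balanced, and it is proper except on clouds incident to violated constraints and on the gadget's noise slack --- an $O(\eta)$ fraction of vertices --- which establishes property~(1) and exhibits three independent sets covering a $1-\e$ fraction, in particular a $0.9$-cover. For soundness I would argue contrapositively: from three independent sets covering a $0.9$ fraction, an averaging argument puts a constant fraction of the mass of one of the three sets into a constant fraction of the clouds; since the gadget's noise operator has no large non-trivial eigenvalue, a low-degree-influence / invariance-style argument forces each such cloud to have an influential coordinate, and propagating these along the constraints by the usual consistency decoding yields a labeling of value bounded below by an absolute constant, hence $>\delta$, contradicting the soundness of $\mathcal U$.

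The main obstacle --- where essentially all the work lies --- is enforcing properties~(2) and~(3), i.e.\ that all but a constant number $R=R(\e)$ of eigenvalues lie in $[-\e,\e]$ and that at most two of the exceptional ones are positive. The naive construction fails this badly: on the subspace of functions constant on each cloud its normalized adjacency acts as the normalized adjacency of the constraint graph, whose threshold rank is necessarily unbounded (\textsc{unique games} is easy on expanders and on low-threshold-rank constraint graphs), and the Fourier levels of positive degree of the gadget contribute up to $L^{\Theta(1)}$ further eigenvalues above $\e$. To fix this I expect one must (i) modify the construction --- most likely by superimposing a dense, highly structured component respecting the fixed color structure of the gadget --- so that the constraint graph's large eigenvalues disappear from the relevant invariant subspace while only an $\e$ fraction of vertices is harmed in the completeness case, which is affordable precisely because of the vertex-versus-edge slack; and (ii) design the gadget so that every positive-degree Fourier level contributes only eigenvalues of magnitude at most $\e$, leaving the $O(1)$ large eigenvalues to come from a fixed bounded-size pattern whose positive part has rank at most two --- plausibly a bounded number of near-disconnected complete-multipartite-type blow-ups, with a negative eigenvalue near $-\tfrac12$ mandated (via the spectral bound recalled earlier) by the existence of independent sets of relative size about $\tfrac13$. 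The delicate point is checking that this spectral surgery does not open up cheap $0.9$-covers that break soundness; I expect it is exactly this tension that forces the conclusion to be stated with the constant $0.9$ rather than $\tfrac12+\e$.
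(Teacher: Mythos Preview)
Your proposal attempts a direct long-code reduction from \textsc{unique games}, and you correctly identify that controlling the spectrum of such a construction is ``where essentially all the work lies'' --- but you do not actually solve that obstacle; the fixes you sketch (superimposing a dense structured component, tuning the gadget's Fourier levels) remain speculative. The paper sidesteps this difficulty entirely by a much simpler and qualitatively different route, and the gap in your proposal is precisely that you are missing this simplification.

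The paper does \emph{not} reduce from \textsc{unique games} directly. It reduces from the Bansal--Khot independent-set problem (\Cref{thm:propA1}), which is already known to be UG-hard: given an $n$-vertex graph $G$ with maximum degree $o(n)$, distinguish whether $G$ contains two disjoint independent sets of size $(\tfrac12-\e)n$, or no independent set of size $\gamma n$. Starting from such a $G$, the paper builds $G^*$ by (i) adding a set $S$ of $\Theta(n)$ new vertices and a complete bipartite graph between $V(G)$ and $S$ (minus a small correction to regularize), yielding $G'$; and (ii) taking the disjoint union with a fixed tripartite one-sided expander $G''$ of matching degree whose part sizes are chosen to balance the three color classes. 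The spectral control is now elementary: the complete bipartite part has $\lambda_2=0$, the original $G$ and the regularizing correction have operator norm $o(n)$ because $d_{\max}(G)=o(n)$, so Weyl gives $\lambda_2(G')=o(|S|)$; since $\lambda_2(G'')=o(|S|)$ by construction, one more application of Weyl gives $\lambda_3(G^*)=o(|S|)$, establishing property~(2). Property~(3) --- the bound $R=R(\e)$ on the total number of eigenvalues above $\e$ in absolute value --- is not handled inside the reduction at all: it follows immediately from the paper's own structural theorem (\Cref{thm:spectral_large_to_large}) that small top threshold rank forces small bottom threshold rank.

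Completeness is straightforward: the two large independent sets in $G$, together with $S$ and the three parts of $G''$, give a balanced $3$-coloring of all but $O(\e)n$ vertices. Soundness is a pigeonhole: a $(3,\tfrac29-\gamma)$-coloring of $G^*$ leaves enough colored mass inside $V(G)$ that some color class yields an independent set of size $\Omega(\gamma)n$ in $G$. (In particular, the constant $0.9$ in the introductory statement is not tight; the full version obtains $7/9+\gamma$, and the slack has nothing to do with the tension you conjectured.)

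So the key idea you are missing is: do not try to tame the spectrum of a long-code construction; instead, start from a hard graph problem whose instances already have $o(n)$ maximum degree, and drown those instances in a dense rank-$O(1)$ structure (complete bipartite plus a fixed expander) that dominates the spectrum.
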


The above theorem shows that even a single additional positive eigenvalue changes completely the tractability of graphs with (balanced) 3-colorings.

The above hardness result also refutes a prediction in previous work that hard coloring instances must have a large (super-constant) number of significant eigenvalues \cite{bafna-hsieh-kothari}.

\paragraph{Threshold-ranks of graphs.}

While the above hardness result shows that a small number of significant eigenvalues does not guarantee the tractability of finding good coloring,
our algorithmic results for one-sided expanders do rely on bounds on the number of these eigenvalues.
Indeed, we show the perhaps unexpected fact that one-sided expanders always have bounded number of significant (negative) eigenvalues.
More, generally we show the following relationship between the positive and negative spectrum of graphs.

\begin{theorem}[Relation between top and bottom threshold rank, see \Cref{thm:spectral_large_to_large} for full version]
  For all thresholds \(\tau>0\) and \(\tau'<\tau^{2}\) and every graph \(G\), the bottom-threshold rank of \(G\) at \(\tau\) is bounded by the top-threshold rank of \(G\) at \(\tau'\) as follows,
  \begin{equation}    
    \Paren{
      \tfrac{\tau^{2}-\tau'}{1-\tau'}}^2
    \cdot    \Card{ \sigma(G)\cap (-\infty,-\tau]
    }
    \le
 \Card{
\sigma(G)\cap [\tau',\infty)
    }
    \mper
  \end{equation}
  Here, \(\sigma(G)\) denotes the spectrum of the normalized adjacency matrix of \(G\) as a multiset of reals, taking into account the algebraic multiplicity of eigenvalues.
\end{theorem}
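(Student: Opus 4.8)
The plan is to argue directly about the normalized adjacency matrix $M \defeq D^{-1/2}A_G D^{-1/2}$, whose eigenvalue multiset is $\sigma(G)$, using three of its properties: it is symmetric with $\sigma(G) \seq [-1,1]$; it is entrywise nonnegative; and $\tr(M) = 0$ since $G$ has no self-loops. Write $k \defeq \card{\sigma(G) \cap (-\infty, -\tau]}$ and $m \defeq \card{\sigma(G) \cap [\tau', \infty)}$, fix an orthonormal family $w_1, \dots, w_k$ of eigenvectors of $M$ with eigenvalues $\nu_1, \dots, \nu_k \le -\tau$, and let $\Pi$ be the orthogonal projection onto the span of the eigenvectors of $M$ with eigenvalue at least $\tau'$, so $\rank \Pi = m$.

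The heart of the argument is a Perron--Frobenius \emph{folding} step. Since $M$ is entrywise nonnegative, for every vector $v$ one has $(M\abs{v})_u \ge \abs{(Mv)_u}$ for all $u$, and hence $\iprod{\abs{v}, M\abs{v}} \ge \abs{\iprod{v, Mv}}$. Applied to $v = w_j$, this shows that the unit vector $z_j \defeq \abs{w_j}$ satisfies $\iprod{z_j, M z_j} \ge \abs{\nu_j} \ge \tau$: folding carries the bottom eigenspace of $M$ to unit vectors of large Rayleigh quotient, i.e.\ toward the \emph{top} of the spectrum. Decomposing $z_j = \Pi z_j + (\Id - \Pi)z_j$ and using $M \sle \Id$ on $\operatorname{range}(\Pi)$ together with $M \sle \tau' \Id$ on $\ker \Pi$ gives
\begin{equation*}
  \tau \;\le\; \iprod{z_j, M z_j} \;\le\; \snorm{\Pi z_j} + \tau'\bigparen{1 - \snorm{\Pi z_j}}, \qquad\text{hence}\qquad \snorm{\Pi z_j} \;\ge\; \frac{\tau - \tau'}{1 - \tau'} \;\ge\; \frac{\tau^2 - \tau'}{1 - \tau'} \;=:\; \gamma \;>\; 0,
\end{equation*}
the last two steps using $\tau \le 1$ and the hypothesis $\tau' < \tau^2$; this is exactly where the quantity $\tfrac{\tau^2 - \tau'}{1-\tau'}$ enters. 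Summing over $j$ gives $\tr(\Pi R) = \sum_{j=1}^{k}\snorm{\Pi z_j} \ge \gamma k$, where $R \defeq \sum_{j=1}^{k} z_j z_j^\top \sge 0$ has $\tr(R) = k$ and rank at most $k$.

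It then remains to deduce $\gamma^2 k \le m$ from the data $\tr(\Pi R) \ge \gamma k$, $\tr(R) = k$, $\rank \Pi = m$, and this is the step I expect to be the main obstacle. The free estimate bounds $\tr(\Pi R)$ by the sum of the $m$ largest eigenvalues of $R$, hence by $m \cdot \norm{R}_\op$, which only yields $\gamma k \le m \norm{R}_\op$ — vacuous on its own, since $\norm{R}_\op$ can be as large as $k$ when the folded vectors all point in nearly the same direction. The real content is therefore to show that the $z_j$ cannot concentrate: being entrywise absolute values of an \emph{orthonormal} family, if $s_j \in \set{\pm 1}^{V(G)}$ denotes the sign pattern of $w_j$ then $\iprod{z_j, x} = \iprod{w_j, s_j \odot x}$ for every $x$, so grouping the $w_j$ by sign pattern, using that pairwise-orthogonal nonnegative vectors have disjoint supports, and using $\sum_u \bigparen{\sum_j w_j w_j^\top}_{uu} = k$, one should control the spectrum of $R$; the square $\gamma^2$ then emerges from a Cauchy--Schwarz inequality on the singular values of $\Pi\bigbrac{z_1 \mid \dots \mid z_k}$. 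This is also where entrywise nonnegativity of $M$ is genuinely needed beyond the folding step itself: a spectrum consisting of many copies of $-\tau$ balanced by positive mass lying just below $\tau'$ would violate the claimed inequality, and it is precisely the nonnegativity of $M$ — equivalently, the inequalities $\tr(M^j) \ge 0$ for all $j$ — that excludes such spectra, so I expect these inequalities to reappear as the quantitative input that bounds $\norm{R}_\op$ and closes the argument.
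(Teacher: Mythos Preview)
Your folding step is correct and even gives a sharper per-vector bound than needed: each $z_j = \abs{w_j}$ has Rayleigh quotient at least $\tau$, not just $\tau^2$. The gap you flag, however, is fatal for this route, and the fix you sketch does not close it. The obstruction is that the vectors $z_j$ can all coincide. Take $G$ to be any Cayley graph of $(\Z/2\Z)^n$: the eigenvectors are the characters, which are $\pm 1$-valued, so $z_j = 2^{-n/2}\mathbf 1$ for \emph{every} $j$. Then $R = k\cdot 2^{-n}\mathbf 1\mathbf 1^\top$ has $\norm{R}_\op = k$, and $\tr(\Pi R)=k$ regardless of $m$, so no bound on $m$ follows. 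Your sign-pattern grouping does not help here because all $k$ sign patterns are distinct, each group is a singleton, and the disjoint-support observation is vacuous. The moment inequalities $\tr(M^j)\ge 0$ are also not what is missing: the paper's proof never uses them, nor the condition $\tr(M)=0$.

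What the paper does instead is replace folding by \emph{tensoring}. With $w_i$ the $i$-th row of $U = \tfrac{1}{\sqrt k}[w_1\mid\cdots\mid w_k]^\top$ (so $\sum_i \norm{w_i}^2 = 1$ and $\sum_{i,j}\iprod{w_i,w_j}^2 = 1/k$), set $v_i \defeq w_i^{\otimes 2}/\norm{w_i}$. Then $\iprod{v_i,v_j} = \iprod{w_i,w_j}^2/(\norm{w_i}\norm{w_j})$, and Cauchy--Schwarz on $\sum_{i,j}A_{ij}\iprod{w_i,w_j}$ (this is where $A_{ij}\ge 0$ is used, to write $A_{ij}=\sqrt{A_{ij}}\cdot\sqrt{A_{ij}}$) yields $\iprod{A,V^\top V}\ge \tau^2$. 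Crucially, $\norm{V^\top V}_F^2 = \sum_{i,j}\iprod{w_i,w_j}^4/(\norm{w_i}^2\norm{w_j}^2)\le \sum_{i,j}\iprod{w_i,w_j}^2 = 1/k$, so the witness matrix automatically has small Frobenius norm --- exactly the global spread control that absolute values destroy. Feeding this into the BRS-style lemma (small-Frobenius PSD witness with large $\iprod{A,\cdot}$ forces large threshold rank) gives the claimed $\bigparen{\tfrac{\tau^2-\tau'}{1-\tau'}}^2 k \le m$. The $\tau^2$ in the final bound is the price of tensoring; your folding would have given $\tau$ there, but tensoring is what makes the second half go through.
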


To the best of our knowledge, this basic result for spectra of graphs is new.

\paragraph{Algorithmic meta-theorem and dichotomy for coloring one-sided expanders.}

Prior work showed that \(3\) colors is the end of the line:
Already in balanced-\(4\)-colorable one-sided expanders, it is NP-hard under the Unique Games Conjecture to find an independent set with a constant fraction of vertices \cite{bafna-hsieh-kothari}.

Nevertheless, we generalize our results for \(3\)-colorable one-sided expanders to more than \(3\) colors by stratifying coloring instances in a particular way.
This stratification is akin to those for constraint satisfaction problems in terms of the set of allowed predicates, where every finite set of predicates defines a computational problem in its own right and we aim to characterize their computational complexity depending on the choice of predicates.

Our stratification of coloring problems on graphs \(G=(V,E)\) is in terms of \(k\)-by-\(k\) matrices \(M\).
Concretely, a reversible,\footnote{Here, \(M\) being reversible means that there exists a diagonal matrix \(D\) such that \(D M\) is a symmetric matrix.} row stochastic matrix \(M\in \R_{\ge 0}^{k\times k}\) with all zeroes on the diagonal and \(\delta>0\), we say that \(\chi\from V\to [k]\) is a \emph{\((M,\delta)\)-coloring} for \(G\) if \(\chi\) is proper for all but a \(\delta\) fraction of vertices and for all pairs of colors \(a,b\in [k]\),
\begin{equation}
  \Abs{M_{ab} - \Pr_{\bm x\bm y \sim G}\Set{\chi(\bm y)=b\Mid \chi(\bm x)=a}}
  \le \delta\mper
\end{equation}
Here, \(\bm x\bm y\sim G\) denotes a uniformly random edge such that \(\bm x\) and \(\bm y\) have the same marginal distribution.
We say that \(G\) is \((M,\delta)\)-\emph{colorable} if there exists a \((M,\delta)\)-coloring for \(G\).

Now we ask: given an \((M,\delta)\)-colorable one-sided expander for sufficiently small \(\delta\), can we efficiently find a proper \(k\) coloring for a large fraction of vertices?

Formally, we define an algorithm \(\cA\) to solve \(M\)-\textsc{expander-coloring} if for every \(\e>0\), there exists a \(\delta>0\) such that given an \((M,\delta)\)-colorable regular \(\delta\)-one-sided-expander \(G\), the algorithm \(\cA\) outputs $k$ independent sets that cover all but an \(\e\) fraction of vertices of \(G\).
In order for this problem to be non-trivial, we need the matrix \(M\) to have second largest eigenvalue at most \(0\).
In this way, \((M,\delta)\)-colorable \(\delta\)-one-sided-expanders exist for every \(\delta>0\).
In particular, a stochastic block model graph with large enough degree parameter (as a function of \(\delta\)) and suitable block probabilities (as a function of \(M\)) has the desired properties with high probability.

We show that, assuming the Unique Games Conjecture, \(M\)-\textsc{expander-coloring} admits a polynomial-time algorithm in the sense above if and only if \(M\) has no repeated rows.

\begin{theorem}[Algorithmic meta-theorem for coloring one-sided expanders, see \Cref{thm:kalg} and \Cref{thm: identicalrowshard} for full version]
\label{thm:alg-meta-thm}
  For every reversible, row stochastic matrix \(M\) with zeros on the diagonal and second largest eigenvalue at most \(0\),
  \begin{enumerate}
  \item if \(M\) has no repeated rows, then there exists a polynomial-time algorithm for \(M\)-\textsc{expander-coloring}, and
  \item if \(M\) has repeated rows, then \(M\)-\textsc{expander-coloring} is NP-hard under the Unique Games Conjecture.
  \end{enumerate}
\end{theorem}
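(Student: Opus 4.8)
The plan is to treat the two directions of the dichotomy separately. The algorithmic half will be a generalization of the $3$-coloring algorithm (full version \Cref{thm:3alg}, full version of the general statement \Cref{thm:kalg}), and the hardness half will mirror the Unique-Games reduction underlying \Cref{thm: identicalrowshard} and the paper's other UGC-hardness results.

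\textbf{Algorithmic direction.} Assume $M$ has no repeated rows and we are handed an $(M,\delta)$-colorable regular $\delta$-one-sided expander $G$ with unknown near-proper coloring $\chi$ and color classes $V_1,\dots,V_k$. The first step is to observe that the $(M,\delta)$-coloring condition is precisely the statement that the normalized adjacency operator $\bar A$ of $G$ acts on the span of the centered indicators $\1_{V_a}-\tfrac{|V_a|}{n}\1$ approximately as $M^{\top}$, with error $O(\poly(\delta))$ in the relevant norm. The decisive use of ``no repeated rows'' will be a purely linear-algebraic lemma: a reversible, row-stochastic $M$ with zero diagonal, second eigenvalue at most $0$, and no two equal rows has all nontrivial eigenvalues bounded above by some $-\tau_0(M)<0$; hence the centered indicators lie, up to $O(\poly(\delta))$ error, inside the subspace $W$ spanned by eigenvectors of $\bar A$ with eigenvalue at most $-\tau_0/2$. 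Now I would invoke \Cref{thm:spectral_large_to_large} with $\tau=\tau_0/2$ and $\tau'$ slightly above $\delta$: since $G$ is a $\delta$-one-sided expander only the trivial eigenvalue exceeds $\tau'$, so $\dim W=:r$ is bounded by a constant depending only on $\tau_0$, hence only on $M$. Consequently the $k$ color classes are realized, up to small error, as a partition of an $r$-dimensional point cloud whose $k$ centers are pairwise $c(M)$-separated because the rows of $M$ are distinct; a nearest-center / clustering rounding (after a short enumeration over candidate center configurations inside the bounded-dimensional $W$, if needed) recovers $\chi$ on a $1-o_\delta(1)$ fraction of vertices. Finally I would upgrade ``correct on most vertices'' to ``$k$ independent sets covering all but an $\e$ fraction'' using the one-sided expansion exactly as in the $3$-colorable case, via an Alon--Kahale / David--Feige style local-correction argument (or, more crudely, by discarding the $o_\delta(1)$ vertices lying in monochromatic edges after correction).

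\textbf{Hardness direction.} Suppose rows $a$ and $b$ of $M$ coincide, so colors $a$ and $b$ are statistically interchangeable; the plan is to encode a Unique-Games (or Boolean) choice into the $\{a,b\}$-ambiguity. Starting from a \textsc{unique games} instance on a one-sided expander graph --- the same kind of instance that drives the paper's other UGC-hardness theorems --- I would build $G$ by replacing each variable with a cloud of vertices whose colors are confined to $\{a,b\}$ (recording the variable's value), together with auxiliary clouds carrying the remaining $k-2$ colors, and replacing each constraint with a gadget whose edge statistics are governed by $M$. Because the $a$- and $b$-rows of $M$ agree, in the completeness case a satisfying assignment produces a genuine $(M,\delta)$-coloring of a graph $G$ that one checks remains a $\delta$-one-sided expander (a cloud-and-gadget blow-up of an expander is again a one-sided expander, and the negative spectrum is unconstrained); in the soundness case, any $k$ independent sets covering more than a constant fraction of $G$ would be forced to be nearly consistent across constraints and would decode to an assignment satisfying a constant fraction of the \textsc{unique games} instance, contradicting its soundness. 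This yields NP-hardness of $M$-\textsc{expander-coloring} under the Unique Games Conjecture.

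\textbf{Main obstacle.} I expect the hard part to be on the algorithmic side: first, proving the quantitative spectral gap $\tau_0(M)>0$ from ``no repeated rows'' --- the hypotheses (zero diagonal, reversible, row-stochastic, second eigenvalue $\le 0$) do not obviously rule out a nontrivial eigenvalue at $0$, and one needs a careful argument showing this can occur only when two rows coincide; and second, making the rounding run in genuine polynomial time, i.e., turning the bounded-dimensional embedding into a clustering without paying $n^{\Theta(r)}$, which is exactly where one-sided expansion (rather than the threshold-rank bound alone, which the abstract flags as insufficient) must be used. On the hardness side the delicate point is mostly bookkeeping --- designing the constraint gadget so that $G$ is simultaneously a $\delta$-one-sided expander, $(M,\delta)$-colorable in the completeness case with the $\{a,b\}$-ambiguity carrying the variable, and only a constant worse in soundness --- which follows the template already set by the paper's other one-sided-expander hardness theorems.
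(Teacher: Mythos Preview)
Your algorithmic plan hinges on a lemma that is false. You claim that a reversible, row-stochastic $M$ with zero diagonal, second eigenvalue at most $0$, and pairwise distinct rows must have all nontrivial eigenvalues bounded above by some $-\tau_0(M)<0$. A counterexample is the $\mathbb{Z}_2^2$-Cayley matrix
\[
M=\begin{pmatrix} 0 & 1/2 & 3/10 & 1/5\\ 1/2 & 0 & 1/5 & 3/10\\ 3/10 & 1/5 & 0 & 1/2\\ 1/5 & 3/10 & 1/2 & 0 \end{pmatrix},
\]
which is symmetric (hence reversible with uniform stationary), row-stochastic with zero diagonal, has spectrum $\{1,0,-2/5,-3/5\}$, and has four pairwise distinct rows. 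So there is no spectral gap at $0$, and the centered indicators do \emph{not} all lie (even approximately) in the bottom eigenspace of $\bar A$: the direction in the kernel of $M$ is simply lost. Your identified ``main obstacle'' is not merely hard---it is impossible.

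The paper avoids this trap by never asking for the full span of indicators. The variance bound (\cref{lemma: expansionimpliesvariance}) shows $\bar A$ and the model operator $W$ agree on each indicator, and the eigenspace-approximation step (\cref{lemma:lin-alg-subspace}) then places the \emph{nonzero} eigenvectors of $W$ near the large-eigenvalue eigenspace of $\bar A$. The crucial lemma you are missing is the eigenvector-coordinate-separation statement (\cref{lem:eigenvalue_separation}, proved via \cref{lem: eigenvectorsdistinguish}): row separation $\|M^{(a)}-M^{(b)}\|\ge\alpha$ forces, for every pair $a\neq b$, some eigenvector of $M$ with eigenvalue $>\lambda$ in absolute value to take values at $a$ and $b$ that differ by $\Omega_M(\alpha)$. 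Thus the image of $M$ (which may be strictly lower-dimensional than the span of indicators) already separates colors, and that image is what you recover from $\bar A$. Your concern about an $n^{\Theta(r)}$ rounding cost is also misplaced: the enumeration is over an $\epsilon$-net of the $r$-dimensional unit ball with $r,1/\epsilon=O_M(1)$, so the cost is $n^{O(1)}\cdot O_M(1)$.

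On the hardness side, your plan to encode Unique-Games labels directly into the $\{a,b\}$ ambiguity is plausible in spirit but leaves the soundness analysis entirely unspecified (why should $k$ independent sets covering a large fraction of your gadget graph decode to a consistent UG assignment?). The paper instead reduces from the Bansal--Khot problem (\cref{thm:propA1}): given a graph $G$ that either has two disjoint independent sets of size $(\tfrac12-\epsilon)n$ or no independent set of size $\gamma n$, it places copies of $G$ inside the blocks whose rows of $M$ coincide and connects blocks with distinct rows by biregular Ramanujan graphs matching the entries of $M$ (\cref{alg: finegrainedrowshard}). Completeness uses the two independent sets to realize the repeated colors; soundness is a simple pigeonhole over copies and colors. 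This is both more concrete and, because the starting problem is already about independent sets, avoids the delicate decoding you would need.
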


The above meta-theorem can be readily applied to deduce new results. 
For example, for almost $2$-colorable graphs, we have trivially that any $2 \times 2$ row stochastic matrix with zeros on the diagonal must be reversible, have second largest eigenvalue at most $0$, and have distinct rows.
Then \Cref{thm:alg-meta-thm} implies that, for every $\epsilon > 0$, there exists a $\delta > 0$ such that given a regular $\delta$-one-sided expander graph that is $2$-colorable after the removal of a $\delta$-fraction of the vertices, a polynomial-time algorithm outputs two independent sets that cover all but an $\epsilon$ fraction of the graph vertices.

Furthermore, the meta-theorem provides a satisfying explanation for why the sizes of color classes matter for \(3\)-colorable one-sided expanders in terms of what is possible algorithmically.
For example, the following choice of \(M\) corresponds to perfectly balanced \(3\)-colorings in regular graphs,
\begin{equation}
\begin{bmatrix}
0 & \frac{1}{2} & \frac{1}{2} \\
\frac{1}{2} & 0 & \frac{1}{2} \\
\frac{1}{2} & \frac{1}{2} & 0
\end{bmatrix}
\mper
\end{equation}
Now, if we increase the size of the first color class while keeping the sizes of the other color classes the same, we obtain the following matrices for every \(\gamma\in [0,\tfrac 12]\),
\begin{equation}
\begin{bmatrix}
0 & \frac{1}{2} & \frac{1}{2} \\
\frac{1}{2}+\gamma & 0 & \frac{1}{2}-\gamma \\
\frac{1}{2}+\gamma & \frac{1}{2}-\gamma & 0
\end{bmatrix}
\mper
\end{equation}
The above matrix has three non-positive eigenvalues \(1,-\tfrac 12 +\gamma,-\tfrac 12 -\gamma\).
We can make this matrix symmetric by multiplying from the left the diagonal matrix with entries \(1,\tfrac{1/2}{1/2+\gamma},\tfrac{1/2}{1/2+\gamma}\).
Thus, the relative sizes of the color classes are \(\tfrac{1+2\gamma}{3+2\gamma},\frac1{3+2\gamma},\frac1{3+2\gamma}\).
Hence, for \(\gamma=\tfrac12\), we obtain relative color class sizes \(\tfrac 12,\tfrac 14,\tfrac 14\), but the matrix \(M\) degenerates and the two last rows are identical,
\begin{equation}
\begin{bmatrix}
0 & \frac{1}{2} & \frac{1}{2} \\
1 & 0 & 0 \\
1 & 0 & 0
\end{bmatrix}
\mper
\end{equation}
This matrix even provides an explanation of why it is possible to find an independent set with almost half of the vertices:
The first color class has relative size \(1/2\) and the corresponding row is not repeated.

In general, for $M$-colorable one-sided expanders, our algorithms allow us to approximately recover all color classes for which the corresponding rows in \(M\) are unique.
We also recover an optimal fraction (assuming the Unique Games Conjecture) of the color classes for which the corresponding rows \emph{are} repeated.
Formally, we define an algorithm \(\cA\) to solve \((M, \alpha)\)-\textsc{expander-coloring} if for every \(\e>0\), there exists a \(\delta>0\) such that given an \((M,\delta)\)-colorable regular \(\delta\)-one-sided expander \(G\), the algorithm \(\cA\) outputs $k$ independent sets that cover all but an \(\alpha + \e\) fraction of vertices of \(G\).

\begin{theorem}[Algorithmic approximation meta-theorem for coloring one-sided expanders, see \Cref{thm:kalg-partial} and \Cref{thm: finegrainedrowshard} for full version]
  For every reversible, row stochastic matrix \(M\) with zeros on the diagonal and second largest eigenvalue at most \(0\), there exists some $\alpha \in [0, 1]$ such that
  \begin{enumerate}
  \item there exists a polynomial-time algorithm for \((M, \alpha)\)-\textsc{expander-coloring}, and
  \item for every $\gamma > 0$, \((M, \alpha-\gamma)\)-\textsc{expander-coloring} is NP-hard under the Unique Games Conjecture.
  \end{enumerate}
  Furthermore, $\alpha$ is calculated as follows: Let $\pi = (\pi_1, \ldots, \pi_k)$ be the stationary distribution corresponding to $M$.
  Then, denoting the rows of $M$ as $M^{(1)}, \ldots, M^{(k)}$,
  \begin{equation}
    \label{eq:approx-alpha}
    \alpha = \sum_{a=1}^k \min\Paren{\pi_a, \sum_{\substack{b \neq a\\ M^{(b)} = M^{(a)}}} \pi_j}\,.
  \end{equation}
\end{theorem}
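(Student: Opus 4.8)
The plan is to prove the two halves separately --- the algorithm (\Cref{thm:kalg-partial}) and the matching hardness (\Cref{thm: finegrainedrowshard}) --- and to check that \eqref{eq:approx-alpha} is exactly the value at which the two meet. First I would record some structural facts about $M$. Group the colors by equality of rows, writing $S_1,\dots,S_m$ for the groups and $\pi(S)=\sum_{a\in S}\pi_a$. (i) If $a,b$ lie in the same group then $M_{ab}=M^{(a)}_b=M^{(b)}_b=M_{bb}=0$, so in an $(M,\delta)$-colorable graph the union $J_S$ of the color classes of a group is an independent set of relative size $\pi(S)\pm O(k\delta)$. (ii) By detailed balance (the symmetrizer of $M$ is $\operatorname{diag}(\pi)$) together with (i), for $a$ in a group $S$ and $c$ outside $S$ one has $M_{ca}=\pi_a\cdot M_{cS}/\pi(S)$ where $M_{cS}=\sum_{a\in S}M_{ca}$; thus within a group the colors are symmetric up to their $\pi$-weights, i.e.\ every vertex outside $S$ sees its neighbors in $J_S$ split in the fixed proportions $(\pi_a/\pi(S))_{a\in S}$ regardless of its own color. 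Finally, collapsing each group to a single color yields a matrix $\bar M$ on $m$ colors that is still reversible and row stochastic, has zero diagonal by (i), has second eigenvalue at most $0$ (the collapsed chain is a strong lumping of $M$, so its spectrum is a sub-multiset of that of $M$), and by construction has \emph{no} repeated rows; moreover an $(M,\delta)$-colorable graph is $(\bar M, O(k\delta))$-colorable.

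For the algorithm I would proceed in two steps. First, run the no-repeated-rows algorithm of part~1 of \Cref{thm:alg-meta-thm} (that is, \Cref{thm:kalg}) on the coarsened instance $\bar M$; since the input is a $\delta$-one-sided expander and $(\bar M,O(k\delta))$-colorable, this recovers the group membership of all but an $\e$ fraction of vertices, i.e.\ the independent sets $J_{S_1},\dots,J_{S_m}$. Second, inside each group $S$ output \emph{any} partition of $J_S$ into $|S|$ parts of relative sizes $(\pi_a)_{a\in S}$ (each part is independent since $J_S$ is). The reason the $\min$ appears is then a one-line counting fact: for any $\hat I_a\subseteq J_S$ with $|\hat I_a|=\pi_a n$ and the true class $I^*_a\subseteq J_S$ of relative size $\pi_a$ we have $|\hat I_a\cap I^*_a|\ge |\hat I_a|-|J_S\setminus I^*_a|=(2\pi_a-\pi(S))n$, hence $|\hat I_a\cap I^*_a|\ge\max(0,2\pi_a-\pi(S))\,n$; summing over $a$ and over the groups, the output agrees with the planted coloring on at least $\sum_S\sum_{a\in S}\max(0,2\pi_a-\pi(S))\cdot n=(1-\alpha)n$ vertices, using $\pi_a-\max(0,2\pi_a-\pi(S))=\min(\pi_a,\pi(S)-\pi_a)$. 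Together with the $\e$ slack from step~1 this gives coverage $1-\alpha-O(\e)$. Step~1 is a black-box invocation and step~2 is elementary, so I expect this to be the easy direction.

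For the hardness I would reduce from Unique Games, using the repeated-rows hardness of part~2 of \Cref{thm:alg-meta-thm} (that is, \Cref{thm: identicalrowshard}) as a template: that construction already produces, for any $M$ with a non-singleton group, an $(M,\delta)$-colorable regular $\delta$-one-sided expander. The refinement needed is quantitative. Singleton-color classes get pinned down exactly (as in the no-repeated-rows analysis), and inside each non-singleton group $S$ the long-code / noise-operator gadget over the Unique Games alphabet should plant a partition of $J_S$ into classes of sizes $(\pi_a n)$ whose ``ambiguous'' part --- of relative size exactly $\sum_{a\in S}\min(\pi_a,\pi(S)-\pi_a)$, i.e.\ $\alpha$ fraction in total --- carries the Unique Games labels, while the complementary ``forced'' part of relative size $1-\alpha$ is the same across all near-satisfying assignments. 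The soundness claim to prove is that, when the Unique Games instance is far from satisfiable, any size-correct coloring of the $J_S$'s that agreed with some valid $(M,\delta')$-coloring on more than a $1-\alpha+\gamma$ fraction of vertices could be decoded into a Unique Games assignment of value bounded away from the soundness threshold, contradicting the conjecture.

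The main obstacle is calibration on the hardness side: one needs the recoverable fraction to be \emph{exactly} $1-\alpha$ rather than merely $\Omega(1-\alpha)$, which forces the plant to place precisely $\min(\pi_a,\pi(S)-\pi_a)n$ ambiguous vertices into each class $a$ and to make those vertices genuinely undecidable without solving Unique Games, while keeping the forced part trivially alignable and revealing nothing about the instance. One must also verify that one-sided expansion is preserved under the reduction --- the negative spectrum may blow up, which is harmless here, but the positive spectrum must stay controlled --- and that the gadget keeps the graph $(M,\delta)$-colorable for the required $\delta\to 0$. Matching these constants to the combinatorial identity \eqref{eq:approx-alpha} on the nose, for both the upper and lower bound simultaneously, is where I expect most of the effort to go.
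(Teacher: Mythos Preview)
Your algorithmic direction contains a decisive error: the claim in your fact~(i) that $J_S$ is an independent set is false, and so is the consequence that the input is $(\bar M,O(k\delta))$-colorable. From $M_{ab}=0$ for $a,b$ in the same group you correctly deduce that an $(M,\delta)$-colorable graph has at most $O(\delta dn)$ edges inside $J_S$, but $(\bar M,\delta')$-colorability (\Cref{def:model-coloring}) requires a small \emph{vertex cover} of those edges, not a small edge count. In the very instances produced by the hardness reduction of \Cref{thm: finegrainedrowshard}, each $J_{S_i}$ is assembled from copies of the Bansal--Khot graph, whose minimum vertex cover is a constant fraction of its size even in the YES case; the vertex-cover clause therefore fails and \Cref{thm:kalg} does not apply as a black box. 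The mistake is self-diagnosing: if ``each part is independent since $J_S$ is'' were true, your Step~2 would output $k$ disjoint independent sets covering all of $\bigcup_S J_S\approx(1-\e)n$ vertices, i.e.\ a $(k,O(\e))$-coloring, directly contradicting the hardness you are about to establish in Part~2. Your subsequent ``agreement'' computation happens to produce the number $1-\alpha$, but agreement with the planted coloring is not the objective and bears no relation to the size of the independent-set cover your algorithm actually outputs.

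The paper's algorithm (\Cref{thm:kalg-partial}) also begins by recovering the collapsed $k'$-partition, but it does this via the partition-recovery machinery (\Cref{thm:part-recovery}) directly---verifying the required variance bound against the original $k$-partition, not by asserting $(\bar M,\delta')$-colorability---and then it does \emph{not} attempt to split each $J_{S_i}$ into $|S_i|$ pieces. Instead it applies a single $2$-approximate vertex-cover rounding (\Cref{lemma: generalrounding}) to each recovered part, using that the largest true color class $\chi^{-1}(a_i^*)\subseteq J_{S_i}$ is independent; this yields one independent set of size $\max(0,2\pi_{a_i^*}-\pi(S_i))\,n=p_i\,n$ per group, and the total coverage $\sum_i p_i$ equals $1-\alpha$. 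On the hardness side, the paper does not build a long-code gadget as you sketch but reduces from the Bansal--Khot independent-set hardness (\Cref{thm:propA1}) as a black box: it places copies of that instance inside each non-singleton group, adds isolated vertices to match the target sizes $\pi_a$, and connects distinct groups by biregular Ramanujan graphs; soundness is then a one-line pigeonhole over the copies.
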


The error guarantee in \Cref{eq:approx-alpha} is to be interpreted as saying that, roughly, the number of misclassified vertices from color class $a$ (normalized by dividing by $n$) is bounded by the sum of the relative sizes of the other color classes with rows identical to $M^{(a)}$.
(This number of misclassified vertices is also trivially upper bounded by the relative size of color class $a$.)
As a consequence, for any maximal set of identical rows $M^{(a_1)}, \ldots, M^{(a_\ell)}$ of $M$, all the color classes other than the one with the largest relative size may be completely missed by the algorithm.
On the other hand, the number of correctly classified vertices from the color class with the largest relative size among $M^{(a_1)}, \ldots, M^{(a_\ell)}$ is non-trivial whenever its relative size is larger than the sum of the other relative sizes.

\paragraph{Concurrent work}
Concurrently and independently, Jun-Ting Hsieh \cite{tim-concurrent} gave an algorithm that, given an $n$-vertex regular graph that admits a proper $3$-coloring on all but $\delta n$ of its vertices and whose normalized adjacency matrix has at most $r$ eigenvalues larger than $\epsilon/100$, runs in time $n^{O(r/\epsilon^2)}$ and outputs a proper $3$-coloring on at least $(\frac{1}{2} - \epsilon - O(\delta))n$ of its vertices.

\section{Techniques}
\label{sec:techniques}

To discuss the spectral graph techniques used by our coloring algorithms,
we introduce some convenient notation and terminology commonly used, e.g., in the literature on Markov chains.

Let \(G\) be an undirected, connected graph with vertex set \(V\) and let \(\bm x \bm y\) denote a uniformly random edge of \(G\) with vertices \(\bm x\) and \(\bm y\) as endpoints.
Since the graph is undirected, the random vertices \(\bm x\) and \(\bm y\) have the same distribution without loss of generality.
For a vertex subset \(S\subseteq V\), we denote by \(\mu(S)=\Pr\{\bm x\in S\}\) its probability under this distribution.
(For regular graphs, this probability is the fraction of vertices contained in \(S\).)

Let \(A\) denote the adjacency matrix for \(G\) with every row normalized to sum to \(1\).
This matrix acts as a linear operator on real-valued functions \(f\from V\to \R\),
\begin{equation}
  A f (x)
  = \E \Brac{f(\bm y) \Mid \bm x = x}
  \mper
\end{equation}
This operator is self-adjoint with respect to the following inner product for functions \(f,g\from V\to \R\),
\begin{equation}
  \iprod{f,g}
  \seteq \E f(\bm x) g(\bm x)\mper
\end{equation}
We denote the Euclidean norm with respect to this inner product by \(\norm{f}\seteq \iprod{f,f}^{1/2}=\Paren{\E f^{2}}^{1/2}\).

The linear operator \(A\) has all eigenvalues in the interval \([-1,1]\subseteq \R\) and largest eigenvalue \(1\).
Since the graph \(G\) is connected, the top eigenspace is one-dimensional and contains all constant functions on \(V\).
Let \(\lambda<1\) denote the second largest eigenvalue.
Then, the quadratic form of \(A\) satisfies the inequality,
\begin{equation}
  \iprod{f,Af}\le (1-\lambda)\cdot \Paren{\E f}^{2} + \lambda \cdot \E f^{2}
  \mper
\end{equation}
The \emph{spectral gap} \(1-\lambda\) provides a measure of well-connectedness of \(G\).
We denote the \emph{(edge) expansion} of a vertex subset \(S\) by \(\phi_{G}(S)\seteq \Pr\set{\bm y\not \in S\mid \bm x\in S}\) and the \emph{expansion} of \(G\) as \(\phi(G)\seteq \min_{S\colon \mu(S)\le 1/2 } \phi_{G}(S)\).
By Cheeger's inequality \cite{cheeger, alon-milman}, the expansion of \(G\) agrees up to a quadratic factor with the spectral gap,
\begin{equation}
  \phi(G)^{2}
  \lesssim 1-\lambda
  \lesssim \phi(G)
  \mper
\end{equation}
While the above inequality is most informative for \(\lambda\) close to \(1\), smaller values of \(\lambda\) have interesting combinatorial consequences, e.g., they imply stronger expansion bounds for small vertex subsets \(S\subseteq V\),
\begin{equation}
  \phi_{G}(S)
  = 1-\iprod{\Ind_{S},A \Ind_{S}} / \mu(S)
  \ge 1- (1-\lambda)\cdot \mu(S) - \lambda
  \mper
\end{equation}
Here, \(\Ind_{S}\from V\to \set{0,1}\) denotes the indicator function of \(S\) satisfying \(\E \Ind_{S}=\E\Ind_{S}^{2}=\mu(S)\).

\paragraph{Subspace methods for expansion.}

Several important combinatorial optimization problems are related to finding (small) non-expanding sets in graphs, e.g., \textsc{uniform sparsest cut} \cite{leighton-rao,arora-rao-vazirani}, \textsc{unique games} \cite{khot, arora-barak-steurer, raghavendra-steurer}, and \textsc{small-set expansion} \cite{raghavendra-steurer, raghavendra-steurer-tetali}.
By extending the above discussion, we see that all non-expanding sets must be close to the subspace spanned by eigenvectors with eigenvalue above a certain threshold.

Letting \(P^{\ge \tau}\) denote the orthogonal projector into the subspace spanned by eigenvectors with eigenvalue at least \(\tau\), we generalize the previous upper bound on the quadratic form \(\iprod{f,Af}\)
\begin{equation}
  \iprod{f,Af}
  \le (1-\tau)\cdot \norm{P^{\ge \tau}f}^{2} + \tau \E f^{2}
  \mper
\end{equation}
A concrete consequence of the above inequality is that the indicator \(\Ind_{S}\) of a subset \(S\subseteq V\) with expansion \(\phi_{G}(S)\le \e\)  has projection into the top-most eigenspaces at least \(\norm{P^{\ge 1-\eta}\Ind_{S}}^{2}\ge (1-\e/\eta)\norm{\Ind_{S}}^{2}\) because \(1-\e \le 1-\phi_{G}(S)=\iprod{\Ind_{S},A\Ind_{S}}/\mu(S)\le \eta\cdot \norm{P^{\ge 1-\eta}\Ind_{S}}^{2}/\norm{\Ind_{S}}^{2}+1-\eta\).

Indeed, many algorithmic tasks related to expansion become tractable for graphs that have only few eigenvalues above a certain threshold \cite{arora-khot-kolla-steurer-tulsiani-vishnoi,arora-barak-steurer,barak-raghavendra-steurer,guruswami-sinop} by enumerating a fine enough cover of the Euclidean unit ball of the corresponding subspace.
This observation drives worst-case subexponential-time approximation algorithms for \textsc{unique games} and several related problems \cite{arora-barak-steurer, steurer-d-games} and motivates the notion of \emph{(top-)threshold rank} for graphs,
\begin{equation}
  \rank_{\ge \tau}(G) = \rank P^{\ge \tau} = \Card{\Set{i \mid \lambda_{i}(A)\ge \tau}}
  \mper
\end{equation}
While small threshold rank enables enumeration-based algorithms for many problems, it turns out that large threshold rank implies the existence of small non-expanding sets \cite{arora-barak-steurer, steurer, lee-oveis-gharan-trevisan, raghavendra-vempala-tetali-louis}.
This combinatorial ramification of large threshold rank underlies the aforementioned subexponential-time approximation algorithms, e.g., for the construction of suitable graph decompositions.

\paragraph{Subspace methods beyond expansion?}

In this work, we aim to extend these developments to vertex-based problems like \textsc{coloring}, \textsc{vertex cover} and \textsc{independent set},
strengthening recent seminal work on rounding large independent sets in expanders \cite{bafna-hsieh-kothari}.

By Hoffman's bound \cite{hoffman}, an independent set \(S\) in \(G\) means that the Markov operator \(A\) for \(G\) has at least one eigenvalue at most \(-\tfrac{\mu(S)}{1-\mu(S)}\).
Indeed, the projection \(f=\Ind_{S}-\mu(S)\) of the indicator \(\Ind_{S}\) orthogonal to the top eigenspace of \(A\) satisfies \(\E f^{2}=\mu(S)\cdot (1-\mu(S))\) and
\begin{equation}
  \iprod{f,Af}
  = \iprod{\Ind_{S},A \Ind_{S}}-\mu(S)^{2}
  = -\tfrac{\mu(S)}{1-\mu(S)} \cdot \E f^{2}
  \mper
\end{equation}
In particular, for independent sets \(S\) with volume \(\tfrac {1-\e}2\), the above Rayleigh quotient for \(f\) is \(\tfrac{1-\e}{1+\e}\ge 1-2\e\).
By a similar argument as before for non-expanding sets, most of the function \(f\) is captured by the eigenspaces with eigenvalues at most \(-0.9\) so that \(\norm{P^{\le -0.9}f}\ge (1-O(\e))\cdot \norm{f}^{2}\),
where \(P^{\le -\tau}\) denotes the projector to the span of the eigenvectors of \(A\) with eigenvalue at most \(-\tau\).
Therefore, by enumerating a fine enough cover of the unit ball of the image of \(P^{\le -0.9}\), we can find a function \(O(\e)\)-close to \(f\).
With such a function in our hands, we can round it to a vertex subset \(O(\e)\)-close to \(S\).
After removing a vertex cover of the edges with both endpoints in that set, we obtain an independent set of volume at least \(\tfrac{1-O(\e)}{2}\).
In this way, we can find independent sets of volume close to \(\tfrac 12\) in time exponential in the \emph{bottom-threshold rank} \(\rank_{\le -0.9}(G)\), defined as
\begin{equation}
  \rank_{\le -\tau} (G)
  = \rank P^{\le -\tau}
  = \card{\set{i \mid \lambda_{i}(A)\le -\tau}}
  \mper
\end{equation}

\paragraph{Coloring graphs with bounded bottom-threshold rank?}

At this point, it is natural to expect efficient algorithms for finding independent sets of volume \(\tfrac 1 k\) or \(k\)-colorings for most vertices on graphs with small bottom-threshold rank.
However, we show in this work that this expectation is wrong and that such algorithms would refute the Unique Games Conjecture \cite{khot}:
\begin{quote}
  For all constants \(\tau,\e>0\), there exists a constant \(R\ge 1\)
  such that it is UG-hard\footnote{
    We say that a problem is UG-hard if there is a reduction from gapped \textsc{unique games} to that problem.
    In particular, it means that the problem is NP-hard assuming the Unique Games Conjecture.
  } to find a \(3\)-coloring for a \(0.9\) fraction of vertices in graphs \(G\) that have a balanced \(3\)-coloring for all but an \(\e\) fraction of vertices and that have bounded bottom-threshold \(\rank_{\le -\tau}(G)\le R\).
\end{quote}
(With the same quantifiers, it is UG-hard to find an independent set of volume \(\e\) in graphs that have a \(4\)-coloring for all but an \(\e\) fraction of vertices and have bounded bottom-threshold rank.)
These hardness results also refute an explicit prediction in previous work \cite{bafna-hsieh-kothari} that hardness reductions for coloring or independent set must produce instances with large (super-constant) bottom-threshold rank.

Underlying our hardness results is the issue that for independent sets \(S\) of volume bounded away from \(\tfrac 12\), we can guarantee only that its (centered) indicator function has non-trivial correlation \(\Theta(\mu(S))\) with the bottom most eigenspaces.
This amount of correlation turns out to be enough for solving many edge-based problems like edge expansion or 2-CSP.
Concretely, we could efficiently find colorings with only a tiny fraction of monochromatic edges in low-threshold-rank graphs \cite{barak-raghavendra-steurer}.
The computational intractability arises if we try to solve vertex-based problems in such graphs, e.g., finding colorings that are proper for a significant fraction of vertices.
(In other words, the monochromatic edges have a small vertex cover.)

A key technical contribution of our work identifies simple structural conditions on coloring instances to guarantee that the bottom-most eigenspaces capture enough information about colorings to enable efficient algorithms.
Our starting point is generalization of the notion of pseudorandom colorings \cite{kumar-louis-tulsiani}.
In order to understand what information we can extract from the bottom-most eigenspaces, we construct from the input graph \(G\) and its optimal \(k\)-coloring \(\chi\from V\to [k]\), a \emph{model graph} \(H\) whose eigenspaces are easy to understand and capture enough information to be able to reconstruct the coloring \(\chi\).
Then, we aim to show that \(H\) approximates \(G\) well-enough to guarantee that the bottom-most eigenspaces of \(G\) capture almost as much information about \(\chi\) as those of the model graph \(H\).

\paragraph{Recovering partitions from non-degenerate low-variance models.}
In order to specify our construction of the model \(H\) from the input graph \(G\) and any (not necessarily proper) \(k\)-coloring \(\chi\from V\to [k]\), we describe how to obtain the normalized adjacency matrix \(B\) for \(H\):
Partition the rows and columns of the normalized adjacency \(A\) according to the coloring \(\chi\) and replace in the resulting \(k\)-by-\(k\) block matrix every block by its average entry.
While one can view the model \(H\) as a graph on \(k\) vertices (corresponding to the blocks of the partition \(\chi\)), it will be more convenient to view \(H\) as a graph on the same vertex set as \(G\).
Concretely, the Markov operator \(B\) of the model \(H\) acts on functions \(f\from V\to \R\) as follows,
\begin{equation}
  Bf(x') = \E \Brac{f(\bm y') \Mid \chi(\bm x)=\chi(x')\,,~\chi(\bm y) = \chi(\bm y') }
  \mper
\end{equation}
Here, \(\bm x \bm y\) denotes a random edge of \(G\) as before and \(\bm y'\) is an independent random vertex with the same marginal distribution as \(\bm y\) (and \(\bm x\)).

The functions in the image of \(B\) are constant on the blocks of the partition \(\chi\).
Indeed, let \(\chi_{1},\ldots,\chi_{k}\) be the indicator functions of the blocks of \(\chi\) so that \(\chi_{a}=\Ind_{\chi^{-1}(a)}\).
Then, every function orthogonal to \(\chi_{1},\ldots,\chi_{k}\) is in the kernel of \(B\).
Therefore, every function in the image of \(B\) is a linear combination of \(\chi_{1},\ldots,\chi_{k}\).
However, the image of \(B\) may be a strict subspace of the span of these indicator functions and the indicator functions are not necesarily in the image of \(B\).

When can we recover the partition \(\chi\) from the image of \(B\)?
For a function \(f\from V\to \R\) in the image of \(B\), let \(f(c)\) denote the value that \(f\) assigns to block \(c\).
Then all of the following statements are equivalent:

\begin{enumerate}
\item The image of \(B\) uniquely determines the partition \(\chi\).
\item For every distinct pair of colors \(a,b\in [k]\), the image of \(B\) contains a function \(f\) that assigns different values to block \(a\) and block \(b\) so that \(f(a)\neq f(b)\).
\item For every distinct pair of colors \(a,b\in [k]\), the function \(\chi_{a}/\E\chi_{a}-\chi_{b}/\E\chi_{b}\) is \emph{not} orthogonal to the image of \(B\).
\item For every distinct pair of colors \(a,b\in [k]\), the function \(\chi_{a}/\E\chi_{a}-\chi_{b}/\E\chi_{b}\) is \emph{not} in the kernel of \(B\).
\item For every distinct pair of colors \(a,b\in [k]\), there exists a color \(c\in [k]\) such that
  \begin{equation}
    \Pr\Set{\chi(\bm y)=c \mid \chi(\bm x)=a}
    \neq \Pr\Set{\chi(\bm y)=c \mid \chi(\bm x)=b}
    \mper
  \end{equation}
\item The row stochastic matrix \(M\in\R_{\ge 0}^{k\times k}\) with entries \(M_{a,b}=\Pr\Set{\chi(\bm y)=b \mid \chi(\bm x)=a}\) has no repeated rows.
\end{enumerate}

Let us assume that the image of \(B\) uniquely determines the partition \(\chi\).
A priori, it is unclear how to exploit this property of the model graph algorithmically because we can construct it explicitly only if we have access to the partition \(\chi\).

Next we will discuss that a strong enough variance bound implies that the image of \(B\) is captured by eigenspaces of \(A\) with eigenvalue bounded away from \(0\).
In this way, we can simulate the enumeration of the image of \(B\) by enumerating subspaces defined in terms of eigenspaces of the input matrix \(A\).

Our notion of variance captures how well \(A\) approximates \(B\) acting on the indicator functions \(\chi_{1},\ldots,\chi_{k}\) of the partition \(\chi\).
Concretely, we define the variance of \(G\) with respect to the model \(H\) as
\begin{equation}
  \Var(G;H)\seteq \sum_{a=1}^{k} \norm{(A-B)\chi_{a}}^{2}/\E \chi_{a}
  \mper
\end{equation}
Suppose that \(B\) has rank \(r\le k\) and that \(h_{1},\ldots,h_{r}\) are orthonormal eigenfunctions of \(B\) with non-zero eigenvalues \(\lambda_{1}(B),\ldots,\lambda_{r}(B)\).
Let \(\lambdamin(B)=\min\Set{\abs{\lambda_{i}(B)} \mid i\in [r]}\) denote the smallest non-zero eigenvalue of \(B\) in absolute value.
Let \(P_{\eta}\) denote the projection operator to the subspace spanned by eigenfunctions of \(A\) with eigenvalue at most \(\lambdamin(B)-\eta\) in absolute value.
In particular, every eigenvalue of \(A\) captured by \(P_{\eta}\) has distance at least \(\eta\) from every non-zero eigenvalue of \(B\).
We aim to show that all projections \(\norm{P_{\eta} h_{i}}^{2}\) are small as it allows us to conclude that most of \(h_{i}\) is captured by eigenspaces of \(A\) with eigenvalue at least \(\lambdamin(B)-\eta\) in absolute value (which we may afford to search exhaustively).
Indeed, we can upper bound this projection in terms of \(\norm{(A-B)h_{i}}^{2}\),
\begin{align}
  \norm{(A-B)h_{i}}^{2}
  &= \norm{(A-\lambda_{i}(B)\cdot I)h_{i}}^{2}\\
  &= \sum_{j} \Paren{\lambda_{j}(A)-\lambda_{i}(B)}^{2} \cdot \iprod{g_{j},h_{i}}^{2}\\
  &\ge \eta^{2}\sum_{j\colon \abs{\lambda_{j}(A)-\lambda_{i}(B)}\ge \eta} \iprod{g_{j},h_{i}}^{2}\\
  &\ge \eta^{2} \norm{P_{\eta} h_{i}}^{2}
  \mper
\end{align}
Here, we use that \(h_{i}\) is an eigenfunction of \(B\) with eigenvalue \(\lambda_{i}(B)\) and we let \(\set{g_{j}}\) be a full eigenbasis for the Markov operator \(A\) of \(G\).
At the same time, we can upper bound \(\norm{(A-B)h_{i}}^{2}\) in terms of our notion of variance.
Let us extend the \(h_{1},\ldots,h_{r}\) to an orthonormal basis \(h_{1},\ldots,h_{k}\) of the span \(U\) of \(\chi_{1},\ldots,\chi_{k}\).
Then,
\begin{align}
  \sum_{i=1}^{r}\norm{(A-B)h_{i}}^{2}
  & \le   \sum_{i=1}^{k}\norm{(A-B)h_{i}}^{2} \\
  & =  \sum_{i=1}^{k}\iprod{h_{i},(A-B)^{2}h_{i}} \\
  & = \Tr \Paren{I_{U} (A-B)^{2} I_{U}}\\
  &  =   \sum_{i=1}^{k}\norm{(A-B)\chi_{a}}^{2}/\E \chi_{a} = \Var(G;H)
    \mper
\end{align}
Here, \(I_{U}\) denotes the projection operator to the subspace \(U\) and we use that both \(h_{1},\ldots,h_{k}\) and \(\tfrac {1}{(\E\chi_{1})^{1/2}}\chi_{1},\ldots,\tfrac {1}{(\E\chi_{k})^{1/2}}\chi_{k}\) are two orthonormal bases for the same subspace \(U\).
Putting the previous two inequalities, we obtain the following bound on the error we suffer by factoring out the subspace spanned by eigenvectors of \(A\) with eigenvalue close to \(0\) in absolute value.
\begin{equation}
  \sum_{i=1}^{r}\norm{P_{\eta} h_{i}}^{2} \le \eta^{-2}\cdot \Var(G; H)\mper
\end{equation}
We show by a concentration argument that for randomly planted colorings in worst-case host graphs, the resulting coloring instance has a model graph that uniquely determines the planted coloring and it has variance tending to \(0\) with growing average degree.
Perhaps surprisingly, we show that a crisp deterministic property of the input graph, namely one-sided spectral expansion, also implies strong bounds on our notion of variance.

\paragraph{One-sided spectral expansion implies low variance.}

For \(\lambda>0\), we say that a graph \(G\) is a \(\lambda\)-one-sided-expander if the second largest eigenvalue of its Markov operator is at most \(\lambda\).
We use the term one-sided expander to emphasize that we do not restrict the bottom eigenvalues smaller than \(0\).
As we discussed above, this distinction between positive and negative eigenvalues is crucial for coloring and independent set problems.
Indeed, a two-sided bound on the eigenvalues different from \(1\) would rule out the existence of large independent sets.

Consider a \(\lambda\)-one-sided expander \(G\) with a proper \(k\)-coloring \(\chi\from V\to [k]\).
Construct the model graph \(H\) for \(G\) and \(\chi\) as above.
We show that the variance of \(G\) with respect to \(H\) satisfies the bound
\begin{equation}
  \Var(G;H) \le \lambda \cdot k
  \mper
\end{equation}
As before, we let \(A\) and \(B\) be the Markov operators of \(G\) and \(H\) and we let \(\chi_{1},\ldots,\chi_{k}\from V\to \set{0,1}\) be the indicator functons of the blocks of coloring \(\chi\).
To bound the variance, we are to show that \(A\chi_{a}\) is close to \(B\chi_{a}\) for all colors \(a\in [k]\).
The function \(A\chi_{a}\) is equal to \(\psi_{a}(x)\seteq \Pr\Set{\chi(\bm y)=a \mid \bm x=x}\),
where \(\bm x \bm y\) denotes a uniformly random edge of \(G\) as before.
At the same time, in every block \(b\) of \(\chi\), the function \(B\chi_{a}\) is constant and equal to \(\Pr\Set{\chi(\bm y)=a\mid \chi(\bm x)=b}=\E\Brac{\psi_{a}(\bm x) \mid \chi(\bm x)=b}\).
Hence, we can express the contribution of \(\norm{(A-B)\chi_{a}}^{2}\) to \(\Var(G;H)\),
\begin{equation}
  \norm{(A-B)\chi_{a}}^{2}
  = \sum_{b=1}^{k}\Pr\Set{\chi(\bm x)=b}
  \cdot \Var\Brac{\psi_{a}(\bm x) \mid \chi(\bm x)=b}
  \mper
\end{equation}
If the variance of \(\psi_{a}\), conditioned on block \(b\), is large,
it implies that a subset of block \(b\) has an unusually large number of neighbors in block \(a\),
which contradicts the one-sided expansion property of \(G\).
Concretely, we claim that every pair of distinct colors \(a,b\in[k]\) satisfies \(\Var\Brac{\psi_{a}(\bm x) \mid \chi(\bm x)=b}\le \lambda \tfrac{\E \chi_{a}}{\E \chi_{b}}\).
Indeed, consider the function \(f=\chi_{b}\cdot (\psi_{a}-\alpha)+\lambda\cdot \chi_{a}\) for \(\alpha=\E\Brac{\psi_{a}(\bm x)\mid \chi(\bm x)=b}\).

The expectation of this function satisfies \(\E f = \lambda \E\chi_{a}\).
Since the functions \(\chi_{b}\cdot (\psi_{a}-\alpha)\) and \(\chi_{a}\) are orthogonal, we can compute the norm of \(f\) as,
\begin{equation}
  \E f^{2}= \E \chi_{b}\cdot(\psi_{a}-\alpha)^{2} + \lambda^{2}\E\chi_{a}
  \mper
\end{equation}

Since \(\chi\) is a proper coloring and there no monochromatic edges, we can compute the quadratic form of \(A\) evaluated at \(f\),
\begin{align}
  \iprod{f,Af}
  &=2\lambda \cdot\E \chi_{b} (\psi_{a}-\alpha) A\chi_{a} \\
  &=2\lambda \cdot\E \chi_{b} (\psi_{a}-\alpha) \psi_{a} \\
  &=2\lambda \cdot\E \chi_{b} (\psi_{a}-\alpha)^{2}
    \mper
\end{align}
Here, we use in the third step that we chose \(\alpha\) such that \(\E \chi_{b}(\psi_{a}-\alpha)=0\).
  
By the inequality for \(\lambda\)-one-sided-expanders,
\begin{align}
    2\lambda \E \chi_{b} \cdot(\psi_{a}-\alpha)^{2}
    & = \iprod{f,Af} \\
    & \le (1-\lambda) (\E f)^{2} + \lambda \E f^{2} \\
    & =   (1-\lambda) \lambda^{2} (\E \chi_{a})^{2} + \lambda \E \chi_{b} (\psi_{a}-\alpha)^{2} + \lambda^{3} \E \chi_{a}
\end{align}
By solving the inequality for the conditional variance \(\E \chi_{b}(\psi_{a}-\alpha)^{2}\),
we obtain the desired bound,
\begin{equation}
  \E\chi_{b}(\psi_{a}-\alpha)^{2}
  \le \lambda \E \chi_{a} \cdot \Paren{(1-\lambda)\E \chi_{a}+\lambda}\le \lambda \E \chi_{a}
  \mper
\end{equation}
We remark that this variance bound is robust and deteriorates gracefully in the presense of a small fraction of monochromatic edges.

\paragraph{Relating positive and negative graph eigenvalues.} 

The discussion so far focused on the question whether the bottom eigenspaces of the graphs Markov operator capture enough information about colorings or not.
However, in order to obtain efficient algorithms, we also need to bound the number of significant bottom eigenvalues so that we can afford to enumerate the subspace their eigenfunctions span.

To this end, we show that, perhaps surprisingly, bottom-threshold ranks can never be much larger than top-threshold ranks.
Concretely, for all thresholds \(\tau>0\) and \(\tau'<\tau^{2}\), we can upper bound the bottom-threshold rank in terms of the top-threshold rank,
\begin{equation}
  \rank_{\le -\tau}(G)\le \Paren{\tfrac{\tau^{2}-\tau'}{1-\tau'}}^{-2}\cdot \rank_{\ge \tau'}(G)
  \mper
\end{equation}
Since one-side spectral expansion and small-set (edge) expansion imply upper bounds on top-threshold ranks, the above inequality allows us to translate these properties also to upper bounds on bottom-threshold ranks.
In the extreme case \(\tau=1\), the bottom-threshold rank counts the number of bipartite connected compontents, which we can clearly upper bound by the total number of connected components, which is the top-threshold rank for \(\tau'=1\).

To illustrate the above inequality, consider orthonormal eigenfunctions \(g_{1},\ldots,g_{t}\) of the Markov operator \(A\) of \(G\) with eigenvalues at most \(-\tau\).
Define a vector embedding of the vertices \(g\from V\to \R^{t}\) with \(g(x)\seteq \tfrac 1{\sqrt t}(g_{1}(x),\ldots,g_{t}(x))\).
By orthonormality of the functions \(g_{1},\ldots,g_{t}\), this embedding satisfies \(\E\norm{g(\bm x)}^{2}=\tfrac 1 t\sum_{i=1}^{t}\norm{g_{i}}^{2}=1\) and \(\E\iprod{g(\bm x),g(\bm x')}=\tfrac 1 t^{2}\sum_{i,j}\iprod{g_{i},g_{j}}^{2}=\tfrac 1 t\), where \(\bm x'\) is an independent random vertex with the same marginal distribution as \(\bm x\).
Furthermore, since \(g_{1},\ldots,g_{t}\) are eigenfunctions with eigenvalue at most \(-\tau\), we have
\begin{equation}
  \E \iprod{g(\bm x),g(\bm y)}=\tfrac 1 t \sum_{i=1}^{t} \iprod{g_{i},A g_{i}} \le -\tau
  \mper
\end{equation}
Using the tensoring trick \cite{arora-khot-kolla-steurer-tulsiani-vishnoi, khot-vishnoi}, let us a construct a new vector embedding \(f\from V\to \R^{t^{2}}\)of the vertices
\begin{equation}
  f(x) \seteq \norm{g(x)} \cdot \bar g(x)^{\otimes 2}\mcom
\end{equation}
where \(\bar g(x)=\tfrac 1{\norm{g(x)}}g(x)\) denotes the normalized vector embedding.
By Cauchy--Schwarz,
\begin{align}
  \tau
  &\le \E \abs{\iprod{g(\bm x),g(\bm y)}}\\
  &=\E \norm{g(\bm x)}\cdot \norm{g(\bm y)}\cdot \abs{\iprod{\bar g(\bm x),\bar g(\bm y)}}\\
  &\le \Paren{\E \norm{g(\bm x)}\cdot \norm{g(\bm y)}}^{1/2} \cdot \Paren{\E \norm{g(\bm x)}\cdot \norm{g(\bm y)}\cdot \iprod{\bar g(\bm x),\bar g(\bm y)}^{2}}^{1/2}
  \\
  &\le  \Paren{\E \norm{g(\bm x)}\cdot \norm{g(\bm y)}\cdot \iprod{\bar g(\bm x),\bar g(\bm y)}^{2}}^{1/2}
  \\
  &=  \Paren{\E \iprod{f(\bm x),f(\bm y)}}^{1/2}
    \mper
\end{align}
At the same time, the vector embedding continues to satisfy \(\E \norm{f(\bm x)}^{2}=\E \norm{g(\bm x)}^{2}=1\) and
\begin{equation}
  \E \iprod{f(\bm x),f(\bm x')}^{2}=  \E \iprod{g(\bm x),g(\bm x')}^{2}\cdot \iprod{\bar g(\bm x),\bar g(\bm x')}^{2}
  \le  \E \iprod{g(\bm x),g(\bm x')}^{2}=\tfrac 1 t
  \mper
\end{equation}
By a characterization of the top-threshold rank in terms of vector embeddings \cite{barak-raghavendra-steurer}, this function \(f\) constitues a witness that for every \(\tau'<\tau^{2}\), the top-threshold rank is at least
\begin{equation}
  \rank_{\ge  \tau'}(G)\ge \Paren{\tfrac {\tau^{2}-\tau'}{1-\tau'}}^{2} \cdot t
  \mper
\end{equation}

\paragraph{Algorithmic meta theorem and dichotomy for coloring one-sided expanders.}

Putting together the results discussed so far, we obtain an appealing algorithmic meta theorem and dichotomy result for a particular class of coloring problems on one-sided spectral expanders.

Let \(G\) be a undirected, connected graph with vertex set \(V\).
For \(\e>0\), we say that \(\chi\from V\to[k]\) is an \emph{\((k,\e)\)-coloring} if \(\chi\) can be made a proper \(k\)-coloring by removing a vertex subset of volume at most \(\e\).
For a reversible,\footnote{Here, \(M\) being reversible means that there exists a diagonal matrix \(D\) such that \(D M\) is a symmetric matrix.} row stochastic matrix \(M\in \R_{\ge 0}^{k\times k}\) with all zeroes on the diagonal and \(\delta>0\), we say that \(\chi\from V\to [k]\) is a \emph{\((M,\delta)\)-coloring} if \(\chi\) is \((k,\delta)\)-coloring and the model graph for \(G\) with respect is \(\delta\)-close to \(M\) so that for all pairs of colors \(a,b\in [k]\),
\begin{equation}
  \Abs{M_{ab} - \Pr_{\bm x\bm y \sim G}\Set{\chi(\bm y)=b\Mid \chi(\bm x)=a}}
  \le \delta\mper
\end{equation}
We say that \(G\) is \((M,\delta)\)-\emph{colorable} if there exists a \((M,\delta)\)-coloring for \(G\).

We will use the matrix \(M\) in order to parametrize graph coloring instances on one-sided expanders.
To solve the problem, it will be enough to find any good enough coloring of the graph.
However, the matrix \(M\) will represent an additional promise on what colorings exist in the graph.
Also note that every \(k\)-coloring \(\chi\) is an \(M\)-coloring for some reversible, row-stochastic matrix \(M\) with zeroes on the diagonal.

Formally, we define an algorithm \(\cA\) to solve \(M\)-\textsc{expander-coloring} if for every \(\e>0\), there exists a \(\delta>0\) such that given a \((M,\delta)\)-colorable \(\delta\)-one-sided-expander \(G\), the algorithm \(\cA\) outputs a \((k,\e)\)-coloring for \(G\).
In order for this problem to be non-trivial, we need the matrix \(M\) to have second largest eigenvalue at most \(0\).
In this way, \((M,\delta)\)-colorable \(\delta\)-one-sided-expander exist for every \(\delta>0\).
In particular, a stochastic block model graph with large enough degree parameter (as a function of \(\delta\)) and suitable block probabilities (as a function of \(M\)) has the desired properties with high probability.

We show that, assuming the Unique Games Conjecture, \(M\)-\textsc{expander-coloring} admits a polynomial-time algorithm in the sense above if and only if \(M\) has no repeated rows.
More precisely, we show that for every reversible, row stochastic matrix \(M\) with zeros on the diagonal and second largest eigenvalue at most \(0\):

\begin{enumerate}
\item If \(M\) has no repeated rows, then there exists a polynomial-time for \(M\)-\textsc{expander-coloring}.
\item If \(M\) has repeated rows, then there exists a polynomial-time reduction from problems predicted to be NP-hard by the Unique Games Conjecture to \(M\)-\textsc{expander-coloring}.
\end{enumerate}

We remark that it is not important for our algorithms to know the underlying model matrix \(M\).
(However, by carefully enumerating all potential model matrices, one could also assume to know the underlying matrix \(M\).)

For the algorithmic part of our meta-theorem, we just need to combine the ideas discussed so far.
Let \(\e>0\).
Suppose that \(M\) has no repeated rows and that \(G\) is a \((M,\delta)\)-colorable \(\delta\)-one-side-expander for sufficiently small \(\delta>0\) (as a function of \(M\) and \(\e\)).
Let \(\chi\) be \((M,\delta)\)-coloring of \(G\).
We can derive that the matrix \(M'_{ab}=\Pr\Set{\chi(\bm y)=b\mid \chi(\bm x)=a}\) has no repeated rows because it is sufficiently close to the matrix \(M\), which has that property.
Thus, as discussed before, the partition \(\chi\) is uniquely determined by the image of the Markov operator \(B\) of the model graph \(H\) for \(G\) with respect to \(\chi\).
Since \(G\) is a \(\delta\)-one-sided-expander, we obtain an upper bound of \(\delta\cdot k\) on the variance \(\Var(G;H)\) of the input graph \(G\) with respect to the model graph \(H\).
As we saw before, this kind of variance bound implies that the bottom-most eigenspaces of the input graph are close to those of the model graph and therefore allow us to approximately recover the partition \(\chi\).
At the same time, we can upper bound the dimension of the span of those eigenspaces by bounding the bottom-threshold rank in terms of the top-threshold rank (equaling \(1\) for one-sided spectral expanders).
We conclude that we can approximately recover the partition \(\chi\) in polynomial time by exhaustively enumerating a fine enough cover of the subspace spanned by the bottom-most eigenspaces of \(G\).

For the hardness part of our meta-theorem, we generalize a previous hardness reduction for coloring one-sided expanders \cite{bafna-hsieh-kothari}.
The starting point of the reduction is the following problem shown to be NP-hard under the Unique Games Conjecture for every \(\e>0\) \cite{MR2648426-Bansal09}:
\begin{quote}
  Given a regular graph \(G\) that contains two disjoint independent sets, each of volume at least \(\tfrac 12-\e\), find an independent set of volume at least \(\e\).
\end{quote}
Let \(M\) be a reversible, row-stochastic \(k\)-by-\(k\) matrix with zeroes on the diagonal and second largest eigenvalue at most \(0\).
Let \(p_{1},\ldots,p_{k}\ge 0\) with \(\sum_{i=1}^{k} p_i = 1\) be the stationary measure for \(M\) so that \(p_{a}M_{ab}=p_{b}M_{ba}\).
Suppose that the first two rows of \(M\) are identical and that \(p_{1}\le p_{2}\).
The idea of the reduction is to create a graph \(G'\) on \(k\) blocks consistent with \(M\) and to embed the hard instance \(G\) within the first two blocks.
Concretely, our reduction outputs the following graph \(G'\), where we choose its number of vertices \(n\) such that \(\tfrac 12 \card{V(G)}=p_{1} n\).
Let \(V_{1,2}=V(G) \cup V'_{1,2}\), where \(V'_{1,2}\) consists of \((p_{2}-p_{1})\cdot n\) fresh vertices, and create disjoint blocks \(V_{3},\ldots,V_{k}\) of fresh vertices such that \(\card{V_{i}}=p_{i}\cdot n\) for all \(i\in \set{3,\ldots,k}\).
This construction ensures \(\card{V_{1,2}}=(p_{1}+p_{2})\cdot n\).
Choose an arbitary bipartition \(V_{1}\) and \(V_{2}\) of \(V_{1,2}\) such that \(\card {V_{1}}=p_{1}n\) and \(\card{V_{2}}=p_{2}n\).
For simplicity, we choose \(G'\) as a weighted graph with the following distribution \(\bm x\bm y\) over edges, where \(\delta>0\) is a paramter of the reduction:
\begin{enumerate}
\item With probability \(\delta>0\), choose \(\bm x \bm y\) as a random edge in the input graph \(G\) for the reduction.
\item With the remaining probability \(1-\delta\), choose two random pair of colors \(\bm a \bm b\) according to the probabilities \(p_{a,b}=p_{a}\cdot M_{a,b}\) and choose \(\bm x\) to be a uniformly random vertex in \(V_{\bm a}\) and \(\bm y\) to be uniformly random vertex in \(V_{\bm b}\).
\end{enumerate}
We remark that, by subsampling the edges of the graph \(G'\), we could obtain a sparse, unweighted graph that inherits enough of the properties of \(G'\) for the reduction to go through.

For sufficiently small \(\delta\), the model graph for \(G'\) with respect to the partition \(V_{1},\ldots,V_{k}\) is arbitrarily close to \(M\).
Since the second largest eigenvalue of \(M\) is at most \(0\) that also means that \(G'\) is an arbitrarily good one-sided spectral expander, i.e., the second largest eigenvalue of \(G'\) tends to \(0\) as \(\delta\to 0\).
If we find a \((k,p_{1})\)-coloring in \(G'\), then at least half of the vertices of \(G\) are covered by \(k\) independent sets and we obtain an independent set of volume at least \(\tfrac{1}{2k}\) for \(G\).
Finally, we need to argue that \(G'\) satisfies the promise of being \((M,\delta')\)-colorable in case that the input graph \(G\) contains two disjoint independent sets, each of volume at least \(1/2-\e\).
To this end, extend the two large independent sets in \(G\) to a bipartition \(V^{*}_{1}\) and \(V^{*}_{2}\) of \(V_{1,2}\) with \(p_{1}n\) and \(p_{2}n\) vertices, respectively.
Note that the edges with both endpoints in the same part of this bipartition have a vertex cover of size at most \(4\e p_{1} n\).
Here, we also use that the edge probability satisfies \(p_{1,2}=p_{1,1}=p_{2,2}=p_{2,1}=0\) because \(M\) has zeros on the diagonal and its first two rows are identical.
It follows that the \(k\)-partition \(V^{*}_{1},V^{*}_{2,}V_{3},\ldots,V_{k}\) is a \((k,4\e)\)-coloring of \(G'\).
Furthermore, the model graph for \(G'\) with respect to this partition tends to \(M\) as \(\delta\to 0\) because the first two rows of \(M\) are identical.
(Indeed, for that property it does not matter which bipartition of \(V_{1,2}\) we choose as long as we respect the two parts have the right cardinalities.)

\section{Definitions and notation}

For a graph $G$ with adjacency matrix $A$ and (diagonal) degree matrix $D$ with $D_{ii} = \deg(i)$, we say it is a $\lambda$-one-sided expander if $\lambda_2(D^{-1/2} A D^{-1/2}) \leq \lambda$.
We say a matrix $A \in \R_{\geq 0}^{n \times n}$ is \emph{row stochastic} if the sum of its entries on every row is $1$: for all $x \in [n]$, $\sum_{y=1}^n A_{xy} = 1$.
We say a matrix $A \in \R_{\geq 0}^{n \times n}$ is \emph{reversible row stochastic} if it is row stochastic and if there exists $\pi \in \R_{\geq 0}^{n}$ such that, for all $x, y \in [n]$, $\pi_x A_{xy} = \pi_y A_{yx}$.

Our main results will be stated in terms of $(k,\delta)$-colorings.

\begin{definition}[$(k,\delta)$-coloring]
    \label[definition]{def:kdelta}
    Let $G$ be an $n$-vertex graph.
    A $(k, \delta)$-coloring of $G$ is a list of $k$ disjoint independent sets in $G$ that cover $(1-\delta)n$ of the vertices of $G$.
\end{definition}

We define now the partition matrices and the model of a graph, which are defined with respect to a $k$-partition $\chi: [n] \to [k]$ of its vertices.

\begin{definition}[Partition matrices of a coloring]
\label[definition]{def:partition_Z}
Let $\chi: [n] \to [k]$ be a $k$-partition of $[n]$.
Then we define the following matrices:
\begin{itemize}
    \item The partition matrix $Z \in \set{0, 1}^{n \times k}$ has $Z_{x\chi(x)} = 1$ for all $x \in [n]$ and $0$ elsewhere.
    \item The partition size matrix $B \in \R^{k \times k}$ is the diagonal matrix $B = Z^\top Z$. Note that $B_{aa} = |\chi^{-1}(a)|$ for all $a \in [k]$.
\end{itemize}
\end{definition}

\begin{definition}[Model of a coloring]
    \label[definition]{def:model}
Let $A \in \R^{n \times n}$, and let $\chi: [n] \to [k]$ be a $k$-partition of $[n]$.
For $x \in [n]$ and $a \in [k]$, let $\D{x}{a} = \sum_{y \in \chi^{-1}(a)} A_{xy}$.
We define the \emph{model} $M(A, \chi) \in \R^{k \times k}$ to be the matrix that, for any $a, b \in [k]$, has entries $M(A,\chi)_{ab} = \frac{1}{|\chi^{-1}(a)|} \sum_{x \in \chi^{-1}(a)}\D{x}{b}$.
\end{definition}

\begin{remark}
Using the definitions in \cref{def:partition_Z}, we have equivalently that $M(A, \chi) = B^{-1} Z^\top A Z$.
\end{remark}

For a reversible row stochastic matrix $M \in \R_{\geq 0}^{k \times k}$, we define an $(M, \delta)$-coloring, which is a coloring with model close to $M$:

\begin{definition}[$(M, \delta)$-colorable graph]
    \label[definition]{def:model-coloring}
    Let $M \in \R^{k \times k}_{\geq 0}$ be a reversible row stochastic matrix with zeros on the diagonal.
    Let $G$ be an $n$-vertex graph with adjacency matrix $A$ and (diagonal) degree matrix $D$ with $D_{ii} = \deg(i)$, and let $\tilde{A} = D^{-1/2}AD^{-1/2}$.
    We say that $G$ is $(M, \delta)$-colorable if there exists a $k$-partition $\chi: [n] \to [k]$ such that $\Norm{M - M(\tilde{A}, \chi)}_{\max} \leq \delta$ and $G$ has a vertex cover of all edges with both endpoints in the same part of size at most $\delta n$ with respect to $\chi$.
\end{definition}

\section{Small top threshold rank implies small bottom threshold rank}

In this section, we prove the following result which shows that, for bounded-norm symmetric matrices with non-negative entries, large bottom threshold rank implies large top threshold rank.
In the rest of the paper, we will be applying this result to normalized adjacency matrices of graphs, which indeed are symmetric, have spectral norm $1$, and have non-negative entries.

\begin{theorem}[Large bottom rank implies large top rank]
\label[theorem]{thm:spectral_large_to_large}
    Let $A \in \R^{n \times n}_{\geq 0}$ be symmetric with $\Norm{A} \leq 1$.
    If
    \begin{equation*}
        \rank_{\leq -\lambda} (A) \geq t\,,
    \end{equation*}
    for some $\lambda \geq 0$ and some positive integer $t$, then for any $\sigma \in (0, 1)$ it follows that
    \begin{equation*}
        \rank_{\geq \frac{\lambda^2-\sigma}{1-\sigma}} (A) \geq \sigma^2 t \,.
    \end{equation*}
\end{theorem}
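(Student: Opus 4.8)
The plan is to exhibit a single explicit positive semidefinite test matrix $V$ — built from the bottom eigenspace of $A$ by a tensoring (entrywise-squaring) trick — with unit trace, Frobenius norm at most $1/\sqrt t$, and a large \emph{positive} inner product $\Tr(AV)\ge\lambda^2$ with $A$; such a matrix is a standard witness forcing $A$ to have many eigenvalues above any threshold below $\lambda^2$. First I would package the bottom eigenspace: fix orthonormal eigenvectors $g_1,\dots,g_t$ of $A$ with eigenvalues at most $-\lambda$ and set $W=\tfrac1t\sum_{i=1}^t g_ig_i^\top\succeq0$. A direct computation gives $\Tr W=1$, $\Tr(W^2)=\tfrac1t$, and $\Tr(AW)=\tfrac1t\sum_i g_i^\top Ag_i\le-\lambda$. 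Since this correlation is \emph{negative}, $W$ is not yet usable; I pass instead to its diagonally renormalized entrywise square
\begin{equation*}
  V_{xy}=\frac{W_{xy}^2}{\sqrt{W_{xx}W_{yy}}}
\end{equation*}
(with $V_{xy}=0$ whenever $W_{xx}W_{yy}=0$), which is the Gram matrix of the vectors $\sqrt{W_{xx}}\cdot\bar u(x)^{\otimes2}$ with $\bar u(x)$ the unit vector along the $x$-th row of $\tfrac1{\sqrt t}[\,g_1\mid\cdots\mid g_t\,]$; in particular $V\succeq0$.

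Since $W\succeq0$ gives $W_{xy}^2\le W_{xx}W_{yy}$, one reads off $\Tr V=\sum_xW_{xx}=1$ and $\Tr(V^2)=\sum_{x,y}\tfrac{W_{xy}^4}{W_{xx}W_{yy}}\le\sum_{x,y}W_{xy}^2=\tfrac1t$. The key estimate — and the one step I expect to require genuine care rather than bookkeeping — is the lower bound on $\Tr(AV)$. Writing $A_{xy}\lvert W_{xy}\rvert=\sqrt{A_{xy}\sqrt{W_{xx}W_{yy}}}\cdot\sqrt{A_{xy}W_{xy}^2/\sqrt{W_{xx}W_{yy}}}$, which is legitimate because $A_{xy}\ge0$, and applying Cauchy--Schwarz,
\begin{equation*}
  \Tr(AV)=\sum_{x,y}\frac{A_{xy}W_{xy}^2}{\sqrt{W_{xx}W_{yy}}}\ge\frac{\bigl(\sum_{x,y}A_{xy}\lvert W_{xy}\rvert\bigr)^2}{\sum_{x,y}A_{xy}\sqrt{W_{xx}W_{yy}}}\ge\frac{\lvert\Tr(AW)\rvert^2}{v^\top Av}\ge\frac{\lambda^2}{1}=\lambda^2,
\end{equation*}
where $v_x=\sqrt{W_{xx}}$, so $v^\top Av\le\|A\|\cdot\|v\|^2=\Tr W=1$, and $\sum_{x,y}A_{xy}\lvert W_{xy}\rvert\ge\lvert\sum_{x,y}A_{xy}W_{xy}\rvert=\lvert\Tr(AW)\rvert\ge\lambda$. (Here the two hypotheses on $A$ enter: non-negativity of the entries for the triangle inequality and the splitting, and $\|A\|\le1$ for the denominator.)

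Finally I would cash this in. Put $\tau'=\tfrac{\lambda^2-\sigma}{1-\sigma}$ and let $P$ be the orthogonal projector onto the span of the eigenvectors of $A$ with eigenvalue at least $\tau'$, so that $\rank P=\rank_{\ge\tau'}(A)$. Because $\|A\|\le1$, every eigenvalue of $A$ is at most $1$ and every eigenvalue below $\tau'$ is at most $\tau'$, so $A\preceq\tau'I+(1-\tau')P$ as quadratic forms; testing this against the PSD matrix $V$ gives $\lambda^2\le\Tr(AV)\le\tau'\Tr V+(1-\tau')\Tr(PV)=\tau'+(1-\tau')\Tr(PV)$, hence $\Tr(PV)\ge\tfrac{\lambda^2-\tau'}{1-\tau'}$, and a one-line simplification shows the right-hand side equals exactly $\sigma$. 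On the other hand, Cauchy--Schwarz for the trace inner product together with $P^2=P=P^\top$ gives $\Tr(PV)\le\|P\|_F\|V\|_F=\sqrt{\rank P}\cdot\sqrt{\Tr(V^2)}\le\sqrt{\rank_{\ge\tau'}(A)/t}$. Comparing the two bounds yields $\sigma\le\sqrt{\rank_{\ge\tau'}(A)/t}$, i.e. $\rank_{\ge\tau'}(A)\ge\sigma^2t$, as claimed. Nothing above used $\tau'\ge0$, so the regime $\sigma\ge\lambda^2$ needs no separate treatment; the only genuinely degenerate case is $\lambda=1$ (forcing $\tau'=1$), where instead $\Tr(AV)\ge1$ and $\Tr(AV)\le\Tr V=1$ pin $V$ to the $1$-eigenspace of $A$, whence $\rank_{\ge1}(A)\ge\rank V\ge(\Tr V)^2/\Tr(V^2)\ge t$.
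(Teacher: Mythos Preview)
Your proof is correct and follows essentially the same route as the paper. Your witness matrix $V$ is exactly the paper's $V^\top V$ from \cref{lemma:spectral_construction_of_vectors} (your $W$ is their $UU^\top$, and $V_{xy}=W_{xy}^2/\sqrt{W_{xx}W_{yy}}=\langle w_x,w_y\rangle^2/(\|w_x\|\,\|w_y\|)$), your Cauchy--Schwarz step for $\Tr(AV)\ge\lambda^2$ is the same inequality as their Condition~1, and your final paragraph is precisely the proof of \cref{lemma:BRS11_lower_bound_top_threshold_rank} inlined with the substitution $1-C\e=\tau'$. The only addition is your explicit treatment of the degenerate case $\lambda=1$, which the paper's modular presentation glosses over.
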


By taking the contrapositive of \cref{thm:spectral_large_to_large} and re-parameterizing, we get the following direct corollary that shows, for bounded-norm symmetric matrices with non-negative entries and small top threshold rank, that the bottom threshold rank cannot be too large.

\begin{corollary}[Small top rank implies small bottom rank]
\label[corollary]{cor:spectral_small_to_small}
    Let $A \in \R^{n \times n}_{\geq 0}$ be symmetric with $\Norm{A} \leq 1$.
    If
    \begin{equation*}
        \rank_{\geq \tau}(A) \leq s\,,
    \end{equation*}
    for some $\tau \geq 0$ and some non-negative integer $s$, then for any $\sigma \in (0, 1)$ it follows that
    \begin{equation*}
         \rank_{\leq -\sqrt{\tau(1-\sigma) + \sigma}} (A) \leq \frac{s}{\sigma^2}\,.
    \end{equation*}
\end{corollary}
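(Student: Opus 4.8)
The plan is to obtain \cref{cor:spectral_small_to_small} as the contrapositive of \cref{thm:spectral_large_to_large} after a change of variables; since \cref{thm:spectral_large_to_large} is already available, this amounts to a short bookkeeping argument with no new ideas.

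First I would set up the parameter correspondence. Given $\tau \ge 0$ and $\sigma \in (0,1)$ as in the corollary, define $\lambda \defeq \sqrt{\tau(1-\sigma)+\sigma}$, which is well-defined and non-negative since $\tau(1-\sigma)+\sigma>0$. A one-line computation verifies that this is precisely the value making the top-threshold in \cref{thm:spectral_large_to_large} equal to $\tau$: from $\lambda^2=\tau(1-\sigma)+\sigma$ we get $\lambda^2-\sigma=\tau(1-\sigma)$, hence $\tfrac{\lambda^2-\sigma}{1-\sigma}=\tau$ (and $1-\sigma\neq 0$ because $\sigma<1$).

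Next I would run the contrapositive directly. If $\rank_{\le -\lambda}(A)=0$, the claimed bound $\rank_{\le -\lambda}(A)\le s/\sigma^2$ holds trivially. Otherwise set $t\defeq \rank_{\le -\lambda}(A)$, a positive integer, and apply \cref{thm:spectral_large_to_large} with this $t$ and this $\sigma$ to obtain
\[
  \rank_{\ge \tau}(A)\;=\;\rank_{\ge \frac{\lambda^2-\sigma}{1-\sigma}}(A)\;\ge\;\sigma^2 t\;=\;\sigma^2\cdot \rank_{\le -\lambda}(A)\,.
\]
Combining this with the hypothesis $\rank_{\ge \tau}(A)\le s$ gives $\sigma^2\cdot\rank_{\le -\lambda}(A)\le s$, i.e. $\rank_{\le -\lambda}(A)\le s/\sigma^2$, which after substituting back $\lambda=\sqrt{\tau(1-\sigma)+\sigma}$ is exactly the asserted inequality. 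This argument uniformly absorbs the potential corner cases: when $s=0$ the inequality $\sigma^2 t\le 0$ is incompatible with $t\ge 1$, so the ``otherwise'' branch is empty; and when $\tau>1$ one has $\lambda>1$, forcing $\rank_{\le -\lambda}(A)=0$ since $\Norm{A}\le 1$.

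I do not expect any obstacle specific to the corollary — all the substance lives in \cref{thm:spectral_large_to_large}, whose proof (outlined in \cref{sec:techniques}) embeds a bottom eigenspace of dimension $t$ as vectors $g(x)\in\R^t$, applies the tensoring trick to build a well-spread embedding $f(x)=\norm{g(x)}\cdot\bar g(x)^{\otimes 2}$ with edge-correlation at least $\tau^2$, and then invokes the vector-embedding characterization of top threshold rank. The only care needed here is to perform the reparameterization correctly and to check that the hypotheses of \cref{thm:spectral_large_to_large} are met (in particular $\sigma\in(0,1)$ is passed through unchanged, and $t$ is a genuine positive integer).
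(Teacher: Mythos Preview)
Your proposal is correct and matches the paper's approach exactly: the paper states the corollary as following from \cref{thm:spectral_large_to_large} ``by taking the contrapositive and re-parameterizing,'' and your write-up carries out precisely this substitution $\lambda=\sqrt{\tau(1-\sigma)+\sigma}$ together with the contrapositive, including the edge cases.
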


To prove \cref{thm:spectral_large_to_large}, we need the following slight generalization of Lemma 6.1 in \cite{barak-raghavendra-steurer}.
We include a proof of this lemma in \cref{sec:rank-lb} for completeness.
\begin{lemma}[Generalized restatement of Lemma 6.1 in \cite{barak-raghavendra-steurer}]
    \label[lemma]{lemma:BRS11_lower_bound_top_threshold_rank}
    Let $A \in \R^{n \times n}$ be symmetric with $\Norm{A} \leq 1$.
    Suppose there exists $M \in \R^{n \times n}$ positive semidefinite such that
    \begin{equation*}
        \iprod{A, M} \geq 1-\e \,, \quad
        \Norm{M}_F^2 \leq \frac{1}{r} \,, \quad
        \Tr(M) = 1 \,.
    \end{equation*}
    Then for all $C > 1$,
    \begin{equation*}
        \rank_{\geq 1- C \cdot \e} (A) \geq \Paren{1-\frac{1}{C}}^2 r \,.
    \end{equation*}
\end{lemma}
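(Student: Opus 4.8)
The plan is to exploit the bound $\Norm{A} \leq 1$ to turn the inner-product hypothesis into control over how much of the PSD test matrix $M$ can be ``supported'' on eigenvalues of $A$ below $1-C\e$, and then to convert the resulting trace lower bound into a rank bound via Cauchy--Schwarz in the Hilbert--Schmidt inner product, using $\Tr(M)=1$ and $\Norm{M}_F^2 \leq \tfrac1r$.

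First I would record that, since $\Norm{A}\leq 1$, we have $I - A \succeq 0$, and since $M \succeq 0$ this gives $0 \leq \iprod{I-A, M} = \Tr(M) - \iprod{A,M} = 1 - \iprod{A,M} \leq \e$. So the single clean fact to work with is $\iprod{I-A,M} \leq \e$. Next, diagonalize $A = \sum_j \lambda_j u_j u_j^\top$ with the $u_j$ orthonormal and $\lambda_j \in [-1,1]$, and set $m_j \defeq u_j^\top M u_j \geq 0$ (nonnegative since $M \succeq 0$). Then $\sum_j m_j = \Tr(M) = 1$ and every term $(1-\lambda_j)m_j$ is nonnegative, so
\begin{equation*}
  \e \;\geq\; \iprod{I-A, M} \;=\; \sum_j (1-\lambda_j) m_j \;\geq\; \sum_{j : \lambda_j < 1 - C\e} (1-\lambda_j) m_j \;\geq\; C\e \sum_{j : \lambda_j < 1 - C\e} m_j \,.
\end{equation*}
Hence $\sum_{j : \lambda_j < 1-C\e} m_j \leq \tfrac1C$ (dividing by $C\e$; the case $\e = 0$ is degenerate and handled the same way). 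Therefore, writing $k \defeq \rank_{\geq 1-C\e}(A)$ and letting $P$ denote the orthogonal projector onto the span of those $u_j$ with $\lambda_j \geq 1-C\e$ (a projector of rank exactly $k$), we get $\iprod{P,M} = \Tr(PM) = \sum_{j : \lambda_j \geq 1-C\e} m_j \geq 1 - \tfrac1C$.

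Finally I would apply Cauchy--Schwarz for the Hilbert--Schmidt inner product, using that $P$ is a rank-$k$ projector so $\Norm{P}_F^2 = \Tr(P^2) = \Tr(P) = k$:
\begin{equation*}
  1 - \tfrac1C \;\leq\; \iprod{P,M} \;\leq\; \Norm{P}_F \cdot \Norm{M}_F \;\leq\; \sqrt{k} \cdot \tfrac{1}{\sqrt r} \,,
\end{equation*}
which rearranges to $k \geq \Paren{1 - \tfrac1C}^2 r$, as claimed. I do not expect a genuine obstacle here; the only mildly delicate point is the decision to work with $\iprod{I-A,M}$ rather than $\iprod{A,M}$ directly — so that both factors of each summand are nonnegative and the threshold $1-C\e$ can be charged termwise — together with the observation that nothing in the argument requires $r$ to be an integer, which is precisely the mild generalization over the original Lemma 6.1 of \cite{barak-raghavendra-steurer}.
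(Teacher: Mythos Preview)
Your proof is correct and takes essentially the same approach as the paper's: both diagonalize $A$, use positive semidefiniteness of $M$ to control the mass on eigenvalues below $1-C\e$, and finish with Cauchy--Schwarz against the rank-$k$ projector $P$. The only cosmetic difference is ordering: the paper first bounds $\iprod{A,M} \leq C\e\,\iprod{P,M} + (1-C\e)$ and then applies Cauchy--Schwarz to $\iprod{P,M}$, whereas you first isolate $\iprod{P,M} \geq 1-\tfrac1C$ via the cleaner $\iprod{I-A,M}\leq \e$ repackaging and then apply Cauchy--Schwarz---same inequalities, same constants.
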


The key observation that leads to the proof of \cref{thm:spectral_large_to_large} is that, if a bounded-norm symmetric matrix with non-negative entries has large bottom threshold rank, then we can construct a matrix $M$ that satisfies \cref{lemma:BRS11_lower_bound_top_threshold_rank}.
This idea is captured by the following lemma.

\begin{lemma}[Construction of witness matrix]
\label[lemma]{lemma:spectral_construction_of_vectors}
    Let $A \in \R^{n \times n}_{\geq 0}$ be symmetric with $\Norm{A} \leq 1$.
    If
    \begin{equation*}
        \rank_{\leq -\lambda} (A) \geq t\,,
    \end{equation*}
    for some $\lambda \geq 0$ and some positive integer $t$, then there exist $V \in \R^{t^2 \times n}$ such that
    \begin{equation*}
        \iprod{A, V^{\top} V} \geq \lambda^2 \,, \quad
        \Norm{V^{\top} V}_F^2 \leq \frac{1}{t} \,, \quad
        \Tr(V^{\top} V) = 1 \,.
    \end{equation*}
\end{lemma}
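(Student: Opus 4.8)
The plan is to follow the tensoring-trick computation already sketched in the Techniques section. Let $g_1,\dots,g_t$ be an orthonormal set of eigenvectors of $A$ with eigenvalues all at most $-\lambda$; these exist because $\rank_{\leq -\lambda}(A) \geq t$. First I would form the embedding $g\colon [n]\to\R^{t}$ given by $g(x) = \tfrac{1}{\sqrt t}(g_1(x),\dots,g_t(x))$, so that $\tfrac1n\sum_x \norm{g(x)}^2 = \tfrac1t\sum_i \norm{g_i}^2 = 1$ (here norms of the $g_i$ are Euclidean norms in $\R^n$, which equal $1$ by orthonormality; I will need to be careful that ``orthonormal'' is with respect to the plain dot product since $A$ is a bare symmetric matrix here, not yet a normalized adjacency operator with its own inner product). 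The key quantitative input is that $\tfrac1n\sum_x \iprod{g(x), (Ag)(x)} = \tfrac1t\sum_i g_i^\top A g_i \leq -\lambda$, i.e. the ``edge'' inner product is at most $-\lambda$ in absolute value at least $\lambda$.

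Next I would apply the tensoring step: define $\bar g(x) = g(x)/\norm{g(x)}$ (on the support where $g(x)\neq 0$; vertices with $g(x)=0$ contribute nothing and can be handled by zero vectors) and set $f(x) = \norm{g(x)}\cdot \bar g(x)^{\otimes 2} \in \R^{t^2}$. Collect these as the columns of a matrix, suitably normalized — concretely I will take $V \in \R^{t^2\times n}$ with column $x$ equal to $\tfrac{1}{\sqrt n} f(x)$ (the $\tfrac1{\sqrt n}$ is the analogue of averaging over the uniform distribution on vertices). Then $\Tr(V^\top V) = \tfrac1n\sum_x \norm{f(x)}^2 = \tfrac1n\sum_x \norm{g(x)}^2 = 1$, which gives the trace condition. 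For the Frobenius bound, $\norm{V^\top V}_F^2 = \tfrac1{n^2}\sum_{x,x'} \iprod{f(x),f(x')}^2 = \tfrac1{n^2}\sum_{x,x'} \iprod{g(x),g(x')}^2 \iprod{\bar g(x),\bar g(x')}^2 \leq \tfrac1{n^2}\sum_{x,x'}\iprod{g(x),g(x')}^2$, since $\abs{\iprod{\bar g(x),\bar g(x')}}\le 1$; and $\tfrac1{n^2}\sum_{x,x'}\iprod{g(x),g(x')}^2 = \tfrac1{t^2}\sum_{i,j}\iprod{g_i,g_j}^2 = \tfrac1t$ by orthonormality, giving $\norm{V^\top V}_F^2 \le \tfrac1t$. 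Finally, for the inner product with $A$: by Cauchy--Schwarz applied to the pairing over random edges,
\begin{equation*}
  \lambda \;\le\; \E_{\bm x\bm y}\abs{\iprod{g(\bm x),g(\bm y)}}
  \;\le\; \Paren{\E_{\bm x\bm y}\norm{g(\bm x)}\norm{g(\bm y)}}^{1/2}\Paren{\E_{\bm x\bm y}\norm{g(\bm x)}\norm{g(\bm y)}\iprod{\bar g(\bm x),\bar g(\bm y)}^2}^{1/2},
\end{equation*}
and since $\E_{\bm x\bm y}\norm{g(\bm x)}\norm{g(\bm y)} \le \tfrac12\E(\norm{g(\bm x)}^2 + \norm{g(\bm y)}^2) = 1$ (or one just bounds it by $1$ directly), squaring yields $\lambda^2 \le \E_{\bm x\bm y}\iprod{f(\bm x),f(\bm y)} = \iprod{A, V^\top V}$ up to the normalization conventions — i.e. $\iprod{A,V^\top V} \ge \lambda^2$.

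The main obstacle I anticipate is bookkeeping the normalization: the Techniques section works with a Markov operator and an inner product weighted by the stationary (here uniform, after the implicit degree normalization) distribution, whereas the lemma is stated for a plain symmetric $A$ with the standard trace/Frobenius inner products. I will need to make sure the factor of $\tfrac1n$ is inserted consistently so that $\iprod{A, V^\top V} = \sum_{x,y} A_{xy}(V^\top V)_{xy}$ really does reproduce $\E_{\bm x\bm y}\iprod{f(\bm x),f(\bm y)}$ when $A$ is a normalized adjacency matrix — and, more delicately, to check that the argument survives for a general symmetric nonnegative $A$ with $\norm{A}\le 1$ without assuming regularity or that row sums are $1$. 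The nonnegativity of $A$'s entries is used precisely to pass from $\E\abs{\iprod{g(\bm x),g(\bm y)}}$ (which bounds $\lambda$) to a genuine quadratic form in $f$; the $\norm{A}\le1$ and the eigenvalue condition are what make the arithmetic close. A secondary minor point is handling vertices (or, in the matrix picture, indices) where $g(x) = 0$: there $f(x)$ should be defined as $0$, and one checks this changes none of the three displayed quantities. Once these normalizations are pinned down the three bounds fall out exactly as above, and combining this lemma with \cref{lemma:BRS11_lower_bound_top_threshold_rank} (applied with $r = t$, $\e = 1-\lambda^2$, and $C = \tfrac{1-\sigma}{1-\lambda^2}$, so that $1 - C\e = \tfrac{\lambda^2-\sigma}{1-\sigma}$ and $(1-1/C)^2 = \sigma^2\cdot\tfrac{(1-\lambda^2)^2}{(1-\sigma)^2}\ge\sigma^2$ is not quite right — I would instead choose $C$ so that $1-C\e$ equals the target threshold and verify $(1-1/C)^2 \ge \sigma^2$ directly) yields \cref{thm:spectral_large_to_large}.
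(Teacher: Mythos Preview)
Your approach is essentially identical to the paper's: pick $t$ orthonormal eigenvectors with eigenvalue at most $-\lambda$, form the row embedding $g(x)$, apply the tensoring trick $f(x)=\norm{g(x)}\,\bar g(x)^{\otimes 2}$, and verify the three conditions via Cauchy--Schwarz (using $A_{ij}\ge 0$) and $\norm{A}\le 1$.

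The normalization worry you flagged is exactly the one place your draft slips. Since the lemma is stated for a plain symmetric matrix with the standard dot product, the $g_i$ satisfy $\sum_x g_i(x)^2=1$, so $\sum_x\norm{g(x)}^2=1$ (not $\tfrac1n\sum_x$). Hence you should drop the $1/\sqrt n$ entirely: take $V$ with column $x$ equal to $f(x)$. Then $\Tr(V^\top V)=\sum_x\norm{g(x)}^2=1$; $\norm{V^\top V}_F^2\le\sum_{x,x'}\iprod{g(x),g(x')}^2=\tfrac1{t^2}\sum_{i,j}\iprod{g_i,g_j}^2=\tfrac1t$; and for the denominator in your Cauchy--Schwarz step, use $\sum_{x,y}A_{xy}\norm{g(x)}\norm{g(y)}=z^\top A z\le\norm{A}\,\norm{z}^2\le 1$ with $z_x=\norm{g(x)}$ (your AM--GM bound does not apply without a stochastic interpretation of $A$). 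For the downstream application, the paper's choice is $C=\tfrac{1}{1-\sigma}$, which gives $1-C\e=\tfrac{\lambda^2-\sigma}{1-\sigma}$ and $(1-1/C)^2=\sigma^2$ exactly.
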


\begin{proof}
    Let $\set{u_1, u_2, \dots, u_t}$ be $t$ orthonormal eigenvectors of $A$ whose corresponding eigenvalues are at most $-\lambda$, that is, for each $s \in [t]$,
    \begin{equation*}
        \iprod{u_s, A u_s} \leq - \lambda \quad \text{and} \quad
        \norm{u_s}^2 = 1 \,.
    \end{equation*}
    Let $U \in \R^{n \times t}$ be the matrix whose $s$-th column is $\frac{1}{\sqrt{t}} u_s$ for all $s \in [t]$.
    It follows that
    \begin{equation}
    \label{eq:spectral_construction_1}
        \iprod{A, U U^{\top}}
        = \sum_{s \in [t]} \frac{1}{t} \iprod{A, u_s (u_s)^{\top}}
        = \frac{1}{t} \sum_{s \in [t]} \iprod{u_s, A u_s}
        \leq - \lambda \,,
    \end{equation}
    and
    \begin{equation}
        \label{eq:spectral_construction_2}
            \Norm{U U^{\top}}_F^2
            = \Norm{U^{\top} U}_F^2
            = \sum_{s \in [t]} \frac{1}{t^2} \norm{u_s}^4
            = \frac{1}{t} \,,
    \end{equation}
    and
    \begin{equation}
    \label{eq:spectral_construction_3}
        \Tr(U U^{\top})
        = \Tr(U^{\top} U)
        = \sum_{s \in [t]} \frac{1}{t} \norm{u_s}^2
        = 1 \,.
    \end{equation}
    
    Let us denote the $i$-th row of $U$ by $w_i$.
    For $i \in [n]$, define $v_i = \frac{w_i^{\otimes 2}}{\norm{w_i}}$ if $\norm{w_i} \neq 0$ and $v_i = 0$ (the all-zero vector) otherwise. %
    Let $S = \Set{i \in [n]: \norm{w_i} \neq 0}$.
    Finally, let $V \in \R^{t^2 \times n}$ be the matrix whose $i$-th column is $v_i$ for all $i \in [n]$.
    We will show that $V$ satisfies the three conditions of the lemma.
    
    \paragraph{Condition 1.}
    Notice that, by Cauchy-Schwarz, we have
    \begin{align*}
        \Paren{\sum_{i, j \in S} A_{i j} \iprod{w_i, w_j}}^2
        & = \Paren{\sum_{i, j \in S} \frac{ \sqrt{A_{i j}} \cdot \iprod{w_i, w_j}}{\sqrt{\norm{w_i} \norm{w_j}}} \cdot \sqrt{A_{i j} \norm{w_i} \norm{w_j}}}^2 \\
        & \leq \Paren{\sum_{i, j \in S} A_{i j} \frac{\iprod{w_i, w_j}^2}{\norm{w_i} \norm{w_j}}} \cdot \Paren{\sum_{i, j \in S} A_{i j} \norm{w_i} \norm{w_j}} \,.
    \end{align*}
    By rearranging, we get
    \begin{equation}
    \label{eq:spectral_construction_4}
        \iprod{A, V^{\top} V}
        = \sum_{i, j} A_{i j} \iprod{v_i, v_j}
        = \sum_{i, j \in S} A_{i j} \frac{\iprod{w_i, w_j}^2}{\norm{w_i} \norm{w_j}} \\
        \geq \frac{\Paren{\sum_{i, j \in S} A_{i j} \iprod{w_i, w_j}}^2}{\sum_{i, j \in S} A_{i j} \norm{w_i} \norm{w_j}} \,.
    \end{equation}
    For the numerator, by observing that $\iprod{w_i, w_j} = (U U^{\top})_{i j}$ and by \cref{eq:spectral_construction_1}, we get
    \begin{equation}
    \label{eq:spectral_construction_5}
        \Paren{\sum_{i, j \in S} A_{i j} \iprod{w_i, w_j}}^2
        = \Iprod{A, U U^{\top}}^2
        \geq \lambda^2 \,.
    \end{equation}
    For the denominator, because $\Norm{A} \leq 1$ we get
    \begin{align}
    \label{eq:spectral_construction_6}
    \begin{split}
        \sum_{i, j \in S} A_{i j} \norm{w_i} \norm{w_j}
        &\leq  \sum_{i \in S} \norm{w_i}^2
        = \Tr(U U^{\top})
        = 1 \,.
    \end{split}
    \end{align}
    Plugging \cref{eq:spectral_construction_5} and \cref{eq:spectral_construction_6} into \cref{eq:spectral_construction_4}, we get that $V$ satisfies condition 1,
    \begin{equation*}
        \iprod{A, V^{\top} V} \geq \lambda^2 \,.
    \end{equation*}

    \paragraph{Condition 2.}
    By the definition of $V$ and by Cauchy-Schwarz, we get
    \begin{equation*}
        \Norm{V^{\top} V}_F^2
        = \sum_{i, j} \iprod{v_i, v_j}^2
        = \sum_{i, j \in S} \frac{\iprod{w_i, w_j}^4}{\norm{w_i}^2 \norm{w_j}^2}
        \leq \sum_{i, j \in S} \iprod{w_i, w_j}^2 \,.
    \end{equation*}
    Since $\iprod{w_i, w_j} = (U U^{\top})_{i j}$, it follows by \cref{eq:spectral_construction_2} that
    \begin{equation*}
        \Norm{V^{\top} V}_F^2
        \leq \sum_{i, j \in S} \iprod{w_i, w_j}^2
        = \Norm{U U^{\top}}_F^2
        = \frac{1}{t}\,.
    \end{equation*}

    \paragraph{Condition 3.}
    By the definition of $V$ and \cref{eq:spectral_construction_3}, it follows that
    \begin{equation*}
        \Tr(V^{\top} V)
        = \sum_{i} \norm{v_i}^2
        = \sum_{i} \norm{w_i}^2
        = \Tr (U U^{\top})
        = 1 \,.
    \end{equation*}
\end{proof}

Given \cref{lemma:BRS11_lower_bound_top_threshold_rank} and \cref{lemma:spectral_construction_of_vectors}, we can prove \cref{thm:spectral_large_to_large} easily.
\begin{proof}[Proof of \cref{thm:spectral_large_to_large}]
    By \cref{lemma:spectral_construction_of_vectors}, when $\rank_{\leq -\lambda} (A) \geq t$, there exist $V \in \R^{t^2 \times n}$ such that
    \begin{equation*}
        \iprod{A, V^{\top} V} \geq \lambda^2 \,, \quad
        \Norm{V^{\top} V}_F^2 \leq \frac{1}{t} \,, \quad
        \Tr(V^{\top} V) = 1 \,.
    \end{equation*}
    Then, we can apply \cref{lemma:BRS11_lower_bound_top_threshold_rank} by setting $M=V^{\top} V$, $\e = 1-\lambda^2$, $r=t$, and $C=\frac{1}{1-\sigma}$, and obtain
    \begin{equation*}
        \rank_{\geq \frac{\lambda^2-\sigma}{1-\sigma}} (A) \geq \sigma^2 t \,.
    \end{equation*}
\end{proof}
\section{Recovering partitions}
\label{sec:k-partitioning}

In this section we state and prove our meta-theorem for recovering partitions.
The result in this section is written in a very general form, to make clear what properties we exploit in the proof.
Then, in \Cref{sec:algorithms}, we apply this meta-theorem to our settings of interest and obtain our algorithmic results.

\begin{theorem}[Result for recovering partitions]
\label{thm:part-recovery}
Let $\chi: [n] \to [k]$ be a $k$-partition of $[n]$, and consider its partition matrices $Z \in \{0,1\}^{n \times k}$ and $B \in \R^{k \times k}$ as per \cref{def:partition_Z}.
Suppose $|\chi^{-1}(a)| \geq c n$ for all $a \in [k]$.
Let $M \in \R^{k \times k}$ be a matrix such that 
\begin{itemize}
    \item $M B^{-1}$ is symmetric,
    \item $\Norm{M} \leq \zeta$,
    \item $\Norm{M^{(a)} - M^{(b)}} \geq \alpha$ for any two rows $M^{(a)}$ and $M^{(b)}$ of $M$.
\end{itemize}
Also let $W = Z M B^{-1} Z^\top$ and let $A \in \R^{n \times n}$ be some matrix such that, for all $a \in [k]$,
\[\Norm{A \frac{Z_a}{\Norm{Z_a}} - W \frac{Z_a}{\Norm{Z_a}}}^2 \leq \delta\,,\]
and suppose that $\rank_{\leq -\lambda}(A) + \rank_{\geq \lambda}(A) \leq t$ for some $\lambda > 0$ such that $\lambda \leq \alpha \sqrt{c} / 4$ and $\lambda \leq \zeta/4$.

Then, given $A$, there exists an algorithm that runs in time $n^{O(1)} \cdot \Paren{O_{\alpha, \delta, \zeta, c, k}(1)}^t$ and outputs a list of $\Paren{O_{\alpha, \delta, \zeta, c, k, t}(1)}^t$ $k$-partitions $\hat{\chi}: [n] \to [k]$ such that, for at least one of them, it holds up to a permutation of the color classes that $\mathbb{E}_{\bm{x} \sim \operatorname{Unif}([n])} \1[\hat\chi(\bm{x}) \neq \chi(\bm{x})] \leq O\Paren{\frac{\delta \zeta^2 k^3}{\min\{\alpha^2 c, \zeta^2\} \alpha^2 c}}$.
\end{theorem}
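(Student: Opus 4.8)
The plan is to recast the model $W$ in an explicitly diagonalizable form, use enumeration over the ``extreme'' eigenspace of $A$ to recover a family of class‑constant functions that separates all color classes of $\chi$ with a quantitative margin, and then cluster the vertices. Write $E := Z B^{-1/2}\in\R^{n\times k}$, whose columns $e_a = Z_a/\Norm{Z_a}$ are an orthonormal basis of the space $U$ of functions constant on the classes of $\chi$ (so $E^\top E = \Id_k$), and set $\hat M := B^{1/2} M B^{-1/2}$. Since $M B^{-1}$ is symmetric, $\hat M = B^{1/2}(M B^{-1})B^{1/2}$ is symmetric; being similar to $M$ its spectrum lies in $[-\zeta,\zeta]$, and one checks $W = Z M B^{-1} Z^\top = E\hat M E^\top$, so $W$ is symmetric, vanishes on $U^\perp$, and acts on $U$ as $\hat M$. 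Fix $\lambda_0 := \tfrac{\alpha\sqrt c}{2}$ (note $\lambda\le\tfrac{\alpha\sqrt c}{4} = \tfrac{\lambda_0}{2}$, and since rows of $M$ have norm at most $\Norm{M}\le\zeta$ we get $\alpha\le 2\zeta$, hence $\lambda_0\le\zeta$). Let $h_1,\dots,h_{r'}$ be orthonormal eigenvectors of $W$ spanning the span $U_{\mathrm{big}}\seq U$ of eigenvectors of $W$ with eigenvalue of absolute value $\ge\lambda_0$, say $W h_i = \mu_i h_i$ with $\abs{\mu_i}\ge\lambda_0$; every function in $U_{\mathrm{big}}$ is class‑constant.

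\emph{Spectral capture.} From $\Norm{(A-W)e_a}^2\le\delta$ and Cauchy--Schwarz, $\Norm{(A-W)g}\le\sqrt{k\delta}$ for every unit $g\in U$; in particular $\Norm{A h_i-\mu_i h_i}\le\sqrt{k\delta}$. Any eigenvalue of $A$ of absolute value $<\lambda$ is at distance $\ge\abs{\mu_i}-\lambda\ge\lambda_0-\lambda\ge\tfrac{\lambda_0}{2}$ from $\mu_i$, so the standard approximate‑eigenvector estimate gives $\Norm{(\Id-P_S)h_i}\le\tfrac{2\sqrt{k\delta}}{\lambda_0}=\tfrac{4\sqrt{k\delta}}{\alpha\sqrt c}$, where $P_S$ is the projection onto the span $S$ of eigenvectors of $A$ with eigenvalue of absolute value $\ge\lambda$; crucially this is independent of $\lambda$. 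Summing over the $r'\le k$ indices, every unit $g\in U_{\mathrm{big}}$ has $\operatorname{dist}(g,S)\le\tfrac{4k\sqrt\delta}{\alpha\sqrt c}$, while $\dim S=\rank_{\le-\lambda}(A)+\rank_{\ge\lambda}(A)\le t$. So $U_{\mathrm{big}}$ has dimension $\le k$ and lies almost inside the explicitly computable $t$‑dimensional space $S$.

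\emph{Separation of $U_{\mathrm{big}}$ (the crux).} We claim that for every pair of colors $a\ne b$ there is a unit $g\in U_{\mathrm{big}}$ whose two values on classes $a$ and $b$ differ by $\ge\Omega\!\big(\tfrac{\alpha}{\sqrt n\,\zeta}\big)$. Take $w := (M^{(a)}-M^{(b)})/\Norm{M^{(a)}-M^{(b)}}$ and the ``ideal'' separator $v^\ast := M w$, $g^\ast := Z v^\ast\in\operatorname{im}(W)$: it is class‑constant with class‑$a$ minus class‑$b$ value $\iprod{w,M^{(a)}-M^{(b)}}=\Norm{M^{(a)}-M^{(b)}}\ge\alpha$, and $\Norm{g^\ast}^2 = v^{\ast\top}B v^\ast\le n\Norm{M w}^2\le n\zeta^2$. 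Let $g^\ast_{\mathrm{big}}$ be the component of $g^\ast$ in $U_{\mathrm{big}}$; we bound the part removed by the truncation. Let $\psi_i := B^{-1/2}\phi_i$ (where $h_i = E\phi_i$), which are $B$‑orthonormal right eigenvectors of $M$ with $M\psi_i=\mu_i\psi_i$ and $Z\psi_i=h_i$; reversibility in the form $M^\top B=BM$ gives $\iprod{v^\ast,\psi_i}_B=\iprod{M w,\psi_i}_B=w^\top M^\top B\psi_i=w^\top B M\psi_i=\mu_i\iprod{w,\psi_i}_B$, so $\sum_{\abs{\mu_i}<\lambda_0}\iprod{v^\ast,\psi_i}_B^2\le\lambda_0^2\Norm{w}_B^2\le\lambda_0^2 n$. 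Also $(\psi_i)_c=\iprod{\psi_i,\tilde e_c/B_{cc}}_B$, so by Bessel $\sum_i\big((\psi_i)_c-(\psi_i)_d\big)^2\le\Norm{\tilde e_c/B_{cc}-\tilde e_d/B_{dd}}_B^2=\tfrac1{B_{cc}}+\tfrac1{B_{dd}}\le\tfrac2{cn}$. Cauchy--Schwarz over the small‑eigenvalue indices then bounds the class‑$a$/class‑$b$ value‑gap lost by the truncation by $\lambda_0\sqrt n\cdot\sqrt{2/(cn)}=\alpha/\sqrt2$, so $g^\ast_{\mathrm{big}}=Z v^\ast_{\ge\lambda_0}\in U_{\mathrm{big}}$ retains a gap $\ge(1-1/\sqrt2)\alpha>0$; dividing by $\Norm{g^\ast_{\mathrm{big}}}\le\Norm{g^\ast}\le\sqrt n\zeta$ yields the claimed unit separator. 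Since a value‑gap of a unit vector in an orthonormal basis of $U_{\mathrm{big}}$ lower bounds the $\ell_2$‑distance of the corresponding class‑signatures, it follows that in \emph{any} orthonormal basis $h'_1,\dots,h'_{r'}$ of $U_{\mathrm{big}}$ the class‑constant map $\Psi\colon x\mapsto(h'_1(x),\dots,h'_{r'}(x))$ takes $k$ distinct values, pairwise at distance $\ge\Omega(\alpha/(\sqrt n\,\zeta))$.

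\emph{Algorithm and error.} Compute $S$; for each $r\le k$ enumerate an $\eta$‑net in the Grassmannian of $r$‑dimensional subspaces of $S$ with $\eta=\Theta(k\sqrt\delta/(\alpha\sqrt c))$, giving $\big(O_{\alpha,\delta,\zeta,c,k}(1)\big)^t$ candidate subspaces, each with an orthonormal basis $\hat g_1,\dots,\hat g_r$. For each candidate, form $\hat\Psi(x):=(\hat g_1(x),\dots,\hat g_r(x))\in\R^r$, cluster the points $\{\hat\Psi(x)\}_{x\in[n]}$ at resolution $\theta=\Theta(\alpha/(\sqrt n\,\zeta))$, keep only clusters of size $\ge cn/2$ (since each true class has $\ge cn$ vertices and $\alpha,\zeta,c,k,n$ are known this isolates the classes), and output the induced $k$‑partition, completing arbitrarily; return the whole list. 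For $r=r'$, by spectral capture some candidate subspace is within $O(k\sqrt\delta/(\alpha\sqrt c))$ of $U_{\mathrm{big}}$, so after matching bases $\sum_i\Norm{\hat g_i-h'_i}^2=O(k^3\delta/(\alpha^2 c))$ for a suitable orthonormal basis $\{h'_i\}$ of $U_{\mathrm{big}}$; hence $\sum_x\Norm{\hat\Psi(x)-\Psi(x)}^2=\sum_i\Norm{\hat g_i-h'_i}^2=O(k^3\delta/(\alpha^2 c))$, so all but $O\!\big(\tfrac{k^3\delta}{\alpha^2 c}\cdot\tfrac{n\zeta^2}{\alpha^2}\big)$ vertices $x$ satisfy $\Norm{\hat\Psi(x)-\Psi(x)}\le\theta/3$, and for those the $\Omega(\alpha/(\sqrt n\zeta))$‑separation plus the size filter make the clustering recover $\chi(x)$ up to a global permutation of the classes. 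Tracking the constants yields the stated misclassification fraction $O\big(\tfrac{\delta\zeta^2k^3}{\min\{\alpha^2c,\zeta^2\}\alpha^2c}\big)$, and the running time is $n^{O(1)}$ (eigendecomposition and clusterings) times the $\big(O_{\alpha,\delta,\zeta,c,k}(1)\big)^t$ candidates. The main obstacle is the separation lemma: $\hat M$ can have arbitrarily small nonzero eigenvalues, so only the part of the model at eigenvalues bounded away from $0$ is visible through $S$; the lemma shows — via the reversibility identity $\iprod{v^\ast,\psi_i}_B=\mu_i\iprod{w,\psi_i}_B$ together with $\sum_i((\psi_i)_c-(\psi_i)_d)^2\le 2/(cn)$ — that each color pair's separating direction puts only an $\le\alpha/\sqrt2$ worth of its signal on the invisible directions, so the visible part still separates. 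Everything else (net enumeration, the perturbation estimate, the clustering) is standard.
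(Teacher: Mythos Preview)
Your proof is correct and reaches the same conclusion, but it diverges from the paper's argument in two places worth noting.

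\textbf{Separation step.} The paper proves coordinate separation (its Lemma~5.3, via Lemma~``eigenvectorsdistinguish'') by writing the scaled column difference $D_aM_a - D_bM_b$ in the eigenbasis of the symmetrized model and splitting into large and small eigenvalues, so that the operator‐norm bound on $M$ controls the large part and the threshold $\lambda$ controls the small part; an averaging argument then singles out one eigenvector with a large coordinate gap. You instead construct an explicit separator $v^\ast=Mw$ with $w$ the normalized row difference, and the key identity $\iprod{v^\ast,\psi_i}_B=\mu_i\iprod{w,\psi_i}_B$ (from reversibility $M^\top B=BM$) lets you bound the mass of $v^\ast$ on small eigenvalues directly. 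Your route is a bit more constructive and makes it transparent that the whole subspace $U_{\mathrm{big}}$ (rather than a single eigenvector) separates every pair; the paper's route gives separation for one eigenvector per pair, which then requires trying all $\le k$-tuples from its net. Both yield the same $\Omega(\alpha/(\zeta\sqrt{kn}))$ scale (your constant $(1-1/\sqrt2)$ versus the paper's arising from $\sqrt{(\alpha^2-2c^2\lambda^2)/(k(\zeta^2-\lambda^2))}$).

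\textbf{Enumeration and clustering.} The paper enumerates an $\e$-net of unit vectors in the extreme eigenspace $S$ of $A$ (its Lemma~5.2), then loops over all $\le k$-tuples and feeds them to an explicit center-guessing clustering (its Lemma~5.4 / Algorithm~2). You enumerate a Grassmannian net of $r$-planes in $S$ and cluster the resulting signatures at a known resolution. These are equivalent in cost (both $(O_{\alpha,\delta,\zeta,c,k}(1))^{O(kt)}$) and in effect; your invariance argument that the $\ell_2$ signature separation is basis-independent is what lets you avoid guessing centers. One small point: your final ``tracking the constants yields the stated misclassification fraction'' glosses over the $\min\{\alpha^2 c,\zeta^2\}$; your derivation actually gives $O(\delta\zeta^2k^3/(\alpha^4 c))$, which matches (and in the regime $\alpha^2c\le\zeta^2$ slightly improves) the stated bound since $\alpha\le2\zeta$. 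Your clustering description (``cluster at resolution $\theta$, keep clusters of size $\ge cn/2$'') is a bit informal compared to the paper's explicit center-guessing algorithm, but is standard to make rigorous.
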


In our applications, we will take $M$ to be a model of the coloring $\chi$, as per \Cref{def:model}.
Our main results for graph coloring, in particular, take $M = M(\tilde{A}, \chi)$, where $\tilde{A}$ is the normalized adjacency matrix of the graph.
Then $M_{ab}$ corresponds to the transition probability from color class $a$ to color class $b$.
(An exception is our result for randomly planted graph colorings in \Cref{thm: random-planting-partial}, where for technical reasons we take $M = \frac{1}{d'} A$, where $A$ is the adjacency matrix of the graph and $d'$ is its \emph{expected} average degree.)
We note that $W_{ij} = M_{\chi(i)\chi(j)} / |\chi^{-1}(\chi(j))|$ is simply the extension of $M$ to the entire matrix, according to the color classes, up to the normalization factor $1/|\chi^{-1}(\chi(j))|$.

The proof of \cref{thm:part-recovery} is a straightforward combination of the following three independent lemmas.

\begin{lemma}[Eigenspace approximation]
\label[lemma]{thm:partition_recovery_theorem2}
    Let $\chi: [n] \to [k]$ be a $k$-partition of $[n]$, and consider its partition matrix $Z \in \{0,1\}^{n \times k}$ as per \cref{def:partition_Z}.
    Let $W \in \mathbb{R}^{n \times n}$ be a symmetric matrix such that $W$ has column span in the span of $\set{Z_a}_{a \in [k]}$.
    Also let $A \in \R^{n \times n}$ be some matrix such that, for all $a \in [k]$,
    \[\Norm{A \frac{Z_a}{\Norm{Z_a}} - W \frac{Z_a}{\Norm{Z_a}}}^2 \leq \delta\,,\]
    and suppose that $\rank_{\leq -\lambda}(A) + \rank_{\geq \lambda}(A) \leq t$ for some $\lambda > 0$.
    Then, given $A$, for any $\eta > 0$, 
    there exists an algorithm that runs in time $n^{O(1)} \cdot \Paren{\delta k /\eta^2}^{-O(t)}$ and outputs a list of size $\Paren{\delta k /\eta^2}^{-O(t)}$ of unit vectors such that, for each unit vector $u \in \R^n$ in the eigenspace of $W$ corresponding to eigenvalues larger than $\lambda + \eta$ in absolute value, there exists some $\hat{u} \in \R^n$ in the list such that $\Norm{\hat{u}-u}^2 \leq O\Paren{\delta k /\eta^2}$.
\end{lemma}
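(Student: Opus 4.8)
The plan is to reduce the problem to enumerating a fine net of a subspace of dimension at most $t$ that depends only on $A$. Fix an orthonormal eigenbasis $A=\sum_j \nu_j g_j g_j^\top$ (we use throughout that $A$ is symmetric, as in all our applications and as is implicit in the threshold-rank notation), and let $P=\sum_{j:\,\abs{\nu_j}\ge\lambda} g_j g_j^\top$ be the orthogonal projector onto the span $\mathcal S$ of the eigenvectors of $A$ with eigenvalue at least $\lambda$ in absolute value. The index sets $\set{j:\nu_j\ge\lambda}$ and $\set{j:\nu_j\le-\lambda}$ are disjoint, so $\rank P=\rank_{\ge\lambda}(A)+\rank_{\le-\lambda}(A)\le t$ by hypothesis. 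The heart of the argument is the estimate: if $w$ is a unit eigenvector of $W$ with eigenvalue $\mu$ satisfying $\abs\mu>\lambda+\eta$, then $\norm{(I-P)w}^2\le \delta k/\eta^2$; that is, $w$ is almost entirely supported on $\mathcal S$.

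To establish this I would first observe that $w$ lies in the column span of $W$ (since $w=\mu^{-1}Ww$ and $\mu\ne 0$), hence in the span of $\set{Z_a}_{a\in[k]}$ by hypothesis. The vectors $Z_a$ have pairwise disjoint supports, so $\set{Z_a/\norm{Z_a}}$ is orthonormal; writing $w=\sum_a c_a\, Z_a/\norm{Z_a}$ with $\sum_a c_a^2=1$ and invoking the closeness hypothesis $\norm{(A-W)Z_a/\norm{Z_a}}^2\le\delta$ together with Cauchy--Schwarz, $\norm{(A-W)w}\le\sqrt\delta\sum_a\abs{c_a}\le\sqrt{\delta k}$. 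So $Aw=\mu w+e$ with $\norm e\le\sqrt{\delta k}$. Applying $I-P$ and using that $A$ commutes with its own spectral projector $P$, one gets $A(I-P)w=(I-P)Aw=\mu (I-P)w+(I-P)e$. On the range of $I-P$ the operator $A$ has operator norm at most $\lambda$ (the eigenvalues there have absolute value below $\lambda$), so $\norm{\mu(I-P)w}\le\norm{A(I-P)w}+\norm{(I-P)e}\le\lambda\norm{(I-P)w}+\sqrt{\delta k}$, i.e. $(\abs\mu-\lambda)\norm{(I-P)w}\le\sqrt{\delta k}$; since $\abs\mu-\lambda>\eta$, the estimate follows.

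Now $W$ has rank at most $k$ (its column span is $k$-dimensional), so the eigenspace named in the statement is spanned by at most $k$ orthonormal eigenvectors $w_1,\dots,w_r$ of $W$, and each satisfies $\norm{(I-P)w_i}^2\le\delta k/\eta^2$; expanding an arbitrary unit vector $u$ in this eigenbasis and using the triangle inequality shows that $u$ too is within $O(\sqrt{\delta k}/\eta)$ of $\mathcal S$ (incurring at most one extra factor of $k$ relative to the bound for the $w_i$ themselves, which is harmless as the downstream theorem already loses $\poly(k)$ factors). The algorithm then computes the eigendecomposition of $A$ in time $n^{O(1)}$, reads off an orthonormal basis of $\mathcal S$ — a space of dimension $\le t$ — and outputs a $\Theta(\sqrt{\delta k}/\eta)$-net of the unit sphere of $\mathcal S$. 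Such a net consists of unit vectors, has size $\bigl(O(1)\cdot\eta/\sqrt{\delta k}\bigr)^{\dim\mathcal S}=(\delta k/\eta^2)^{-O(t)}$ (we may assume $\delta k/\eta^2$ is bounded away from $1$, else the statement is trivial), and is produced within the claimed running time. For any unit $u$ in the eigenspace, $Pu/\norm{Pu}$ is a unit vector of $\mathcal S$ with $\norm{u-Pu/\norm{Pu}}\le\norm{(I-P)u}+\bigl(1-\norm{Pu}\bigr)\le O(\sqrt{\delta k}/\eta)$, and it lies within the net scale of some list element $\hat u$; hence $\norm{\hat u-u}^2\le O(\delta k/\eta^2)$.

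The main obstacle is the estimate in the second paragraph, and inside it the two structural facts that make it go through: that every eigenvector of $W$ with nonzero eigenvalue automatically lies in the span of $\set{Z_a}$ — exactly the vectors on which $A$ and $W$ are known to be close — and that $A$ commutes with the orthogonal projector onto its large-absolute-value eigenspace, so that the approximate eigenrelation $Aw\approx\mu w$ may be pushed onto the complementary subspace where $A$ is a strict contraction by a factor $\lambda<\abs\mu$. Everything else is routine; the only quantitative point is that $\dim\mathcal S\le t$ rather than $n$, which is precisely where the hypothesis $\rank_{\le-\lambda}(A)+\rank_{\ge\lambda}(A)\le t$ enters and what pins the size of the output list and the running time at $n^{O(1)}\cdot(\delta k/\eta^2)^{-O(t)}$.
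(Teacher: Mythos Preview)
Your proposal is correct and takes essentially the same route as the paper: show that every unit vector in the large-eigenvalue eigenspace of $W$ has small projection onto the small-eigenvalue eigenspace of $A$, then enumerate a net of the (at most $t$-dimensional) large-eigenvalue eigenspace of $A$. Your commutator argument in the second paragraph is exactly the single-vector version of the paper's eigenspace-closeness lemma.

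The one quantitative difference is your handling of arbitrary unit vectors $u$ in the eigenspace (as opposed to eigenvectors $w_i$): combining the bounds for the $w_i$ via the triangle inequality costs you an extra factor of $k$, yielding $\norm{(I-P)u}^2\le O(\delta k^2/\eta^2)$ rather than the stated $O(\delta k/\eta^2)$. The paper avoids this loss by observing that $\sum_{i=1}^r\norm{(I-P)u_i}^2$ is basis-invariant (it equals $\Tr\bigl((I-P)\Pi\bigr)$ where $\Pi$ is the projector onto the eigenspace), so the bound $\sum_i\norm{(I-P)u_i}^2\le\delta k/\eta^2$ holds for \emph{every} orthonormal basis of the eigenspace---in particular for one containing the given $u$, whence $\norm{(I-P)u}^2\le\delta k/\eta^2$ directly. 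You rightly note the loss is harmless downstream, but this trace-invariance remark closes it for free.
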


\begin{lemma}[Eigenvector coordinate separation]
\label[lemma]{lem:eigenvalue_separation}
    Let $\chi: [n] \to [k]$ be a $k$-partition of $[n]$, and consider its partition matrices $Z \in \{0,1\}^{n \times k}$ and $B \in \R^{k \times k}$ as per \cref{def:partition_Z}.
    Suppose $|\chi^{-1}(a)| \geq c n$ for all $a \in [k]$.
    Let $M \in \R^{k \times k}$ be a matrix such that 
    \begin{itemize}
        \item $M B^{-1}$ is symmetric,
        \item $\Norm{M} \leq \zeta$,
        \item $\Norm{M^{(a)} - M^{(b)}} \geq \alpha$ for any two rows $M^{(a)}$ and $M^{(b)}$ of $M$.
    \end{itemize}
    For some $\lambda \leq \alpha \sqrt{c} / 2$ and $\lambda \leq \zeta/2$, let the unit vectors $v_1, \ldots, v_r \in \R^k$ be the eigenvectors of $M$ corresponding to eigenvalues larger than $\lambda$ in absolute value.
    Then
    \[\min_{a\neq b\in[k]}\max_{i \in [r]} \frac{1}{\Norm{Z v_i}} \Abs{(v_i)_a- (v_i)_b} \geq \Omega\Paren{\frac{\alpha \sqrt{c}}{\zeta \sqrt{k n}}}\,.\]
\end{lemma}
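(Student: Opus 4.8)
The plan is to reduce the statement to a question about a \emph{symmetric} matrix, translate the row-separation hypothesis on $M$ into an operator-norm lower bound, and then finish with a projection/pigeonhole argument. \emph{Step 1 (symmetrize).} Since $MB^{-1}$ is symmetric — equivalently $M^\top = BMB^{-1}$ — I would pass to $\tilde M := B^{1/2}MB^{-1/2}$, which a one-line computation shows is symmetric: $\tilde M^\top = B^{-1/2}M^\top B^{1/2} = B^{-1/2}(BMB^{-1})B^{1/2} = \tilde M$. Hence $M$ has real eigenvalues equal to those of $\tilde M$, the spectral radius of $M$ is $\Norm{\tilde M} \le \Norm{M} \le \zeta$, and if $w_1, \dots, w_k$ is an orthonormal eigenbasis of $\tilde M$ with eigenvalues $\mu_1, \dots, \mu_k$ then $MB^{-1/2}w_j = \mu_j B^{-1/2}w_j$, so the eigenvectors of $M$ may be taken to be $v_j = B^{-1/2}w_j/\Norm{B^{-1/2}w_j}$. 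The hypothesis $|\chi^{-1}(a)| \ge cn$ bounds the diagonal of $B$ in $[cn, n]$, which gives the two facts needed at the end: $\Norm{B^{-1/2}w_j}^2 = w_j^\top B^{-1}w_j \le 1/(cn)$ and $\Norm{Zv_j}^2 = v_j^\top B v_j \le n$.

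\emph{Step 2 (row separation becomes an operator-norm bound).} Fix distinct colors $a, b$ and consider the test vector $q := B^{-1/2}(e_a - e_b)$, which satisfies $\Norm{q}^2 = |\chi^{-1}(a)|^{-1} + |\chi^{-1}(b)|^{-1} \le 2/(cn)$. The crucial observation, again using symmetry of $MB^{-1}$ (i.e.\ $MB^{-1} = B^{-1}M^\top$), is the identity $\tilde M q = B^{1/2}MB^{-1}(e_a - e_b) = B^{-1/2}M^\top(e_a - e_b) = B^{-1/2}(M^{(a)} - M^{(b)})^\top$, where $M^\top(e_a - e_b)$ is exactly the (column) vector of row differences of $M$. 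Therefore $\Norm{\tilde M q}^2 = (M^{(a)} - M^{(b)})\,B^{-1}\,(M^{(a)} - M^{(b)})^\top \ge \Norm{M^{(a)} - M^{(b)}}^2/n \ge \alpha^2/n$.

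\emph{Step 3 (project onto the top eigenspace, then pigeonhole).} Let $P$ be the orthogonal projection onto $\operatorname{span}\Set{w_j : |\mu_j| > \lambda}$, a subspace of dimension $r \le k$. Expanding $q$ in the eigenbasis of $\tilde M$ yields $\alpha^2/n \le \Norm{\tilde M q}^2 \le \zeta^2\Norm{Pq}^2 + \lambda^2\Norm{q}^2$; since $\lambda \le \alpha\sqrt c/2$, the last term is at most $\alpha^2/(2n)$, which forces $\Norm{Pq}^2 \ge \alpha^2/(2n\zeta^2)$. As $\Norm{Pq}^2 = \sum_{j : |\mu_j| > \lambda}\iprod{q, w_j}^2$ is a sum of at most $r \le k$ squared coordinates, some index $i$ with $|\mu_i| > \lambda$ has $\iprod{q, w_i}^2 \ge \alpha^2/(2n\zeta^2 k)$. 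Unwinding the normalizations, $\iprod{q, w_i} = (e_a - e_b)^\top B^{-1/2}w_i = \Norm{B^{-1/2}w_i}\bigl((v_i)_a - (v_i)_b\bigr)$, so $\Abs{(v_i)_a - (v_i)_b} \ge \sqrt{cn}\,\Abs{\iprod{q, w_i}} \ge \alpha\sqrt c/(\zeta\sqrt{2k})$, and dividing by $\Norm{Zv_i} \le \sqrt n$ gives the claimed bound $\Omega\Paren{\alpha\sqrt c/(\zeta\sqrt{kn})}$, uniformly over the pair $a \neq b$.

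\emph{Main obstacle.} The only genuinely non-mechanical part is Step 2: spotting that conjugating by $B^{1/2}$ and testing against $q = B^{-1/2}(e_a - e_b)$ is exactly what converts the combinatorial hypothesis ``rows of $M$ are $\alpha$-separated'' into the analytic statement ``$\tilde M$ stretches $q$ by at least $\alpha/\sqrt n$''. After that, Step 3 is a standard threshold argument, and the remaining care is purely bookkeeping of the $B$-weighted norms (the factors $c$ and $k$) and, if $M$ has repeated eigenvalues, the choice of eigenbasis — which is why we fix $v_j = B^{-1/2}w_j/\Norm{B^{-1/2}w_j}$ coming from an orthonormal eigenbasis of $\tilde M$ once and for all. (Note also that $\lambda \le \alpha\sqrt c/2$ already makes $\Norm{\tilde M} \ge \Norm{\tilde M q}/\Norm{q} \ge \alpha\sqrt c/\sqrt 2 > \lambda$, so $r \ge 1$ and the inner maximum is over a nonempty set.)
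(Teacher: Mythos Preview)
Your proof is correct and takes essentially the same approach as the paper: symmetrize via $\tilde M = B^{1/2}MB^{-1/2}$, convert the row-separation of $M$ into a lower bound on $\lVert\tilde M q\rVert$ for the test vector $q = B^{-1/2}(e_a-e_b)$, split the eigenbasis at the threshold $\lambda$, and pigeonhole. The paper factors the threshold step into a standalone lemma (applied to $\tilde M$ with diagonal weights $D_x = \sqrt n\, B_{xx}^{-1/2}$), whereas you do the same computation inline with the projector $P$; your presentation is slightly more streamlined but the underlying identity and the bookkeeping of the $B$-weighted norms are identical.
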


\begin{lemma}[Clustering with separated rows]
\label[lemma]{lem:cluster-eigen}
    Let $\chi: [n] \to [k]$ be a $k$-partition of $[n]$, and consider its partition matrix $Z \in \{0,1\}^{n \times k}$ as per \cref{def:partition_Z}.
    Suppose $|\chi^{-1}(a)| \geq c n$ for all $a \in [k]$.
    Let unit vectors $v_1, \ldots, v_r \in \R^k$ satisfy 
    \[\min_{a\neq b\in[k]}\max_{i \in [r]} \frac{1}{\Norm{Z v_i}} \Abs{(v_i)_a- (v_i)_b} \geq \frac{\alpha}{\sqrt{n}}\,.\]
    Then, given unit vectors $\hat{u}_1, \ldots, \hat{u}_r \in \R^n$ such that $\Norm{\hat{u}_i - \frac{Z v_i}{\Norm{Z v_i}}}^2 \leq \delta$ for all $i \in [r]$, there exists an algorithm that runs in time $n^{O(1)} \cdot \Paren{\alpha \sqrt{c}}^{-O(k^2)}$ and outputs a list of $\Paren{\alpha \sqrt{c}}^{-O(k^2)}$ $k$-partitions $\hat{\chi}: [n] \to [k]$ such that, for at least one of them, it holds up to a permutation of the color classes that $\mathbb{E}_{\bm{x} \sim \operatorname{Unif}([n])} \1[\hat\chi(\bm{x}) \neq \chi(\bm{x})] \leq O\Paren{\frac{\delta k}{\alpha^2}}$.
\end{lemma}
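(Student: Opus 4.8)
The plan is to recast this as a geometric clustering problem. Associate to each color $a\in[k]$ the vector $\phi(a)\in\R^{r}$ with $\phi(a)_i=(v_i)_a/\Norm{Zv_i}$, and to each vertex $x$ the vector $\hat\phi(x)\in\R^{r}$ with $\hat\phi(x)_i=\hat u_i(x)$. The $i$-th coordinate of the unit vector $Zv_i/\Norm{Zv_i}$ evaluated at $x$ is exactly $(v_i)_{\chi(x)}/\Norm{Zv_i}=\phi(\chi(x))_i$, so the hypothesis $\Norm{\hat u_i-Zv_i/\Norm{Zv_i}}^{2}\le\delta$ reads $\sum_{x}\abs{\hat\phi(x)_i-\phi(\chi(x))_i}^{2}\le\delta$, and summing over $i$ (recall $r\le k$) gives
\begin{equation*}
\sum_{x\in[n]}\Norm{\hat\phi(x)-\phi(\chi(x))}_{2}^{2}\le k\delta\,.
\end{equation*}
The separation hypothesis is precisely $\Norm{\phi(a)-\phi(b)}_{\infty}\ge\alpha/\sqrt n$ for all $a\neq b$, and since $\abs{(v_i)_a}\le\Norm{v_i}=1$ and $\Norm{Zv_i}^{2}=\sum_{b}\abs{\chi^{-1}(b)}(v_i)_b^{2}\ge cn$, every center obeys the a priori box bound $\Norm{\phi(a)}_{\infty}\le 1/\sqrt{cn}$. (The last two facts together force $\alpha\sqrt c\le 2$, so $(\alpha\sqrt c)^{-O(k^2)}$ is a genuinely large quantity.) Thus all $k$ centers lie in the box $[-1/\sqrt{cn},1/\sqrt{cn}]^{r}$, whose side is only a factor $2/(\alpha\sqrt c)$ larger than the separation scale $\alpha/\sqrt n$, and most vertex features $\hat\phi(x)$ are within $\ell_\infty$-distance $\alpha/\sqrt n$ of the center of their own class.

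The algorithm then just enumerates centers. Fix an $\ell_{\infty}$-net $N$ of the box $[-1/\sqrt{cn},1/\sqrt{cn}]^{r}$ at scale $\alpha/(8\sqrt n)$; since $r\le k$ and $\alpha\sqrt c\le 2$, we have $\abs{N}\le (\alpha\sqrt c)^{-O(k)}$. For each of the $\abs{N}^{k}=(\alpha\sqrt c)^{-O(k^2)}$ ordered $k$-tuples $(\tilde c_1,\dots,\tilde c_k)\in N^{k}$, define $\hat\chi(x)=\argmin_{a\in[k]}\Norm{\hat\phi(x)-\tilde c_a}_{\infty}$ (ties broken arbitrarily) and append it to the output list. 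Classifying all $n$ vertices for one tuple takes $n^{O(1)}$ time, so the total running time is $n^{O(1)}\cdot(\alpha\sqrt c)^{-O(k^2)}$ and the list has the claimed length.

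For correctness, call a vertex $x$ \emph{good} if $\Norm{\hat\phi(x)-\phi(\chi(x))}_{2}\le\alpha/(8\sqrt n)$; by the displayed bound and Markov's inequality, at most a $O(k\delta/\alpha^{2})$ fraction of vertices fail to be good. Consider the tuple in which $\tilde c_a$ is a net point with $\Norm{\tilde c_a-\phi(a)}_{\infty}\le\alpha/(8\sqrt n)$ (one exists since $\phi(a)$ lies in the box). For a good $x$, using $\Norm{\cdot}_{\infty}\le\Norm{\cdot}_{2}$ and the triangle inequality, $\Norm{\hat\phi(x)-\tilde c_{\chi(x)}}_{\infty}\le\alpha/(4\sqrt n)$, whereas for any $b\neq\chi(x)$ we get $\Norm{\hat\phi(x)-\tilde c_b}_{\infty}\ge\Norm{\phi(\chi(x))-\phi(b)}_{\infty}-\alpha/(4\sqrt n)\ge 3\alpha/(4\sqrt n)$, so $\hat\chi(x)=\chi(x)$. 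Hence for this tuple $\hat\chi$ agrees with $\chi$ on every good vertex, giving $\mathbb{E}_{\bm x\sim\operatorname{Unif}([n])}\1[\hat\chi(\bm x)\neq\chi(\bm x)]\le O(k\delta/\alpha^{2})$; and since we enumerate ordered tuples, every relabeling of the centers occurs, so the conclusion holds up to a permutation of the color classes. (If $O(k\delta/\alpha^{2})\ge 1$ the statement is vacuous, so one may assume it is small.)

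The main obstacle is meeting the required running time: a naive clustering that picks $k$ cluster representatives among the $n$ vertices would cost $n^{\Theta(k)}$, not $n^{O(1)}$. The way around it is the a priori bound $\Norm{\phi(a)}_{\infty}\le 1/\sqrt{cn}$, which confines the centers to a box of width $O(1/(\alpha\sqrt c))$ times the separation scale, so a net at that scale has only $\poly(1/(\alpha\sqrt c))$ points per center and, crucially, no dependence on $n$. Everything else is bookkeeping: tracking constants so that the nearest-center rule is provably robust (the factor $8$ leaves room), and converting the given $\ell_{2}$ error guarantee into the $\ell_{\infty}$ distances used by the rule.
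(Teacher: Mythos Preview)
Your proof is correct and follows essentially the same approach as the paper: both arguments bound the centers $\phi(a)$ (the paper's $v'_i$) in the box $[-1/\sqrt{cn},1/\sqrt{cn}]^{r}$ using $\Norm{Zv_i}^2\ge cn$, enumerate an $O(\alpha/\sqrt n)$-net of this box to produce $(\alpha\sqrt c)^{-O(k^2)}$ candidate $k$-tuples of centers, assign each vertex to the nearest guessed center, and use Markov's inequality on the aggregate error $\sum_x\Norm{\hat\phi(x)-\phi(\chi(x))}^2\le k\delta$ to bound the misclassified fraction by $O(k\delta/\alpha^2)$. The only cosmetic differences are that you use an $\ell_\infty$ nearest-center rule (versus the paper's $\ell_2$ rule via \cref{lem:clusteringalmosteigenvectors}) and slightly different constants in the net resolution; your explicit remark that $r\le k$ is not stated in the lemma but is used in the paper's proof as well (via the hypothesis $\ell\le k$ in \cref{lem:clusteringalmosteigenvectors}).
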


Let us now prove \cref{thm:part-recovery} given \cref{thm:partition_recovery_theorem2}, \cref{lem:eigenvalue_separation}, and \cref{lem:cluster-eigen}.

\begin{proof}[Proof of \cref{thm:part-recovery}]
We will assume $\lambda = \min\{\alpha \sqrt{c} / 4, \zeta/4\}$, such that $2\lambda$ satisfies the upper bound required to apply \cref{lem:eigenvalue_separation}.

First, let $S:=\Set{u \in \R^n\; \mid\; W u = \tau u, \Norm{u}=1, \Abs{\tau}> 2\lambda}$ be the set of eigenvectors of $W$ with corresponding eigenvalues larger than $2\lambda$ in absolute value.
By applying \cref{thm:partition_recovery_theorem2} with $\eta = \lambda$, we can recover a list of size $\Paren{\delta k/\lambda^2}^{-O(t)}$ such that, for each unit eigenvector $u \in \R^n$ in this subspace, there exists some $\hat{u}$ in the list with $\norm{\hat{u}-u}^2 \leq O(\delta k/\lambda^2)$.

Observe that each eigenvector $u$ of $W$ is by definition in the column span of $Z$, so we can write $u = Z v$ for some $v \in \R^{k}$.
Suppose $Wu = \tau u$.
Then $M v = B^{-1} Z^\top W Z v = \tau B^{-1} Z^\top Z v = \tau v$, so $v$ is an eigenvector of $M$ with eigenvalue $\tau$.
Hence, if $v_1, \ldots, v_r$ are the eigenvalues of $M$ with corresponding eigenvalues larger than $2\lambda$ in absolute value, then $\frac{Z v_1}{\Norm{Zv_1}}, \ldots, \frac{Z v_r}{\Norm{Zv_r}} \in S$.

By \cref{lem:eigenvalue_separation}, we have for these eigenvectors $v_1, \ldots, v_r$ of $M$ that 
\[\min_{a\neq b\in[k]}\max_{i \in [r]} \frac{1}{\Norm{Z v_i}} \Abs{(v_i)_a- (v_i)_b} \geq \Omega\Paren{\frac{\alpha \sqrt{c}}{\zeta \sqrt{k n}}}\,,\]
and by the previous discussion for each $v_i$ there exists some $\hat{u} \in \R^n$ in the recovered list such that $\Norm{\hat{u} - \frac{Z v_i}{\Norm{Zv_i}}}^2 \leq O(\delta k / \lambda^2)$.
To know which of the unit vectors $\hat{u}$ in the list correspond to these eigenvectors, we can afford to try all possibilities of at most $k$ vectors in the list; there are at most $\Paren{\delta k/\lambda^2}^{-O(tk)}$ such possibilities.
The guarantee then follows from \cref{lem:cluster-eigen}, giving probability of misclassification $O(\frac{\delta \zeta^2 k^3}{\min\{\alpha^2 c, \zeta^2\} \alpha^2 c} )$.
\end{proof}

The next three subsection prove \cref{thm:partition_recovery_theorem2}, \cref{lem:eigenvalue_separation}, and \cref{lem:cluster-eigen}.

\subsection{Eigenspace approximation}

In this section we prove \cref{thm:partition_recovery_theorem2}.
First, we state a generic result that proves that if two matrices are close in a subspace that for one of the matrices is an eigenspace with eigenvalues of large absolute value, then the small eigenvalues of the other matrix are far from this subspace.

\begin{lemma}[Eigenspace closeness]
    \label[lemma]{lemma:lin-alg-subspace}
    Let $X \in \R^{n \times n}$ be a symmetric matrix, and let $v_1, \ldots, v_k$ be any $k$ of its eigenvectors with non-zero eigenvalues $\lambda_1, \ldots, \lambda_{k}$.
    Let $\lambda = \min\{|\lambda_1|, \ldots, |\lambda_k|\}$.
    Also let $Y \in \R^{n \times n}$ be a symmetric matrix, and let $P \in \R^{n \times n}$ be the orthogonal projection to the span of the eigenvectors of $Y$ that are smaller in absolute value than $\lambda - \eta$ for some $\eta > 0$.
    Then, for any orthonormal basis $u_1, \ldots, u_{k} \in \R^n$ of $\operatorname{span}(v_1, \ldots, v_k)$,
    \[\sum_{i=1}^{k} \Norm{P u_i}^2 \leq \frac{1}{\eta^2} \sum_{i=1}^{k} \Norm{(X-Y)u_i}^2\,.\]
\end{lemma}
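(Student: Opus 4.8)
The plan is to reduce to a single well-chosen orthonormal basis and then prove the bound term by term for that basis. First I would record that both sides of the asserted inequality depend only on the subspace $S := \operatorname{span}(v_1, \ldots, v_k)$ and not on the choice of orthonormal basis $u_1, \ldots, u_k$: letting $\Pi \in \R^{n \times n}$ denote the orthogonal projector onto $S$, one has $\sum_i \Norm{P u_i}^2 = \Tr(P \Pi)$ (using $P = P^\top = P^2$) and $\sum_i \Norm{(X-Y) u_i}^2 = \Tr\bigl((X-Y)^2 \Pi\bigr)$, both of which are determined by $S$ alone. (The existence of an orthonormal basis of $S$ of size $k$ forces $v_1, \ldots, v_k$ to be linearly independent, so $\dim S = k$.) Hence it suffices to prove the inequality for one convenient orthonormal basis of $S$.

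Next I would choose that basis to consist of eigenvectors of $X$. Since $X v_j = \lambda_j v_j \in S$ for each $j$, the subspace $S$ is $X$-invariant, so $X|_S$ is self-adjoint and admits an orthonormal eigenbasis $u_1, \ldots, u_k$ of $S$ with $X u_i = \nu_i u_i$ for real $\nu_i$. Moreover, each $v_j$ lies in the eigenspace $E_{\lambda_j}$ of $X$, and eigenspaces of a symmetric matrix for distinct eigenvalues are orthogonal, so $S \subseteq \bigoplus_j E_{\lambda_j}$; decomposing $u_i$ accordingly shows that its eigenvalue $\nu_i$ must coincide with one of $\lambda_1, \ldots, \lambda_k$, whence $|\nu_i| \geq \lambda$.

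For the termwise bound, fix a spectral decomposition $Y = \sum_\ell \mu_\ell g_\ell g_\ell^\top$ with the $g_\ell$ orthonormal, so that $P = \sum_{\ell\,:\,|\mu_\ell| < \lambda - \eta} g_\ell g_\ell^\top$. Because $u_i$ is a $\nu_i$-eigenvector of $X$,
\[\Norm{(X-Y)u_i}^2 = \Norm{\nu_i u_i - Y u_i}^2 = \sum_\ell (\nu_i - \mu_\ell)^2 \iprod{g_\ell, u_i}^2\,,\]
and for every index $\ell$ with $|\mu_\ell| < \lambda - \eta$ we have $|\nu_i - \mu_\ell| \geq |\nu_i| - |\mu_\ell| > \lambda - (\lambda - \eta) = \eta$. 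Keeping only these terms gives $\Norm{(X-Y)u_i}^2 \geq \eta^2 \Norm{P u_i}^2$; summing over $i \in [k]$ and dividing by $\eta^2$ yields the lemma for this basis, and hence for every orthonormal basis by the first step.

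The argument is short, and the point I would flag as the crux is that the termwise estimate $\Norm{(X-Y)u}^2 \geq \eta^2 \Norm{P u}^2$ genuinely fails for an arbitrary unit vector $u \in S$ — it needs $u$ to be an eigenvector of $X$. This is precisely why the basis-invariance observation in the first step is essential: it lets us pass to the eigenbasis of $X|_S$, and it explains why the lemma is stated for a whole orthonormal basis rather than a single vector.
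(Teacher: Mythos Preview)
Your proof is correct and follows essentially the same approach as the paper: both establish the termwise bound $\Norm{(X-Y)u}^2 \geq \eta^2 \Norm{Pu}^2$ for eigenvectors of $X$ via the spectral decomposition of $Y$, and both use the trace identity $\sum_i \Norm{M u_i}^2 = \Tr(M^\top M \Pi)$ to transfer this to an arbitrary orthonormal basis of $S$. Your version is in fact slightly more careful, since you explicitly construct an orthonormal eigenbasis of $X|_S$ rather than tacitly assuming the given $v_i$ are orthonormal (as the paper does when it writes ``using this fact for the orthonormal bases $\{v_i\}$ and $\{u_i\}$'').
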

\begin{proof}
First, for the eigenvectors $v_1, \ldots, v_{k}$ of $X$ we have that 
\begin{equation}
\label{lemma:lin-alg-subspace-eq1}
\Norm{(X-Y)v_i}^2 = \Norm{\lambda_i v_i - Y v_i}^2 = \Norm{(\lambda_i I_n - Y) v_i}^2 \geq \Norm{P(\lambda_i I_n - Y) v_i}^2\,.
\end{equation}
Let now $v_1', \ldots, v_{m}' \in \R^n$ be the eigenvectors of $Y$ in $P$ with corresponding eigenvalues $\lambda_1', \ldots, \lambda_{m}'$ smaller in absolute value than $\lambda - \eta$.
Continuing \cref{lemma:lin-alg-subspace-eq1},
\begin{equation}
\label{lemma:lin-alg-subspace-eq2}
\Norm{(X-Y)v_i}^2 \geq \sum_{j=1}^{m} (\lambda_i - \lambda_j')^2 \langle v_j', v_i \rangle^2 \geq \eta^2 \sum_{j=1}^{m} \langle v_j', v_i \rangle^2 = \eta^2 \Norm{Pv_i}^2\,.
\end{equation}

Second, we claim that $\sum_{i=1}^k \Norm{(X-Y)v_i}^2 = \sum_{i=1}^{k} \Norm{(X-Y)u_i}^2$ and $\sum_{i=1}^k \Norm{P v_i}^2 = \sum_{i=1}^{k} \Norm{P u_i}^2$.
The proof is simply that, for any matrix $M \in \R^{n \times n}$ and any two orthonormal bases of the same subspace $\{x_i\}$ and $\{y_i\}$,
\[\sum_{i} \Norm{Mx_i}^2 = \sum_{i} \Tr(M^\top M x_i x_i^\top) = \Tr\Paren{M^\top M \sum_{i} x_i x_i^\top} = \Tr\Paren{M^\top M \sum_{i} y_i y_i^\top}\,,\]
and by the same sequence of transformations we get back that $\Tr\Paren{M^\top M \sum_{i} y_i y_i^\top} = \sum_{i} \Norm{My_i}^2$. 
Using this fact for the orthonormal bases $\{v_i\}$ and $\{u_i\}$, we prove the desired fact, and combining it with \cref{lemma:lin-alg-subspace-eq2} we complete the proof.
\end{proof}

We state now the well-known bounds of subspace enumeration.

\begin{lemma}[Subspace enumeration]
    \label[lemma]{lemma:subspace_enumeration}
    There exists an algorithm that outputs with high probability an $\delta$-net of the unit $(d-1)$-dimensional ball in time $(1/\delta)^{O(d)}$.
\end{lemma}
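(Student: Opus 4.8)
The plan is to give a short randomized construction and to analyze it via an auxiliary net. We prove the statement for the unit ball $K := \Set{x \in \R^d : \norm{x} \le 1}$; the statements for the sphere $S^{d-1}$ or for a $(d-1)$-dimensional ball follow by the identical argument (and are in any case immediate for our uses, since any $\delta$-net of $K$ contains a point within distance $\delta$ of every unit vector, which is all the callers of this lemma need). Assume $\delta \le 1$ (for $\delta > 1$ a single point suffices). The algorithm draws $N := (6/\delta)^d \cdot 2d\ln(7/\delta) = (1/\delta)^{O(d)}$ points independently and uniformly at random from $K$ and outputs this collection. For the analysis only, fix a maximal $(\delta/3)$-separated subset $\mathcal{N} \subseteq K$; it is automatically a $(\delta/3)$-net of $K$, and since the open balls $B(y,\delta/6)$ for $y \in \mathcal{N}$ are pairwise disjoint and contained in $(1+\tfrac\delta6)K$, we get $\abs{\mathcal{N}} \le (1+\delta/6)^d/(\delta/6)^d = (1+6/\delta)^d \le (7/\delta)^d$.

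The key geometric estimate is that a uniformly random point of $K$ lands in the ball $B(y,\delta/3)$ with probability at least $(\delta/6)^d$, for every $y \in K$. To see this, set $y' := (1-\tfrac\delta6)\,y$. Then $\norm{y'-y} = \tfrac\delta6\norm{y} \le \tfrac\delta6$, so $B(y',\tfrac\delta6) \subseteq B(y,\tfrac\delta3)$; and $\norm{y'} + \tfrac\delta6 \le (1-\tfrac\delta6) + \tfrac\delta6 = 1$, so $B(y',\tfrac\delta6) \subseteq K$. Hence $\operatorname{vol}\Paren{B(y,\tfrac\delta3)\cap K} \ge \operatorname{vol}\Paren{B(y',\tfrac\delta6)} = (\delta/6)^d\operatorname{vol}(K)$, which gives the probability bound. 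Consequently, the probability that none of the $N$ samples lies in $B(y,\tfrac\delta3)\cap K$ is at most $\Paren{1-(\delta/6)^d}^N \le e^{-N(\delta/6)^d} = e^{-2d\ln(7/\delta)} = (\delta/7)^{2d}$. Union bounding over the at most $(7/\delta)^d$ points of $\mathcal{N}$, with probability at least $1-(\delta/7)^d \ge 1-2^{-d}$ every set $B(y,\tfrac\delta3)\cap K$ with $y \in \mathcal{N}$ contains at least one sample.

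To finish, condition on this event. For arbitrary $x \in K$, pick $y \in \mathcal{N}$ with $\norm{x-y}\le\tfrac\delta3$ (using that $\mathcal{N}$ is a $(\delta/3)$-net) and then a sample $s$ with $\norm{s-y}\le\tfrac\delta3$; the triangle inequality gives $\norm{x-s}\le\tfrac{2\delta}3\le\delta$, so the output is a $\delta$-net of $K$. A uniform sample from $K$ is produced in $\poly(d)$ time (for example, normalize a standard Gaussian vector in $\R^d$ and scale by $U^{1/d}$ for an independent uniform $U\in[0,1]$), so the overall running time is $N\cdot\poly(d) = (1/\delta)^{O(d)}$, as claimed; enlarging $N$ by a further $\log(1/\beta)$ factor drives the failure probability below any target $\beta$.

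I do not expect any genuine obstacle here: the lemma is classical, and the only step deserving a moment of care is the volume lower bound $\operatorname{vol}\Paren{B(y,\tfrac\delta3)\cap K}\ge(\delta/6)^d\operatorname{vol}(K)$ holding uniformly in $y$, in particular for $y$ on the boundary sphere, which is exactly what the nested-ball inclusion $B(y',\tfrac\delta6)\subseteq B(y,\tfrac\delta3)\cap K$ supplies. (If a deterministic construction is preferred, one can instead enumerate the points of the scaled lattice $\tfrac{\delta}{\sqrt d}\Z^d$ within distance $1+\delta$ of the origin: every $x\in K$ is within $\ell_2$-distance $\sqrt d\cdot\tfrac{\delta}{2\sqrt d}=\tfrac\delta2\le\delta$ of its coordinatewise rounding, which lies in this set, and a packing argument shows the number of such lattice points is $(O(1)/\delta)^d$, where $\operatorname{vol}(K)\le(2\pi e/d)^{d/2}$ cancels the $d^{d/2}$ coming from the lattice spacing $\delta/\sqrt d$; these points are listed within the same time bound by a depth-first search over coordinates that prunes any partial vector whose norm already exceeds $1+\delta$.)
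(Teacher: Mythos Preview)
Your proof is correct and follows essentially the same approach as the paper's: both argue that sampling $O(m\log m)$ uniform random points from the ball, where $m=(O(1)/\delta)^d$ bounds the size of an auxiliary net, yields a $\delta$-net with high probability via a union bound. The paper's proof is a two-sentence sketch deferring to a reference, whereas you have spelled out the volume lower bound and the union-bound calculation explicitly; the content is the same.
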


\begin{proof}
    By standard arguments, there exists such a cover of size $m = O(1/\delta)^d$ (see Example 5.8 in~\cite{MR3967104-Wainwright19}).
    Furthermore, it suffices to sample $O(m \log m)$ random points on the unit ball in order to obtain with high probability such a cover.
\end{proof}

Finally, we prove \cref{thm:partition_recovery_theorem2}.

\begin{proof}[Proof of \cref{thm:partition_recovery_theorem2}]
    We have $\Norm{A \frac{Z_a}{\Norm{Z_a}} - W \frac{Z_a}{\Norm{Z_a}}}^2 \leq \delta$ for all $a \in [k]$, so $\sum_{a=1}^k \Norm{\Paren{A - W}\frac{Z_a}{\Norm{Z_a}}}^2 \leq \delta k$.
    Because $\frac{Z_a}{\Norm{Z_a}}$ are orthonormal, for any orthonormal $u_1, \ldots, u_k$ that span the same subspace we also have $\sum_{i=1}^k \Norm{\Paren{A -W}u_i}^2
    \leq \delta k$.
    Also recall that $W$ has column span in the span of $\{Z_a\}_{a \in [k]}$, so all of its eigenvectors with non-zero eigenvalues lie in the span of $\{u_1, \ldots, u_k\}$.
    Let $P$ be the orthogonal projection to the eigenvectors of $A$ with eigenvalue smaller than $\lambda$ in absolute value.
    Then, by \cref{lemma:lin-alg-subspace} applied with $W, A$ as $X, Y$, we get for any orthonormal vectors $u_1, \ldots, u_r$ that span the eigenspace of $W$ with eigenvalues larger than $\lambda+\eta$ in absolute value, for any $\eta >0$, that $\sum_{i=1}^r \Norm{P u_i}^2 \leq \delta k / \eta^2$.
    Hence, $\Norm{P u_i}^2 \leq \delta k / \eta^2$ for all $i \in [r]$, so each $u_i$ is $O(\sqrt{\delta k / \eta^2})$-close to the subspace corresponding to $P^{\bot} = I_n - P$.

    Then, by \cref{lemma:subspace_enumeration}, by finding an $O(\sqrt{\delta k / \eta^2})$-net of the eigenspace of $A$ corresponding to $P^\bot$, we are guaranteed to find a vector that is $O(\sqrt{\delta k / \eta^2})$-close to each $u_i$.
    This subspace has dimension at most $t$, so applying \cref{lemma:subspace_enumeration} takes time $(\delta k/\eta^2)^{O(t)}$ and produces a list of size at most $(\delta k /\eta^2)^{O(t)}$.
    The overall time complexity is then bounded by $n^{O(1)} \cdot (\delta k/\eta^2)^{O(t)}$.
\end{proof}

\subsection{Eigenvector coordinate separation}

In this section we prove \cref{lem:eigenvalue_separation}.
First, we state a technical lemma.

\begin{lemma}[Non-uniformity of eigenvectors]\label[lemma]{lem: eigenvectorsdistinguish}
    Let $c\geq 0$, $0<\lambda< \zeta$, and $\alpha\geq \sqrt{2}c\lambda$ be parameters.
    Let $D \in \R^k$ with $D_x \leq c$ for all $x \in [k]$.
    Let $M$ be a symmetric $k\times k$ matrix with $\Norm{M}\leq \zeta$ such that for any two distinct columns $M_x$ and $M_y$ it holds that $\Norm{D_xM_x-D_yM_y}^2 \geq \alpha^2$.
    Let $v_1, \ldots, v_k$ be orthonormal eigenvectors of $M$ and let $S:=\Set{v_r \; \mid\; M v_r = \tau v_r, |\tau| > \lambda}$.
    Then
    $$\min_{x\neq y\in[k]}\max_{v_r \in S} \Abs{D_x (v_r)_x-D_y (v_r)_y}\geq \sqrt{\frac{\alpha^2-2c^2\lambda^2}{k\Paren{\zeta^2-\lambda^2}}} \,.$$
\end{lemma}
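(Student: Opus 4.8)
The plan is to argue by contradiction: suppose that for some pair of distinct indices $x \neq y$ we have $\abs{D_x (v_r)_x - D_y (v_r)_y} < \theta$ for every $v_r \in S$, where $\theta$ is the claimed lower bound. I want to show this forces $\Norm{D_x M_x - D_y M_y}^2 < \alpha^2$, contradicting the hypothesis on the columns of $M$. The natural object to track is the vector $D_x M_x - D_y M_y = M(D_x e_x - D_y e_y)$, since $M_x = M e_x$ by symmetry. Write $w := D_x e_x - D_y e_y \in \R^k$, so that the target quantity is $\Norm{M w}^2$, and note $\Norm{w}^2 = D_x^2 + D_y^2 \leq 2c^2$.

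The key step is to expand $M w$ in the orthonormal eigenbasis $v_1, \ldots, v_k$ of $M$. Writing $w = \sum_{r=1}^k \iprod{w, v_r} v_r$, we get $M w = \sum_r \tau_r \iprod{w, v_r} v_r$, hence $\Norm{Mw}^2 = \sum_r \tau_r^2 \iprod{w, v_r}^2$. Now split the sum over $r$ according to whether $v_r \in S$ (i.e.\ $\abs{\tau_r} > \lambda$) or not. For the indices with $v_r \notin S$, we bound $\tau_r^2 \leq \lambda^2$, giving a contribution at most $\lambda^2 \sum_{r \colon v_r \notin S} \iprod{w,v_r}^2 \leq \lambda^2 \Norm{w}^2 \leq 2c^2\lambda^2$. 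For the indices with $v_r \in S$, I use $\abs{\tau_r} \leq \zeta$ together with the key observation that $\iprod{w, v_r} = D_x (v_r)_x - D_y (v_r)_y$ by the definition of $w$; the contradiction hypothesis makes each such inner product smaller than $\theta$ in absolute value, and there are at most $k$ of them, so this contributes at most $\zeta^2 \cdot k \cdot \theta^2$. Actually, to get the tighter factor $\zeta^2 - \lambda^2$ in the denominator rather than $\zeta^2$, I should instead write $\tau_r^2 = \lambda^2 + (\tau_r^2 - \lambda^2) \leq \lambda^2 + (\zeta^2 - \lambda^2)$ uniformly over all $r$, so that $\Norm{Mw}^2 \leq \lambda^2 \Norm{w}^2 + (\zeta^2 - \lambda^2) \sum_{r \colon v_r \in S}\iprod{w,v_r}^2 \leq 2c^2\lambda^2 + (\zeta^2-\lambda^2) k \theta^2$, since the $v_r \notin S$ terms get killed by $\tau_r^2 - \lambda^2 \leq 0$.

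Setting this $< \alpha^2$ and solving for $\theta$ yields exactly $\theta^2 < \frac{\alpha^2 - 2c^2\lambda^2}{k(\zeta^2 - \lambda^2)}$, so if every $v_r \in S$ had $\abs{D_x(v_r)_x - D_y(v_r)_y}$ below this threshold we would contradict $\Norm{D_x M_x - D_y M_y}^2 \geq \alpha^2$. Hence for this (arbitrary) pair $x \neq y$ some $v_r \in S$ achieves at least the threshold, which is precisely the claim after taking $\min$ over pairs and $\max$ over $S$. The hypotheses $\alpha \geq \sqrt{2} c \lambda$ and $\lambda < \zeta$ ensure the radicand and denominator are nonnegative, so the bound is well-defined. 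I don't anticipate a real obstacle here — the only mild subtlety is getting the $\zeta^2 - \lambda^2$ rather than $\zeta^2$, which just requires the bookkeeping above rather than the crude split; everything else is a one-line eigenbasis expansion plus Parseval.
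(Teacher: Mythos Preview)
Your proof is correct and essentially identical to the paper's. Both expand $D_x M_x - D_y M_y$ in the eigenbasis to get $\sum_r \tau_r^2 (D_x(v_r)_x - D_y(v_r)_y)^2$, split according to $|\tau_r| \gtrless \lambda$, use $\sum_r (D_x(v_r)_x - D_y(v_r)_y)^2 = D_x^2 + D_y^2 \leq 2c^2$ from orthonormality, and arrive at the same inequality $\sum_{r\in S}(D_x(v_r)_x - D_y(v_r)_y)^2 \geq \frac{\alpha^2 - 2c^2\lambda^2}{\zeta^2 - \lambda^2}$; the paper then averages over $r\in S$ (dividing by $k$) whereas you phrase it as a contradiction, which is the same step. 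One nit: your aside ``$M_x = Me_x$ by symmetry'' is just linearity---symmetry is only needed to guarantee an orthonormal eigenbasis.
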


\begin{proof}
    We write $M = \sum_{r=1}^k \lambda_r v_r v_r^{\top}$ and get that, for any $x \neq y \in [k]$,
    \begin{equation*}
        D_x M_x - D_y M_y=\sum_{r=1}^k \lambda_r (D_x (v_r)_x - D_y (v_r)_y)v_r \,.
    \end{equation*}
    Using that $\Norm{M}\leq \zeta$ and $\Abs{\lambda_r}\leq \lambda$, we get
    \begin{align}
    \label{eq:rounding_split_1}
        \begin{split}
            \Norm{D_x M_x- D_y M_y}^2
            &=\sum_{r=1}^k \lambda_r^2 (D_x (v_r)_x - D_y(v_r)_y)^2 \\
            &=\sum_{r\in S} \lambda_r^2 (D_x (v_r)_x - D_y(v_r)_y)^2+\sum_{r\notin S} \lambda_r^2 (D_x (v_r)_x - D_y(v_r)_y)^2 \\
            &\leq \zeta^2\sum_{r\in S} (D_x (v_r)_x- D_y (v_r)_y)^2+\lambda^2\sum_{r\notin S} (D_x (v_r)_x- D_y (v_r)_y)^2 \,.
        \end{split}
    \end{align}
    Because the eigenvectors are orthonormal, the rows of the matrix $V \in \R^{k \times k}$ whose columns are $\set{v_{r}}_{r \in [k]}$ are also orthonormal: that is, $\sum_{r=1}^{k} (v_r)_x (v_r)_y = 0$ for $x \neq y$ and $\sum_{r=1}^{k} (v_r)_x^2 = 1$ for $x \in [k]$.
    Therefore, we obtain
    \begin{align*}
        \sum_{r \notin S} (D_x (v_r)_x- D_y (v_r)_y)^2
        & = \sum_{r=1}^k (D_x (v_r)_x- D_y (v_r)_y)^2 - \sum_{r \in S} (D_x (v_r)_x- D_y (v_r)_y)^2 \\
        & = D_x^2 + D_y^2 - \sum_{r \in S} (D_x (v_r)_x- D_y (v_r)_y)^2 \\
        & \leq 2c^2 - \sum_{r \in S} (D_x (v_r)_x- D_y (v_r)_y)^2 \,.
    \end{align*}
    Plugging it into \cref{eq:rounding_split_1}, we get
    \begin{equation*}
        \Norm{D_x M_x- D_y M_y}^2
        \leq \Paren{\zeta^2-\lambda^2} \sum_{r \in S} (D_x (v_r)_x- D_y (v_r)_y)^2 + 2 c^2 \lambda^2 \,.
    \end{equation*}
    Since $\Norm{D_xM_x-D_yM_y}^2\geq \alpha^2$, it follows that
    \begin{equation*}
        \sum_{r \in S} (D_x (v_r)_x- D_y (v_r)_y)^2 \geq \frac{\alpha^2 - 2c^2\lambda^2}{\zeta^2-\lambda^2} \,.
    \end{equation*}
    Finally, by an averaging argument, there must exist some $r \in S$ such that
    \begin{equation*}
        (D_x (v_r)_x- D_y (v_r)_y)^2 \geq \frac{\alpha^2 - 2c^2\lambda^2}{k(\zeta^2-\lambda^2)} \,.
    \end{equation*}
\end{proof}

Next, we prove \cref{lem:eigenvalue_separation}.

\begin{proof}[Proof of \cref{lem:eigenvalue_separation}]
Observe that if $Mv = \tau v$, then we also have that $B^{1/2}v$ is an eigenvector of $\widetilde{M} = B^{1/2} M B^{-1/2}$ of value $\tau$.
Note that $\widetilde{M}$ is symmetric and continues to satisfy $\Norm{\widetilde{M}}\leq \zeta$.
Then, we apply \cref{lem: eigenvectorsdistinguish} with the matrix $\widetilde{M}$ and with the eigenvectors $B^{1/2} v$ of $\widetilde{M}$ with eigenvalue at least $\lambda$ in absolute value (normalized to be unit vectors).
We apply the lemma with $D \in \R^k$ such that $D_x = (\sqrt{n} B^{-1/2})_x$, whose largest entry is upper bounded by $1/\sqrt{c}$.
Recall that we assume any two rows of $M$ satisfy $\Norm{M^{(a)} - M^{(b)}} \geq \alpha$, so also any two rows of $\sqrt{n} M B^{-1/2} = \sqrt{n} B^{-1/2} \widetilde{M}$ satisfy $\Norm{\sqrt{n}B^{-1/2} \Paren{M^{(a)} - M^{(b)}}} \geq \alpha$.
Taking the transpose and using that $\widetilde{M}$ is symmetric, any two columns of $\sqrt{n} \widetilde{M} B^{-1/2}$ satisfy $\Norm{\sqrt{n}B^{-1/2}(a) \widetilde{M}_a - \sqrt{n}B^{-1/2}(b) \widetilde{M}_b} \geq \alpha$.
Applying \cref{lem: eigenvectorsdistinguish}, and using that $\alpha^2 \geq 4 \lambda^2 /c$ and $\zeta^2 \geq 4 \lambda^2$, we get 
\[\min_{a\neq b\in[k]}\max_{v \in S} \frac{1}{\Norm{B^{1/2}v}} \sqrt{n} \Abs{v_a- v_b} \geq \Omega\Paren{\sqrt{\frac{\alpha^2 - 2\lambda^2/c}{k(\zeta^2-\lambda^2)}}} \geq \Omega\Paren{\frac{\alpha}{\zeta\sqrt{k}}}\,.\]

The above shows that, for all $a \neq b \in [k]$, there exists $\frac{v}{\Norm{v}} \in S$ with $\Norm{B^{1/2}v}=1$ such that $\sqrt{n}\Abs{v_a - v_b} \geq \Omega\Paren{\frac{\alpha}{\zeta\sqrt{k}}}$, so $\Abs{v_a - v_b} \geq \Omega\Paren{\frac{\alpha}{\zeta\sqrt{kn}}}$.
We note that $\Norm{B^{1/2}v} = 1$ implies that $\frac{1}{\sqrt{n}} \leq \Norm{v} \leq \frac{1}{\sqrt{cn}}$.
We also have $\sqrt{c n} \norm{v} \leq \norm{Zv} \leq \sqrt{n} \norm{v}$, so $\sqrt{c} \leq \norm{Zv} \leq \frac{1}{\sqrt{c}}$.
Then, if we consider for each such $v$ some $v' \propto v$ such that $\norm{Zv'} = 1$, we have that $\Abs{v'_a - v'_b} \geq \Omega\Paren{\frac{\alpha\sqrt{c}}{\zeta\sqrt{k n}}}$
\end{proof}

\subsection{Clustering with separated rows}

In this section we prove \cref{lem:cluster-eigen}.
First, we state the guarantees of a clustering algorithm used in the proof.

\begin{lemma}[Partitioning based on distinguishing eigenvectors (based on Lemma E.7 in \cite{MR3536556-David16})]\label[lemma]{lem:clusteringalmosteigenvectors}
    Let $\chi: [n] \to [k]$ be a $k$-partition of $[n]$.
    Suppose $|\chi^{-1}(a)| \geq c n$ for all $a \in [k]$.
    For $\ell \leq k$, let $\set{v_{i}}_{i \in [\ell]}$ be $k$-dimensional vectors that satisfy
    \[\min_{a\neq b\in[k]}\max_{i\in [\ell]} \Abs{(v_i)_a-(v_i)_b}\geq \frac{\alpha}{\sqrt{n}}\,.\]
    Also let $\set{\hat{u}_{i}}_{i \in [\ell]}$ be $n$-dimensional unit vectors that satisfy $\Norm{\hat{u}_i-Zv_i}^2\leq \delta$ for all $i \in [\ell]$.
    Then \cref{alg: spectralclustering} runs in time $n^{O(1)} \cdot (\alpha \sqrt{c})^{-O(k^2)}$ and returns a list of $(\alpha \sqrt{c})^{-O(k^2)}$ $k$-partitions $\hat{\chi}: [n] \to [k]$ such that, for at least one of them, it holds up to a permutation of the color classes that $\mathbb{E}_{\bm{x} \sim \operatorname{Unif}([n])} \1[\hat\chi(x) \neq \chi(x)] \leq O\Paren{\frac{\delta k}{\alpha^2}}$.
\end{lemma}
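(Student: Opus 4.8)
The plan is to read $\hat u_1,\dots,\hat u_\ell$ as a noisy embedding of the vertex set into $\R^\ell$ whose noiseless version clusters the vertices exactly according to $\chi$, and to reconstruct $\chi$ by exhaustively guessing the $k$ cluster centers over a net of a small, explicitly bounded region of $\R^\ell$.

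First I would set up the geometry. Let $\Phi\colon [n]\to\R^\ell$ be the ideal embedding $\Phi(x)=((v_1)_{\chi(x)},\dots,(v_\ell)_{\chi(x)})$ and, for $a\in[k]$, let $c_a:=((v_1)_a,\dots,(v_\ell)_a)$ be the center of class $a$, so that $\Phi(x)=c_{\chi(x)}$ depends only on the color class of $x$. Since $(Zv_j)_x=(v_j)_{\chi(x)}$ by \cref{def:partition_Z}, viewing the $\hat u_j$ and the $Zv_j$ as the columns of two $n\times\ell$ matrices shows that the noisy embedding $\hat\Phi(x):=(\hat u_1(x),\dots,\hat u_\ell(x))$ satisfies $\sum_{x\in[n]}\Norm{\hat\Phi(x)-\Phi(x)}^2=\sum_{j\in[\ell]}\Norm{\hat u_j-Zv_j}^2\le\delta$. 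I would then record two facts. (i) By hypothesis, $\Norm{c_a-c_b}\ge\max_{i\in[\ell]}\Abs{(v_i)_a-(v_i)_b}\ge\alpha/\sqrt n$ for all $a\neq b$. (ii) Since $\hat u_j$ is a unit vector, $\Norm{Zv_j}\le 1+\sqrt\delta\le 2$ (the conclusion is vacuous once $\delta\ge\alpha^2/k$, so I may assume $\delta$ is small), and from $\Norm{Zv_j}^2=\sum_{a}\Abs{\chi^{-1}(a)}(v_j)_a^2\ge cn\,(v_j)_a^2$ we get $\Abs{(v_j)_a}\le 2/\sqrt{cn}$; hence each center $c_a$ lies in the Euclidean ball $\mathcal{B}\subseteq\R^\ell$ of radius $R:=2\sqrt{\ell/(cn)}$ centered at the origin.

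The algorithm I would analyze — which I expect \cref{alg: spectralclustering} to implement, possibly in a more robust variant — then proceeds as follows. Rescaling \cref{lemma:subspace_enumeration} to radius $R$, build an $\e$-net $\mathcal{N}$ of $\mathcal{B}$ with $\e:=\alpha/(8\sqrt n)$, so that $\Abs{\mathcal{N}}\le(O(R/\e))^\ell=\Paren{O(\sqrt k/(\alpha\sqrt c))}^{\ell}$. For every ordered $k$-tuple $(\hat c_1,\dots,\hat c_k)\in\mathcal{N}^k$ of candidate centers — at most $\Abs{\mathcal{N}}^k=(\alpha\sqrt c)^{-O(k^2)}$ of them, using $\ell\le k$ and absorbing factors that depend only on $k$ — output the $k$-partition $\hat\chi$ that assigns each $x$ to $\argmin_{a\in[k]}\Norm{\hat\Phi(x)-\hat c_a}$ (ties broken arbitrarily). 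Computing each $\hat\chi$ takes time $n^{O(1)}$, which yields the claimed running time and list size. For correctness, call $x$ \emph{bad} if $\Norm{\hat\Phi(x)-\Phi(x)}>2\e$; by Markov's inequality there are at most $\delta/(2\e)^2=O(\delta n/\alpha^2)$ bad vertices. Since the true centers lie in $\mathcal{B}$, some tuple in $\mathcal{N}^k$ satisfies $\Norm{\hat c_a-c_{\pi(a)}}\le\e$ for all $a\in[k]$ and a suitable permutation $\pi$; for this tuple and any good $x$ with $\chi(x)=\pi(a)$, one has $\Norm{\hat\Phi(x)-\hat c_a}\le 2\e+\e=3\e$, whereas for $b\neq a$ the triangle inequality and (i) give $\Norm{\hat\Phi(x)-\hat c_b}\ge \alpha/\sqrt n-2\e-\e=5\e>3\e$, so $x$ is assigned to $a$, i.e.\ $\hat\chi(x)=\pi^{-1}(\chi(x))$. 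Thus, up to the permutation $\pi$, only bad vertices are misclassified, so $\E_{\bm x\sim\operatorname{Unif}([n])}\1[\hat\chi(\bm x)\neq\chi(\bm x)]=O(\delta/\alpha^2)$; and since every true class still has at least $cn-O(\delta n/\alpha^2)>0$ correctly labeled vertices, $\hat\chi$ is a genuine (surjective) $k$-partition.

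I expect the main obstacle to be the algorithmic one: we cannot afford to look for cluster representatives among the $n$ data points, so the whole argument hinges on fact (ii) — the balancedness assumption $\Abs{\chi^{-1}(a)}\ge cn$ together with $\Norm{\hat u_j}=1$ — which confines the ideal centers to the explicitly bounded ball $\mathcal{B}$ and makes a net of size $(\alpha\sqrt c)^{-O(k^2)}$ enough. The rest is routine bookkeeping of the constants; in particular the extra factor $k$ in the stated error bound $O(\delta k/\alpha^2)$ is slack I would carry through to accommodate a coarser or iteratively-thresholded version of the rounding step in \cref{alg: spectralclustering}, without changing the asymptotics of the running time.
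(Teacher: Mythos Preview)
Your proof is correct and follows essentially the same approach as the paper: build a net over the space of possible centers, enumerate all $k$-tuples of net points, assign each vertex to its nearest candidate center, and bound misclassifications via Markov's inequality on the total embedding error. The only cosmetic differences are that the paper uses a coordinatewise (box) net and a coordinatewise definition of ``good'' vertices (each coordinate close), whereas you use Euclidean balls and an $\ell_2$ threshold for bad vertices; your choice is in fact slightly cleaner and yields $O(\delta/\alpha^2)$ misclassified vertices, so the extra factor of $k$ in the stated bound is indeed slack, exactly as you suspected.
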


\begin{algorithm}[H]
    \caption{Spectral partitioning algorithm (based on Algorithm 10 in \cite{MR3536556-David16})}
    \label{alg: spectralclustering}
    \DontPrintSemicolon
    Guess $k$-dimensional vectors $\{v_i^*\}_{i \in [\ell]}$ with entries in $[-1/\sqrt{cn}, 1/\sqrt{cn}]$ up to entrywise resolution $\alpha / \sqrt{12 k n}$.\;
    Let $S_1 = \emptyset, \ldots, S_k = \emptyset$.\;
    Insert each $x \in [n]$ in $S_a$ where
    \[
    a = \argmax_{a \in [k]} \sum_{j=1}^\ell \Paren{(\hat{u}_j)_x-(v^*_j)_a}^2\,.
    \]
\end{algorithm}

\begin{proof}
    To begin with, we remark that the number of guesses is bounded by
    \[\Paren{\Paren{2/\sqrt{cn}} / \Paren{\alpha/\sqrt{12kn}}}^{k\ell} \leq O(\sqrt{k}/(\alpha\sqrt{c}))^{k^2}\,,\] 
    so the algorithm leads to the desired time complexity bound and list size bound.

    Let $u_j=Zv_j$, and note that if $Z_{ia}=1$ then $(u_j)_i=(v_j)_a$. Define the set 
    $$\mathrm{Good}:=\Set{x\in [n]\;\mid\; \Paren{(\hat{u}_j)_x-(u_j)_x}^2\leq \frac{\alpha^2}{12kn}, \forall j \in [\ell]}\,.$$
    Note that, because $\Norm{\hat{u}_j-Zv_j}^2\leq \delta$ for all $j \in [\ell]$, we have that $n - |\mathrm{Good}| \leq \frac{\delta}{\alpha^2/(12kn)} = \frac{12\delta k}{\alpha^2} n$, so $|\mathrm{Good}| \geq n - \frac{12\delta k}{\alpha^2} n$.

    Consider the guess of the algorithm in which $|(v^*_i)_a - (v_i)_a| \leq \alpha / \sqrt{12 k n}$ for all $a \in [k]$, $i \in [\ell]$.
    Consider some $x \in \mathrm{Good}$ with $Z_{x a}=1$ for some $a \in [k]$.
    Then $\Paren{(\hat{u}_j)_x-(u_j)_x}^2\leq \frac{\alpha^2}{12kn}$ and $(u_j)_x = (v_j)_a$, so
    \begin{align*}
        \Paren{(\hat{u}_j)_x-(v^*_j)_a}^2
        &= \Paren{\Paren{(\hat{u}_j)_x-(u_j)_x}+\Paren{(u_j)_x-(v_j)_a}+\Paren{(v_j)_a-(v^*_j)_a}}^2\\
        &= \Paren{\Paren{(\hat{u}_j)_x-(u_j)_x}+\Paren{(v_j)_a-(v^*_j)_a}}^2\\
        &\leq 2 \Paren{(\hat{u}_j)_x-(u_j)_x}^2 + 2 \Paren{(v_j)_a-(v^*_j)_a}^2\\
        &\leq \frac{\alpha^2}{3kn}\,.
    \end{align*}

    Then, overall, we have 
    $$\sum_{j=1}^\ell \Paren{(\hat{u}_j)_x-(v^*_j)_a}^2 \leq \ell \frac{\alpha^2}{3kn} \leq \frac{\alpha^2}{3n}\,.$$

    On the other hand, for any $b \neq a \in [k]$, there exists $i\in[\ell]$ with $\Abs{(u_i)_x-(v_i)_b}=\Abs{(v_i)_a-(v_i)_b}\geq \frac{\alpha}{\sqrt{n}}$, so similarly we have
    \begin{align*}
    \Paren{(\hat{u}_i)_x-(v_i^*)_b}^2
    &= \Bigg(\Paren{(\hat{u}_i)_x-(u_i)_x}+\Paren{(u_i)_x-(v_i)_b}+\Paren{(v_i)_b-(v^*_i)_b}\Bigg)^2\\
    &\geq \Paren{\frac{\alpha}{\sqrt{n}}-2\frac{\alpha}{\sqrt{12kn}}}^2\\
    &> \frac{\alpha^2}{3n}\,.
    \end{align*}
    Hence, 
    $$\sum_{j=1}^\ell \Paren{(\hat{u}_j)_x-(v_j^*)_b}^2 > \frac{\alpha^2}{3n}\,.$$
    This shows that there exists a guess of the algorithm $v_1^*, \ldots, v_\ell^*$ for which all vertices in $\mathrm{Good}$ get correctly partitioned.
    Then, because $|\mathrm{Good}| \geq n - \frac{12\delta k}{\alpha^2} n$, for this guess at most $\frac{12k\delta}{\alpha^2}n$ vertices are misclassified.
\end{proof}

Finally, we prove \cref{lem:cluster-eigen}.

\begin{proof}[Proof of \cref{lem:cluster-eigen}]

    For each $v_i$, let $v'_i \propto v_i$ such that $\Norm{Z v'_i} = 1$.
    Then the assumptions guarantee that 
    \[\min_{a\neq b\in[k]}\max_{i \in [r]} \Abs{(v'_i)_a- (v'_i)_b} \geq \frac{\alpha}{\sqrt{n}}\,.\]
    We also have that $\Norm{\hat{u}_i - Z v'_i}^2 \leq \delta$ for all $i \in [r]$.
    Then the proof follows by \cref{lem:clusteringalmosteigenvectors}: we can recover a list of partitions such that one of them contains (up to a permutation) at most $O\Paren{\frac{\delta k}{\alpha^2}} n$ wrongly partitioned vertices.
    The time complexity is $n^{O(1)} \cdot (\alpha \sqrt{c})^{-O(k^2)}$ the list size bound is $(\alpha \sqrt{c})^{-O(k^2)}$.
\end{proof}

\section{Variance of colorings in expander graphs}\label{sec: 3coloring}

In this section we prove a structural result for colorable expander graphs concerning the variance of the connections between any two color classes.
This result is essential to our proof of \Cref{thm:kalg-partial}.

\begin{lemma}[Variance of $k$-colorings in expanders]
\label[lemma]{lemma: expansionimpliesvariance}
Let $G$ be an $n$-vertex $d$-regular graph with adjacency matrix $A$, and let $\tilde{A} = \frac{1}{d}A$.
Let $\lambda_2 > 0$ be the second largest eigenvalue of $\tilde{A}$.
Let $\chi: [n] \to [k]$ be a $k$-partition of its vertices with $\min_{a \in [k]} |\chi^{-1}(a)| \geq c n$.
Suppose that $G$ has at most $\delta d n$ edges with both endpoints in the same part with respect to $\chi$.
For $x \in [n]$ and $a \in [k]$, let $\D{x}{a} = \sum_{y \in \chi^{-1}(a)} \tilde{A}_{xy}$.
Then, for any $a, b \in [k]$ and $\bm{x}$ uniformly random in $\chi^{-1}(a)$,
\[\E_{\bm{x} \sim \chi^{-1}(a)} \Paren{\D{\bm{x}}{b} - \E_{\bm{x} \sim \chi^{-1}(a)} \D{\bm{x}}{b}}^2 \leq O\Paren{\frac{\lambda_2}{c} + \frac{\delta}{\lambda_2 c }}\,.\]
\end{lemma}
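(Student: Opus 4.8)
The plan is to mimic the variance computation sketched in the techniques section, but track the error terms carefully so as to handle the $\delta$-fraction of monochromatic edges. Fix colors $a,b\in[k]$ and write $\psi_b(x)=\D{x}{b}$, so that $\psi_b=\tilde A\chi_b$ where $\chi_b$ is the indicator of $\chi^{-1}(b)$. Set $m_b=\E_{\bm x\sim\chi^{-1}(a)}\psi_b(\bm x)$, the conditional mean we are centering by. The quantity we want to bound is $\E_{\bm x\sim\chi^{-1}(a)}(\psi_b(\bm x)-m_b)^2 = \frac{1}{|\chi^{-1}(a)|}\sum_{x\in\chi^{-1}(a)}(\psi_b(x)-m_b)^2 = \frac{n}{|\chi^{-1}(a)|}\cdot\E_{\bm x}\chi_a(\bm x)(\psi_b(\bm x)-m_b)^2$, where now $\bm x$ is uniform over $[n]$; since $|\chi^{-1}(a)|\ge cn$ this introduces only a factor $1/c$.

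The core step is the test-function argument. Following the excerpt, consider $f=\chi_a\cdot(\psi_b-m_b)+\mu\cdot\chi_b$ for a parameter $\mu>0$ to be optimized (the excerpt uses $\mu=\lambda_2$, but keeping it free lets us balance against $\delta$). Then $\E f=\mu\,\E\chi_b$, and since $\chi_a\cdot(\psi_b-m_b)\perp\chi_b$ (disjoint supports) we get $\E f^2=\E\chi_a(\psi_b-m_b)^2+\mu^2\E\chi_b$. Now expand $\iprod{f,\tilde Af}$. The cross term $2\mu\,\E\big(\chi_a(\psi_b-m_b)\big)\cdot\tilde A\chi_b = 2\mu\,\E\chi_a(\psi_b-m_b)\psi_b = 2\mu\,\E\chi_a(\psi_b-m_b)^2$ (using $\E\chi_a(\psi_b-m_b)=0$ by our choice of $m_b$). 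The two remaining terms are $\iprod{\chi_a(\psi_b-m_b),\tilde A\,\chi_a(\psi_b-m_b)}$ and $\mu^2\iprod{\chi_b,\tilde A\chi_b}$. In the ideal (properly colored) case these two terms are controlled by orthogonality to the top eigenspace; with $\delta dn$ bad edges we must instead bound them crudely. The term $\mu^2\iprod{\chi_b,\tilde A\chi_b}=\mu^2\E\chi_b\psi_b\le\mu^2\E\chi_b$ trivially, and $\iprod{g,\tilde Ag}\le\E g^2$ for $g=\chi_a(\psi_b-m_b)$ since $\|\tilde A\|\le 1$ — but this is too lossy. The cleaner route: apply the one-sided expander inequality $\iprod{f,\tilde Af}\le(1-\lambda_2)(\E f)^2+\lambda_2\E f^2$ directly to $f$, giving
\[
2\mu\,\E\chi_a(\psi_b-m_b)^2 \;\le\; \iprod{f,\tilde Af}-\iprod{\chi_a(\psi_b-m_b),\tilde A\,\chi_a(\psi_b-m_b)}-\mu^2\E\chi_b\psi_b
\]
is not quite it either; instead, I would bound $\iprod{f,\tilde Af}$ from above by the expander inequality and bound the subtracted self-energy terms from below by $0$ (both are $\iprod{g,\tilde Ag}$ with $g\ge 0$-ish — actually $\psi_b\ge 0$ and $\chi_b\ge 0$ so $\iprod{\chi_b,\tilde A\chi_b}\ge 0$, but $\chi_a(\psi_b-m_b)$ changes sign). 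The honest fix is to split $\iprod{f,\tilde Af}=2\mu\,\E\chi_a(\psi_b-m_b)\psi_b + \iprod{\chi_a(\psi_b-m_b),\tilde A\chi_a(\psi_b-m_b)} + \mu^2\iprod{\chi_b,\tilde A\chi_b}$, then the monochromatic-edge contribution to the \emph{first} term — edges inside $\chi^{-1}(a)$ — is where the $\delta$ enters: $\E\chi_a(\psi_b-m_b)\psi_b$ equals a sum over edges from $\chi^{-1}(a)$ to $\chi^{-1}(b)$ minus $m_b\E\chi_a\psi_b$, and any discrepancy from the "proper coloring" identity $\tilde A\chi_a\cdot\chi_a=0$ is at most the $\delta$ fraction of edge mass inside $\chi^{-1}(a)$, i.e.\ $O(\delta)$.

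**Putting it together**, the expander inequality gives $2\mu\,\E\chi_a(\psi_b-m_b)^2 \le (1-\lambda_2)\mu^2(\E\chi_b)^2+\lambda_2\big(\E\chi_a(\psi_b-m_b)^2+\mu^2\E\chi_b\big)+(\text{self-energy terms})+O(\mu\delta)$; solving for $\E\chi_a(\psi_b-m_b)^2$ yields, after dividing by $(2\mu-\lambda_2)$ and choosing $\mu\asymp\lambda_2$, a bound of the form $O(\lambda_2\E\chi_a+\delta)\le O(\lambda_2+\delta)$, and then reinserting the $1/c$ from the conditional-expectation normalization and reexamining the $\delta$ term (which gets divided by $\mu\asymp\lambda_2$ and multiplied by $1/c$) gives the claimed $O(\lambda_2/c+\delta/(\lambda_2 c))$. \textbf{The main obstacle} I anticipate is correctly accounting for the self-energy terms $\iprod{\chi_a(\psi_b-m_b),\tilde A\chi_a(\psi_b-m_b)}$ and $\mu^2\iprod{\chi_b,\tilde A\chi_b}$ and for all the monochromatic-edge error contributions: in the proper-coloring case these terms combine exactly so the inequality closes, and with $\delta dn$ bad edges one must show every such term is perturbed by only $O(\delta)$ (or $O(\mu\delta)$), using that $|\psi_b-m_b|\le 1$ pointwise and that the total mass of bad edges is $\delta$. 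Getting the $\delta$-dependence to land on the $\delta/(\lambda_2 c)$ term rather than something worse requires being careful about which terms the $\delta$ multiplies before versus after dividing by $(2\mu-\lambda_2)\asymp\lambda_2$.
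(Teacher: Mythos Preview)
Your approach is the same as the paper's: the same test function $f=\chi_a(\psi_b-m_b)+\mu\chi_b$ (the paper writes it as a vector $v$ with the analogous pieces), the same expander inequality applied to $f$, and the same choice $\mu=\lambda_2$ at the end. Two points will let you close the argument cleanly.

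First, the $\delta$ does \emph{not} enter through the cross term. The identity $2\mu\,\E\chi_a(\psi_b-m_b)\tilde A\chi_b=2\mu\,\E\chi_a(\psi_b-m_b)\psi_b=2\mu\,\E\chi_a(\psi_b-m_b)^2$ is exact regardless of monochromatic edges; nothing about $\tilde A\chi_a\cdot\chi_a$ is used there. The $\delta$ enters only through the two self-energy terms, and the clean bound is the one you gesture at in the last paragraph: for $g=\chi_a(\psi_b-m_b)$,
\[
\iprod{g,\tilde A g}=\sum_{x,y\in\chi^{-1}(a)}\tilde A_{xy}\,g(x)g(y)\ \ge\ -\Bigl(\sum_{x,y\in\chi^{-1}(a)}\tilde A_{xy}\Bigr)\|g\|_\infty^2\ \ge\ -\,2\delta n\cdot\frac{1}{n}\,,
\]
using $|\psi_b-m_b|\le 1$ and that the total weight of edges inside a part is at most $2\delta dn/d=2\delta n$; the same bound works for $\mu^2\iprod{\chi_b,\tilde A\chi_b}$. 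With this, the inequality reads $2\mu\sigma^2-O(\delta/c)\le \lambda_2\sigma^2+\mu^2/c$ (after normalizing by $|\chi^{-1}(a)|$), and $\mu=\lambda_2$ gives exactly $\sigma^2\le \lambda_2/c+O(\delta/(\lambda_2 c))$.

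Second, your argument as written needs $a\ne b$: the orthogonality $\chi_a(\psi_b-m_b)\perp\chi_b$ you invoke comes from disjoint supports and fails when $a=b$. The paper handles $a=b$ separately and trivially: $\sigma^2\le \E_{\bm x\sim\chi^{-1}(a)}\psi_a(\bm x)^2\le \E_{\bm x\sim\chi^{-1}(a)}\psi_a(\bm x)\le 2\delta/c$, since $0\le\psi_a\le 1$ and $\sum_{x\in\chi^{-1}(a)}\psi_a(x)$ is the normalized bad-edge count inside part $a$.
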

\begin{proof}
    Fix some $a, b \in [k]$.
    For simplicity, we denote $\sigma^2 = \E_{\bm{x} \sim \chi^{-1}(a)} \Paren{\D{\bm{x}}{b}-\E_{\bm{x} \sim \chi^{-1}(a)} \D{\bm{x}}{b}}^2$.

    If $a = b$, we have 
    \begin{align*}
        \sigma^2
        \leq \E_{\bm{x} \sim \chi^{-1}(a)} \Paren{\D{\bm{x}}{a}}^2
        \leq \E_{\bm{x} \sim \chi^{-1}(a)} \D{\bm{x}}{a}
        \leq \frac{2 \delta n}{|\chi^{-1}(a)|}
        \leq \frac{2\delta}{c}\,.
    \end{align*}

    We focus now on the case $a \neq b$.
    By regularity, $\lambda_1=1$ corresponds to eigenvector $\1$.
    Let $\beta \in [0, 1]$ be some parameter to be chosen later.
    We define $v \in \R^n$ as follows:
    \begin{itemize}
        \item $v_x:=\frac{1}{\sqrt{|\chi^{-1}(a)|}}\Paren{\D{x}{b}-\E_{\bm{x} \sim \chi^{-1}(a)} \D{\bm{x}}{b}}$ for $x\in \chi^{-1}(a)$\,,
        \item $v_x:=\frac{1}{\sqrt{|\chi^{-1}(a)|}}\beta$ for $x\in \chi^{-1}(b)$\,,
        \item $v_x:=0$ otherwise\,.
    \end{itemize}

    Let us calculate $v^\top \tilde{A}v$.
    Let $\tilde{A}_{(a)} \in \R^{|\chi^{-1}(a)| \times |\chi^{-1}(a)|}$ be the restriction of $\tilde{A}$ to rows and columns in $\chi^{-1}(a)$, and let $v_{(a)} \in \R^{|\chi^{-1}(a)|}$ be the restriction of $v$ to entries in $\chi^{-1}(a)$.
    Then
    \begin{align}
    \begin{split}
    \label{eq:3_col_num}
    v^\top \tilde{A}v
        &= v_{(a)}^\top \tilde{A}_{(a)} v_{(a)} + v_{(b)}^\top \tilde{A}_{(b)} v_{(b)} + 2 \beta \frac{1}{|\chi^{-1}(a)|} \sum_{x \in \chi^{-1}(a)} \sum_{\substack{y \in \chi^{-1}(b)\\y \sim x}} \Paren{\D{x}{b}-\E_{\bm{x} \sim \chi^{-1}(a)} \D{\bm{x}}{b}}\\
        &= v_{(a)}^\top \tilde{A}_{(a)} v_{(a)} + v_{(b)}^\top \tilde{A}_{(b)} v_{(b)} + 2 \beta \E_{\bm{x} \sim \chi^{-1}(a)} \D{\bm{x}}{b} \Paren{\D{\bm{x}}{b}-\E_{\bm{x} \sim \chi^{-1}(a)} \D{\bm{x}}{b}}\,.
    \end{split}
    \end{align}

    For the first two terms, we have 
    \[v_{(a)}^\top \tilde{A}_{(a)} v_{(a)} \geq - \Paren{\sum_{x, y \in \chi^{-1}(a)} \tilde{A}_{xy}} \cdot \Norm{v_{(a)}}^2_\infty \geq - \frac{2\delta n}{|\chi^{-1}(a)|} \geq - \frac{2\delta}{c}\]
    and 
    \[v_{(b)}^\top \tilde{A}_{(b)} v_{(b)} \geq - \Paren{\sum_{x, y \in \chi^{-1}(b)} \tilde{A}_{xy}} \cdot \Norm{v_{(b)}}^2_\infty \geq - \frac{2\delta n}{|\chi^{-1}(a)|} \geq - \frac{2\delta}{c}\,.\]

    For the cross-term, we use that
    \begin{align*}
    &2 \beta \E_{\bm{x} \sim \chi^{-1}(a)} \D{\bm{x}}{b} \Paren{\D{\bm{x}}{b}-\E_{\bm{x} \sim \chi^{-1}(a)} \D{\bm{x}}{b}}\\
    &= 2 \beta \E_{\bm{x} \sim \chi^{-1}(a)} \Paren{\D{\bm{x}}{b}-\E_{\bm{x} \sim \chi^{-1}(a)} \D{\bm{x}}{b}}^2 + 2 \beta \E_{\bm{x} \sim \chi^{-1}(a)} \Paren{\E_{\bm{x} \sim \chi^{-1}(a)} \D{\bm{x}}{b}} \Paren{\D{\bm{x}}{b}-\E_{\bm{x} \sim \chi^{-1}(a)} \D{\bm{x}}{b}}\\
    &= 2 \beta \E_{\bm{x} \sim \chi^{-1}(a)} \Paren{\D{\bm{x}}{b}-\E_{\bm{x} \sim \chi^{-1}(a)} \D{\bm{x}}{b}}^2\\
    &= 2\beta \sigma^2\,.
    \end{align*}
    Then, plugging these bounds back into \cref{eq:3_col_num}, we get 
    \begin{align*}
    v^\top \tilde{A}v
    \geq 2 \beta \sigma^2 - 4 \delta/c\,.
    \end{align*}
    We also note that $\norm{v}^2 = \sigma^2 + \frac{|\chi^{-1}(b)|}{|\chi^{-1}(a)|} \beta^2 \leq \sigma^2 + \beta^2/c$ and $\langle v, \frac{\1}{\sqrt{n}}\rangle^2 = \frac{|\chi^{-1}(b)|^2}{|\chi^{-1}(a)| n}\beta^2 \leq \beta^2 / c$, so
    \[v^\top \tilde{A} v \leq \frac{1}{n} \langle v, \1\rangle^2 + \lambda_2 \Paren{\norm{v}^2 - \frac{1}{n} \langle v, \1\rangle^2} \leq \lambda_2 \sigma^2 + \beta^2/c\,.\]
    Putting everything together,
    \begin{align*}
        2 \beta \sigma^2 - 4 \delta/c \leq v^\top \tilde{A} v \leq \lambda_2 \sigma^2 + \beta^2/c\,,
    \end{align*}
    and taking $\beta=\lambda_2$ we get
    \[\sigma^2 \leq \lambda_2/c + \frac{4\delta }{\lambda_2c} \leq O\Paren{\frac{\lambda_2}{c} + \frac{\delta}{\lambda_2 c }}\,.\]
\end{proof}

\section{Finding colorings and independent sets}
\label{sec:algorithms}

\subsection{Colorable models}

Our most general result for $k$-colorings is the following:
\begin{theorem}[Result for $k$-coloring]
    \label{thm:kalg-partial}
    Let $M \in \R^{k \times k}_{\geq 0}$ be a reversible row stochastic matrix with zero on the diagonal and non-positive second eigenvalue, and let $\pi$ be its stationary distribution.
    Let $S_1, \ldots, S_{k'}$ be a partition of $[k]$ such that, for all $a, b \in S_i$ the rows $M^{(a)}$ and $M^{(b)}$ of $M$ are equal, and for all $a \in S_i, b \in S_j$ with $i \neq j$ the rows $M^{(a)}$ and $M^{(b)}$ of $M$ are distinct.
    For all $i \in [k']$, let $a_i^* = \argmax_{a \in S_i} \pi(a)$, and let $p_i = \max\Paren{0, \pi(a_i^{*}) - \sum_{\substack{a \neq a_i^{*} \in S_i}} \pi(a)}$.
    Then, given a $d$-regular $(M, \delta)$-colorable $\lambda$-one-sided expander $G$, where $\delta, \lambda > 0$ are small enough, there exists an algorithm that runs in time $n^{O(1)} \cdot O_{\delta, \lambda, M}(1)$ and outputs a $(k', 1 - \sum_{i \in [k']} p_i + O_M(\sqrt{\delta} + \lambda + \delta/\lambda))$-coloring for $G$.
\end{theorem}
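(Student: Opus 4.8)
The plan is to reduce to the partition-recovery meta-theorem \cref{thm:part-recovery}, applied not to the promised $(M,\delta)$-coloring $\chi$ itself but to the \emph{coarsening} $\chi'\colon[n]\to[k']$ defined by $\chi'(x)=i$ whenever $\chi(x)\in S_i$, and then to extract one independent set from each merged block by a greedy vertex cover. I would first record two facts about $\chi$, writing $\tilde A=\tfrac1d A$ for the normalized adjacency matrix of $G$. Because the model $M(\tilde A,\chi)$ is reversible (its stationary vector is proportional to the color-class sizes), is entrywise $\delta$-close to $M$, and $M$ has spectral gap at least $1$, a standard perturbation estimate gives $\abs{\chi^{-1}(a)}=(\pi_a\pm O_M(\delta))n$ for all $a$; so, assuming as we may that $\min_a\pi_a=\Omega_M(1)$ (classes with $\pi_a$ near zero are negligible), every color class and every coarsened block $(\chi')^{-1}(i)$ has size at least $c'n$ with $c'=\Omega_M(1)$. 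Second, put $\bar M_{ij}=\sum_{b\in S_j}M_{ab}$ for any $a\in S_i$ (well defined since the rows of $M$ in $S_i$ coincide); then $\bar M$ is reversible row stochastic with stationary distribution $\pi'_i=\sum_{a\in S_i}\pi_a$ and $\max_i\abs{\lambda_i(\bar M)}\leq1$, and averaging the bound $\abs{M(\tilde A,\chi)_{ab}-M_{ab}}\leq\delta$ over the blocks shows that the model $M(\tilde A,\chi')$ of $\chi'$ is $O_M(\delta)$-close to $\bar M$ in the max-norm.

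The one genuinely nontrivial step, and the place I expect the main obstacle, is to show that $\bar M$ has pairwise distinct rows, since that is exactly the row-separation hypothesis needed to invoke \cref{thm:part-recovery} on $\chi'$. I would derive it from reversibility: if rows $M^{(c)}$ and $M^{(c')}$ of $M$ coincide, then $\pi_cM_{cd}=\pi_dM_{dc}$ and the analogous identity for $c'$ force $M_{ac}/\pi_c=M_{ac'}/\pi_{c'}$ for every $a$, so $c\mapsto M_{ac}/\pi_c$ is constant on each block $S_\ell$, and therefore $M_{ac}=\tfrac{\pi_c}{\pi'_{\chi'(c)}}\bar M_{\chi'(a)\chi'(c)}$. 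Taking $c=a\in S_\ell$ and using $M_{aa}=0$ forces $\bar M$ to have zero diagonal and, crucially, shows that $\bar M$ together with $\pi$ and the partition determines $M$; hence $\bar M^{(i)}=\bar M^{(j)}$ would give $M^{(a)}=M^{(b)}$ for $a\in S_i$, $b\in S_j$, contradicting $i\neq j$. As a byproduct the same formula gives $M_{ab}=0$ whenever $a$ and $b$ lie in a common block, which I will use in the final step. So, once $\delta$ is small enough, $M(\tilde A,\chi')$ satisfies all three matrix hypotheses of \cref{thm:part-recovery}: $M(\tilde A,\chi')(B')^{-1}$ is symmetric (because $(Z')^{\top}\tilde A Z'$ is), $\Norm{M(\tilde A,\chi')}\leq\zeta=O_M(1)$, and any two of its rows differ in norm by at least $\alpha=\Omega_M(1)$, inheriting half of the positive, $M$-dependent row-separation constant of $\bar M$.

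Next I would verify the remaining hypotheses of \cref{thm:part-recovery} and apply it with data $M \leftarrow M(\tilde A,\chi')$, $c \leftarrow c'$, and $\alpha,\zeta$ as above. Setting $W=Z'M(\tilde A,\chi')(B')^{-1}(Z')^{\top}$, one checks that $(WZ'_i)_x=M(\tilde A,\chi')_{\chi'(x),i}$ is the conditional mean of $\sum_{y\in(\chi')^{-1}(i)}\tilde A_{xy}$ over $x$ in its own $\chi'$-block, whereas $(\tilde A Z'_i)_x=\sum_{y\in(\chi')^{-1}(i)}\tilde A_{xy}=\sum_{b\in S_i}\D{x}{b}$, so the left side of the required closeness bound is a weighted average over $j$ of $\Var_{x\sim(\chi')^{-1}(j)}\bigl(\sum_{b\in S_i}\D{x}{b}\bigr)$. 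Writing $x\sim(\chi')^{-1}(j)$ as ``pick $a\in S_j$, then $x\sim\chi^{-1}(a)$'' and applying the law of total variance, each such variance is at most $O_M(1)\cdot\max_{a,b}\Var_{x\sim\chi^{-1}(a)}(\D{x}{b})+O_M(\delta^2)$ --- the $O_M(\delta^2)$ term because the conditional means $\E_{x\sim\chi^{-1}(a)}\D{x}{b}$ agree up to $O_M(\delta)$ as $a$ ranges over $S_j$ (the rows of $M$ in $S_j$ being equal) --- and \cref{lemma: expansionimpliesvariance}, whose proof goes through verbatim with its free parameter set to the expansion bound $\lambda\geq\lambda_2$, bounds each $\Var_{x\sim\chi^{-1}(a)}(\D{x}{b})$ (using that $(M,\delta)$-colorability yields at most $\delta dn$ monochromatic edges) by $O_M(\lambda+\delta/\lambda)$. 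So the closeness hypothesis holds with $\delta'=O_M(\lambda+\delta/\lambda+\sqrt\delta)$. For the rank hypothesis, one-sided expansion gives $\rank_{\geq\tau}(\tilde A)=1$ for any fixed $\tau>\lambda$, so by \cref{cor:spectral_small_to_small} (with a constant $\sigma=\sigma(M)\in(0,1)$) we get $\rank_{\leq-\rho}(\tilde A)=O_M(1)$ for the positive constant $\rho=\rho(M):=\min\{\alpha\sqrt{c'}/4,\zeta/4\}$; hence $t:=\rank_{\leq-\rho}(\tilde A)+\rank_{\geq\rho}(\tilde A)=O_M(1)$, and $\rho$ meets the two upper bounds that \cref{thm:part-recovery} imposes. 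Applying \cref{thm:part-recovery} then produces, in time $n^{O(1)}\cdot O_{\delta,\lambda,M}(1)$, a list of $O_{\delta,\lambda,M}(1)$ many $k'$-partitions, one of which --- call it $\hat\chi'$ --- satisfies $\E_{\bm x}\1[\hat\chi'(\bm x)\neq\chi'(\bm x)]\leq\epsilon_0:=O_M(\lambda+\delta/\lambda+\sqrt\delta)$, up to a relabeling of the blocks.

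Finally, for each candidate partition in the list and each block $i\in[k']$, I would compute a maximal matching of the induced subgraph $G[(\hat\chi')^{-1}(i)]$, let $C_i$ be its set of endpoints (a vertex cover of that subgraph of size at most twice the minimum), and output the pairwise disjoint independent sets $I_i:=(\hat\chi')^{-1}(i)\setminus C_i$. For the good candidate $\hat\chi'$: since $M$ vanishes on each $S_i\times S_i$, the only edges inside $(\chi')^{-1}(i)=\bigcup_{a\in S_i}\chi^{-1}(a)$ are monochromatic edges of $\chi$ (covered by at most $\delta n$ vertices) together with at most $O_M(\delta)n$ same-block edges, so $\bigcup_{a\in S_i,\,a\neq a_i^*}\chi^{-1}(a)$ plus an $O_M(\delta)n$-vertex cover of the remainder is a vertex cover of $G[(\chi')^{-1}(i)]$; thus its minimum vertex cover has size at most $(\pi'_i-\pi(a_i^*)+O_M(\delta))n$, and this changes by only $O(\epsilon_0 n)$ when $(\chi')^{-1}(i)$ is replaced by $(\hat\chi')^{-1}(i)$ (these differ by at most $\epsilon_0 n$ vertices in total over all $i$). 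Hence $\abs{I_i}\geq\abs{(\hat\chi')^{-1}(i)}-2(\pi'_i-\pi(a_i^*))n-O_M(\delta+\epsilon_0)n\geq(2\pi(a_i^*)-\pi'_i)n-O_M(\delta+\epsilon_0)n$, and since trivially $\abs{I_i}\geq0$ this gives $\abs{I_i}\geq p_in-O_M(\delta+\epsilon_0)n$ in every case. Summing over $i$ and outputting the candidate whose independent sets cover the most vertices yields a $\bigl(k',\,1-\sum_{i\in[k']}p_i+O_M(\sqrt\delta+\lambda+\delta/\lambda)\bigr)$-coloring of $G$, as claimed. Beyond the structural observation about $\bar M$, the main work is the variance-to-closeness reduction via \cref{lemma: expansionimpliesvariance} and checking that the $2$-approximate vertex cover inside each recovered block is exactly what turns the recoverable mass of block $i$ from $\pi(a_i^*)$ into $p_i=\max\bigl(0,\pi(a_i^*)-\sum_{a\neq a_i^*\in S_i}\pi(a)\bigr)$.
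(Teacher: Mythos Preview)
Your proof is correct and follows the same architecture as the paper's: coarsen $\chi$ to $\chi'$, verify the hypotheses of \cref{thm:part-recovery} for $M(\tilde A,\chi')$, bound the threshold rank via \cref{cor:spectral_small_to_small}, and round each recovered block via a $2$-approximate vertex cover (\cref{lemma: generalrounding}). Where you differ is only in the sub-arguments. The paper establishes row separation of $M(\tilde A,\chi')$ by chasing entrywise closeness through two intermediate claims relating $M(\tilde A,\chi')$, $M(\tilde A,\chi)$, and $M$; your reversibility identity $M_{ac}=\tfrac{\pi_c}{\pi'_{\chi'(c)}}\bar M_{\chi'(a)\chi'(c)}$ gives it in one stroke and also yields the structural fact that $M$ vanishes on each $S_i\times S_i$ (correct, though not actually needed for the rounding, since $\bigcup_{a\neq a_i^*}\chi^{-1}(a)$ already covers all inter-class edges inside the block regardless). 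For the closeness of $\tilde A$ and $W'$ on the block indicators, the paper routes through the auxiliary matrix $W$ for $\chi$ and a triangle inequality, while your law-of-total-variance reduction directly to \cref{lemma: expansionimpliesvariance} is a bit more economical. None of these differences are essential; both routes land on the same error $O_M(\sqrt\delta+\lambda+\delta/\lambda)$.
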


As a direct corollary, we approach a full coloring in the case when all rows of $M$ are distinct:
\begin{corollary}[Result for $k$-coloring, distinct rows]
    \label[corollary]{thm:kalg}
    Let $M \in \R^{k \times k}_{\geq 0}$ be a reversible row stochastic matrix with zero on the diagonal, non-positive second eigenvalue, and distinct rows.
    Then, given a $d$-regular $(M, \delta)$-colorable $\lambda$-one-sided expander $G$, where $\delta, \lambda > 0$ are small enough, there exists an algorithm that runs in time $n^{O(1)} \cdot O_{\delta, \lambda, M}(1)$ and outputs a $(k, O_M(\sqrt{\delta} + \lambda + \delta/\lambda))$-coloring for $G$.
\end{corollary}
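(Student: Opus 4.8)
The plan is to derive \cref{thm:kalg} as an immediate specialization of \cref{thm:kalg-partial} to the case where $M$ has pairwise distinct rows. So the only work is to unwind the quantities in the statement of \cref{thm:kalg-partial} under this stronger hypothesis and check that the output consists of exactly $k$ color classes and that the error term collapses to $O_M(\sqrt{\delta}+\lambda+\delta/\lambda)$; everything else (running time, smallness requirements on $\delta,\lambda$, the structural hypotheses on $M$) transfers verbatim.

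Concretely, I would argue as follows. In \cref{thm:kalg-partial} one partitions $[k]$ into blocks $S_1,\ldots,S_{k'}$ so that two colors lie in the same block precisely when the corresponding rows of $M$ coincide. Since all rows of $M$ are distinct, each block $S_i$ is a singleton, hence $k'=k$. For each $i$, the distinguished color $a_i^* = \argmax_{a\in S_i}\pi(a)$ is the unique element of $S_i$, and the correction sum $\sum_{a\neq a_i^*\in S_i}\pi(a)$ is empty, so $p_i = \max(0,\pi(a_i^*)) = \pi(a_i^*)$ because $\pi \geq 0$. Summing over the blocks and using that $\pi$ is the stationary distribution of $M$, hence a probability vector, $\sum_{i\in[k']}p_i = \sum_{a\in[k]}\pi(a) = 1$. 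Therefore $1-\sum_i p_i = 0$, and \cref{thm:kalg-partial} produces a $(k,\, O_M(\sqrt{\delta}+\lambda+\delta/\lambda))$-coloring of $G$ in time $n^{O(1)}\cdot O_{\delta,\lambda,M}(1)$, which is exactly the assertion of \cref{thm:kalg}. The hypotheses needed to invoke \cref{thm:kalg-partial} are all present: $M$ is reversible row stochastic with zeros on the diagonal and non-positive second eigenvalue, and "distinct rows" is a special case of the mixed setting handled there.

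There is essentially no genuine obstacle at the level of the corollary; the mathematical content lives entirely in \cref{thm:kalg-partial}. For orientation I note how one proves that theorem (and hence this corollary from first principles): one feeds the normalized adjacency matrix $\tilde{A}$ of $G$ into the partition-recovery meta-theorem \cref{thm:part-recovery}, with $M$ there taken to be the model $M(\tilde{A},\chi)$ of the promised coloring $\chi$. The closeness hypothesis $\Norm{\tilde{A}\tfrac{Z_a}{\Norm{Z_a}} - W\tfrac{Z_a}{\Norm{Z_a}}}^2 \leq \delta'$ required by \cref{thm:part-recovery} is supplied by the variance bound \cref{lemma: expansionimpliesvariance}, which gives $\delta' = O(\lambda/c + \delta/(\lambda c))$ for a $\lambda$-one-sided expander with an $(M,\delta)$-coloring; the bound $\rank_{\leq -\lambda}(\tilde{A}) + \rank_{\geq \lambda}(\tilde{A}) \leq t$ on the enumeration parameter comes from \cref{cor:spectral_small_to_small}, since a one-sided expander has top threshold rank $1$ and hence bottom threshold rank $O(1/\lambda^2)$; and distinctness of the rows of $M$ (which passes to $M(\tilde{A},\chi)$ since the two are max-norm-close) furnishes the separation constant $\alpha>0$ that \cref{thm:part-recovery} demands. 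Plugging these into the misclassification bound of \cref{thm:part-recovery} and then deleting a small vertex cover of the remaining monochromatic edges to turn the recovered partition into honest independent sets yields the stated $(k, O_M(\sqrt{\delta}+\lambda+\delta/\lambda))$-coloring. The only step requiring care is bookkeeping the $M$-dependent constants ($\Norm{M}$, the row separation $\alpha$, the minimum class fraction $c$) through these bounds, but this does not affect the qualitative form of the guarantee.
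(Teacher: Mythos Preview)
Your proposal is correct and matches the paper's approach exactly: the paper states \cref{thm:kalg} as a direct corollary of \cref{thm:kalg-partial} with no separate proof, and your reduction (each $S_i$ a singleton, hence $k'=k$, $p_i=\pi(a_i^*)$, and $\sum_i p_i=1$) is precisely the intended specialization. The orientation paragraph you add about how \cref{thm:kalg-partial} itself is proved is accurate but goes beyond what is needed for the corollary.
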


We start by proving two lemmas that we will use in the proof.
First, we show that the models we consider in \cref{thm:kalg-partial} are well-behaved: they have a unique stationary distribution in which each vertex has non-zero probability.

\begin{lemma}[Model color class size lower bound]
\label[lemma]{lemma:model-color-size}
Let $M \in \R^{k \times k}_{\geq 0}$ be a reversible row stochastic matrix with zeros on the diagonal.
Then the stationary distribution $\pi$ of $M$ is unique and satisfies $\min_{a \in [k]} \pi_a \geq \Omega_M(1)$.
Furthermore, for every $A \in \R^{n \times n}$ and every $k$-partition $\chi : [n] \to [k]$ of $[n]$ such that $\Norm{M - M(A, \chi)} \leq \delta$, then also $\Abs{\frac{1}{n}|\chi^{-1}(a)| - \pi_a} \leq 8\sqrt{\delta}k$ for all $a \in [k]$.
\end{lemma}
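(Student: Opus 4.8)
The lemma splits into two essentially independent claims: (i) the reversing measure $\pi$ of $M$ is \emph{the} stationary distribution and $\min_a\pi_a$ is bounded below by a positive constant depending only on $M$; and (ii) whenever a (symmetric) matrix $A$ has model $M(A,\chi)$ within $\delta$ of $M$ entrywise, the empirical color-class fractions $\tfrac1n|\chi^{-1}(a)|$ are $O(\sqrt\delta\,k)$-close to $\pi$. The plan is to prove these in turn.

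For (i): reversibility directly supplies $\pi\in\R_{\ge 0}^k$ with $\pi_a M_{ab}=\pi_b M_{ba}$; normalizing $\sum_a\pi_a=1$, summing this identity over $b$ and using that each row of $M$ sums to $1$ gives $\pi^\top M=\pi^\top$, so $\pi$ is stationary. Uniqueness and strict positivity then follow once the chain $M$ is irreducible (its transition-support digraph is strongly connected), which is the case in all our applications, where $M$ is a model of a connected graph $G$ and the quotient of $G$ under $\chi$ is connected; granting irreducibility, Perron--Frobenius theory makes $1$ a simple eigenvalue of $M$, so $\pi$ is the unique stationary distribution and $\pi_a>0$ for all $a$, and since $M$ is a fixed finite matrix $\min_a\pi_a=\Omega_M(1)$ is immediate.

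For (ii): the conceptual point is that for a reversible chain the stationary distribution is Lipschitz in the transition matrix away from the reducible locus. I would set $M':=M(A,\chi)=B^{-1}Z^\top A Z$ and $\pi':=\tfrac1n(|\chi^{-1}(1)|,\dots,|\chi^{-1}(k)|)$; since $A$ is symmetric, $BM'=Z^\top A Z$ is symmetric, i.e.\ $\pi'_a M'_{ab}=\pi'_b M'_{ba}$, so $\pi'$ is the reversing measure of $M'$. Using $|M_{ab}-M'_{ab}|\le\delta$ and detailed balance for both matrices one gets $|\pi'_a M_{ab}-\pi'_b M_{ba}|\le 2\delta$, hence, with $\varepsilon:=\pi'-\pi$, that $|\varepsilon_a M_{ab}-\varepsilon_b M_{ba}|\le 2\delta$ for all $a,b$. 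Summing over $b$ yields $\Norm{\varepsilon^\top(I-M)}_\infty\le 2k\delta$, while $\sum_a\varepsilon_a=0$; since $1$ is a simple eigenvalue of $M$, the operator $I-M^\top$ is invertible on the hyperplane $H:=\{v:\sum_a v_a=0\}$, which contains $\varepsilon$, so $\Norm{\varepsilon}_\infty\le\Norm{(I-M^\top)^{-1}|_H}_{\infty\to\infty}\cdot 2k\delta$. For reversible $M$ this norm is controlled by $1/(1-\lambda_2(M))$ together with the condition number $\sqrt{\max_a\pi_a/\min_a\pi_a}$, both $O_M(1)$, giving $\Norm{\pi'-\pi}_\infty=O_M(\delta)$.

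The only genuine work left is to turn this $M$-dependent linear bound into the clean bound $8\sqrt\delta\,k$ with no dependence on $M$ beyond $k$. For $\delta$ not small this is near-trivial ($|\pi'_a-\pi_a|\le 1$ always, and $8\sqrt\delta\,k\ge 1$ once $\delta\ge 1/(64k^2)$), and below a threshold depending on $M$ the estimate $C_M\delta\le 8k\sqrt\delta$ holds, so one needs to handle the intermediate regime cleanly or redo the perturbation estimate tracking explicit constants — a fully explicit and more robust alternative is to bound $\sum_a(\pi'_a-\pi_a)^2$ directly via an $\ell_2$/variance inequality rather than inverting $I-M$. One point worth flagging: the hypothesis on $A$ must really be symmetry (as holds for normalized adjacency matrices), since for a general $A$ one can make $M(A,\chi)$ equal to $M$ exactly with arbitrarily unbalanced classes — already for $k=2$ with $M_{12}=M_{21}=1$.
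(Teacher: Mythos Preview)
Your argument is correct but takes a different route from the paper for part (ii), and your caveats are well placed.

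For (i) both you and the paper need connectedness of the support of $M$; the paper silently imports this from the extra hypothesis ``second eigenvalue nonpositive'' present in every application (but not in the lemma statement), which is exactly the irreducibility you flag.

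For (ii) you do a linear perturbation argument: from detailed balance for both $M$ and $M'$ you obtain $|\varepsilon_a M_{ab}-\varepsilon_b M_{ba}|\le 2\delta$, sum over $b$ to get $\|(I-M^\top)\varepsilon\|_\infty\le 2k\delta$, and invert $I-M^\top$ on the zero-sum hyperplane to conclude $\|\pi'-\pi\|_\infty=O_M(\delta)$. The paper instead works multiplicatively with the detailed-balance \emph{ratios}: from $\pi_a/\pi_b=M_{ba}/M_{ab}$ and $\pi'_a/\pi'_b=M'_{ba}/M'_{ab}$, once $\sqrt\delta\le M_{ab}$ the additive error $|M_{ab}-M'_{ab}|\le\delta$ becomes a multiplicative $(1\pm O(\sqrt\delta))$ error on each ratio, and chaining along at most $k-1$ edges of the support graph gives $\pi'_a/\pi'_b=(1\pm O(\sqrt\delta\,k))\pi_a/\pi_b$, hence the stated bound after summing over $a$.

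What each buys: your spectral bound $O_M(\delta)$ is actually sharper in $\delta$ but carries an $M$-dependent constant, and you correctly identify that upgrading it to the clean $8\sqrt\delta\,k$ is the remaining work. The paper's ratio trick explains where the $\sqrt\delta$ comes from --- it is precisely the threshold $\sqrt\delta\le\min_{M_{ab}>0}M_{ab}$ that turns additive into multiplicative error --- and so the paper's proof \emph{also} requires $\delta$ small depending on $M$; your ``intermediate regime'' is simply assumed away there. Your remark that symmetry of $A$ is required is correct and is used implicitly in the paper via $|\chi^{-1}(a)|\,M(A,\chi)_{ab}=|\chi^{-1}(b)|\,M(A,\chi)_{ba}$.
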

\begin{proof}
First, because $M$ has non-positive second eigenvalue, its non-zero entries describe a connected graph.
Furthermore, for a stationary distribution $\pi$ of $M$, for all $a \neq b \in [k]$ we have $\pi_a M_{ab} = \pi_b M_{ba}$, so if $M_{ab},M_{ba}>0$ also 
\[\Omega_M(1) \leq \frac{\pi_a}{\pi_b} = \frac{M_{ab}}{M_{ba}} \leq O_M(1)\,.\]
Then, because the non-zero entries of $M$ describe a connected graph, we can chain inequalities of this form to obtain for all $a \neq b \in [k]$ the ratios $\pi_a/\pi_b$.
Together with the fact that $\sum_{a \in [k]} \pi_a = 1$, we get that $\pi$ is uniquely determined by $M$ and satisfies $\min_{a \in [k]} \pi_a \geq \Omega_M(1)$.

Second, we have that $M(A, \chi)_{ab}$ is $\delta$-close to $M_{ab}$ for all $a,b\in [k]$. 
Also, for all $a \neq b \in [k]$ we have $\frac{1}{n}|\chi^{-1}(a)| M(A, \chi)_{ab} = \frac{1}{n}|\chi^{-1}(b)| M(A, \chi)_{ba}$ and $\sum_{a \in [k]} \frac{1}{n}|\chi^{-1}(a)| = 1$.
Then, if $M_{ab},M_{ba}>0$ and $\delta > 0$ is small enough such that $\delta \leq \sqrt{\delta} M_{ab}, \sqrt{\delta} M_{ba}$, also 
\[(1-4\sqrt{\delta}) \frac{\pi_a}{\pi_b} \leq \frac{(1-\sqrt{\delta})M_{ab}}{(1+\sqrt{\delta})M_{ba}} \leq \frac{\frac{1}{n}|\chi^{-1}(a)|}{\frac{1}{n}|\chi^{-1}(b)|} = \frac{M(A, \chi)_{ab}}{M(A, \chi)_{ba}} \leq \frac{(1+\sqrt{\delta})M_{ab}}{(1-\sqrt{\delta})M_{ba}} \leq (1+4\sqrt{\delta})\frac{\pi_a}{\pi_b}\,.\]
This implies that for all $a, b \in[k]$ we have 
\[(1-8\sqrt{\delta}k) \frac{\pi_a}{\pi_b} \leq (1-4\sqrt{\delta})^{k-1} \frac{\pi_a}{\pi_b} \leq \frac{\frac{1}{n}|\chi^{-1}(a)|}{\frac{1}{n}|\chi^{-1}(b)|} \leq (1+4\sqrt{\delta})^{k-1} \frac{\pi_a}{\pi_b} \leq (1+4\sqrt{\delta}k) \frac{\pi_a}{\pi_b}\,,\]
so summing over all $a \in [k]$ we get 
\[(1-8\sqrt{\delta}k)\frac{1}{\pi_b} \leq \frac{1}{\frac{1}{n}|\chi^{-1}(b)|} \leq (1+4\sqrt{\delta}k) \frac{1}{\pi_b}\,,\]
which finally implies that $\Abs{\frac{1}{n}|\chi^{-1}(b)| - \pi_b} \leq 8\sqrt{\delta}k$ for all $b \in [k]$.
\end{proof}

Second, we prove that a $k$-colorable graph that has small variance as per \cref{lemma: expansionimpliesvariance} also has $\tilde{A}$ that is close to $W = Z M(\tilde{A}, \chi) B^{-1} Z^\top$ on the indicators of the color classes.

\begin{lemma}[Small variance implies close indicators]
\label[lemma]{lem:small-var-to-indic}
Let $G$ be an $n$-vertex graph with adjacency matrix $A$, and let $\tilde{A} = \frac{1}{d}A$ for some $d>0$.
Let $\chi: [n] \to [k]$ be a $k$-partition of its vertices with $\min_{a \in [k]} |\chi^{-1}(a)| \geq c n$, and consider its partition matrices $Z \in \{0,1\}^{n \times k}$ and $B \in \R^{k \times k}$ as per \cref{def:partition_Z}.
For $x \in [n]$ and $a \in [k]$, let $\D{x}{a} = \sum_{y \in \chi^{-1}(a)} \tilde{A}_{xy}$.
Suppose that, for any $a, b \in [k]$ and $\bm{x}$ uniformly random in $\chi^{-1}(a)$,
\[\E_{\bm{x} \sim \chi^{-1}(a)} \Paren{\D{\bm{x}}{b} - \E_{\bm{x} \sim \chi^{-1}(a)} \D{\bm{x}}{b}}^2 \leq \delta\,.\]
Then, for $W = Z M(\tilde{A}, \chi) B^{-1} Z^\top$ with $M(\tilde{A}, \chi)$ defined as in \cref{def:model} with respect to $\tilde{A}$, it holds for all $a \in [k]$ that
\[\Norm{\tilde{A} \frac{Z_a}{\Norm{Z_a}} - W  \frac{Z_a}{\Norm{Z_a}}}^2 \leq \frac{\delta}{c}\,.\]
\end{lemma}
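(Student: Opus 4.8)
The plan is a direct entrywise computation; no spectral or combinatorial property of $G$ is used here, since this lemma merely reformats the variance bound of \cref{lemma: expansionimpliesvariance} into the ``close indicators'' hypothesis demanded by \cref{thm:part-recovery}. First I would evaluate the two vectors coordinate by coordinate. Because the column $Z_a$ of $Z$ is the indicator $\Ind_{\chi^{-1}(a)}$, we have $(\tilde{A}Z_a)_x = \sum_{y \in \chi^{-1}(a)} \tilde{A}_{xy} = \D{x}{a}$ for every $x \in [n]$. For $W = Z\,M(\tilde{A},\chi)\,B^{-1}Z^{\top}$, I would use that $B = Z^{\top}Z$ is diagonal with $B_{bb} = |\chi^{-1}(b)|$ and that $Z^{\top}Z_a = |\chi^{-1}(a)|\,e_a$ (with $e_a$ the $a$-th standard basis vector), so $B^{-1}Z^{\top}Z_a = e_a$ and hence $W Z_a = Z\,M(\tilde{A},\chi)\,e_a$. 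Since applying $Z$ broadcasts the $b$-th coordinate of a vector to all of $\chi^{-1}(b)$, and $\big(M(\tilde{A},\chi)\,e_a\big)_b = M(\tilde{A},\chi)_{ba}$, this yields $(W Z_a)_x = M(\tilde{A},\chi)_{\chi(x),\,a}$. By \cref{def:model}, $M(\tilde{A},\chi)_{ba} = \tfrac{1}{|\chi^{-1}(b)|}\sum_{y \in \chi^{-1}(b)} \D{y}{a} = \E_{\bm{x}\sim\chi^{-1}(b)}\D{\bm{x}}{a}$, which is exactly the conditional mean appearing in the variance hypothesis.

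Combining the two evaluations gives $\big((\tilde{A}-W)Z_a\big)_x = \D{x}{a} - \E_{\bm{x}\sim\chi^{-1}(\chi(x))}\D{\bm{x}}{a}$. I would then split the squared Euclidean norm over the $k$ color classes and recognize each inner sum as a scaled conditional variance,
\[
\Norm{(\tilde{A}-W)Z_a}^2
= \sum_{b=1}^{k}\sum_{x \in \chi^{-1}(b)}\Paren{\D{x}{a} - \E_{\bm{x}\sim\chi^{-1}(b)}\D{\bm{x}}{a}}^2
= \sum_{b=1}^{k}|\chi^{-1}(b)|\cdot\E_{\bm{x}\sim\chi^{-1}(b)}\Paren{\D{\bm{x}}{a} - \E_{\bm{x}\sim\chi^{-1}(b)}\D{\bm{x}}{a}}^2 .
\]
Applying the hypothesis — which holds for every ordered pair of colors, in particular for the pair $(b,a)$ — bounds each summand by $|\chi^{-1}(b)|\,\delta$, so $\Norm{(\tilde{A}-W)Z_a}^2 \le \delta\sum_{b=1}^{k}|\chi^{-1}(b)| = \delta n$. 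Dividing by $\Norm{Z_a}^2 = |\chi^{-1}(a)| \ge cn$ yields the claimed bound $\Norm{\tilde{A}\tfrac{Z_a}{\Norm{Z_a}} - W\tfrac{Z_a}{\Norm{Z_a}}}^2 \le \delta/c$.

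There is no substantive obstacle; the step needing the most care is simply keeping the row/column indexing of the model matrix straight, namely that the entry of $M(\tilde{A},\chi)$ multiplying $Z_a$ inside $W$ is the column-$a$ entry $M(\tilde{A},\chi)_{\chi(x),a}$, which \cref{def:model} identifies with $\E_{\bm{x}\sim\chi^{-1}(\chi(x))}\D{\bm{x}}{a}$. Once this identification is in place, the variance assumption applies verbatim (with the two colors interchanged), and the rest is bookkeeping that converts \cref{lemma: expansionimpliesvariance} into the input format of \cref{thm:part-recovery}.
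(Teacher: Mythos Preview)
Your proposal is correct and matches the paper's proof essentially line for line: both compute $(\tilde A - W)Z_a$ entrywise to obtain $\D{x}{a} - \E_{\bm{x}\sim\chi^{-1}(\chi(x))}\D{\bm{x}}{a}$, split the squared norm over color classes, apply the variance hypothesis to bound it by $\delta n$, and divide by $\Norm{Z_a}^2 \ge cn$. Your exposition is arguably a bit more explicit about why $WZ_a = Z\,M(\tilde A,\chi)\,e_a$, but there is no substantive difference.
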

\begin{proof}
Note that $W_{xy} = \frac{E(\chi^{-1}(\chi(x)), \chi^{-1}(\chi(y)))}{d \Abs{\chi^{-1}(\chi(x))} \Abs{\chi^{-1}(\chi(y))}}$.
Then we get 
\begin{align*}
    \Norm{\tilde{A} Z_a - W Z_a}^2 
    &= \sum_{b \in [k]} \sum_{x \in \chi^{-1}(b)} \Paren{\sum_{y \in \chi^{-1}(a)} \tilde{A}_{x,y} - \frac{1}{d} \frac{E(b,a)}{|\chi^{-1}(b)|} }^2\\
    &= \sum_{b \in [k]} \sum_{x \in \chi^{-1}(b)} \Paren{\D{x}{a} - \E_{\bm{x} \sim \chi^{-1}(b)} \D{\bm{x}}{a}}^2\\
    &= \sum_{b \in [k]} |\chi^{-1}(b)|\; \E_{\bm{x} \sim \chi^{-1}(b)} \Paren{\D{\bm{x}}{a} - \E_{\bm{x} \sim \chi^{-1}(b)} \D{\bm{x}}{a}}^2\\
    &\leq \delta n\,,
\end{align*}
so using that $cn \leq \Norm{Z_a}^2 \leq n$ we get that $\Norm{\tilde{A} \frac{Z_a}{\Norm{Z_a}} - W \frac{Z_a}{\Norm{Z_a}}}^2 \leq \frac{\delta}{c}$.
\end{proof}

Let us now prove \cref{thm:kalg-partial}.

\begin{proof}[Proof of \cref{thm:kalg-partial}]
    We note by \cref{lemma:model-color-size} that $\pi$ is unique and satisfies $\min_{a \in [k]} \pi_a \geq \Omega_M(1)$.

    Denote by $\chi: [n] \to [k]$ the $k$-partition with respect to which $G$ is $(M, \delta)$-colorable.
    Let $\phi: [k] \to [k']$ map each $a \in [k]$ to the $i \in [k']$ for which $a \in S_i$, and then define $\chi' : [n] \to [k']$ as $\chi'(x) = \phi(\chi(x))$ for all $x \in [n]$.
    That is, $\chi'$ is a $k'$-partition of the graph that starts with the $k$-partition $\chi$ and then for each $i \in [k']$ contracts all colors in $S_i$ into a single color.

    We aim to apply \cref{thm:part-recovery} with respect to the $k'$-partition $\chi'$ and approximately recover the partition of the vertices corresponding to $\chi'$.
    For that, we need to argue that the $k'$-partition $\chi'$ satisfies the conditions of \cref{thm:part-recovery}.
    Afterward, a simple rounding step will lead to the desired guarantee.

    Let $A$ be the adjacency matrix of the graph, and let $\tilde{A} = \frac{1}{d}A$.
    We define the model matrices $M(\tilde{A}, \chi)$ and $M(\tilde{A}, \chi')$ as in \cref{def:model}.
    We will apply \cref{thm:part-recovery} with matrices $M(\tilde{A}, \chi')$ and $\tilde{A}$ (to play the role of $M$ and $A$ in \cref{thm:part-recovery}, respectively).

    We will use the following two claims about the relation between the models defined with respect to $\chi$ and $\chi'$.
    First, we relate the entries of $M(\tilde{A}, \chi)$ and $M(\tilde{A}, \chi')$.

    \begin{claim}[Closeness of $M(\tilde{A}, \chi)$ and $M(\tilde{A}, \chi')$]
    \label[claim]{claim:close-entries-M}
    For any $i, j \in [k']$, we have 
    \begin{equation*}
    M(\tilde{A}, \chi')_{ij} = \frac{\sum_{b \in S_j} |\chi^{-1}(b)|}{|\chi^{-1}(b^*_j)|} M(\tilde{A}, \chi)_{a^*_ib^*_{j}} \pm O_M(\delta)\,.
    \end{equation*}
    \end{claim}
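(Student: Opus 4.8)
The plan is to expand both models via \cref{def:model}, exploit the block structure $(\chi')^{-1}(i)=\bigsqcup_{a\in S_i}\chi^{-1}(a)$ to write $M(\tilde{A},\chi')_{ij}$ as a weighted average of entries $M(\tilde{A},\chi)_{ab}$ with $a\in S_i$ and $b\in S_j$, and then collapse this double sum down to the single representative entry $M(\tilde{A},\chi)_{a_i^* b_j^*}$. Three facts do the collapsing: (i) rows of $M$ inside a common block $S_\ell$ coincide; (ii) $\Norm{M-M(\tilde{A},\chi)}_{\max}\le\delta$, so $M(\tilde{A},\chi)_{ab}$ and $M_{ab}$ differ by at most $\delta$; and (iii) the detailed-balance identity $\abs{\chi^{-1}(a)}\,M(\tilde{A},\chi)_{ab}=\abs{\chi^{-1}(b)}\,M(\tilde{A},\chi)_{ba}$, which holds \emph{exactly} because $B\,M(\tilde{A},\chi)=Z^\top\tilde{A}Z$ is symmetric. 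Throughout I would use that $\abs{\chi^{-1}(a_i^*)}\ge\Omega_M(n)$, which follows from \cref{lemma:model-color-size}: it gives $\min_a\pi_a\ge\Omega_M(1)$, hence $\pi(a_i^*)\ge\Omega_M(1)$, and then $\abs{\chi^{-1}(a_i^*)}/n=\pi(a_i^*)\pm 8\sqrt\delta k$ for $\delta$ small enough.

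First I would record the exact identity
\[
M(\tilde{A},\chi')_{ij}=\frac{\sum_{a\in S_i}\abs{\chi^{-1}(a)}\sum_{b\in S_j}M(\tilde{A},\chi)_{ab}}{\sum_{a\in S_i}\abs{\chi^{-1}(a)}}\,,
\]
obtained by unpacking \cref{def:model} for $\chi'$ and splitting the vertex sets according to $\chi$. Since $a,a_i^*\in S_i$ have equal rows in $M$, combining (i) and (ii) gives $M(\tilde{A},\chi)_{ab}=M(\tilde{A},\chi)_{a_i^* b}\pm 2\delta$; as the right-hand side no longer depends on $a$, taking the weighted average over $a\in S_i$ yields $M(\tilde{A},\chi')_{ij}=\sum_{b\in S_j}M(\tilde{A},\chi)_{a_i^* b}\pm 2\delta k$.

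Next I would collapse the sum over $b\in S_j$. Using (iii), $M(\tilde{A},\chi)_{a_i^* b}=\tfrac{\abs{\chi^{-1}(b)}}{\abs{\chi^{-1}(a_i^*)}}M(\tilde{A},\chi)_{b a_i^*}$; since $b,b_j^*\in S_j$ have equal rows in $M$, facts (i)--(ii) give $M(\tilde{A},\chi)_{b a_i^*}=M(\tilde{A},\chi)_{b_j^* a_i^*}\pm 2\delta$. Summing over $b\in S_j$, using $\sum_{b\in S_j}\abs{\chi^{-1}(b)}\le n$ and $\abs{\chi^{-1}(a_i^*)}\ge\Omega_M(n)$, the accumulated error is $O_M(\delta)$, so
\[
\sum_{b\in S_j}M(\tilde{A},\chi)_{a_i^* b}=\frac{\sum_{b\in S_j}\abs{\chi^{-1}(b)}}{\abs{\chi^{-1}(a_i^*)}}\,M(\tilde{A},\chi)_{b_j^* a_i^*}\pm O_M(\delta)\,.
\]
Applying (iii) once more -- exactly, with no error -- to rewrite $M(\tilde{A},\chi)_{b_j^* a_i^*}=\tfrac{\abs{\chi^{-1}(a_i^*)}}{\abs{\chi^{-1}(b_j^*)}}M(\tilde{A},\chi)_{a_i^* b_j^*}$ cancels the $\abs{\chi^{-1}(a_i^*)}$ factors and, together with the estimate from the previous paragraph, gives the claim.

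I do not expect a real obstacle: the statement is bookkeeping once one has the detailed-balance identity and the class-size lower bound. The one point that requires care is the \emph{order} of the manipulations -- ``equal rows of $M$'' relates entries sharing a \emph{row} index inside a block, whereas $M(\tilde{A},\chi')_{ij}$ naturally produces a sum over \emph{column} indices ranging over $S_j$; one therefore has to perform a detailed-balance swap to turn the $S_j$-sum into a row-index sum before row-equality of $S_j$ can be used, and then swap back to land on the orientation $M(\tilde{A},\chi)_{a_i^* b_j^*}$ appearing in the claim. One must also invoke the $\Omega_M(1)$ lower bound on class sizes only for the representatives $a_i^*$ (and harmlessly $b_j^*$), not for all classes, since a large block may contain arbitrarily small classes -- which is precisely why the representatives are taken to be the $\argmax$ of $\pi$ over the block.
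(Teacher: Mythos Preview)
Your proposal is correct and follows essentially the same sequence of steps as the paper's proof: expand $M(\tilde{A},\chi')_{ij}$ as a weighted average over $a\in S_i$, collapse via row-equality to the representative $a_i^*$, apply detailed balance to turn the $b$-sum into a row-index sum, collapse via row-equality to $b_j^*$, and apply detailed balance once more. One small clarification: your caution that only the representatives $a_i^*,b_j^*$ are guaranteed to have class size $\Omega_M(n)$ is unnecessary, since \cref{lemma:model-color-size} gives $\min_{a\in[k]}\pi_a\ge\Omega_M(1)$ and hence \emph{every} class has size $\Omega_M(n)$.
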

    \begin{proof}
    A calculation shows that, for $i, j \in [k']$,
    \begin{equation}
    \label{eq:model-form}
    M(\tilde{A}, \chi')_{ij} = \sum_{a \in S_i} \frac{|\chi^{-1}(a)|}{\sum_{a' \in S_i} |\chi^{-1}(a')|} \sum_{b \in S_j} M(\tilde{A}, \chi)_{ab}\,.
    \end{equation}
    We note that, for any two rows $M^{(a)}$ and $M^{(a')}$ that are identical, we have that\linebreak $\Norm{M(\tilde{A}, \chi)^{(a)} - M(\tilde{A}, \chi)^{(a')}} \leq O_M(\delta)$.
    Hence, for any $a, a' \in S_i$, we have that $\sum_{b \in S_j} M(\tilde{A}, \chi)_{ab}$ is $O_M(\delta)$-close to $\sum_{b \in S_j} M(\tilde{A}, \chi)_{a'b}$.
    Denote $b_i^* = a_i^*$ for all $i \in [k']$.
    Then, by \cref{eq:model-form} and the above,
    \begin{align*}
    \begin{split}
    \label{eq:mchi}
    M(\tilde{A}, \chi')_{ij}
    &= \sum_{b \in S_j} M(\tilde{A}, \chi)_{a^*_i b} \pm O_M(\delta)\\
    &= \frac{1}{|\chi^{-1}(a^*_i)|} \sum_{b \in S_j} |\chi^{-1}(b)| M(\tilde{A}, \chi)_{b a^*_i} \pm O_M(\delta)\\
    &= \frac{\sum_{b \in S_j} |\chi^{-1}(b)|}{|\chi^{-1}(a^*_i)|} M(\tilde{A}, \chi)_{b^*_{j} a^*_i} \pm O_M(\delta)\\
    &= \frac{\sum_{b \in S_j} |\chi^{-1}(b)|}{|\chi^{-1}(b^*_j)|} M(\tilde{A}, \chi)_{a^*_ib^*_{j}} \pm O_M(\delta)\,,
    \end{split}
    \end{align*}
    where we used that $|\chi^{-1}(a)| M(\tilde{A},\chi)_{ab} = |\chi^{-1}(b)| M(\tilde{A},\chi)_{ba}$ for all $a, b \in [k]$.
    \end{proof}

    Second, let $W = Z M(\tilde{A}, \chi) B^{-1} Z^\top$, with the partition matrices $Z, B$ corresponding to the $k$-partition $\chi$, and let $W' = Z' M(\tilde{A}, \chi') (B')^{-1} (Z')^\top$, with the partition matrices $Z', B'$ corresponding to the $k'$-partition $\chi'$.
    Then we also relate the entries of $W$ and $W'$.

    \begin{claim}[Closeness of $W$ and $W'$]
    \label[claim]{claim:close-entries-W}
    For any $x, y \in [n]$, we have 
    \begin{equation*}
    W'_{xy} = W_{xy} \pm O_M(\delta/n)\,.
    \end{equation*}
    \end{claim}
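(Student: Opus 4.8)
The plan is to reduce the claim to the entrywise comparison of the two model matrices $M(\tilde{A}, \chi)$ and $M(\tilde{A}, \chi')$ that \cref{claim:close-entries-M} already provides, using the explicit form of an extension matrix. Recall from the remark following \cref{def:model} that, for any $k$-partition $\psi$ of $[n]$ and any symmetric $N$, the matrix $Z_\psi M(N,\psi) B_\psi^{-1} Z_\psi^\top$ has $(x,y)$-entry $M(N,\psi)_{\psi(x)\psi(y)}/|\psi^{-1}(\psi(y))|$. Applying this with $\psi = \chi$ gives $W_{xy} = M(\tilde{A},\chi)_{ab}/|\chi^{-1}(b)|$, and with $\psi = \chi'$ gives $W'_{xy} = M(\tilde{A},\chi')_{ij}/\bigl(\sum_{b' \in S_j}|\chi^{-1}(b')|\bigr)$, where I write $a = \chi(x) \in S_i$, $b = \chi(y) \in S_j$, $i = \chi'(x)$, $j = \chi'(y)$, and use $|\chi'^{-1}(j)| = \sum_{b'\in S_j}|\chi^{-1}(b')|$. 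So it suffices to show that both of these agree with the ``canonical-representative'' value $M(\tilde{A},\chi)_{a^*_i b^*_j}/|\chi^{-1}(b^*_j)|$ up to $O_M(\delta/n)$, where as before $b^*_j = a^*_j$.

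For $W'$, I would simply plug the conclusion of \cref{claim:close-entries-M} into the formula above, cancelling the factor $\sum_{b'\in S_j}|\chi^{-1}(b')|$; this yields $W'_{xy} = M(\tilde{A},\chi)_{a^*_i b^*_j}/|\chi^{-1}(b^*_j)| \pm O_M(\delta)/\sum_{b'\in S_j}|\chi^{-1}(b')|$. Since $\sum_{b'\in S_j}|\chi^{-1}(b')| \geq |\chi^{-1}(b^*_j)| = \Omega_M(n)$ by the color-class lower bound in \cref{lemma:model-color-size} (applicable because $\delta$ is small in terms of $M$, as assumed in \cref{thm:kalg-partial}), this error term is $O_M(\delta/n)$, as desired.

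For $W_{xy}$ I would pass from $M(\tilde{A},\chi)_{ab}/|\chi^{-1}(b)|$ to the canonical value in two steps. First, since $a, a^*_i \in S_i$ the rows $M^{(a)}$ and $M^{(a^*_i)}$ of $M$ coincide, so combining this with $\Norm{M - M(\tilde{A},\chi)}_{\max} \leq \delta$ gives $|M(\tilde{A},\chi)_{ab} - M(\tilde{A},\chi)_{a^*_i b}| \leq 2\delta$ (the same observation is already used in the proof of \cref{claim:close-entries-M}); dividing by $|\chi^{-1}(b)| = \Omega_M(n)$ replaces $a$ by $a^*_i$ at cost $O_M(\delta/n)$. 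Second, using the reversibility identity $|\chi^{-1}(a)| M(\tilde{A},\chi)_{ab} = |\chi^{-1}(b)| M(\tilde{A},\chi)_{ba}$ (valid because $\tilde{A}$ is symmetric, and also invoked in the proof of \cref{claim:close-entries-M}), I rewrite $M(\tilde{A},\chi)_{a^*_i b}/|\chi^{-1}(b)| = M(\tilde{A},\chi)_{b a^*_i}/|\chi^{-1}(a^*_i)|$, apply the identical-rows estimate once more to $b, b^*_j \in S_j$ to swap $M(\tilde{A},\chi)_{b a^*_i}$ for $M(\tilde{A},\chi)_{b^*_j a^*_i}$ at cost $2\delta/|\chi^{-1}(a^*_i)| = O_M(\delta/n)$, and convert back by reversibility to $M(\tilde{A},\chi)_{a^*_i b^*_j}/|\chi^{-1}(b^*_j)|$. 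Combining the two displays gives $W_{xy} = W'_{xy} \pm O_M(\delta/n)$.

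There is no substantial obstacle here: the one point that needs care is that every denominator $|\chi^{-1}(\cdot)|$ is $\Theta_M(n)$, so that $O_M(\delta)$-level discrepancies between model entries become $O_M(\delta/n)$-level discrepancies between extension-matrix entries — this is exactly \cref{lemma:model-color-size} together with $\min_a \pi_a = \Omega_M(1)$. Framing the argument this way also makes transparent what is being used: identical rows within each $S_i$ force $W$ to be, up to additive $O_M(\delta/n)$, a function of the pair $(\chi'(x),\chi'(y))$ only, which is the content of the claim.
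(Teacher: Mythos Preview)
Your proof is correct and follows essentially the same approach as the paper: both compute the explicit formulas $W_{xy} = M(\tilde{A},\chi)_{\chi(x)\chi(y)}/|\chi^{-1}(\chi(y))|$ and $W'_{xy} = M(\tilde{A},\chi')_{\chi'(x)\chi'(y)}/|(\chi')^{-1}(\chi'(y))|$, invoke \cref{claim:close-entries-M} for $W'$, and use the identical-rows-within-$S_i$ plus reversibility of $M(\tilde A,\chi)$ to match entries. The only stylistic difference is that the paper observes that the proof of \cref{claim:close-entries-M} works for \emph{any} representatives $a^*_i\in S_i$, $b^*_j\in S_j$, so one may simply take $a^*_i=\chi(x)$, $b^*_j=\chi(y)$ and read off $W'_{xy}=W_{xy}\pm O_M(\delta/n)$ directly, whereas you keep the fixed argmax representatives and explicitly bridge $W_{xy}$ to the canonical value via two row-swaps and two applications of reversibility.
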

    \begin{proof}
    A calculation shows that, for $x,y \in [n]$,
    \[W_{xy} = \frac{M(\tilde{A}, \chi)_{\chi(x)\chi(y)}}{|\chi^{-1}(\chi(y))|}\] 
    and, by \cref{claim:close-entries-M}, 
    \begin{align*}
    W'_{xy}
    &= \frac{M(\tilde{A}, \chi')_{\chi'(x)\chi'(y)}}{|(\chi')^{-1}(\chi'(y))|}\\
    &= \frac{ \frac{\sum_{b \in S_{\chi'(y)}} |\chi^{-1}(b)|}{|\chi^{-1}(b_{\chi'(y)})|} M(\tilde{A}, \chi)_{a_{\chi'(x)}b_{\chi'(y)}}}{\sum_{b \in S_{\chi'(y)}} |\chi^{-1}(b)|} \pm O_M(\delta/n)\\
    &= \frac{ M(\tilde{A}, \chi)_{a_{\chi'(x)}b_{\chi'(y)}}}{|\chi^{-1}(b_{\chi'(y)})|} \pm O_M(\delta/n)\,.
    \end{align*}
    In \cref{claim:close-entries-M} $a^*_{\chi'(x)}$ and $b^*_{\chi'(y)}$ could in fact be any arbitrary colors in $S_{\chi'(x)}$ and $S_{\chi'(y)}$, respectively.
    Then, letting $a_{\chi'(x)} = \chi(x)$ and $b_{\chi'(y)} = \chi(y)$, we get that $W'_{xy} = W_{xy} \pm O_M(\delta/n)$.
    \end{proof}

    Let us prove now that we satisfy the conditions of \cref{thm:part-recovery} with matrices $M(\tilde{A}, \chi')$ and $\tilde{A}$.
    First, we have trivially that $(B')^{1/2} M(\tilde{A}, \chi') (B')^{-1/2}$ is symmetric and that $\Norm{M(\tilde{A}, \chi')} = 1$.
    We also have that $|\chi^{-1}(a)| \geq \Omega_M(n)$ for all $a \in [k]$ by \cref{lemma:model-color-size}.
    
    \paragraph{Row separation of $M(\tilde{A}, \chi')$.}
    We prove now that the rows of $M(\tilde{A}, \chi')$ are separated.
    Consider two rows $M(\tilde{A}, \chi')^{(i)}$ and $M(\tilde{A}, \chi')^{(i')}$ with $i \neq i' \in [k']$.
    Because $M^{(a^*_i)} \neq M^{(a^*_{i'})}$, we have that $\Norm{M^{(a^*_i)} - M^{(a^*_{i'})}} \geq \Omega_M(1)$, so then by the closeness of $M$ and $M(\tilde{A}, \chi)$ we also have that $\Norm{M(\tilde{A}, \chi)^{(a^*_i)} - M(\tilde{A}, \chi)^{(a^*_{i'})}} \geq \Omega_M(1)$.
    Therefore, there exists some $b \in [k]$ such that\linebreak $\Abs{M(\tilde{A}, \chi)_{a^*_ib} - M(\tilde{A}, \chi)_{a^*_{i'}b}} \geq \Omega_M(1)$.
    By \cref{claim:close-entries-M} this implies that, for $j = \phi(b)$, we also have $\Abs{M(\tilde{A}, \chi')_{i j} - M(\tilde{A}, \chi')_{i' j}} \geq \Omega_M(1)$.
    Hence, $\Norm{M(\tilde{A}, \chi')^{(i)} - M(\tilde{A}, \chi')^{(i')}} \geq \Omega_M(1)$.

    \paragraph{Closeness of $\tilde{A}$ and $W'$.}
    By definition, we only consider $G$ and partitions with respect to which it has a vertex cover of all edges with both endpoints in the same part of size at most $\delta n$, so the number of edges with both endpoints in the same part is bounded by $\delta d n$.
    By \cref{lemma:model-color-size}, the size of each color class is lower bounded by some $\Omega_M(n)$.
    Then by \cref{lemma: expansionimpliesvariance} and \cref{lem:small-var-to-indic} we have that $\Norm{\tilde{A} \frac{Z_a}{\Norm{Z_a}} - W \frac{Z_a}{\Norm{Z_a}}}^2 \leq O_M\Paren{\lambda + \frac{\delta}{\lambda}}$ for all $a \in [k]$.

    By \cref{claim:close-entries-W}, we have that $\Norm{W'-W} \leq \Norm{W'-W}_F \leq O_M(\delta)$.
    Then, for each $S_i$, denoting its elements as $S_i = \{a_1, \ldots, a_{\ell}\}$, we are interested in the closeness of $\tilde{A}$ and $W'$ with respect to the indicator $\sum_{j=1}^{\ell} Z_{a_j}$.
    We have
    \[\Norm{(\tilde{A} - W') \sum_{j=1}^{\ell} Z_{a_j}} \leq \sum_{j=1}^{\ell} \Norm{(\tilde{A} - W') Z_{a_j}} \leq \sum_{j=1}^{\ell} \Norm{(\tilde{A} - W) Z_{a_j}} + \sum_{j=1}^{\ell}\Norm{(W'-W)Z_{a_j}}\,.\]
    For the first term on the right-hand side, we have that $\Norm{\tilde{A} \frac{Z_a}{\Norm{Z_a}} - W \frac{Z_a}{\Norm{Z_a}}}^2 \leq O_M\Paren{\lambda + \frac{\delta}{\lambda}}$ for all $a \in [k]$, so using that $\Norm{Z_a} \leq \sqrt{n}$ we can bound
    \[\sum_{j=1}^{\ell} \Norm{(\tilde{A} - W) Z_{a_j}} \leq \sqrt{n} \cdot O_M\Paren{\sqrt{\lambda + \frac{\delta}{\lambda}}}\,.\]
    For the second term, we have that
    \[\sum_{j=1}^{\ell} \Norm{(W'-W)Z_{a_j}} \leq \sum_{j=1}^{\ell} \Norm{W'-W} \Norm{Z_{a_j}} \leq \sqrt{n} \cdot O_M(\delta)\,.\]
    Then, also using that $\Norm{\sum_{j=1}^{\ell} Z_{a_j}} \geq \Omega_M(\sqrt{n})$, we have
    \[\Norm{(\tilde{A} - W') \frac{\sum_{j=1}^{\ell} Z_{a_j}}{\Norm{\sum_{j=1}^{\ell} Z_{a_j}}}}^2 \leq O_M \Paren{\lambda + \frac{\delta}{\lambda}}\,,\]
    so we satisfy the second condition in \cref{thm:part-recovery} with constant $O_M \Paren{\lambda + \frac{\delta}{\lambda}}$.

    \paragraph{Low rank of $\tilde{A}$.}
    Because $\rank_{\geq \lambda}(\tilde{A}) \leq 1$, by applying \cref{cor:spectral_small_to_small} with $\sigma=\lambda$, we get that the bottom threshold rank of $\tilde{A}$ is bounded by $\rank_{\leq -\sqrt{2\lambda}}(\tilde{A}) \leq 1/\lambda^2$.
    Note that, using that $\lambda \leq 1$, we have that $\lambda \leq \sqrt{2 \lambda}$.
    Then, for the purposes of \cref{thm:part-recovery}, we apply the theorem with eigenvalue threshold $\lambda' = \max\{\sqrt{2\lambda}, \Omega_M(1)\}$ for some small enough $\Omega_M(1)$ such that the conditions of the theorem are satisfied, and this allows us to take $t = O(1/(\lambda')^2) = O_M(1)$.

    \paragraph{Rounding.}
    Then, by \cref{thm:part-recovery} applied with matrices $M(\tilde{A}, \chi')$ and $\tilde{A}$, we can recover in time $n^{O(1)} \cdot O_{\delta,\lambda,M}(1)$ a list of $O_{\delta,\lambda,M}(1)$ $k'$-partitions $\hat{\chi'}: [n] \to [k']$ such that, for at least one of them, it holds up to a permutation of the color classes that $\mathbb{E}_{\bm{x} \sim \operatorname{Unif}([n])} \1[\hat\chi'(x) \neq \chi'(x)] \leq O_M(\lambda+\delta/\lambda)$.
    Let $\e = O_M(\lambda + \delta/\lambda)$ be equal to this upper bound.
    At this point, for each candidate $k'$-partition, we apply the rounding algorithm in \cref{lemma: generalrounding} on each of its color classes to obtain $k'$ independent sets, and return the set of $k'$ independent sets that cover most vertices of $G$.

    Let us analyze the approximation guarantee.
    For the best $k'$-partition, the rounding algorithm in \cref{lemma: generalrounding} will remove at most $2\e n$ vertices due to misclassifications and at most $2\delta n$ vertices due to edges with both endpoints in the same part with respect to the $k$-partition $\chi$.
    The rounding algorithm may also need to remove a significant portion of vertices because we recover $k'$-partitions instead of $k$-partitions --- this is necessary even if the $k'$-partition $\chi'$ is perfectly recovered.

    To analyze these removals, let us start by defining $\bar{a}_i^* = \argmax_{a \in S_i} |\chi^{-1}(a)|$ for each $i \in [k']$.
    Then, for each $i \in [k']$, the rounding algorithm in \cref{lemma: generalrounding} may remove at most\linebreak $\min\Paren{\sum_{a \in S_i} |\chi^{-1}(a)|, 2\sum_{\substack{a \neq a_i^{*} \in S_i}} |\chi^{-1}(a)|}$ vertices from $(\chi')^{-1}(i)$.
    Then for each $i \in [k']$ the number of surviving vertices is at least $\bar{p}_i = \max\Paren{0, |\chi^{-1}(a_i^*)| - \sum_{\substack{a \neq a_i^{*} \in S_i}} |\chi^{-1}(a)|}$.
    Then the algorithm returns a $(k', 1 - \frac{1}{n} \sum_{i \in [k']} \bar{p}_i + 2\delta + 2\e)$-coloring for $G$.
    Finally, by \cref{lemma:model-color-size}, for $\delta > 0$ small enough, $\frac{1}{n} |\chi^{-1}(a)|$ and $\pi_a$ are $O(\sqrt{\delta}k)$-close for all $a \in [k]$, so the algorithm also returns a $(k', 1 - \sum_{i \in [k']} p_i + O_M(\sqrt{\delta}) + 2\e)$-coloring for $G$, where $p_i$ is defined as in the statement of the theorem.
    Recalling that $\e = O_M(\lambda + \delta/\lambda)$ gives the stated result.
\end{proof}

\subsection{$3$-colorable models}

For the case $k=3$, it turns out that the model has distinct rows whenever its stationary distribution is strictly bounded pointwise by $1/2$. 

The first observation is that we can compute explicitly the entries of $M$ in terms of the stationary distribution.

\begin{lemma}[Model entries for $3$-colorings]
\label[lemma]{lemma: 3coloringfixedaveragedegree}
Let $M \in \R^{3 \times 3}_{\geq 0}$ be a reversible row stochastic matrix with zeros on the diagonal and stationary distribution $\pi$.
Then for all $a \neq b \in [3]$ 
\[M_{ab} = \frac{(2\pi(a)+2\pi(b))-1}{2\pi(a)}.\]
\end{lemma}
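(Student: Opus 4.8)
The plan is to pass from $M$ to the symmetric ``edge weights'' $e_{ab} := \pi(a) M_{ab}$ for $a, b \in [3]$, where detailed balance becomes transparent. Reversibility of $M$ says exactly that $e_{ab} = e_{ba}$, and the zero diagonal of $M$ gives $e_{aa} = 0$. Multiplying the row-stochasticity identity $\sum_{b} M_{ab} = 1$ by $\pi(a)$ rewrites it as $\sum_{b \neq a} e_{ab} = \pi(a)$ for every $a \in [3]$. (Throughout I assume $\pi(a) > 0$ for all $a$; this holds whenever the lemma is invoked --- e.g.\ it follows from \cref{lemma:model-color-size} via connectedness of the support of $M$ --- and is what allows dividing by $\pi(a)$ at the end.)

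For $k = 3$ this is a $3 \times 3$ linear system in the three unknowns $e_{12}, e_{13}, e_{23}$:
\begin{align*}
  e_{12} + e_{13} &= \pi(1)\,, & e_{12} + e_{23} &= \pi(2)\,, & e_{13} + e_{23} &= \pi(3)\,.
\end{align*}
Summing the three equations and using $\pi(1) + \pi(2) + \pi(3) = 1$ yields $e_{12} + e_{13} + e_{23} = \tfrac 12$; subtracting the third equation gives $e_{12} = \tfrac 12 - \pi(3) = \pi(1) + \pi(2) - \tfrac 12$. Dividing by $\pi(1)$ then gives
\[
  M_{12} = \frac{e_{12}}{\pi(1)} = \frac{\pi(1) + \pi(2) - \tfrac 12}{\pi(1)} = \frac{2\pi(1) + 2\pi(2) - 1}{2\pi(1)}\,,
\]
which is the asserted identity for the pair $(1, 2)$. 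The system is symmetric under permuting the three colors, so the same computation with the colors relabeled yields the formula for every ordered pair $a \neq b \in [3]$.

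I do not expect a genuine obstacle: the claim collapses to a two-line computation once one works with the symmetrized edge weights, and the feature that makes it go through is special to $k = 3$ --- there the number of unordered color pairs equals the number of colors, so the three marginal constraints $\sum_{b \neq a} e_{ab} = \pi(a)$ determine the three edge weights outright (for other $k$ the edge variables are under- or over-determined and no such closed form exists). The only point requiring a word of care is the positivity of the $\pi(a)$, needed for the final division.
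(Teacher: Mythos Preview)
Your proof is correct and follows essentially the same approach as the paper: both set up and solve the linear system arising from row-stochasticity and detailed balance. Your passage to the symmetric edge weights $e_{ab}=\pi(a)M_{ab}$ is a clean way to reduce the paper's ``6 variables, 6 equations'' to 3 unknowns and 3 equations, making the solve explicit where the paper simply asserts the answer.
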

\begin{proof}
    For all $a \in [3]$, we have $\sum_{b\in [3]}M_{ab}=1$.
    Also for all $b\in [3]\setminus\Set{a}$, we have $\pi(a) M_{ab} = \pi(b) M_{ba}$.
    Hence, taking the off-diagonal entries of the $M$ matrix as unknown, we have $6$ variables and $6$ independent linear equations, so there is a unique solution.
    Using $\sum_{a\in[3]}\pi(a)=1$, for $a\neq b\neq c\in[3]$ we can express it as
    $$M_{ab}= \frac{(2\pi(a)+2\pi(b))-1}{2\pi(a)}\,.$$
\end{proof}

We then use this observation to deduce that, whenever the largest probability of a color class is upper bounded by $1/2-\gamma$, any two distinct rows of $M$ have separation $\gamma$.

\begin{lemma}[Row separation for $3$-colorings]
\label[lemma]{lemma: 3coloringrowseparation}
Let $M \in \R^{3 \times 3}_{\geq 0}$ be a reversible row stochastic matrix with zero on the diagonal and stationary distribution $\pi$ that satisfies $\max_{a \in [3]} \pi_a \leq 1/2 - \gamma$.
Then any two distinct rows $M^{(a)}$ and $M^{(b)}$ of $M$ satisfy $\Norm{M^{(a)} - M^{(b)}} \geq \gamma$.
\end{lemma}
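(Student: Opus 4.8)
\end{lemma}

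The plan is to reduce the claimed separation bound to a lower bound on a single off-diagonal entry of $M$, and then to read that bound off from the explicit formula in \cref{lemma: 3coloringfixedaveragedegree}. Fix two indices $a \neq b \in [3]$ and let $c \in [3]$ be the remaining index. Because every row of $M$ vanishes on the diagonal, the $b$-th coordinate of $M^{(a)} - M^{(b)}$ is $M_{ab} - M_{bb} = M_{ab}$; since the Euclidean norm of a vector is at least the absolute value of any one of its coordinates and all entries of $M$ are non-negative, this already gives $\Norm{M^{(a)} - M^{(b)}} \geq \abs{M_{ab}} = M_{ab}$. So it suffices to show $M_{ab} \geq \gamma$. (As a byproduct this also shows that rows with distinct indices are automatically distinct when $\gamma > 0$, which is consistent with the statement.)

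For the entry bound I would invoke \cref{lemma: 3coloringfixedaveragedegree}, which gives $M_{ab} = \frac{2\pi(a) + 2\pi(b) - 1}{2\pi(a)}$, and rewrite the numerator using $\pi(a) + \pi(b) + \pi(c) = 1$ as $2\pi(a) + 2\pi(b) - 1 = 1 - 2\pi(c)$. Applying the hypothesis $\max_{a' \in [3]} \pi(a') \leq \tfrac{1}{2} - \gamma$ to the index $c$ yields $1 - 2\pi(c) \geq 2\gamma \geq 0$, and applying it to the index $a$ yields $2\pi(a) \leq 1 - 2\gamma \leq 1$. Dividing a non-negative number by a quantity in $(0,1]$ only increases it, so
\[
M_{ab} \;=\; \frac{1 - 2\pi(c)}{2\pi(a)} \;\geq\; 1 - 2\pi(c) \;\geq\; 2\gamma \;\geq\; \gamma\,.
\]
Combining with the previous paragraph gives $\Norm{M^{(a)} - M^{(b)}} \geq M_{ab} \geq \gamma$ for every pair $a \neq b$, hence for any two distinct rows of $M$.

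I do not anticipate a real obstacle: once \cref{lemma: 3coloringfixedaveragedegree} is in hand the statement is essentially a one-line computation. The only points that warrant a moment's care are (i) picking the right coordinate --- the $b$-th one, where $M^{(b)}$ is zero --- so that the norm of the difference of the two rows dominates the single entry $M_{ab}$, and (ii) checking that the denominator $2\pi(a)$ does not exceed $1$ under the hypothesis, so that passing from $\tfrac{1-2\pi(c)}{2\pi(a)}$ down to $1 - 2\pi(c)$ is legitimate. Both follow immediately, so I would present the argument almost exactly as sketched here.
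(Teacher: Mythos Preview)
Your proof is correct and follows essentially the same approach as the paper: both reduce to lower-bounding the single off-diagonal entry $M_{ab}$ via the explicit formula from \cref{lemma: 3coloringfixedaveragedegree}, then observe that $\Norm{M^{(a)}-M^{(b)}}\geq M_{ab}-M_{bb}=M_{ab}$. The only cosmetic difference is that the paper writes the bound as $M_{ab}\geq \gamma/\pi_a\geq \gamma$ while you pass through $M_{ab}\geq 1-2\pi(c)\geq 2\gamma$, but the arithmetic is the same.
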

\begin{proof}
It suffices to prove that all non-diagonal entries of $M$ are bounded away from $0$.
From \cref{lemma: 3coloringfixedaveragedegree}, we get for $a \neq b$ that $M_{ab} = \frac{(2\pi_a+2\pi_b)-1}{2\pi_a}$.
Because $\max_{a \in [3]} \pi_a \leq 1/2-\gamma$, it follows that $\pi_a+\pi_b \geq 1/2+\gamma$, so $M_{ab} \geq \frac{\gamma}{\pi_a} \geq \gamma$.
Then the conclusion follows because $\Norm{M^{(a)} - M^{(b)}} \geq M_{ab} - M_{bb} = M_{ab} \geq \gamma$.
\end{proof}

Finally, we state and prove our result specialized to $3$-colorings.

\begin{theorem}[Result for $3$-coloring]
\label{thm:3alg}
Let $G$ be an $n$-vertex $d$-regular graph with adjacency matrix $A$, and let $\tilde{A} = \frac{1}{d}A$.
Let $\lambda$ be the second largest eigenvalue of $\tilde{A}$.
Let $\chi: [n] \to [3]$ be a $3$-partition of its vertices, and suppose that $G$ has a vertex cover of all edges with both endpoints in the same part of size at most $\delta n$ with respect to $\chi$.
Suppose $\max_{a \in [3]} \frac{1}{n} |\chi^{-1}(a)| \leq 1/2 - \gamma$.
Then, for $\delta, \gamma, \lambda > 0$ small enough, there exists an algorithm that runs in time $n^{O(1)} \cdot O_{\delta,\gamma,\lambda}(1)$ and outputs a $(3, O(\lambda/\gamma^8+\delta/(\lambda\gamma^8)))$-coloring for $G$.
\end{theorem}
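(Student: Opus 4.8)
The plan is to deduce the theorem from the partition-recovery meta-theorem \cref{thm:part-recovery} (specialized to $k=3$), followed by a rounding step, with the two structural facts about $3$-by-$3$ models from \cref{lemma: 3coloringfixedaveragedegree} and \cref{lemma: 3coloringrowseparation} supplying the needed row separation. Write $\tilde A=\tfrac1dA$, let $\chi\colon[n]\to[3]$ be the promised partition with partition matrices $Z,B$ as in \cref{def:partition_Z}, and set $M\defeq M(\tilde A,\chi)=B^{-1}Z^{\top}\tilde A Z$ and $W\defeq ZMB^{-1}Z^{\top}$. Since $\tilde A$ is symmetric and row stochastic, $M$ is reversible row stochastic with stationary distribution $\pi_a=\tfrac1n|\chi^{-1}(a)|$; from $\max_a\pi_a\le\tfrac12-\gamma$ and $\sum_a\pi_a=1$ we get $\min_a\pi_a\ge 2\gamma$, so we may use $c\defeq 2\gamma$ as the lower bound on relative color-class sizes.

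First I would check the hypotheses of \cref{thm:part-recovery} for the pair $(M,\tilde A)$. Symmetry of $MB^{-1}=B^{-1}Z^{\top}\tilde A Z B^{-1}$ is immediate, and we may take $\zeta\defeq 1$ since $B^{1/2}MB^{-1/2}=B^{-1/2}Z^{\top}\tilde A Z B^{-1/2}$ is a compression of $\tilde A$. For row separation, note that the diagonal satisfies $M_{aa}=\tfrac{2E(\chi^{-1}(a),\chi^{-1}(a))}{d\,|\chi^{-1}(a)|}\le\delta/\gamma$, because the at most $\delta d n$ monochromatic edges lie inside classes of size $\ge 2\gamma n$; the symmetric quantities $s_{ab}\defeq\pi_aM_{ab}$ are then pinned down by the row-sum equations $s_{ab}+s_{ac}=\pi_a(1-M_{aa})$ exactly as in \cref{lemma: 3coloringfixedaveragedegree}, and using $M_{aa}=O(\delta/\gamma)$ together with $\pi_a+\pi_b-\pi_c=1-2\pi_c\ge 2\gamma$ this gives $M_{ab}\ge 2\gamma-O(\delta/\gamma)$ for $a\ne b$; hence, as in \cref{lemma: 3coloringrowseparation}, any two rows of $M$ are at distance at least $M_{ab}-M_{bb}\ge\gamma$ once $\delta$ is small relative to $\gamma$, so we may take $\alpha\defeq\gamma/2$. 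For the closeness condition, \cref{lemma: expansionimpliesvariance} with $c=2\gamma$ bounds the variance of each pair of color classes by $O(\lambda/\gamma+\delta/(\lambda\gamma))$, and \cref{lem:small-var-to-indic} then yields $\Norm{\tilde A\tfrac{Z_a}{\Norm{Z_a}}-W\tfrac{Z_a}{\Norm{Z_a}}}^2\le O(\lambda/\gamma^{2}+\delta/(\lambda\gamma^{2}))$ for every $a$; call this bound $\delta_0$. Finally, $\rank_{\geq \lambda'}(\tilde A)=1$ for every $\lambda'\in(\lambda,1)$ (as $\lambda$ is the second eigenvalue of $\tilde A$), so \cref{cor:spectral_small_to_small} gives $\rank_{\leq -\lambda'}(\tilde A)=O(1/\gamma^{6})$ when we set $\lambda'=\Theta(\gamma^{3/2})$; this choice is legitimate provided $\lambda$ is small enough relative to $\gamma$ that $\sqrt\lambda<\lambda'\le\min\{\alpha\sqrt c/4,\zeta/4\}$, and then the rank hypothesis of \cref{thm:part-recovery} holds with $t=O_\gamma(1)$.

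Feeding $\alpha=\gamma/2$, $c=2\gamma$, $\zeta=1$, $k=3$, $t=O_\gamma(1)$, and closeness parameter $\delta_0=O(\lambda/\gamma^{2}+\delta/(\lambda\gamma^{2}))$ into \cref{thm:part-recovery} produces, in time $n^{O(1)}\cdot O_{\delta,\gamma,\lambda}(1)$, a list of $O_{\delta,\gamma,\lambda}(1)$ candidate $3$-partitions $\hat\chi$, one of which — up to a permutation of colors — misclassifies at most $O\!\Paren{\tfrac{\delta_0\zeta^{2}k^{3}}{\min\{\alpha^{2}c,\zeta^{2}\}\,\alpha^{2}c}}\cdot n=O(\delta_0/\gamma^{6})\cdot n=O\!\Paren{\tfrac{\lambda}{\gamma^{8}}+\tfrac{\delta}{\lambda\gamma^{8}}}\cdot n$ vertices. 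It remains to round: run the rounding algorithm of \cref{lemma: generalrounding} on each color class of each candidate and output the triple of independent sets covering the most vertices of $G$. For the good candidate, every edge monochromatic under $\hat\chi$ is either monochromatic under $\chi$ (hence covered by the given size-$\delta n$ vertex cover) or has an endpoint among the $O(\lambda/\gamma^{8}+\delta/(\lambda\gamma^{8}))\cdot n$ misclassified vertices, so deleting the union of these two sets leaves three independent sets covering all but an $O(\lambda/\gamma^{8}+\delta/(\lambda\gamma^{8}))$ fraction of the vertices, that is, a $(3,\,O(\lambda/\gamma^{8}+\delta/(\lambda\gamma^{8})))$-coloring.

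The step I expect to be the main obstacle is the quantitative bookkeeping: one must chase the several powers of $\gamma$ — from $c=2\gamma$, from $\alpha=\Theta(\gamma)$, from the factor $1/c$ picked up twice on the way from \cref{lemma: expansionimpliesvariance} through \cref{lem:small-var-to-indic}, and from the misclassification formula of \cref{thm:part-recovery} — and verify that they collapse to the single exponent $8$ in the statement. A secondary, more conceptual point is that $M(\tilde A,\chi)$ has an $O(\delta/\gamma)$-small rather than exactly zero diagonal, so \cref{lemma: 3coloringfixedaveragedegree} and \cref{lemma: 3coloringrowseparation} must be applied in the perturbation-stable form above to conclude that its rows are $\Omega(\gamma)$-separated. (Alternatively one could route through \cref{thm:kalg-partial} applied to the exact zero-diagonal model determined by $\pi$, but going via \cref{thm:part-recovery} avoids an extra $\sqrt\delta$ term in the guarantee.) Everything else is a direct instantiation of machinery already in place.
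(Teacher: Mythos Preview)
Your proposal is correct and follows essentially the same route as the paper: instantiate \cref{thm:part-recovery} with $M=M(\tilde A,\chi)$, $c=2\gamma$, $\alpha=\Theta(\gamma)$, $\zeta=1$, closeness $O(\lambda/\gamma^{2}+\delta/(\lambda\gamma^{2}))$ from \cref{lemma: expansionimpliesvariance} plus \cref{lem:small-var-to-indic}, bottom-threshold rank $O_\gamma(1)$ via \cref{cor:spectral_small_to_small}, and then round via \cref{lemma: generalrounding}. The only cosmetic difference is in the row-separation step: the paper packages it as a separate claim that zeros out the diagonal of $M$ and renormalizes to obtain an auxiliary matrix $M'$ to which \cref{lemma: 3coloringrowseparation} applies verbatim, then transfers the $\Omega(\gamma)$ separation back by $\|M-M'\|=O(\delta/\gamma)$; your direct perturbation of the formulas in \cref{lemma: 3coloringfixedaveragedegree} is the same computation unwound, and your remark that going through \cref{thm:part-recovery} rather than \cref{thm:kalg-partial} avoids a $\sqrt\delta$ loss matches the paper's choice.
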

\begin{proof}
For simplicity of notation, let $M = M(\tilde{A}, \chi)$ be defined as in \cref{def:model} with respect to $\tilde{A}$.
We aim again to apply \cref{thm:part-recovery} with matrices $M$ and $\tilde{A}$, and we verify that its conditions hold.

\begin{claim}
\label[claim]{claim:3-row-sep}
If $\delta \ll \gamma^2$, then any two rows $M^{(a)}$ and $M^{(b)}$ of $M$ satisfy $\Norm{M^{(a)} - M^{(b)}} \geq \Omega(\gamma)$.
\end{claim}
\begin{proof}
We first argue that $M$ is close to a reversible row stochastic matrix with zero on the diagonal, to which we will apply \cref{lemma: 3coloringrowseparation}.
Let $M' \in \R^{3 \times 3}$ be obtained from $M$ by setting the diagonal entries to zero and by rescaling each row such that it sums up to $1$.
Then $M'$ is row stochastic with zero on the diagonal.
Let us now understand the stationary distribution of $M'$.
Let $\pi$ be the stationary distribution of $M$, which coincides with $\Set{\frac{1}{n} |\chi^{-1}(a)|}_{a \in [3]}$.
Then $\pi_{a} M_{ab} = \pi_b M_{ba}$ for all $a \neq b \in [3]$, so by construction also $\pi_a \Paren{\sum_{c \neq a} M_{ac}} M'_{ab} = \pi_b \Paren{\sum_{c \neq b} M_{bc}} M'_{ba}$ for all $a \neq b \in [3]$.
Therefore, the stationary distribution of $M'$ has probability mass function proportional to $\Set{\pi_a \Paren{\sum_{c \neq a} M_{ac}}}_{a \in [3]}$, normalized to sum up to $1$.
Finally, we note that $\max_{a \in [3]} M_{aa} \leq O(\delta / \gamma)$, which also implies that $\sum_{c \neq a} M_{ac} \geq 1 - O(\delta / \gamma)$.
Then $\Norm{M - M'} \leq O(\delta / \gamma)$, and also the stationary distribution of $M'$ has maximum probability mass bounded by $1/2 - \gamma + O(\delta/\gamma) \leq 1/2 - \Omega(\gamma)$.
By \cref{lemma: 3coloringrowseparation}, any two distinct rows of $M'$ have separation $\Omega(\gamma)$.
Then, by closeness of $M$ and $M'$, we also have that any two rows of $M$ have separation $\Omega(\gamma) - O(\delta/\gamma) \geq \Omega(\gamma)$.
\end{proof}

Note that we can assume that $\delta \ll \gamma^2$, as the guarantees of the theorem are vacuous otherwise, so we can apply \Cref{claim:3-row-sep}.
For the rest of the proof, we will largely repeat the analysis in the proof of \cref{thm:kalg-partial}, but using that any two distinct rows of $M$ are separated and taking into account explicitly the parameter $\gamma$.
Let $W = Z M B^{-1} Z^\top$, with the partition matrices $Z, B$ corresponding to the $3$-partition $\chi$.
We have trivially that $B^{1/2} M B^{-1/2}$ is symmetric and that $\Norm{M} = 1$.
We also have that $|\chi^{-1}(a)| \geq 2\gamma n$ for all $a \in [3]$.

By \Cref{claim:3-row-sep} we have that any two distinct rows of $M$ have separation $\Omega(\gamma)$.
For the closeness of $\tilde{A}$ and $W$, we have by \cref{lemma: expansionimpliesvariance} and \cref{lem:small-var-to-indic} that $\Norm{\tilde{A} \frac{Z_a}{\Norm{Z_a}} - W \frac{Z_a}{\Norm{Z_a}}}^2 \leq O\Paren{\lambda/\gamma^2 + \frac{\delta}{\gamma^2 \lambda}}$ for all $a \in [3]$.
For the low rank of $\tilde{A}$, we have by applying \cref{cor:spectral_small_to_small} with $\sigma=\lambda$ that $\rank_{\leq -\sqrt{2\lambda}}(\tilde{A}) \leq 1/\lambda^2$, and also $\lambda \leq \sqrt{2 \lambda}$. 

Then we apply \cref{thm:part-recovery} with matrices $M$ and $\tilde{A}$ and eigenvalue threshold\linebreak $\lambda' = \max\{\sqrt{2 \lambda}, \Omega(\gamma^{3/2})\}$, such that it satisfies the condition of the theorem, and such that we can take $t = O(1/(\lambda')^2) = O(1/\gamma^3) = O_{\gamma}(1)$.
Then, we can recover in time $n^{O(1)} \cdot O_{\delta,\gamma,\lambda}(1)$ a list of $O_{\delta,\gamma,\lambda}(1)$ $3$-partitions $\hat{\chi}: [n] \to [3]$ such that, for at least one of them, it holds up to a permutation of the color classes that $\mathbb{E}_{\bm{x} \sim \operatorname{Unif}([n])} \1[\hat\chi(x) \neq \chi(x)] \leq O(\lambda/\gamma^8+\delta/(\lambda\gamma^8))$.
As in the proof of \cref{thm:kalg-partial}, for each candidate $3$-partition we apply the rounding algorithm in \cref{lemma: generalrounding} on each of its color classes to obtain $3$ independent sets, and return the set of $3$ independent sets that cover most vertices of $G$.
The rest of the analysis is identical to that in the proof of \cref{thm:kalg-partial}.
\end{proof}

\subsection{Random planting}

Instead of a regular $k$-colorable one-sided expander graph, in this section we consider randomly planting a \kcl in a regular one-sided expander graph.
To aid presentation, let us first define the model and some of the notation that we will use throughout this section.

\begin{definition}[Randomly planted \kcl] \label[definition]{def: randomplanted}
Let $H$ be an $n$-vertex $d$-regular graph. 
Then we say the graph $G$ is obtained by randomly planting a \kcl in $H$ if it is sampled according to the following procedure:
\begin{enumerate}
    \item Sample $\chi: [n] \to [k]$ by sampling $\chi(x) \sim \operatorname{Unif}([k])$ independently for each $x \in [n]$,
    \item Let $G$ be a copy of $H$, and then remove from $G$ each edge $\{u,v\}$ for which $\chi(u) = \chi(v)$.
\end{enumerate}
\end{definition}

The following is our main recovery result.
We note that in \cref{thm: random-planting-full}, for the case when the host graph is a one-sided expander, we give an algorithm that recovers a coloring on \emph{all} vertices of the graph.

\begin{theorem}[Result for random planting, partial recovery]
    \label{thm: random-planting-partial}
    Let $H$ be an $n$-vertex $d$-regular graph with adjacency matrix $A_H$.
    Suppose $\rank_{\geq \lambda}(\frac{1}{d}A_H) \leq t$ for some $0 < \lambda < 1$ and $t \in \mathbb{N}$.
    Let $G$ be the graph obtained by randomly planting a \kcl in $H$, and let $\chi: [n] \to [k]$ be the associated \kcl.
    Then, for $\lambda > 0$ small enough, there exists an algorithm that runs in time $n^{O(1)} \cdot \Paren{O_{k}(1)}^{t}$ and outputs with high probability a $(k, O_k(1/d^{\Omega(1)}))$-coloring for $G$.
\end{theorem}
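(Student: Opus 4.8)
The plan is to invoke the partition-recovery meta-theorem \cref{thm:part-recovery} with the expected-degree-normalized adjacency matrix of $G$ and the ``uniform $k$-coloring'' model. Write $d' = \tfrac{k-1}{k}d$ for the expected average degree of $G$ and let $A_G$ be its adjacency matrix, so that $A_G = A_H - A_H \circ P_\chi$ where $(P_\chi)_{uv} = \1[\chi(u)=\chi(v)]$. I would apply \cref{thm:part-recovery} with $A := \tfrac1{d'}A_G$, the partition $\chi$, and $M := M(\tfrac1{d'}A_G,\chi)$ (the model of \cref{def:model} with respect to $\tfrac1{d'}A_G$), so that $W := ZMB^{-1}Z^\top$; here $M_{aa}=0$ since $G$ has no monochromatic edge, and $MB^{-1}$ is symmetric exactly because its $(a,b)$ entry is proportional to the number of $G$-edges between $\chi^{-1}(a)$ and $\chi^{-1}(b)$.

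First I would establish a handful of facts that hold with high probability over the random coloring $\chi$: (i) $|\chi^{-1}(a)| = (\tfrac1k\pm o(1))n$, so one can take $c := \tfrac1{2k}$; (ii) $M$ is entrywise within $o_n(1)$ of the complete-graph model $M_0 := \tfrac1{k-1}(\Allones-\Id)$ (using that color classes have size $\approx n/k$ and that the number of $H$-edges between two of them concentrates around $\approx nd/k^2$), whence $\Norm{M}\le 2 =: \zeta$ and, since all rows of $M_0$ are pairwise $\tfrac{\sqrt2}{k-1}$-separated, all rows of $M$ are pairwise $\alpha$-separated with $\alpha := \tfrac1k$ --- this is what lets us recover all $k$ color classes rather than just a fraction; (iii) $\Norm{A_H(Z_a-\tfrac1k\1)}^2 \le O_k(nd)$ for all $a$, via Hanson--Wright applied to the quadratic form $g_a^\top A_H^2 g_a$ with $g_a := Z_a-\tfrac1k\1$ (independent, mean-zero, bounded coordinates, $\E\Norm{A_Hg_a}^2 = \tfrac{k-1}{k^2}\Tr(A_H^2)=\tfrac{k-1}{k^2}nd$, and deviation probability $\exp(-\Omega_k(n/d))$, after which one union bounds over $a$); and (iv) $\Norm{\tfrac1{d'}A_G-\tfrac1d A_H} \le \varepsilon$ with $\varepsilon = d^{-\Omega(1)}$, equivalently $\Norm{A_H\circ(P_\chi-\tfrac1k\Allones)} = o(d)$, by standard spectral concentration for a randomly edge-weighted subgraph of a bounded-degree graph.

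Given these, the hypotheses of \cref{thm:part-recovery} follow. Items (i)--(ii) give the symmetry of $MB^{-1}$, the bound $\Norm M\le\zeta$, the row separation $\alpha$, and $|\chi^{-1}(a)|\ge cn$. For closeness of $A$ to $W$ on class indicators: the block $\chi^{-1}(a)$ contributes $0$ to $\Norm{\tfrac1{d'}A_GZ_a-WZ_a}^2$ (both sides vanish there), and each other block's contribution is a within-block variance of $\tfrac1{d'}\D{x}{a}$ which, after recentering by $d/k$ and using $A_H\1 = d\1$, is at most $\tfrac1{d'^2}\Norm{A_Hg_a}^2$; so by (iii), $\Norm{A\tfrac{Z_a}{\Norm{Z_a}}-W\tfrac{Z_a}{\Norm{Z_a}}}^2 \le \delta := O_k(1/d)$ (this is the analogue of \cref{lem:small-var-to-indic}). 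For the threshold-rank input, set $\lambda := \min(\alpha\sqrt c/4,\zeta/4) = \Omega_k(1)$ and $\sigma := \lambda^2/100$, apply \cref{cor:spectral_small_to_small} to $\tfrac1d A_H$ to get $\rank_{\le-\tau_0}(\tfrac1d A_H)\le t/\sigma^2 = O_k(t)$ with $\tau_0 = \sqrt{\lambda_{\mathrm{hyp}}(1-\sigma)+\sigma}$, where $\lambda_{\mathrm{hyp}}$ is the threshold in the hypothesis; choosing $\lambda_{\mathrm{hyp}}$ small enough (namely $\le \lambda^2/5$ --- this is the ``$\lambda>0$ small enough'' of the statement) and $d$ large enough that $\varepsilon\le\lambda/2$ forces $\tau_0+\varepsilon\le\lambda$ and $\lambda_{\mathrm{hyp}}+\varepsilon\le\lambda$, so by Weyl's inequality and (iv), $\rank_{\le-\lambda}(A)+\rank_{\ge\lambda}(A) \le \rank_{\le-\tau_0}(\tfrac1d A_H)+\rank_{\ge\lambda_{\mathrm{hyp}}}(\tfrac1d A_H) \le O_k(t) =: t'$, with $\lambda\le\alpha\sqrt c/4$ and $\lambda\le\zeta/4$ by construction.

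\cref{thm:part-recovery} then outputs, in time $n^{O(1)}\cdot(O_k(1))^{t'} = n^{O(1)}\cdot(O_k(1))^t$, a list of $k$-partitions one of which misclassifies only $O_k(\delta n) = O_k(n/d)$ vertices; as in the proof of \cref{thm:kalg-partial}, I would then run the rounding lemma \cref{lemma: generalrounding} on each color class of each candidate and return the best. For the good candidate, every edge internal to a color class has a misclassified endpoint, so those $O_k(n/d)$ vertices cover all internal edges and the rounding deletes only $O_k(n/d)$ of them, yielding a $(k,O_k(1/d)) = (k,O_k(1/d^{\Omega(1)}))$-coloring with high probability. The crux of the argument is the probabilistic content of (iii) and especially (iv): controlling the spectral norm of $H$ with the correlated random edge weights ``$\approx\pm\tfrac1k$ according to the colors of the two endpoints'' is precisely what transports the bounded top threshold rank of $\tfrac1d A_H$ --- and hence, via \cref{cor:spectral_small_to_small}, its bottom threshold rank --- to $\tfrac1{d'}A_G$; the extra $\sqrt{\log n}$ one incurs there for very sparse $d$ is why the conclusion promises only $d^{-\Omega(1)}$ rather than $d^{-1}$.
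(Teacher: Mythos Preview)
Your overall plan matches the paper's: invoke \cref{thm:part-recovery} with $A=\tfrac1{d'}A_G$ and $M=M(\tfrac1{d'}A_G,\chi)$, check that $M$ is entrywise close to $\tfrac1{k-1}(\Allones-\Id)$ so all rows are $\Omega_k(1)$-separated, and bound the variance on class indicators by $O_k(1/d)$. Your facts (i)--(iii) correspond to the paper's \cref{fact: balancedparts}, \cref{lemma: plantedmodelconcentration}, and \cref{lemma: randomplantinglowvariance}; the paper controls the variance via McDiarmid on $\sum_{x\in\chi^{-1}(a)}(\D{x}{b}-\tfrac1{k-1})^2$ rather than via Hanson--Wright on $g_a^\top A_H^2 g_a$, which has the minor advantage that the bounded-difference constant is $O(1)$ regardless of $d$ (your Hanson--Wright tail is $\exp(-\Omega_k(n/d))$, which is not $o(1)$ when $d=\Theta(n)$).

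The genuine gap is fact (iv): the claim $\bigl\|\tfrac1{d'}A_G-\tfrac1d A_H\bigr\|\le d^{-\Omega(1)}$ is \emph{false} in general. Take $H=K_n$ (so $t=1$ for any small $\lambda$): then $\tfrac1d A_H$ has spectrum $\{1,\,-\tfrac1{n-1}\}$, whereas $\tfrac1{d'}A_G$ is, up to $o(1)$, the normalized adjacency of a balanced complete $k$-partite graph, with spectrum $\{1,\,-\tfrac1{k-1}\text{ (mult.\ }k{-}1),\,0\}$; the vector $\1_{V_1}-\tfrac1k\1$ witnesses $\bigl\|\tfrac1{d'}A_G-\tfrac1d A_H\bigr\|\ge \tfrac1{k-1}-\tfrac1{n-1}=\Theta_k(1)$. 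The point is that the edge weights $\1[\chi(u)=\chi(v)]-\tfrac1k$ are \emph{not} independent across edges sharing an endpoint, so ``standard spectral concentration for a randomly edge-weighted subgraph'' does not apply; indeed $A_H\circ P_\chi=\sum_a \operatorname{diag}(Z_a)\,A_H\,\operatorname{diag}(Z_a)$ is block-diagonal once $\chi$ is fixed, and each block can have spectral norm $\Theta(d)$.

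The paper obtains the threshold-rank bound for $G$ by a different, fully deterministic argument (\cref{lemma: threshold_rank_kept}): write $\tfrac1d A_G=\tfrac1d A_H-\tfrac1d A_F$ where $A_F$ is the block-diagonal matrix of removed monochromatic edges. Each block is a principal submatrix of $A_H$, so by Cauchy interlacing each block has at most $t$ eigenvalues $\ge\lambda$ and (after applying \cref{cor:spectral_small_to_small} to $\tfrac1d A_H$) at most $O_k(t)$ eigenvalues $\le-\lambda$; summing over the $k$ blocks and applying Weyl to $\tfrac1d A_H+(-\tfrac1d A_F)$ gives $\rank_{\ge 2\lambda}(\tfrac1d A_G)+\rank_{\le-2\lambda}(\tfrac1d A_G)\le O_k(t)$, after which one rescales by $d/d'$. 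Substituting this interlacing step for your (iv) closes the gap; the rest of your proposal goes through.
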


To prove \cref{thm: random-planting-partial}, we first prove some properties of the graph $G$ obtained after randomly planting a \kcl in $H$.

\begin{lemma}[Color class concentration in random planting]
    \label[lemma]{fact: balancedparts}
    Let $H$ be an $n$-vertex graph.
    Let $G$ be the graph obtained by randomly planting a \kcl in $H$ for some $k=O(1)$, and let $\chi: [n] \to [k]$ be the associated \kcl.
    Then, with high probability, $|\chi^{-1}(a)| = \frac{n}{k} \pm O\Paren{\sqrt{n \log n}}$ for all $a \in [k]$.
\end{lemma}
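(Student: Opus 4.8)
The plan is to observe that the color assignment $\chi$ assigns to each vertex $x \in [n]$ an independent, uniformly random color in $[k]$, so that for each fixed color $a \in [k]$ the size $|\chi^{-1}(a)| = \sum_{x \in [n]} \1[\chi(x) = a]$ is a sum of $n$ independent Bernoulli random variables, each with mean $1/k$. Hence $\E |\chi^{-1}(a)| = n/k$, and I would apply a standard concentration inequality (Hoeffding's inequality for bounded independent summands) to control the deviation.

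Concretely, Hoeffding's inequality gives, for any $t > 0$,
\[
\Pr\Bigl[\Bigl| |\chi^{-1}(a)| - \tfrac{n}{k}\Bigr| > t\Bigr] \leq 2 \exp\!\Bigl(-\tfrac{2t^2}{n}\Bigr)\,.
\]
Choosing $t = C\sqrt{n \log n}$ for a sufficiently large absolute constant $C$ makes the right-hand side at most $2 n^{-2C^2}$, which is $o(1/k)$ since $k = O(1)$. A union bound over the $k$ colors then shows that, with probability at least $1 - 2k n^{-2C^2} = 1 - o(1)$, every color class satisfies $\bigl| |\chi^{-1}(a)| - n/k \bigr| \leq C\sqrt{n\log n}$ simultaneously, which is exactly the claimed bound $|\chi^{-1}(a)| = \tfrac{n}{k} \pm O(\sqrt{n\log n})$.

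There is no real obstacle here: the statement is a routine balls-into-bins concentration fact, and the only mild point to be careful about is that the implicit constant in the $O(\cdot)$ may depend on $k$, which is harmless since $k = O(1)$; note also that the edge-removal step in \cref{def: randomplanted} does not affect $\chi$, so the distribution of $\chi$ is exactly the product of uniform measures on $[k]$ as used above.
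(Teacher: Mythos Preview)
Your proposal is correct and follows exactly the same approach as the paper, which simply states that the proof follows straightforwardly from a Hoeffding bound and a union bound. You have filled in the routine details faithfully.
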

\begin{proof}
    The proof follows straightforwardly from a Hoeffding bound and a union bound.
\end{proof}

\begin{lemma}[Model entry concentration in random planting]\label[lemma]{lemma: plantedmodelconcentration}
    Let $H$ be an $n$-vertex $d$-regular graph.
    Let $G$ be the graph obtained by randomly planting a \kcl in $H$ for some $k=O(1)$, and let $\chi: [n] \to [k]$ be the associated \kcl.
    Let $A \in \R^{n \times n}$ be the adjacency matrix of $G$, and let $\tilde{A} = \frac{k}{(k-1)d} A$.
    Let $M(\tilde{A}, \chi)$ be defined as in \cref{def:model}.
    Then, with high probability, $M_{ab}=\frac{1}{k-1} \pm o(1)$ for all $a\neq b\in[k]$.
\end{lemma}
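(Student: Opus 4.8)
The plan is to reduce the statement to a single scalar concentration estimate. First I would unwind \cref{def:model} with $\tilde{A} = \frac{k}{(k-1)d}A$: for $a \neq b \in [k]$,
\[
M(\tilde{A},\chi)_{ab} = \frac{1}{|\chi^{-1}(a)|}\sum_{x \in \chi^{-1}(a)}\sum_{y \in \chi^{-1}(b)} \tilde{A}_{xy} = \frac{k}{(k-1)d}\cdot\frac{e_{ab}}{|\chi^{-1}(a)|}\,,
\]
where $e_{ab}$ is the number of edges of $G$ with one endpoint of color $a$ and the other of color $b$. Since the random planting only deletes monochromatic edges of $H$, and $\chi^{-1}(a),\chi^{-1}(b)$ are disjoint for $a\neq b$, the quantity $e_{ab}$ is unchanged from $H$, i.e.\ it equals the number of edges of $H$ between $\chi^{-1}(a)$ and $\chi^{-1}(b)$. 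So it remains to show that $e_{ab}$ and $|\chi^{-1}(a)|$ concentrate around their means.

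For the denominator, \cref{fact: balancedparts} already gives $|\chi^{-1}(a)| = \frac{n}{k} \pm O(\sqrt{n\log n})$ for all $a$ with high probability. For the numerator, I would write
\[
e_{ab} = \sum_{\{x,y\} \in E(H)}\bigl(\1[\chi(x)=a,\chi(y)=b] + \1[\chi(x)=b,\chi(y)=a]\bigr)\,,
\]
a function of the i.i.d.\ uniform colors $\chi(1),\dots,\chi(n) \in [k]$; its expectation is $|E(H)|\cdot \frac{2}{k^2} = \frac{nd}{k^2}$. Changing one color $\chi(v)$ affects only the at most $d$ summands indexed by edges incident to $v$, each by at most $1$, so $e_{ab}$ satisfies the bounded-differences condition with all Lipschitz constants $d$. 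McDiarmid's inequality then yields $e_{ab} = \frac{nd}{k^2} \pm O(d\sqrt{n\log n})$ with high probability, and a union bound over the $\binom{k}{2} = O(1)$ pairs makes all of these estimates hold simultaneously.

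Combining the two estimates on the resulting good event,
\[
M(\tilde{A},\chi)_{ab} = \frac{k}{(k-1)d}\cdot\frac{\tfrac{nd}{k^2}\bigl(1 \pm O(k^2\sqrt{\log n / n})\bigr)}{\tfrac{n}{k}\bigl(1 \pm O(k\sqrt{\log n/n})\bigr)} = \frac{1}{k-1} \pm o(1)\,,
\]
using $k = O(1)$. I do not expect a genuine obstacle here; the only thing requiring care is the bookkeeping — correctly unwinding the normalization $\frac{k}{(k-1)d}$ in \cref{def:model}, the identity that $e_{ab}$ is the same in $G$ and $H$, and the bounded-differences constant — together with checking that the relative error stays $o(1)$ uniformly in $d$, which it does because the multiplicative fluctuations of both the numerator and the denominator are of order $\sqrt{\log n/n}$ regardless of $d$.
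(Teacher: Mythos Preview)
Your proposal is correct and takes essentially the same approach as the paper: both express $M_{ab}$ as (a normalization of) $e_{ab}/|\chi^{-1}(a)|$, invoke \cref{fact: balancedparts} for the denominator, and apply McDiarmid's inequality for the numerator using $d$-regularity to bound the effect of a single color change. The only cosmetic difference is that the paper applies McDiarmid to the already-normalized quantity $\sum_{x\in\chi^{-1}(a)}\D{x}{b}=\tfrac{k}{(k-1)d}e_{ab}$ with bounded differences $O(1)$, whereas you work with $e_{ab}$ itself and bounded differences $d$; the two are equivalent up to the scaling, and your remark that the relative error is $O(\sqrt{\log n/n})$ uniformly in $d$ matches the paper's conclusion.
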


\begin{proof}
    Let $d' = \frac{(k-1)d}{k}$, and for $x \in [n]$ and $a \in [k]$, let $\D{x}{a} = \sum_{y \in \chi^{-1}(a)} \tilde{A}_{xy}$.
    By linearity of expectation,
    \[
    \begin{aligned}
    \E\Brac{\sum_{x\in \chi^{-1}(a)} \D{x}{b}}
    &= \frac1{d'}\;\E\Brac{\sum_{x\sim y\in E(H)}
        \Paren{\1[\chi(x)=a\land\chi(y)=b]
            +\1[\chi(x)=b\land\chi(y)=a]}} \\[6pt]
    &= \frac1{d'}\;\frac{nd}{2}\;\frac{2}{k^2}
    \;=\;\frac{n}{k(k-1)}\,.
    \end{aligned}
    \]

    By the $d$-regularity of $H$, changing the color of a single vertex can increase or decrease $\sum_{x\in \chi^{-1}(a)} \D{x}{b}$ by at most $O(1)$. Hence, McDiarmid's inequality gives that $\sum_{x\in \chi^{-1}(a)} \D{x}{b}$ is with probability $O(n^{-2})$ within $O(\sqrt{n\log{n}})$ of its expectation. Using \cref{fact: balancedparts} and putting it all together, we get
    $$M_{ab}=\frac{\frac{n}{k(k-1)}\pm O(\sqrt{n\log{n}})}{\frac{n}{k}\pm O(\sqrt{n\log{n}})}=\frac1{k-1}\pm O\Paren{\sqrt{\frac{\log n}{n}}}$$
    with high probability for all $a \neq b \in [k]$.
\end{proof}

Next, we show that with high probability the graph $G$ has a good variance bound.

\begin{lemma}[Variance of random planting]
    \label[lemma]{lemma: randomplantinglowvariance}
    Let $H$ be an $n$-vertex graph.
    Let $G$ be the graph obtained by randomly planting a \kcl in $H$ for some $k=O(1)$, and let $\chi: [n] \to [k]$ be the associated \kcl.
    Let $A \in \R^{n \times n}$ be the adjacency matrix of $G$, and let $\tilde{A} = \frac{k}{(k-1)d} A$.
    For $x \in [n]$ and $a \in [k]$, let $\D{x}{a} = \sum_{y \in \chi^{-1}(a)} \tilde{A}_{xy}$.
    Then, with high probability, for any $a, b \in [k]$ and $\bm{x}$ uniformly random in $\chi^{-1}(a)$, 
    \[\E_{\bm{x} \sim \chi^{-1}(a)} \Paren{\D{\bm{x}}{b} - \E_{\bm{x} \sim \chi^{-1}(a)} \D{\bm{x}}{b}}^2 \leq 1/d^{\Omega(1)}\,.\]
\end{lemma}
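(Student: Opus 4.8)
To prove \cref{lemma: randomplantinglowvariance}, the plan is to reduce the claim to an exact second-moment identity for a binomial random variable and then upgrade the pointwise statement to an average via McDiarmid's inequality. First I would handle the diagonal case: if $a=b$, then every vertex $x$ with $\chi(x)=a$ has no $G$-neighbor of color $a$, since all monochromatic edges of $H$ were deleted in forming $G$; hence $\D{x}{a}=0$ identically on $\chi^{-1}(a)$ and the variance in question is $0$. So assume $a\neq b$. For $x$ with $\chi(x)=a$, no edge of $H$ from $x$ to a color-$b$ vertex is deleted (because $b\neq a=\chi(x)$), so $\D{x}{b}=\tfrac{k}{(k-1)d}N_x$ with $N_x:=\card{\set{y\sim_H x : \chi(y)=b}}$. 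The key point is that $N_x$ depends only on the colors of the $H$-neighbors of $x$, which are i.i.d.\ uniform on $[k]$ and independent of $\chi(x)$; thus $N_x\sim\Bin(d,1/k)$ for every fixed $x$, giving $\E N_x^2=\tfrac{d^2}{k^2}+\tfrac{d(k-1)}{k^2}$ and, for every $x\in\chi^{-1}(a)$, $\D{x}{b}^2=\tfrac{k^2}{(k-1)^2d^2}N_x^2$ with $\E\Brac{\D{x}{b}^2 \mid \chi(x)=a}=\tfrac{1}{(k-1)^2}+\tfrac{1}{(k-1)d}$. So the ``predicted'' answer is a variance of $\tfrac1{(k-1)d}=O(1/d)$, and the only real work is to show that the empirical average over the \emph{random} class $\chi^{-1}(a)$ concentrates around this pointwise expectation.

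For the second moment I would work with $F(\chi):=\sum_{x\in[n]}\1[\chi(x)=a]\,N_x^2$, so that $\tfrac1{\card{\chi^{-1}(a)}}\sum_{x\in\chi^{-1}(a)}\D{x}{b}^2=\Paren{\tfrac{k}{(k-1)d}}^2 F(\chi)/\card{\chi^{-1}(a)}$, and compute $\E F=\tfrac nk\,\E N_x^2=\tfrac{nd(d+k-1)}{k^3}$ by linearity (using independence of $N_x$ from $\chi(x)$). For concentration, note $F$ is a function of the $n$ independent colors: changing one color $\chi(v)$ alters the term indexed by $v$ by at most $d^2$ and each of the at most $d$ terms indexed by $H$-neighbors of $v$ by at most $2d$ (since $N_x$ shifts by at most $1$ while staying in $[0,d]$), so $F$ has bounded differences at most $3d^2$, and McDiarmid gives $\card{F-\E F}=O(d^2\sqrt{n\log n})$ with high probability. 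Because $\E F\geq nd^2/k^3$ and $k=O(1)$, this is a \emph{relative} error $O(\sqrt{\log n/n})$. Combining with \cref{fact: balancedparts}, which gives $\card{\chi^{-1}(a)}=\tfrac nk\Paren{1\pm O(\sqrt{\log n/n})}$ with high probability, I obtain $\tfrac1{\card{\chi^{-1}(a)}}\sum_{x\in\chi^{-1}(a)}\D{x}{b}^2=\Paren{\tfrac1{(k-1)^2}+\tfrac1{(k-1)d}}\Paren{1\pm O(\sqrt{\log n/n})}$.

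For the mean term I would invoke \cref{lemma: plantedmodelconcentration} --- whose proof in fact supplies the explicit rate $O(\sqrt{\log n/n})$ behind its $o(1)$ --- so that $\E_{\bm x\sim\chi^{-1}(a)}\D{\bm x}{b}=M_{ab}=\tfrac1{k-1}\pm O(\sqrt{\log n/n})$, hence its square is $\tfrac1{(k-1)^2}\pm O(\sqrt{\log n/n})$. Subtracting the two expressions and using $k=O(1)$ (so the displayed constants are $O(1)$), the variance equals $\tfrac1{(k-1)d}\pm O(\sqrt{\log n/n})\leq O(1/d)+O(\sqrt{\log n/n})$ with high probability; since $d\leq n-1$, both terms are $\leq 1/d^{\Omega(1)}$ for $n$ large enough, which is the claimed bound. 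A union bound over the $O(1)$ pairs $(a,b)$ and over the (constantly many) high-probability events invoked then finishes the argument.

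I do not expect a serious obstacle here --- the lemma is at heart a second-moment computation plus concentration. The one mildly delicate point is that $F$ has Lipschitz constant $\Theta(d^2)$ rather than $\Theta(d)$, so one must check that the resulting $O(d^2\sqrt{n\log n})$ fluctuation is negligible against the mean $\E F=\Theta(nd^2)$ --- which it is, leaving only an $O(\sqrt{\log n/n})$ relative error --- and that the two relative-error estimates (for the numerator and for $\card{\chi^{-1}(a)}$) combine cleanly; neither causes real difficulty, and absorbing the remaining absolute constants into the exponent of $1/d^{\Omega(1)}$ uses only $d\leq n$.
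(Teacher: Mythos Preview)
Your proposal is correct and follows essentially the same approach as the paper: a binomial second-moment computation for $\D{x}{b}$, McDiarmid concentration over the random coloring, \cref{fact: balancedparts} for the class sizes, and \cref{lemma: plantedmodelconcentration} for the mean term. The only cosmetic difference is that the paper centers first --- it applies McDiarmid to $S=\sum_{x\in\chi^{-1}(a)}(\D{x}{b}-\tfrac1{k-1})^2$, which has $O(1)$ bounded differences and expectation $O(n/d)$ --- whereas you apply McDiarmid to the raw second moment $F=\sum_x \1[\chi(x)=a]\,N_x^2$, incurring $O(d^2)$ bounded differences but against a $\Theta(nd^2)$ mean; both routes yield the same $O(1/d)+O(\sqrt{\log n/n})$ bound.
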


\begin{proof}
    If $a=b$, we have $\E_{\bm{x} \sim \chi^{-1}(a)} \Paren{\D{\bm{x}}{b} - \E_{\bm{x} \sim \chi^{-1}(a)} \D{\bm{x}}{b}}^2 = 0$ by construction, so we assume $a\neq b$.
    
    Writing $\D{\bm{x}}{b} - \E_{\bm{x} \sim \chi^{-1}(a)} \D{\bm{x}}{b}$ as $\Paren{\D{\bm{x}}{b} - \frac1{k-1}} + \Paren{\frac1{k-1} - \E_{\bm{x} \sim \chi^{-1}(a)} \D{\bm{x}}{b}}$ and using that $(a+b)^2\leq 2(a^2+b^2)$, we get
    \begin{equation}\label{eq:variance_split}
        \E_{\bm{x} \sim \chi^{-1}(a)} \Bigl(\D{\bm{x}}{b} - \E_{\bm{x} \sim \chi^{-1}(a)} \D{\bm{x}}{b}\Bigr)^2
        \;\le\;
        2\Bigl(\E_{\bm{x} \sim \chi^{-1}(a)} \bigl(\D{\bm{x}}{b} - \tfrac1{k-1}\bigr)^2
        +\bigl(\tfrac1{k-1}-M_{ab}\bigr)^2\Bigr).
    \end{equation}

    Let us define $S=\sum_{x \in \chi^{-1}(a)} \Paren{\D{x}{b} - \frac1{k-1}}^2$ to analyze the first term. 
    Let $d' = \frac{(k-1)d}{k}$.
    For any $x\in V(G)$, conditioned on $x\in \chi^{-1}(a)$, by $d$-regularity we have $\D{x}{b}\sim \frac1{d'}\mathrm{Bin}(d, \frac1{k})$.
    Hence, by the law of total probability and linearity of expectation, we get
    $$\E_{\chi}\Brac{S}=\frac{n}{k}\frac1{d'^2}d\frac1{k}\Paren{1-\frac1{k}}=\frac{n}{(k-1)kd} \leq O(n/d)\,.$$ 

    It is easy to verify that changing the color of a single vertex can increase or decrease $S$ by at most $O(1)$. Therefore, by McDiarmid's inequality,
    \[
    \Pr\Brac{\,\bigl|S-\E[S]\bigr|\ge \sqrt{n\log n}}
    \;\le\;
    2\exp\!\Paren{-\Omega\!\Paren{\tfrac{n\log n}{n}}}
    =o(1).
    \]
    Hence, with high probability
    \[
    S = \E[S]\pm O\Paren{\sqrt{n\log n}}
    = O(n/d) \pm O\Paren{\sqrt{n\log n}}.
    \]
    Then, using \cref{fact: balancedparts}, with high probability
    \[\E_{\bm{x} \sim \chi^{-1}(a)} \Paren{\D{\bm{x}}{b} - \frac1{k-1}}^2
    =\frac{S}{|\chi^{-1}(a)|}
    =O\!\Paren{1/d}\;+\;O\!\Paren{\sqrt{\tfrac{\log n}{n}}}
    = 1/d^{\Omega(1)}.
    \]
    
    By \cref{lemma: plantedmodelconcentration}, the second term of \cref{eq:variance_split} is also bounded by $\Paren{\frac1{k-1}-M_{ab}}^2 \leq 1/d^{\Omega(1)}$, finishing the proof.
\end{proof}

The last ingredient to prove \cref{thm: random-planting-partial} is a lemma that shows how low threshold rank is approximately preserved after planting.

\begin{lemma}[Threshold rank kept after planting]\label[lemma]{lemma: threshold_rank_kept}
    Let $H$ be an $n$-vertex graph with adjacency matrix $A_H$.
    Let $G$ be the graph obtained by planting a \kcl in $H$ (not necessarily randomly), and let $A_G$ be its adjacency matrix.
    Let $\bar{A}_H:=\frac1{c}A_H$ and $\bar{A}_G:=\frac1{c}A_G$ be adjacency matrices of $H$ and $G$ respectively normalized by the same parameter $c > 0$.
    For $r_1, r_2 \geq 0$ and $t_1, t_2 \geq 1$, assuming that
    \[
    \rank_{\ge r_1}(\bar{A}_H)\le t_1
    \quad\text{and}\quad
    \rank_{\le -r_2}(\bar{A}_H)\le t_2,
    \]
    it holds that
    \[
    \rank_{\le-(r_1+r_2)}(\bar{A}_G)\;\le\;k\,t_1 + t_2,
    \quad
    \rank_{\ge \,r_1+r_2}(\bar{A}_G)\;\le\;t_1 + k\,t_2.
    \]
    \end{lemma}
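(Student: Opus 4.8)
The plan is to observe that planting a \kcl only \emph{removes} edges, and the removed edges form a graph that is block-diagonal with respect to the color classes, so its spectrum is controlled by Cauchy interlacing; one then combines this with a subadditivity bound for threshold rank under matrix sums.

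First I would write $\bar A_G = \bar A_H - \bar A_{\mathrm{mono}}$, where $A_{\mathrm{mono}}$ is the adjacency matrix of the monochromatic edges of $H$ with respect to $\chi$ (the edges $\{u,v\}$ with $\chi(u)=\chi(v)$) and $\bar A_{\mathrm{mono}} = \frac1c A_{\mathrm{mono}}$. After reordering the vertices by color class, $\bar A_{\mathrm{mono}}$ is the direct sum $\bigoplus_{a\in[k]}(\bar A_H)_{\chi^{-1}(a)}$ of the principal submatrices of $\bar A_H$ indexed by the color classes. Since the spectrum of a direct sum is the multiset-union of the spectra of the blocks, and since each block is a principal submatrix of $\bar A_H$, Cauchy's interlacing theorem gives that each block has at most $\rank_{\ge r_1}(\bar A_H)\le t_1$ eigenvalues that are $\ge r_1$ and at most $\rank_{\le -r_2}(\bar A_H)\le t_2$ eigenvalues that are $\le -r_2$ (a principal submatrix can only move extreme eigenvalues inward). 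Summing over the $k$ blocks yields $\rank_{\ge r_1}(\bar A_{\mathrm{mono}})\le k t_1$ and $\rank_{\le -r_2}(\bar A_{\mathrm{mono}})\le k t_2$.

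The second ingredient is the elementary fact that threshold rank is subadditive under matrix addition: if $X,Y$ are symmetric with $\rank_{\ge a}(X)\le s$ and $\rank_{\ge b}(Y)\le s'$, then $\rank_{\ge a+b}(X+Y)\le s+s'$, and symmetrically for bottom threshold ranks (by negating $X$ and $Y$). This follows from Courant--Fischer: the span of the eigenvectors of $X$ with eigenvalue $<a$ has codimension at most $s$, likewise for $Y$ with threshold $b$, so their intersection has codimension at most $s+s'$, and on it the quadratic form of $X+Y$ is strictly below $a+b$. Applying this to $X=\bar A_H$ and $Y=-\bar A_{\mathrm{mono}}$: for the bottom rank I use $\rank_{\le -r_2}(\bar A_H)\le t_2$ together with $\rank_{\le -r_1}(-\bar A_{\mathrm{mono}})=\rank_{\ge r_1}(\bar A_{\mathrm{mono}})\le k t_1$ to get $\rank_{\le-(r_1+r_2)}(\bar A_G)\le k t_1 + t_2$; for the top rank I use $\rank_{\ge r_1}(\bar A_H)\le t_1$ together with $\rank_{\ge r_2}(-\bar A_{\mathrm{mono}})=\rank_{\le -r_2}(\bar A_{\mathrm{mono}})\le k t_2$ to get $\rank_{\ge r_1+r_2}(\bar A_G)\le t_1 + k t_2$.

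I do not anticipate a genuine obstacle here: the proof is a short combination of standard linear algebra. The only points needing care are getting the direction of interlacing right at both ends of the spectrum and tracking the strict-versus-nonstrict inequalities in the definitions of $\rank_{\ge\tau}$ and $\rank_{\le-\tau}$ through the subadditivity step — both are cleanly handled by phrasing everything in terms of the eigenspaces for eigenvalues $<a$ (respectively $>-a$) rather than $\ge a$ (respectively $\le -a$).
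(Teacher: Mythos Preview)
Your proposal is correct and matches the paper's proof essentially step for step: the same decomposition $\bar A_G=\bar A_H-\bar A_F$ with $\bar A_F$ block-diagonal, Cauchy interlacing on the blocks to bound $\rank_{\ge r_1}(\bar A_F)\le k t_1$ and $\rank_{\le -r_2}(\bar A_F)\le k t_2$, and then the subadditivity of threshold rank under sums (the paper phrases this as Weyl's inequalities, you phrase it via Courant--Fischer, but it is the same fact).
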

    
    \begin{proof}
    Decompose \(\bar{A}_G=\bar{A}_H - \bar{A}_F\), where \(\bar{A}_F\) is block‐diagonal over the \(k\) color classes.  
    By Cauchy's interlacing theorem, each block has \(\le t_1\) eigenvalues \(\ge r_1\) and \(\le t_2\) eigenvalues \(\le -r_2\), so \(\rank_{\le -r_1}(-\bar{A}_F)=\rank_{\ge r_1}(\bar{A}_F)\le k\,t_1\) and \(\rank_{\ge r_2}(-\bar{A}_F)=\rank_{\le -r_2}(\bar{A}_F)\le k\,t_2\).
    Hence, applying Weyl's inequalities
    \begin{gather*}
        \rank_{\le -(r_1+r_2)}(\bar{A}_H - \bar{A}_F)\le\rank_{\le -r_2}(\bar{A}_H)+\rank_{\le -r_1}(-\bar{A}_F)\,, \\
        \rank_{\ge r_1+r_2}(\bar{A}_H - \bar{A}_F)\le\rank_{\ge r_1}(\bar{A}_H)+\rank_{\ge r_2}(-\bar{A}_F)\,,
    \end{gather*}
    we get the two claimed bounds.
    \end{proof}
        
We are now ready to prove \cref{thm: random-planting-partial}.

\begin{proof}[Proof of \cref{thm: random-planting-partial}]
    Let $A \in \R^{n \times n}$ be the adjacency matrix of $G$, and let $\tilde{A} = \frac{k}{(k-1)d} A$.
    Also let $d' = \frac{(k-1)d}{k}$.
    Let $M = M(\tilde{A}, \chi)$ be defined as in \cref{def:model}.
    We aim again to apply \cref{thm:part-recovery} with matrices $M$ and $\tilde{A}$, and we verify that its conditions hold.
    
    Let $W = Z M B^{-1} Z^\top$, with the partition matrices $Z, B$ corresponding to the $k$-partition $\chi$.
    We have trivially that $B^{1/2} M B^{-1/2}$ is symmetric.
    By \cref{lemma: plantedmodelconcentration} and the triangle inequality, $\Norm{M}\leq \Norm{\frac1{k-1}\Paren{\Allones-\Id}}+\Norm{M-\frac1{k-1}\Paren{\Allones-\Id}}=1+o(1)$, where $\Allones \in \R^{k \times k}$ is the matrix with all entries equal to $1$.
    We also have by \cref{fact: balancedparts} that with high probability $|\chi^{-1}(a)| \geq (1/k-o(1))n$ for all $a \in [k]$.
    
    In addition, as the diagonal entries of $M$ are zero and by \cref{lemma: plantedmodelconcentration} the off-diagonal entries are $\frac1{k-1}\pm o(1)$, we have $\Norm{M^{(a)} - M^{(b)}} \geq \sqrt{2}/k$ for any two rows $M^{(a)}$ and $M^{(b)}$ of $M$.
    For the closeness of $\tilde{A}$ and $W$, because of  \cref{fact: balancedparts} and \cref{lemma: randomplantinglowvariance}, we can apply \cref{lem:small-var-to-indic} and get that, for all $a\in [k]$,
    \[\Norm{\tilde{A} \frac{Z_a}{\Norm{Z_a}} - W  \frac{Z_a}{\Norm{Z_a}}}^2 = \frac{1}{d^{\Omega(1)} \Paren{\frac{1}{k}-o(1)}} \leq O_k\Paren{1/d^{\Omega(1)}}\,.\]

    For the low rank of $\tilde{A}$, we have $\rank_{\geq \lambda}(\frac{1}{d}A_H) \leq t$ by assumption, and as $\Norm{\frac1{d}A_H}\leq 1$ because of $d$-regularity, we can apply \cref{cor:spectral_small_to_small} with $\sigma=\lambda$ to get $\rank_{\leq -\sqrt{2\lambda}}(\frac{1}{d}A_H) \leq t/\lambda^2$ as well. 
    Note that, using that $\lambda \leq 1$, we have that $\lambda \leq \sqrt{2 \lambda}$.
    Let $\lambda' = \max\{\sqrt{2\lambda}, \Omega_k(1)\}$ for some small enough $\Omega_k(1)$, such that $\rank_{\geq \lambda'}(\frac{1}{d}A_H)=O_k(t)$ and $\rank_{\leq -\lambda'}(\frac{1}{d}A_H)=O_k(t)$.
    By applying \cref{lemma: threshold_rank_kept} with normalization parameter $d$, we get $\rank_{\geq 2\lambda'}\Paren{\frac1{d}A_G}=O_k(t)$ and $\rank_{\leq -2\lambda'}\Paren{\frac1{d}A_G}=O_k(t)$.
    As $k\geq 3$, we have $\frac{d}{d'}2\lambda'\leq 3\lambda'$, so $\rank_{\geq 3\lambda'}(\tilde{A})+\rank_{\leq -3\lambda'}(\tilde{A})=O_k(t)$.

    Then, by the above, with high probability we satisfy the conditions necessary to apply \cref{thm:part-recovery} with $1/k-o(1)$, $1+o(1)$, $2/k$, $O_k(1/d^{\Omega(1)})$, $O(t)$ and $O(\lambda')$ in the roles of $c$, $\zeta$, $\alpha$, $\delta$, $t$ and $\lambda$, respectively.
    Then, we can recover in time $n^{O(1)} \cdot \Paren{O_{k}(1)}^{t}$ a list of $\Paren{O_{k}(1)}^{t}$ $k$-partitions $\hat{\chi}: [n] \to [k]$ such that, for at least one of them, it holds up to a permutation of the color classes that $\mathbb{E}_{\bm{x} \sim \operatorname{Unif}([n])} \1[\hat\chi(x) \neq \chi(x)] \leq O_k(1/d^{\Omega(1)})$.
    Finally, as in the proof of \cref{thm:kalg-partial}, for each candidate $k$-partition we apply the rounding algorithm in \cref{lemma: generalrounding} on each of its color classes to obtain $k$ independent sets, and return the set of $k$ independent sets that cover most vertices of $G$.
    The rest of the analysis is identical to that in the proof of \cref{thm:kalg-partial}.
\end{proof}

\subsubsection{Rounding to a full coloring}
\label{sec:full-col}

Next, we show that if $H$ actually is a one-sided expander (instead of having just low top threshold rank), and if the degree $d$ is at least a large enough constant, then we can actually obtain a full \kcl by exploiting the full randomness of the model.

We start from the guarantees of \cref{thm: random-planting-partial-list}, which is a version of \cref{thm: random-planting-partial} that omits the last rounding step: instead, it guarantees the recovery of a \emph{list} of candidate $k$-colorings, such that at least one of them is close to the planted $k$-coloring.
We remark that \cref{thm: random-planting-partial} itself does not guarantee that the output $k$-coloring is close to the planted $k$-coloring, while our strategy for obtaining a full coloring relies on this property --- hence our use of \cref{thm: random-planting-partial-list}.
Then, we show how to convert this list of partial colorings to a full coloring.
Our main result is the following:

\begin{theorem}[Result for random planting, full recovery]
    \label{thm: random-planting-full}
    Let $H$ be an $n$-vertex $d$-regular graph with adjacency matrix $A_H$.
    Let $\lambda$ be the second largest eigenvalue of $\frac{1}{d}A_H$.
    Let $G$ be the graph obtained by randomly planting a \kcl in $H$, and let $\chi: [n] \to [k]$ be the associated \kcl.
    Then, for $\lambda > 0$ small enough and $d$ at least a large enough constant compared to $k$, there exists an algorithm that runs in time $n^{O(1)} \cdot O_{k}(1)$ and outputs with high probability a $(k, 0)$-coloring for $G$.
\end{theorem}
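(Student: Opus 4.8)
The plan is to first obtain an approximate coloring and then clean it up into a proper one, following the Alon--Kahale local-correction strategy \cite{MR1484153-Alon97}, as adapted to colorings planted in expanders by David and Feige \cite{MR3536556-David16}. By \cref{thm: random-planting-partial-list} -- the list version of \cref{thm: random-planting-partial}, which omits the final rounding step -- we obtain in time $n^{O(1)} \cdot O_k(1)$ a list of $O_k(1)$ candidate $k$-partitions, at least one of which, say $\hat{\chi}$, agrees with the planted coloring $\chi$ on all but an $\e n$-fraction of vertices, up to a permutation of the $k$ colors, where $\e = O_k(1/d^{\Omega(1)})$. We run the cleanup below on every candidate and output any result that is a proper $k$-coloring of $G$ (verifiable in polynomial time); it suffices to analyze the good candidate $\hat{\chi}$, and since $G$ is $k$-colorable the cleanup applied to $\hat\chi$ will produce a proper coloring.

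The cleanup has two stages. First, \textbf{iterative local correction}: starting from $\hat{\chi}$, perform $T = O(\log n)$ synchronous rounds of \emph{minority recoloring}, where in each round every vertex $v$ is recolored to a color minimizing the number of its current neighbors with that color (ties broken towards $v$'s current color). The intuition is that in $G$ a vertex of true color $a$ has no neighbor of color $a$ and, by the planting randomness and $d$-regularity of $H$, roughly $d/k$ neighbors of every other color, so a vertex with a ``typical'' neighborhood that is currently miscolored on only a few of its neighbors is recolored back to its true color. Second, \textbf{isolating and brute-forcing the core}: call $v$ \emph{fragile} if under $\chi$ it has fewer than $\tfrac{1}{8k}d$ neighbors of some color other than $\chi(v)$, let $\cB$ be the set of fragile vertices, and let $W$ be the union of $\cB$ with the set of vertices on which the coloring after the correction rounds still disagrees with $\chi$. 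We will argue that, with high probability over the planting, $|W| = O(n/d^{\Omega(1)})$ and every connected component of $G[W]$ has size $O(\log n)$. Since by construction the corrected coloring equals $\chi$ on $V \setminus W$, for each component $C$ of $G[W]$ its $G$-boundary $\partial C \subseteq V \setminus W$ is colored consistently with $\chi$; because $\chi$ is a proper coloring of $G$, there is a proper coloring of $G[C \cup \partial C]$ extending those boundary colors, and we find one by brute force over the $k^{|C|} = n^{O(1)}$ colorings of $C$ (keeping $\partial C$ fixed; feasible since $|C| = O(\log n)$). Patching these together with $\chi$ on $V \setminus W$ gives a proper $k$-coloring of all of $G$.

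The main obstacle is the claim that $W$ is globally small and yet shatters into small components. This is established in three pieces. (i) A Chernoff bound over the planting gives that each vertex is fragile with probability at most $k \cdot e^{-\Omega(d/k)}$; since fragility of $v$ depends only on $\chi$ restricted to the $d = O_k(1)$ vertices of $N[v]$, McDiarmid's inequality then gives $|\cB| \le O_k(e^{-\Omega(d/k)}) \cdot n$ with high probability. (ii) A union bound over connected subgraphs -- using $d$-regularity, the fact that fragility events of vertices with disjoint closed neighborhoods are independent, and that $d \cdot e^{-\Omega(d/k)} \ll 1$ once $d$ is a large enough constant compared with $k$ -- shows that with high probability every connected component of $G[N[\cB]]$ has size $O(\log n)$. (iii) The error-propagation analysis of minority recoloring shows that after $T = O(\log n)$ rounds every vertex still disagreeing with $\chi$ lies in the $1$-neighborhood of $\cB$: outside $\cB$, a vertex is corrected in a round unless an $\Omega(1/k)$ fraction of its neighbors are currently bad, and by the small-set-expansion bound $\phi_G(S) \ge 1 - (1-\lambda)\mu(S) - \lambda$ -- exactly what one-sided spectral expansion of $H$ buys us -- only few vertices can have that many neighbors in a small set, so the bad set contracts geometrically down to $N[\cB]$. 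Combining (i)--(iii) yields $W \subseteq N[\cB]$ with the stated size and component bounds. The genuinely delicate point is (iii) together with making it interact with (ii): one must ensure that the $O(\log n)$ rounds do not spread errors beyond the locally determined ``bad blob'' around $\cB$, so that the final uncorrected set is simultaneously globally tiny and locally shattered.
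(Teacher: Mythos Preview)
Your high-level plan matches the paper's: start from the list of \cref{thm: random-planting-partial-list}, clean up, then brute-force on small components. But the paper's cleanup uses the David--Feige uncoloring and safe-recoloring routines (\cref{alg: uncoloring}, \cref{alg: saferecoloring}), not Alon--Kahale minority recoloring, and your analysis of the latter has two real gaps. In (iii), the contraction does not land in $N[\cB]$. Iterating ``$v$ remains bad only if $v\in\cB$ or $v$ has $\ge \e d$ currently-bad neighbors'' yields $S_T\subseteq W_T$, where $W_0=S_0\cup\cB$ and $W_{t+1}=\{v\in W_t: v\in\cB\text{ or }v\text{ has }\ge\e d\text{ neighbors in }W_t\}$. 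The one-sided EML (\cref{lemma: emlonesided}) gives geometric contraction of $|W_t\setminus\cB|$ and hence $|W_T|=O(|\cB|)$, but \emph{not} $W_T\subseteq N[\cB]$: a vertex of $W_T\setminus\cB$ can draw all of its $W_T$-neighbors from $W_T\setminus\cB$ itself. In (ii), the stated condition $d\cdot e^{-\Omega(d/k)}\ll 1$ is a subcritical-branching heuristic, not a proof. Fragility of $v$ is a function of $\chi$ on all of $N_H[v]$, so independence requires pairwise $H$-distance $\ge 3$; a size-$t$ connected set contains only about $t/d^2$ such vertices, and the resulting bound $p^{\,t/d^2}$ with $p=e^{-\Omega(d/k)}$ is not small enough to beat the $(ed)^t$ count of connected subgraphs for any constant $d$.

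The paper avoids both issues by shifting the probability exponent from the set to its boundary. After uncoloring, \cref{lemma: approximatetopartial} (which needs only \cref{lemma: emlonesided}) guarantees the surviving coloring equals $\chi$ exactly on the colored vertices. Then, after safe recoloring, a connected set $C$ can remain uncolored only if the true colors $\chi$ on the outer boundary $N_H(C)\setminus C$ miss some color near each $v\in C$, an event of probability at most $e^{-\Omega(|N_H(C)\setminus C|/k)}$ (\cref{lemma: largecoloredneighborhoodprob}). One-sided vertex expansion (\cref{lemma: ssveonesided}) gives $|N_H(C)\setminus C|\ge \Omega(k^2)|C|$, and this exponent dominates the $\binom{|C|+|N_H(C)\setminus C|}{|C|}$ count in the union bound (\cref{lemma: partialtologncomps}). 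Your minority-recoloring route may be salvageable, but repairing (ii)--(iii) seems to require precisely this boundary-based probabilistic argument.
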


The proof and algorithm for \cref{thm: random-planting-full} adapt Theorem B.10 of \cite{MR3536556-David16}. For full details, see \cref{app: plantedcoloring}; below we outline the key modifications.

The key difference between our setting and that of \cite{MR3536556-David16} is that they assume two-sided expansion, while we only assume one-sided expansion.
The two ways in which Theorem B.10 of \cite{MR3536556-David16} uses two-sided expansion are via the Expander Mixing Lemma (EML) (used in Lemma B.17, Lemma B.19, and Lemma B.20 of \cite{MR3536556-David16})
$$e(S, T)\leq d|S|\Paren{\frac{|T|}{n}+c\sqrt{\frac{|T|}{|S|}}}$$
and via the vertex expansion lower bound (used in Lemma B.21 of \cite{MR3536556-David16}): for $|S|=\alpha n$, 
$$\Abs{N(S)\setminus S}\geq \Paren{\frac{1}{(1-\alpha)c^2+\alpha}-1}|S|\,.$$

For the former, we note that Lemmas B.19 and B.20 of \cite{MR3536556-David16} use the EML with $S=T$, and that we can avoid Lemma B.17 of \cite{MR3536556-David16} by assuming a large enough (but still constant) $d$. As the lemma below shows, for $S=T$, the EML holds even under one-sided expansion:

\begin{lemma}[Edge density upper bound]
    \label[lemma]{lemma: emlonesided}
    Let $G$ be an $n$-vertex $d$-regular graph with adjacency matrix $A$ satisfying $\lambda_2(\frac{1}{d}A)\leq c$ for some $0<c<1$.
    Then, for any set $S \subseteq [n]$,
    $$\Abs{E(G_S)}\leq\frac{d|S|}{2}\Paren{\frac{|S|}{n}+c}\,,$$
    where $E(G_S)$ is the number of edges in the graph induced by $S$.
\end{lemma}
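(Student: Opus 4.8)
The plan is to apply the variational bound on the second eigenvalue to the indicator vector of $S$. First I would observe that, since $G$ is simple (no self-loops and $A$ symmetric), $\Ind_S^\top A \Ind_S = \sum_{x,y \in S} A_{xy} = 2\Abs{E(G_S)}$, so writing $\tilde A = \tfrac1d A$ gives $\Abs{E(G_S)} = \tfrac d2 \, \Ind_S^\top \tilde A \, \Ind_S$, and it remains to upper bound this quadratic form.

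Next I would decompose $\Ind_S$ along the top eigenvector. By $d$-regularity the all-ones vector $\one$ is an eigenvector of $\tilde A$ with eigenvalue $1$, and the hypothesis $\lambda_2(\tilde A)\le c<1$ forces this eigenvalue to have multiplicity one (so $G$ is connected). Writing $\Ind_S = \tfrac{\Abs S}{n}\one + g$ with $g \perp \one$, we have $\Norm{g}^2 = \Abs S - \tfrac{\Abs S^2}{n} \le \Abs S$, and using $\tilde A\one = \one$ to cancel the cross terms,
\[
  \Ind_S^\top \tilde A \, \Ind_S = \frac{\Abs S^2}{n} + g^\top \tilde A \, g\mper
\]
The key step is to bound $g^\top \tilde A\, g$: expanding $g$ in an orthonormal eigenbasis of $\tilde A$, all of $g$ lies in eigenspaces with eigenvalue at most $\lambda_2(\tilde A) \le c$, hence $g^\top \tilde A\, g \le c\Norm{g}^2 \le c\Abs S$. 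Combining the two displays yields $\Abs{E(G_S)} \le \tfrac d2\Paren{\tfrac{\Abs S^2}{n} + c\Abs S} = \tfrac{d\Abs S}{2}\Paren{\tfrac{\Abs S}{n} + c}$, which is the claim.

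There is essentially no obstacle here; the only point worth emphasizing is that the argument uses solely the \emph{upper} bound $\lambda_2(\tilde A) \le c$ on the positive side of the spectrum. A very negative value of $g^\top \tilde A\, g$ — exactly what a large bottom eigenvalue would produce — only strengthens the upper bound on $\Abs{E(G_S)}$, so one-sided expansion suffices and no control of the negative eigenvalues is required. (By the same token one cannot hope for a matching \emph{lower} bound on $\Abs{E(G_S)}$ of this form, which is why the full Expander Mixing Lemma with $S \neq T$ genuinely needs two-sided expansion — precisely the reason this specialized version is stated only for $S = T$.)
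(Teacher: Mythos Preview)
Your proof is correct and follows essentially the same approach as the paper: both express $2\Abs{E(G_S)}=\Ind_S^\top A\,\Ind_S$, separate the contribution of the top eigenvector $\one$ (giving the $\Abs S^2/n$ term), and bound the remaining quadratic form using $\lambda_2\le c$ together with $\Norm{\Ind_S - \tfrac{\Abs S}{n}\one}^2\le \Abs S$. Your added remark explaining why only one-sided expansion is needed is a nice touch.
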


For the latter use of two-sided expansion, we can show under one-sided expansion a bound that is only slightly weaker, and suffices for our purposes by assuming slightly stronger expansion:

\begin{lemma}[Vertex expansion lower bound]
    \label[lemma]{lemma: ssveonesided}
    Let $G$ be an $n$-vertex $d$-regular graph with adjacency matrix $A$ satisfying $\lambda_2(\frac{1}{d}A)\leq c$ for some $0<c<1$.
    Then, for any set $S \subseteq [n]$ with $|S|\leq \alpha n$,
    $$\Abs{N(S)\setminus S}\geq \Paren{\frac{1}{c+\sqrt{\alpha}}-1}|S|\,,$$
    where $N(S)$ is the number of vertices outside $S$ adjacent to at least one vertex in $S$.
\end{lemma}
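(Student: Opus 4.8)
The plan is to derive this from the edge-density bound of \cref{lemma: emlonesided} together with a short edge-counting argument, so that the vertex expansion of a small set ends up being governed purely by the one-sided second eigenvalue. First I would dispose of the trivial regime: if $c + \sqrt{\alpha} \geq 1$ the claimed lower bound is non-positive, so I may assume $c + \sqrt{\alpha} < 1$, and I may also assume $S \neq \emptyset$. A consequence worth recording immediately is that $\frac{|S|}{n} + c < 1$: from $\sqrt{\alpha} < 1 - c$ we get $\alpha < (1-c)^2$, hence $\frac{|S|}{n} + c \leq \alpha + c < 1 - c(1-c) \leq 1$.

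Write $s = |S|/n$, set $T = N(S) \setminus S$ (so $|T| = |N(S)\setminus S|$), let $W = S \cup T$, and write $w = |W|/n$. The structural point is that every edge with exactly one endpoint in $S$ has its other endpoint in $T$; counting the $d|S|$ edge-endpoints incident to $S$ gives $2\Abs{E(G_S)} + \Abs{E(S,T)} = d|S|$, where $\Abs{E(S,T)}$ counts edges between $S$ and $T$. All these edges lie in the induced subgraph $G_W$, so $\Abs{E(G_W)} \geq \Abs{E(G_S)} + \Abs{E(S,T)} = d|S| - \Abs{E(G_S)}$. Now I would apply \cref{lemma: emlonesided} twice: to $S$ it gives $\Abs{E(G_S)} \leq \tfrac{d|S|}{2}\Paren{s + c}$, and to $W$ it gives $\Abs{E(G_W)} \leq \tfrac{d|W|}{2}\Paren{w + c}$. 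Chaining these, $\tfrac{dsn}{2}\Paren{2 - s - c} \leq \Abs{E(G_W)} \leq \tfrac{dwn}{2}\Paren{w + c}$, i.e.\ $s\Paren{2 - s - c} \leq w\Paren{w + c}$.

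It then remains to turn this into a lower bound on $w$, hence on $|T| = (w-s)n$. Since $s + c < 1$ we have $2 - s - c > 1$, so $w^2 + cw > s$; as $s > 0$ and $w \geq 0$, this forces $w$ above the positive root of $x^2 + cx - s$, namely $w > \tfrac{2s}{c + \sqrt{c^2 + 4s}}$. Using $\sqrt{c^2 + 4s} \leq c + 2\sqrt{s} \leq c + 2\sqrt{\alpha}$ (subadditivity of the square root, then $s \leq \alpha$), the denominator is at most $2\Paren{c + \sqrt{\alpha}}$, so $w > \tfrac{s}{c + \sqrt{\alpha}}$, which rearranges to $|N(S)\setminus S| = (w - s)n > \Paren{\tfrac{1}{c+\sqrt{\alpha}} - 1}|S|$, as desired. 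The step I expect to need the most care is ensuring the quadratic inequality is exploited on the correct side and that the degenerate case of a small (or empty) $T$ cannot occur; this is precisely what the reduction to $c + \sqrt{\alpha} < 1$ handles, since via $s + c < 1$ it guarantees $w^2 + cw > s > 0$ and in particular $T \neq \emptyset$.
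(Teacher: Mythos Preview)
Your proof is correct and follows essentially the same route as the paper: set $W=S\cup N(S)$, bound $\Abs{E(G_W)}$ from above via \cref{lemma: emlonesided} and from below by counting edges incident to $S$, and solve the resulting quadratic in $|W|/|S|$. The only noteworthy difference is that the paper uses the cruder lower bound $\Abs{E(G_W)}\ge \tfrac{d|S|}{2}$ directly (all edge-endpoints in $S$ lie in $W$), whereas you invoke \cref{lemma: emlonesided} a second time on $S$ to get the sharper $\Abs{E(G_W)}\ge \tfrac{d|S|}{2}(2-s-c)$ and then discard the extra factor via $2-s-c>1$; this is harmless but unnecessary.
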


The proofs of \cref{lemma: emlonesided} and \cref{lemma: ssveonesided} are also in \cref{app: plantedcoloring}.

\subsection{Independent sets}

We state now our result for independent sets.

\begin{theorem}[Independent sets, small bottom rank]
\label[theorem]{thm:independentsetrecovery}
    Let $G$ be an $n$-vertex $d$-regular graph with adjacency matrix $A$, and let $\tilde{A} = \frac{1}{d}A$.
    For $0 < \gamma < 1/4$, suppose $G$ contains an independent set $I \subseteq [n]$ of size $\Paren{\frac{1}{2}-\gamma}n$.
    Suppose $\rank_{\leq -\lambda}(\tilde{A}) \leq t$ for some $0 < \lambda < 1$ and $t \in \mathbb{N}$.
    Then there exists an algorithm with runtime $\poly(n) \cdot (\gamma/(1-\lambda))^{-O(t)}$ that returns an independent set of size at least $\Paren{1/2-\gamma - (4+c)\gamma/(1-\lambda)}n$ for any constant $c > 0$.
\end{theorem}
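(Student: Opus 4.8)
The plan is to follow the bottom-eigenspace enumeration strategy sketched in \cref{sec:techniques}, instantiated with the given parameters; since the bottom-threshold-rank bound is a hypothesis here, no appeal to \cref{cor:spectral_small_to_small} is needed.

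\emph{Setting up the Hoffman witness.}
Write $\mu = \tfrac12-\gamma$ for the relative size of $I$ and let $f = \Ind_I - \mu\cdot\mathbf 1$, which is orthogonal to the constant functions. As computed in \cref{sec:techniques}, $\|f\|^2 = \mu(1-\mu) = \tfrac14-\gamma^2$, and since $I$ is independent $\langle f,\tilde A f\rangle = -\tfrac{\mu}{1-\mu}\,\|f\|^2 = -\rho\,\|f\|^2$ with $\rho = \tfrac{1-2\gamma}{1+2\gamma}$. Let $P$ be the orthogonal projection onto the span of the eigenvectors of $\tilde A$ with eigenvalue $\le-\lambda$, so $\operatorname{rank}(P) = \rank_{\le -\lambda}(\tilde A)\le t$, and let $Q = I-P$. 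Decomposing $\langle f,\tilde A f\rangle = \langle Pf,\tilde A Pf\rangle + \langle Qf,\tilde A Qf\rangle \ge -\|Pf\|^2 - \lambda\|Qf\|^2$ — using $\|\tilde A\|\le1$ on $\operatorname{range}(P)$ and that every eigenvalue of $\tilde A$ on $\operatorname{range}(Q)$ exceeds $-\lambda$ — and rearranging with $\|Pf\|^2 = \|f\|^2-\|Qf\|^2$, I get $(1-\lambda)\|Qf\|^2 \le (1-\rho)\|f\|^2$, that is,
\[\|Qf\|^2 \;\le\; \frac{1-\rho}{1-\lambda}\,\|f\|^2 \;=\; \frac{4\gamma}{(1+2\gamma)(1-\lambda)}\,\|f\|^2 \;=:\; \theta\,\|f\|^2 .\]
I may assume $(4+c)\gamma/(1-\lambda) < \tfrac12-\gamma$, since otherwise the target set has nonpositive size and there is nothing to prove; this forces $\gamma/(1-\lambda) < \tfrac18$, hence $\theta<\tfrac12$, $\lambda<\rho$, and $\|Pf\|^2 = (1-\theta)\|f\|^2>0$.

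\emph{The algorithm.}
Diagonalize $\tilde A$ and let $U$ be the span of its eigenvectors with eigenvalue $\le-\lambda$, of dimension at most $t$. Fix $\varepsilon = c'\,\gamma/(1-\lambda)$ for a small constant $c'=c'(c)>0$, build (via \cref{lemma:subspace_enumeration}) an $\varepsilon$-net $\mathcal N$ of the unit ball of $U$, of size $(1/\varepsilon)^{O(t)}$, and a grid $\mathcal S \subseteq [0,\|f\|]$ of scalings with spacing $\varepsilon\|f\|$, of size $O(1/\varepsilon)$ (note $\|f\|=\sqrt{\tfrac14-\gamma^2}$ is known). For every $\hat g\in\mathcal N$ and $s\in\mathcal S$, form $\hat h = s\,\hat g$, let $\hat J = \{x : \hat h(x)\ge\gamma\}$, compute a maximal matching in the induced subgraph $G[\hat J]$, delete the set of its endpoints, and record the resulting independent set; output the largest one found. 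The running time is $n^{O(1)}\cdot(1/\varepsilon)^{O(t)} = \poly(n)\cdot(\gamma/(1-\lambda))^{-O(t)}$.

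\emph{Correctness.}
Let $u^* = Pf/\|Pf\| \in U$. There exist $\hat g^*\in\mathcal N$ with $\|\hat g^*-u^*\|\le\varepsilon$ and $s^*\in\mathcal S$ with $|s^*-\|Pf\||\le\varepsilon\|f\|$, so $\hat h^* = s^*\hat g^*$ obeys $\|\hat h^* - Pf\|\le s^*\|\hat g^*-u^*\|+|s^*-\|Pf\|| \le 2\varepsilon\|f\|$ and hence $\|\hat h^* - f\|\le 2\varepsilon\|f\|+\|Qf\| \le (2\varepsilon+\sqrt\theta)\|f\|$. Since $f$ equals $\tfrac12+\gamma$ on $I$ and $-(\tfrac12-\gamma)$ off $I$, any $x$ for which $\hat h^*(x)$ lies on the wrong side of the threshold $\gamma$ satisfies $|\hat h^*(x)-f(x)|\ge\tfrac12$; by Markov's inequality, $|\hat J\triangle I| \le 4\|\hat h^*-f\|^2\,n \le 4\|f\|^2(2\varepsilon+\sqrt\theta)^2\,n$. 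Using $4\|f\|^2\theta = \tfrac{4\gamma(1-2\gamma)}{1-\lambda}\le\tfrac{4\gamma}{1-\lambda}$ and $\varepsilon/\sqrt\theta\le c'$ (from $\theta\ge\tfrac{2\gamma}{1-\lambda}$), expanding the square gives $|\hat J\triangle I| \le \tfrac{4\gamma}{1-\lambda}\bigl(1+O(c')\bigr)n \le \tfrac{(4+c)\gamma}{1-\lambda}\,n$ once $c'$ is small enough in terms of $c$. Finally, because $I$ is independent, every edge of $G[\hat J]$ has an endpoint in $\hat J\setminus I$; so the maximal matching has at most $|\hat J\setminus I|$ edges and its endpoint set has size at most $2|\hat J\setminus I|$, and deleting it leaves an independent set of size at least $|\hat J| - 2|\hat J\setminus I| = |I| - |\hat J\triangle I| \ge (\tfrac12-\gamma)n - \tfrac{(4+c)\gamma}{1-\lambda}n$, which the algorithm outputs or exceeds.

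\emph{Main obstacle.}
The delicate point is forcing the leading constant to be exactly $4$. This dictates two design choices: rescaling by an approximation of $\|Pf\|$ (not of $\|f\|$), so that the threshold-rounding gap is the full $1 = (\tfrac12+\gamma) - (-(\tfrac12-\gamma))$ and Markov's inequality costs only the factor $1/(\tfrac12)^2 = 4$; and the observation that the $2$-approximate (maximal-matching) vertex cover used in the cleanup is essentially free, because its factor of $2$ telescopes against the $+|\hat J\setminus I|$ term in $|\hat J|$ and recovers the bound one would obtain from an exact minimum vertex cover. Everything else is a routine $\varepsilon$-net estimate.
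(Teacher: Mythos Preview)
Your proof is correct and follows essentially the same route as the paper's: project the indicator onto the bottom eigenspace, bound the projection error by the Hoffman Rayleigh-quotient argument, enumerate a net of that subspace, threshold, and clean up with a $2$-approximate vertex cover. The only difference is cosmetic --- you work with the centered indicator $f=\Ind_I-\mu\mathbf 1$ and threshold at $\gamma$, while the paper uses the $\pm1/\sqrt n$ sign vector and thresholds at $0$; the two are affinely related and yield the same leading constant $4$.
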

    
By \cref{cor:spectral_small_to_small}, we immediately get a result under the assumption that the top threshold rank is small.

\begin{corollary}[Independent sets, small top rank]
\label[corollary]{cor:independentsetrecovery}
    Let $G$ be an $n$-vertex $d$-regular graph with adjacency matrix $A$, and let $\tilde{A} = \frac{1}{d}A$.
    For $0 < \gamma < 1/4$, suppose $G$ contains an independent set $I \subseteq [n]$ of size $\Paren{\frac{1}{2}-\gamma}n$.
    Suppose $\rank_{\geq \lambda}(\tilde{A}) \leq t$ for some $0 < \lambda < 1$ and $t \in \mathbb{N}$.
    Then there exists an algorithm with runtime $\poly(n) \cdot (\gamma/(1-\sqrt{2\lambda}))^{-O(t/\lambda^2)}$ that returns an independent set of size at least $\Paren{1/2-\gamma - (4+c)\gamma/(1-\sqrt{2\lambda})}n$ for any constant $c > 0$.
\end{corollary}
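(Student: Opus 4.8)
This statement is an immediate consequence of \cref{thm:independentsetrecovery} together with \cref{cor:spectral_small_to_small}, which converts a bound on the top threshold rank into a bound on the bottom threshold rank. The plan is to first translate the hypothesis $\rank_{\geq \lambda}(\tilde A)\le t$ into a bottom-rank bound, and then feed this into \cref{thm:independentsetrecovery}, tuning one free parameter so that the resulting guarantees are phrased in terms of $1-\sqrt\lambda$.

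Concretely, I would apply \cref{cor:spectral_small_to_small} with $\tau=\lambda$, $s=t$, and a parameter $\sigma\in(0,1)$ to be fixed later, obtaining
\[
\rank_{\leq -\lambda'}(\tilde A)\leq t/\sigma^2,\qquad \text{where }\lambda':=\sqrt{\lambda(1-\sigma)+\sigma}=\sqrt{\lambda+\sigma(1-\lambda)}.
\]
Then I would invoke \cref{thm:independentsetrecovery} on $G$ with the same independent set of size $(\tfrac12-\gamma)n$, with bottom-threshold parameter $\lambda'$, bottom-threshold-rank bound $t':=t/\sigma^2$, and auxiliary constant $c':=c/2$. This produces an algorithm with runtime $\poly(n)\cdot(\gamma/(1-\lambda'))^{-O(t')}$ returning an independent set of size at least $\bigl(\tfrac12-\gamma-(4+c')\tfrac{\gamma}{1-\lambda'}\bigr)n$.

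It remains to choose $\sigma$. Since $\lambda'\geq\sqrt\lambda$ (because $\sigma(1-\lambda)>0$) and $\lambda'\to\sqrt\lambda$ as $\sigma\to 0$, I would fix $\sigma$ to be a small enough absolute constant — depending only on $\lambda$ and $c$, and crucially \emph{not} on $\gamma$ — so that $\tfrac{4+c/2}{1-\lambda'}\leq\tfrac{4+c}{1-\sqrt\lambda}$; this upgrades the size bound to the claimed $\bigl(\tfrac12-\gamma-(4+c)\tfrac{\gamma}{1-\sqrt\lambda}\bigr)n$. For the runtime: with $\sigma$ an absolute constant we have $t'=t/\sigma^2=O(t)$, and since $1-\lambda'\leq 1-\sqrt\lambda$ we get $(\gamma/(1-\lambda'))^{-O(t')}\leq(\gamma/(1-\sqrt\lambda))^{-O(t)}$, as required.

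There is essentially no genuine obstacle here; all the content lives in the two cited results. The only point requiring minor care is that passing from the ``ideal'' threshold $\sqrt\lambda$ to the attainable threshold $\lambda'=\sqrt{\lambda(1-\sigma)+\sigma}$ incurs both a multiplicative loss in the factor $\tfrac{1}{1-\lambda'}$ and a $1/\sigma^2$ blowup in the exponent. Both losses are harmless precisely because $\sigma$ can be taken to be an absolute constant (so the exponent stays $O(t)$) and because the target statement already tolerates the arbitrary additive slack $c$ in the coefficient $4$; one should double-check only that $\sigma$ can be chosen independently of $\gamma$, which holds since $\lambda'$ does not depend on $\gamma$.
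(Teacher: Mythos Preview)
Your proposal is correct and follows exactly the same two-step route as the paper: apply \cref{cor:spectral_small_to_small} to convert the top-rank hypothesis into a bottom-rank bound, then invoke \cref{thm:independentsetrecovery}. In fact you are a bit more careful than the paper's own write-up: the paper simply sets $\sigma=c$ and notes $\lambda'=\sqrt{\lambda(1-c)+c}\ge\sqrt\lambda$ before declaring ``the result then follows,'' whereas you explicitly choose $\sigma$ small enough (depending on $\lambda,c$ but not $\gamma$) so that the factor $(4+c/2)/(1-\lambda')$ is dominated by $(4+c)/(1-\sqrt\lambda)$, which is what is actually needed for the stated size bound.
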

\begin{proof}
    Because $\rank_{\geq \lambda}(\tilde{A}) \leq t$, we get by applying \cref{cor:spectral_small_to_small} with $\sigma = \lambda$ that $\rank_{\leq -\sqrt{2\lambda}}(\tilde{A}) \leq t/\lambda^2$.
    The result then follows by \cref{thm:independentsetrecovery}.
\end{proof}

To prove \cref{thm:independentsetrecovery}, we first prove that the indicator vector of the independent set is close to the subspace spanned by the bottom $t$ eigenvectors of the adjacency matrix:

\begin{lemma}[Closeness to bottom eigenvectors]
    \label[lemma]{lemma:strongerprojectionargument}
    In the same setting as \cref{thm:independentsetrecovery}, let $u \in \R^n$ have $u_x = \frac{1}{\sqrt{n}}$ for $x \in I$ and $u_x = -\frac{1}{\sqrt{n}}$ for $x \not\in I$.
    Then, there exists a vector $\tilde{u} \in \R^n$ with $\norm{\tilde{u}} \leq \norm{u}$ and $||\tilde{u}-u||^2\leq \frac{4\gamma}{1-\lambda}$ such that $\tilde{u}$ is in the span of the bottom $t$ eigenvectors of $\tilde{A}$.
\end{lemma}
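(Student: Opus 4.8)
The plan is to combine an exact computation of the Rayleigh quotient $u^{\top}\tilde{A}u$ with a short spectral projection argument, exactly in the spirit of the informal discussion around Hoffman's bound in \cref{sec:techniques}.

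First I would compute $u^{\top}\tilde{A}u$ exactly. Writing $u=\tfrac{1}{\sqrt{n}}\epsilon$ with $\epsilon\in\{\pm 1\}^{n}$ equal to $+1$ on $I$ and $-1$ on $[n]\setminus I$, we have $u^{\top}\tilde{A}u=\tfrac{1}{nd}\sum_{\{x,y\}\in E(G)}2\,\epsilon_{x}\epsilon_{y}$. Edges inside $I$ contribute nothing because $I$ is independent; by $d$-regularity the number of edges between $I$ and $[n]\setminus I$ is $d\abs{I}=d(\tfrac12-\gamma)n$, and then counting degrees inside $[n]\setminus I$ shows the number of edges inside $[n]\setminus I$ is $\gamma d n$. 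Substituting, $u^{\top}\tilde{A}u=4\gamma-1=-(1-4\gamma)$, which is negative since $\gamma<\tfrac14$. Note also $\norm{u}^{2}=1$, and that every eigenvalue of $\tilde{A}$ lies in $[-1,1]$ by $d$-regularity.

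Next, let $P$ be the orthogonal projection onto the span of the eigenvectors of $\tilde{A}$ with eigenvalue at most $-\lambda$, and let $Q=I_{n}-P$. Since $\rank_{\leq-\lambda}(\tilde{A})\leq t$, the range of $P$ has dimension at most $t$, and in particular (these eigenvalues being the smallest of $\tilde{A}$) it is contained in the span of the bottom $t$ eigenvectors of $\tilde{A}$; set $\tilde{u}:=Pu$, so $\norm{\tilde{u}}\leq\norm{u}$ automatically and $\norm{\tilde{u}-u}^{2}=\norm{Qu}^{2}$. Because $P$ and $Q$ commute with $\tilde{A}$ and have orthogonal ranges, $u^{\top}\tilde{A}u=(Pu)^{\top}\tilde{A}(Pu)+(Qu)^{\top}\tilde{A}(Qu)$. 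Bounding the first term below by $-\norm{Pu}^{2}$ (eigenvalues are $\geq-1$) and the second below by $-\lambda\norm{Qu}^{2}$ (the range of $Q$ only involves eigenvalues $>-\lambda$), and using $\norm{Pu}^{2}=1-\norm{Qu}^{2}$, we obtain $-(1-4\gamma)=u^{\top}\tilde{A}u\geq -1+(1-\lambda)\norm{Qu}^{2}$, which rearranges to $\norm{Qu}^{2}\leq\tfrac{4\gamma}{1-\lambda}$, i.e.\ $\norm{\tilde{u}-u}^{2}\leq\tfrac{4\gamma}{1-\lambda}$, as claimed.

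This argument is routine and I do not foresee a genuine obstacle. The two points that merit care are (i) pinning down the constant $4\gamma$ in the Rayleigh-quotient step, where one must correctly relate the edge counts inside $[n]\setminus I$ and between $I$ and $[n]\setminus I$ using $d$-regularity and the independence of $I$; and (ii) the bookkeeping that the span of eigenvectors with eigenvalue $\leq-\lambda$ sits inside the span of the bottom $t$ eigenvectors, which is immediate from $\rank_{\leq-\lambda}(\tilde{A})\leq t$ (any ties at the threshold $-\lambda$ are harmless---one simply takes $t$ eigenvectors realizing the $t$ smallest eigenvalues). Everything else is a one-line manipulation of the spectral decomposition of $\tilde{A}$.
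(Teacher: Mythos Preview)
Your proposal is correct and follows essentially the same approach as the paper: compute $u^{\top}\tilde{A}u=-(1-4\gamma)$ from the edge counts determined by $d$-regularity and the independence of $I$, then use the spectral decomposition to bound $\norm{u-Pu}^{2}\leq\frac{4\gamma}{1-\lambda}$ and set $\tilde{u}=Pu$. The only cosmetic difference is that the paper takes $P$ to be the projection onto the bottom $t$ eigenvectors directly, whereas you project onto the eigenspace with eigenvalue $\leq-\lambda$ and then observe this sits inside the span of the bottom $t$ eigenvectors; both versions yield the same bound.
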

\begin{proof}
    First, we have 
    \[u^\top \tilde{A} u = \frac{2}{nd} \Paren{ E(\bar{I}, \bar{I}) - E(I, \bar{I}) } = \frac{2}{nd} \Paren{\gamma n d - \Paren{\frac{1}{2}-\gamma}nd } = -(1-4\gamma)\,.\]
    Second, we derive a lower bound on $u^\top \tilde{A} u$.
    Let $P \in \R^{n \times n}$ be the orthogonal projection to the subspace spanned by the bottom $t$ eigenvectors of $\tilde{A}$.
    Then, because $\rank_{\leq -\lambda}(\tilde{A}) \leq t$ and all eigenvalues of $\tilde{A}$ lie in $[-1, 1]$, we get
    \begin{align*}
    u^\top \tilde{A} u
    &\geq -\lambda \cdot \Paren{1-\Norm{Pu}^2} + (Pu)^\top \tilde{A} (Pu)\\
    &\geq -\lambda \cdot \Paren{1-\Norm{Pu}^2} - \Norm{Pu}^2\\
    &= -\lambda - (1-\lambda) \Norm{Pu}^2\,.
    \end{align*}
    Then, plugging in $u^\top \tilde{A} u = -(1-4\gamma)$ and rearranging, we get
    \[\Norm{Pu}^2 \geq \frac{1-\lambda-4\gamma}{1-\lambda} = 1 - \frac{4\gamma}{1-\lambda}\,.\]
    In particular, $\Norm{u - Pu}^2 = 1 - \Norm{Pu}^2 \leq \frac{4\gamma}{1-\lambda}$.
    Letting $\tilde{u} = Pu$ gives the desired conclusion.
\end{proof}

Furthermore, we can round a vector close to the indicator vector of the independent set:
\begin{lemma}[Independent set rounding]
    \label[lemma]{lemma:independentsetrounding}
    Let $G$ be an $n$-vertex graph with vertex set $V$ that contains an independent set $I \subseteq [n]$.
    Let $u \in \R^n$ have $u_x = \frac{1}{\sqrt{n}}$ for $x \in I$ and $u_x = -\frac{1}{\sqrt{n}}$ for $x \notin I$.
    Also let $\hat{u} \in \R^n$ satisfy $\Norm{\hat{u}-u}^2\leq \alpha$ for some $\alpha \geq 0$.
    Then, given as input $G$ and $\hat{u}$, there exists an algorithm with runtime $\poly(n)$ that returns an independent set of size at least $|I|-\alpha n$.
\end{lemma}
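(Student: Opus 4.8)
The plan is to round \(\hat u\) by a sign threshold and then clean up the resulting vertex set with a maximal matching. As a first candidate I would take \(S = \{x \in V : \hat u_x > 0\}\). The heuristic is that \(u\) is a \(\pm 1/\sqrt n\) indicator of \(I\), so \(S\) should differ from \(I\) on only an \(O(\alpha n)\)-fraction of vertices; and since \(I\) is independent, the induced subgraph \(G[S]\) will then have a small vertex cover and can be made independent by deleting few vertices.

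First I would bound the symmetric difference \(I \triangle S\). If \(x \in I\) then \(u_x = 1/\sqrt n\), so \(x \notin S\) forces \((\hat u_x - u_x)^2 \ge 1/n\); if \(x \notin I\) then \(u_x = -1/\sqrt n\), so \(x \in S\) forces \((\hat u_x - u_x)^2 \ge 1/n\). Summing over all vertices gives \(\tfrac1n \Abs{I \triangle S} \le \Norm{\hat u - u}^2 \le \alpha\), hence \(\Abs{I \triangle S} \le \alpha n\). Writing \(a = \Abs{I \setminus S}\) and \(b = \Abs{S \setminus I}\), we thus have \(a + b \le \alpha n\) and \(\Abs{S} = \Abs{I} - a + b\).

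Next I would compute a maximal matching \(M\) in \(G[S]\) greedily (in \(\poly(n)\) time) and output \(S\) with both endpoints of every edge of \(M\) deleted. This set is independent because \(M\) is maximal: any remaining edge of \(G[S]\) disjoint from the matched vertices could be added to \(M\). For the size bound, the key point is that \(S \setminus I\) is a vertex cover of \(G[S]\) — as \(I\) is independent, every edge inside \(S\) has an endpoint outside \(I\) — and a matching never exceeds a vertex cover, so \(\Abs{M} \le \Abs{S \setminus I} = b\). Therefore the returned set has size at least \(\Abs{S} - 2\Abs{M} \ge \Abs{S} - 2b = \Abs{I} - a - b \ge \Abs{I} - \alpha n\), which is what we need (and if this quantity is nonpositive the empty set trivially works).

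I do not expect a real obstacle here; the only subtlety is that the algorithm cannot access \(I\), which is why the naive choice "return \(S \cap I\)" is unavailable and a matching-based cleanup is needed. This costs a factor of \(2\) on the deletion count, but since \(\Abs{S}\) already exceeds \(\Abs{I}\) by \(b\), the net loss is only \(a + b \le \alpha n\), exactly matching the claimed bound.
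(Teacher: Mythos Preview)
Your proposal is correct and essentially identical to the paper's proof. The paper thresholds at \(\hat u_x \ge 0\) and then invokes a separate ``rounding an approximate independent set'' lemma whose proof is exactly your maximal-matching/2-approximation vertex cover argument; you have simply inlined that lemma and carried out the same \(|S|-2b = |I|-a-b\) bookkeeping.
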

\begin{proof}
    Letting $\hat{I} = \Set{x \in V \mid \hat{u}_x \geq 0}$, by $\Norm{\hat{u}-u}^2\leq \alpha$ we have $\Abs{I\triangle \hat{I}}\Paren{\frac{1}{\sqrt{n}}}^2\leq \alpha$.
    Noting that $\hat{I}$ is a superset of the independent set $\hat{I}\cap I$, by applying \cref{lemma: generalrounding} on $\hat{I}$, we get an independent set of size at least $\Abs{\hat{I}\cap I} - \Abs{\hat{I}\setminus \Paren{\hat{I}\cap I}}=\Abs{I} - \Abs{I\triangle \hat{I}}\geq \Abs{I} - \alpha n$.
\end{proof}

Given \cref{lemma:strongerprojectionargument} and \cref{lemma:independentsetrounding}, we prove~\Cref{thm:independentsetrecovery}.

\begin{proof}[Proof of~\Cref{thm:independentsetrecovery}]
We iterate over a net of the subspace spanned by the bottom $t$ eigenvectors of $\tilde{A}$ in order to find a vector close to $u$ as defined in~\Cref{lemma:independentsetrounding}, and then use \cref{lemma:independentsetrounding} to round it to an independent set.

By standard arguments, we can construct a $c \sqrt{\gamma/(1-\lambda)}$-cover of the unit ball in the subspace spanned by the bottom $t$ eigenvectors of $\tilde{A}$ in time $(\gamma/(1-\lambda))^{-O(t)}$, for any constant $c > 0$.
The cover will have size $O(\sqrt{\gamma/(1-\lambda)})^{-t}$.
Then, we iterate over all $\hat{u}$ in this cover and for each of them we apply the rounding algorithm in \cref{lemma:independentsetrounding} to construct an independent set.
We return the largest independent set thus obtained.

For correctness, we note by~\Cref{lemma:strongerprojectionargument} that there exists some $\tilde{u}$ in the span of the bottom $t$ eigenvectors with $\norm{\tilde{u}} \leq \norm{u} \leq 1$ that is $2\sqrt{\gamma/(1-\lambda)}$-close to $u$.
Then the $c \sqrt{\gamma/(1-\lambda)}$-net that we construct is guaranteed to find some $\hat{u}$ that is $c \sqrt{\gamma/(1-\lambda)}$-close to $\tilde{u}$ and thus $(2+c)\sqrt{\gamma/(1-\lambda)}$-close to $u$.
Then, by~\Cref{lemma:independentsetrounding}, the returned independent set has size at least $\Paren{1/2-\gamma - (4+O(c))\gamma/(1-\lambda)}n$. The time complexity is polynomial in $n$ and $(\gamma/(1-\lambda))^{-O(t)}$.
\end{proof}

\subsection{Rounding tools}

First we state a general rounding algorithm that extracts an independent set from an approximate independent set.

\begin{lemma}[Rounding an approximate independent set]
    \label[lemma]{lemma: generalrounding}
    Let $G$ be an $n$-vertex graph and let $I \subseteq [n]$ be an independent set in it.
    Then, given a set $S\supseteq I$, there is a polynomial-time algorithm that returns an independent set $I'$ such that $|I'| \geq |I| - |S \setminus I|$.
\end{lemma}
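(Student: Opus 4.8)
The plan is to build the independent set $I'$ greedily by discarding, from the given superset $S$, a vertex cover of the edges induced by $S$. Concretely, first I would observe that since $I$ is independent and $I \subseteq S$, every edge of $G$ with both endpoints in $S$ must have at least one endpoint in $S \setminus I$. Hence $S \setminus I$ is itself a vertex cover of the induced subgraph $G[S]$. This immediately tells us that $G[S]$ has a vertex cover of size at most $|S \setminus I|$, even though we do not know $I$.

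The algorithm then is: compute a maximal matching $\mathcal{M}$ in $G[S]$ greedily in polynomial time, let $C$ be the set of endpoints of the matching edges, and output $I' := S \setminus C$. Since $\mathcal{M}$ is a maximal matching, $C$ is a vertex cover of $G[S]$, so $I'$ is independent in $G[S]$ and hence in $G$. For the size bound, I would use that any vertex cover of $G[S]$ — in particular $S \setminus I$ — must meet every edge of $\mathcal{M}$, and since the edges of $\mathcal{M}$ are disjoint, $|S \setminus I| \geq |\mathcal{M}| = |C|/2 \geq |C \setminus I|$ (the last inequality because the vertices of $C$ that lie in $I$ contribute nothing to the lower bound on $|S\setminus I|$; more carefully, $|C| = 2|\mathcal{M}| \le 2|S\setminus I|$, but we want the sharper $|C \setminus I| \le |S \setminus I|$). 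Then $|I'| = |S| - |C| \geq |S| - |C|$, and $I \setminus C = I \cap I'$ has size $|I| - |I \cap C| \ge |I| - |C \setminus I| \cdot \mathbf{1}$... — let me instead argue directly: $I' \supseteq I \setminus C$, and $|I \setminus C| = |I| - |I \cap C|$. Each vertex of $I \cap C$ is an endpoint of some matching edge whose \emph{other} endpoint lies in $S \setminus I$ (since $I$ is independent, both endpoints of a matching edge cannot be in $I$). Distinct matching edges give distinct such other-endpoints, so $|I \cap C| \le |S \setminus I|$. Therefore $|I'| \ge |I \setminus C| \ge |I| - |S \setminus I|$, as desired.

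The main (minor) obstacle is just getting the double-counting right: one must be careful that the bound $|I \cap C| \le |S \setminus I|$ uses independence of $I$ to route each ``bad'' vertex in $I \cap C$ to a distinct witness in $S \setminus I$ via its matching partner. Everything else — maximality of the matching giving a vertex cover, polynomial running time of greedy matching — is entirely standard, so no real difficulty arises. I would present the proof in three short steps: (i) $S \setminus I$ is a vertex cover of $G[S]$; (ii) the greedy maximal matching yields an independent $I' = S \setminus C$; (iii) the counting bound $|I'| \ge |I| - |S \setminus I|$.

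\begin{proof}
Since $I$ is independent in $G$ and $I \subseteq S$, every edge of $G$ with both endpoints in $S$ has at least one endpoint outside $I$, i.e., in $S \setminus I$. Thus $S \setminus I$ is a vertex cover of the induced subgraph $G[S]$.

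The algorithm proceeds as follows: it computes, greedily in polynomial time, a maximal matching $\mathcal{M}$ in $G[S]$, lets $C$ be the set of the (at most $2|\mathcal{M}|$) endpoints of the edges of $\mathcal{M}$, and returns $I' := S \setminus C$. Since $\mathcal{M}$ is maximal, $C$ is a vertex cover of $G[S]$, so $I'$ contains no edge of $G[S]$ and is therefore independent in $G$.

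It remains to bound $|I'|$ from below. Observe that $I' \supseteq I \setminus C$, so $|I'| \geq |I| - |I \cap C|$. We claim $|I \cap C| \leq |S \setminus I|$. Indeed, consider any vertex $v \in I \cap C$; then $v$ is an endpoint of some edge $\{v, w\} \in \mathcal{M}$, and since $I$ is independent, $w \notin I$, i.e., $w \in S \setminus I$. The map $v \mapsto w$ is injective because the edges of $\mathcal{M}$ are pairwise vertex-disjoint. Hence $|I \cap C| \leq |S \setminus I|$, and therefore $|I'| \geq |I| - |S \setminus I|$.
\end{proof}
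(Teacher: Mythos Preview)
Your proposal is correct and takes essentially the same approach as the paper: both observe that $S\setminus I$ is a vertex cover of $G[S]$, compute a $2$-approximate vertex cover $C$ (your greedy maximal matching \emph{is} the standard $2$-approximation), and output $I'=S\setminus C$. The paper's counting is the one-liner $|I'|=|S|-|C|\ge|S|-2|S\setminus I|=|I|-|S\setminus I|$, whereas you route through $|I'|\ge|I|-|I\cap C|$ and bound $|I\cap C|\le|S\setminus I|$ via the matching-partner injection; both are fine.
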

\begin{proof}
    Note that $S\setminus I$ is a vertex cover of $G_S$.
    Hence, using the $2$-approximation algorithm for the minimum vertex cover problem, we can compute a vertex cover $C$ of $G_S$ in polynomial time such that $|C| \leq 2|S \setminus I|$.
    Then we output $I':=S\setminus C$, which by construction is an independent set with size at least $|S| - 2|S \setminus I|=|I| - |S \setminus I|$, finishing the proof.
\end{proof}

By applying \cref{lemma: generalrounding} to $k$ color classes, we can extract a coloring from an approximate coloring.

\begin{lemma}[Rounding an approximate $k$-coloring]
    \label[lemma]{cor: kcoloringrounding}
    Let $G$ be an $n$-vertex graph and let $\chi: [n] \to [k]$ be a $k$-partition of its vertices.
    Suppose that $G$ has a vertex cover of all edges with both endpoints in the same part of size at most $\e n$ with respect to $\chi$.
    Then, given $\chi^{-1}(1), \ldots, \chi^{-1}(k)$, there is a polynomial-time algorithm that returns a $(k, 2\e)$-coloring of $G$.
\end{lemma}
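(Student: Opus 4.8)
The plan is to apply \cref{lemma: generalrounding} separately to each of the $k$ color classes $\chi^{-1}(1),\ldots,\chi^{-1}(k)$ and output the resulting $k$ independent sets. The only points to verify are that the promise ``$G$ has a vertex cover of all monochromatic edges of size at most $\e n$'' translates into a ``superset of an independent set'' hypothesis for each individual class, and that the per-class rounding losses add up to the claimed global bound.

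First I would fix (for the analysis only) a vertex cover $C^*\subseteq[n]$ of all edges of $G$ with both endpoints in the same part of $\chi$, with $\Abs{C^*}\le \e n$; such a $C^*$ exists by hypothesis, but the algorithm will \emph{not} need access to it. For each color $a\in[k]$ set $I_a:=\chi^{-1}(a)\setminus C^*$. Since $C^*$ covers in particular every edge with both endpoints in $\chi^{-1}(a)$, the set $I_a$ is independent in $G$, and trivially $\chi^{-1}(a)\supseteq I_a$ with $\Abs{\chi^{-1}(a)\setminus I_a}=\Abs{C^*\cap\chi^{-1}(a)}$.

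Next I would run the algorithm of \cref{lemma: generalrounding} on the input $(G,\chi^{-1}(a))$ for each $a\in[k]$; each call runs in polynomial time and, because $\chi^{-1}(a)\supseteq I_a$, returns an independent set $I_a'$ with $\Abs{I_a'}\ge\Abs{I_a}-\Abs{\chi^{-1}(a)\setminus I_a}=\Abs{\chi^{-1}(a)}-2\Abs{C^*\cap\chi^{-1}(a)}$. The output is the list $I_1',\ldots,I_k'$. These sets are pairwise disjoint since $I_a'\subseteq\chi^{-1}(a)$ and the color classes are disjoint. Summing the size bounds over $a$ and using $\sum_a\Abs{\chi^{-1}(a)}=n$ together with $\sum_a\Abs{C^*\cap\chi^{-1}(a)}=\Abs{C^*}\le\e n$ yields $\sum_{a\in[k]}\Abs{I_a'}\ge n-2\e n$, so the $k$ independent sets cover at least $(1-2\e)n$ vertices. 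By \cref{def:kdelta} this is exactly a $(k,2\e)$-coloring, and it was produced by $k$ calls to a polynomial-time subroutine, hence in polynomial time.

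There is essentially no obstacle in this argument; the only subtlety worth stating explicitly is that the correctness analysis invokes the \emph{existence} of $C^*$ while the algorithm itself inspects only the given color classes, and that the ``$-\Abs{S\setminus I}$'' slack in \cref{lemma: generalrounding} (rather than, say, a factor-$2$ loss on each class size) is precisely what makes the accumulated loss equal to $2\Abs{C^*}\le 2\e n$.
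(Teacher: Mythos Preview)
Your proof is correct and follows essentially the same approach as the paper: apply \cref{lemma: generalrounding} to each color class and sum the per-class losses. The only cosmetic difference is that the paper takes $I_a$ to be the largest independent set in $\chi^{-1}(a)$ (which implicitly satisfies $|\chi^{-1}(a)\setminus I_a|\le |C^*\cap\chi^{-1}(a)|$), whereas you explicitly set $I_a=\chi^{-1}(a)\setminus C^*$; your version is slightly more transparent about why the losses add up to at most $2|C^*|$.
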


\begin{proof}
    We simply run the algorithm from \cref{lemma: generalrounding} on each set $\chi^{-1}(a)$.
    For each $\chi^{-1}(a)$, let $I_a$ be the largest independent set in it.
    Then we obtain an independent set $I'_a$ satisfying $|I'_a| \geq |I_a| - |\chi^{-1}(a) \setminus I_a|$.
    Then 
    \[\sum_{a=1}^k |I'_a| \geq \sum_{a=1}^k |I_a| - \sum_{a=1}^k |\chi^{-1}(a) \setminus I_a| \geq (1 - \e)n - \e n = (1-2\e)n\,.\]
\end{proof}

\section{Hardness results}

In this section, we give various hardness results against algorithms that are given as input \emph{almost-colorable} graphs.
As in~\cite{bafna-hsieh-kothari}, our hardness reductions proceed from the following hardness result for approximating independent sets by Bansal and Khot~\cite{MR2648426-Bansal09}:

\begin{fact}[Hardness of approximating independent sets~\cite{MR2648426-Bansal09}]
    \label[fact]{thm:propA1}
Let $\e, \gamma > 0$ be any small enough constants.
Assuming the Unique Games Conjecture, given an $n$-vertex graph with maximum degree $o(n)$, it is NP-hard to decide between
\begin{itemize}
    \item the graph contains two disjoint independent sets of size $\Paren{\frac{1}{2}-\e}n$,
    \item the graph contains no independent set of size $\gamma n$.
\end{itemize}
\end{fact}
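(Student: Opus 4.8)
The statement is exactly the hardness result of Bansal and Khot~\cite{MR2648426-Bansal09}, so I will sketch how their reduction establishes it. The plan is a Unique-Games reduction of the long-code type, using Bansal--Khot's ``optimal one-free-bit'' dictatorship test. First I would invoke the Unique Games Conjecture in its usual form: for every $\zeta>0$ there is a label size $L$ so that, given a Unique Games instance $\mathcal{U}$ with label set $[L]$ and bounded-degree constraint graph, it is NP-hard to distinguish ``some labeling satisfies a $(1-\zeta)$-fraction of constraints'' from ``every labeling satisfies at most a $\zeta$-fraction''. From $\mathcal{U}$ I would build a graph $H$ on vertex set $\bigcup_{u} \{u\}\times\on^{[L]}$: one block, indexed by the Boolean cube with an essentially uniform product measure $\mu$, per Unique Games vertex $u$, with folding imposed so that no independent set can have $\mu$-mass more than $\tfrac12$ inside a single block. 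For each constraint $(u,v,\pi)$ I would add edges between the blocks of $u$ and $v$ sampled from Bansal--Khot's test distribution on pairs $(x,y)\in\on^{[L]}\times\on^{[L]}$ --- roughly ``$y$ is the antipode of $x$ transported through $\pi$, with each coordinate independently re-randomized with small probability $\rho$'' --- which is precisely the distribution tuned so that the completeness of the test tends to $\tfrac12$ while one bit stays free.

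For completeness I would argue as follows. If $\mathcal{U}$ has a labeling $\ell$ satisfying almost all constraints, consider $I_{+}=\{(u,x):x_{\ell(u)}=+1\}$ and $I_{-}=\{(u,x):x_{\ell(u)}=-1\}$. These are disjoint and each has relative size $\tfrac12$; on every satisfied constraint the edge distribution is antipodal on the matched coordinate except for a $\rho$-fraction of re-randomizations, so only an $O(\zeta)+O(\rho)$ fraction of vertices lie on an edge internal to $I_{+}$ or to $I_{-}$. Deleting those vertices yields two genuine disjoint independent sets, each of relative size at least $\tfrac12-\e$ once $\zeta,\rho$ are chosen small relative to the desired $\e$.

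For soundness, suppose $H$ contains an independent set $S$ of relative size $\ge\gamma$. Averaging over Unique Games vertices and constraints, for a constant fraction of constraints $(u,v,\pi)$ the slice indicators $f_u,f_v\colon\on^{[L]}\to\{0,1\}$ of $S$ are ``cross-independent'' under the test distribution and have density bounded away from $0$. The core step is the soundness analysis of the one-free-bit test: via Fourier analysis together with an invariance-principle / ``It Ain't Over Till It's Over''-type noise-stability estimate, a dense cross-independent pair forces $f_u$ to have an influential coordinate whose influence is preserved by $\pi$. Decoding each $u$ to a random one of its few influential coordinates then produces a labeling of $\mathcal{U}$ satisfying a constant fraction of constraints, contradicting Unique Games soundness. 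Finally, the maximum-degree bound $o(n)$ is inherited: $H$ has $n=|V(\mathcal{U})|\cdot 2^{L}$ vertices and each vertex has degree $O(2^{L})$ times the constraint-graph degree, which is $o(n)$ as $|V(\mathcal{U})|\to\infty$, and any remaining thinning is routine edge subsampling.

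The step I expect to be the main obstacle --- and the technical heart of~\cite{MR2648426-Bansal09} --- is precisely the soundness of the one-free-bit test: pushing the completeness of the dictatorship test all the way to $\tfrac12$ while retaining a usable soundness guarantee requires the delicate choice of the test distribution and the matching noise-stability bounds. Everything else is bookkeeping around the standard Unique-Games-to-long-code template.
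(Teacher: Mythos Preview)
The paper does not prove this statement at all: it is stated as a \emph{Fact} with a citation to Bansal--Khot~\cite{MR2648426-Bansal09} and is used as a black box starting point for the paper's own hardness reductions. So there is no ``paper's own proof'' to compare against; your sketch is an outline of the original Bansal--Khot argument rather than anything the present paper supplies.

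As a summary of~\cite{MR2648426-Bansal09} your outline is broadly on target: the one-free-bit long-code test, the completeness via dictator cuts yielding two disjoint almost-half independent sets, and the soundness via influence decoding with invariance-principle/noise-stability bounds are indeed the main ingredients. One point to be careful about is the degree bound: the claim that the constructed graph has maximum degree $o(n)$ does not follow automatically from the standard UGC formulation and typically requires either a regular/bounded-degree variant of the Unique Games instance or a sparsification step, so if you were to flesh this out you would want to make that step explicit. But for the purposes of this paper no such proof is needed --- the statement is simply imported from the literature.
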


\cite{bafna-hsieh-kothari} use~\Cref{thm:propA1} to show that it is hard to find an independent set of size $\gamma n$ in a regular one-sided expander graph that admits a $(4, \e)$-coloring.

Our main hardness result concerns graphs colorable according to model matrices with repeated rows (see \cref{def:model-coloring}).
In particular, we prove that under the Unique Games Conjecture it is hard to color all but a small fraction of the vertices of one-sided expander graphs corresponding to models with repeated rows.
Our hardness result is in fact more fine-grained and matches the algorithmic result in \cref{thm:kalg-partial}.

\begin{theorem}[Fine-grained hardness of models]
    \label{thm: finegrainedrowshard}
    Let $M \in \R_{\geq 0}^{k \times k}$ be a reversible row stochastic matrix with zero on the diagonal and non-positive second eigenvalue.
    Let $S_1, \ldots, S_{k'}$ be the partition of $[k]$ such that, for all $a, b \in S_i$ the rows $M^{(a)}$ and $M^{(b)}$ of $M$ are equal, and for all $a \in S_i, b \in S_j$ with $i \neq j$ the rows $M^{(a)}$ and $M^{(b)}$ of $M$ are distinct.
    For all $i \in [k']$, let $a_i^* = \argmax_{a \in S_i} \pi_a$, and let $p_i = \max\Paren{0, \pi_{a_i^{*}} - \sum_{\substack{a \neq a_i^{*} \in S_i}} \pi_a}$.
    Then, assuming the Unique Games Conjecture, the following problem is NP-hard for all $\e, \gamma, \lambda > 0$ small enough:
    Given a regular $\lambda$-one-sided expander graph $G$ that admits an $(M, \e)$-coloring, find a $(k, 1 - \sum_{i \in [k']} p_i - \gamma)$-coloring.
  \end{theorem}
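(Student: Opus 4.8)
The plan is to generalize the Bafna--Hsieh--Kothari reduction sketched in \cref{sec:techniques} so that it matches the algorithmic guarantee of \cref{thm:kalg-partial} term by term. Start from \cref{thm:propA1}: we are given an $n$-vertex graph $G_0$ of degree $o(n)$ that either contains two disjoint independent sets of size $(\tfrac12-\e)n$ or has no independent set of size $\gamma n$. We build a weighted graph $G'$ on $k$ blocks $V_1,\dots,V_k$ with $|V_a|=\pi_a N$ for a suitable $N$, and we \emph{embed a copy of $G_0$ into every maximal class of identical rows}. Concretely, fix a maximal group $S_i=\{a_{1},\dots,a_{\ell}\}$ of identical rows ordered so that $\pi_{a_1}\ge\cdots\ge\pi_{a_\ell}$; inside $\bigcup_{j}V_{a_j}$ we place a fresh copy of $G_0$ and identify its two target independent sets (padded to the right sizes) with $V_{a_1}$ and $V_{a_2}\cup\cdots\cup V_{a_\ell}$ in the YES case. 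Edges of $G'$ are sampled: with probability $\delta$ take a random edge of one of the planted copies of $G_0$ (chosen with probability proportional to its block mass), and with probability $1-\delta$ pick a random color pair $\bm a\bm b$ with probability $p_a M_{ab}$ and a uniformly random vertex of $V_{\bm a}$ and $V_{\bm b}$. (As in the sketch, one may subsample to get a sparse unweighted graph; I would defer that to a remark.)

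The next block of the proof establishes the three structural facts about $G'$. First, \textbf{one-sided expansion}: for $\delta\to 0$ the model $M(\tilde A_{G'},\chi)$ with respect to the block partition $\chi$ tends to $M$ entrywise, and since $M$ has second eigenvalue $\le 0$, a perturbation argument (Weyl's inequality applied to $B^{1/2}MB^{-1/2}$) gives $\lambda_2(\tilde A_{G'})\le\lambda$ for $\delta$ small enough in terms of $\lambda$; here I use that $G_0$ has degree $o(n)$ so that the planted edges contribute a negligible spectral perturbation after normalization. Second, \textbf{$(M,\e')$-colorability in the YES case}: extend the two independent sets of $G_0$ in each group to the bipartition described above; the monochromatic edges are exactly the non-planted edges within a block, which have zero probability because $M$ has zero diagonal and the rows in $S_i$ are identical, plus $O(\e)$-fraction of planted edges, which have a vertex cover of size $O(\e\,\pi_{a_1}N)$. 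Since the rows of $S_i$ are identical, \emph{any} bipartition of $\bigcup_j V_{a_j}$ into parts of the prescribed sizes yields the same model up to $o(1)$, so $M(\tilde A_{G'},\chi^\ast)\to M$ as well; hence $\chi^\ast$ is an $(M,\e')$-coloring with $\e'=O(\e)+o_\delta(1)$. Third, and this is where the fine-grained accounting enters, \textbf{the NO-case upper bound}: if some algorithm returned a $(k,\, 1-\sum_i p_i-\gamma)$-coloring of $G'$, then the covered vertices would include, for at least one group $S_i$, strictly more than $p_i\cdot(\text{mass of }S_i)$ worth of vertices \emph{beyond what a trivial ``keep the largest block'' strategy attains inside that group}; because all blocks in $S_i$ have identical neighborhoods in the non-planted part, the only way $k$ disjoint independent sets can cover more than $\sum_{j\ge 2}|V_{a_j}|$ vertices of $\bigcup_j V_{a_j}$ beyond $|V_{a_1}|$-many is by using the planted copy of $G_0$ to supply an independent set of size $\Omega(\gamma)|V(G_0)|$, contradicting the NO promise. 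I would make this precise by a short counting lemma: in the $k$-partite-complete-like structure on $S_i$ (ignoring planted edges), any independent set is contained in a single block, so $k'$ disjoint independent sets restricted to $\bigcup_j V_{a_j}$ cover at most $|V_{a_1}|+\big(\text{an independent set of the planted }G_0\big)$; summing the deficits $\sum_{j\ge 2}|V_{a_j}|-O(\gamma n)$ over all groups and translating to $\pi$ via $|V_a|=\pi_a N$ yields exactly the claimed $1-\sum_i p_i-\gamma$ ceiling, after absorbing lower-order terms.

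The \textbf{main obstacle} I anticipate is the NO-case counting argument, specifically handling the $O(\delta)$-fraction of planted edges that sit \emph{across} two blocks of the same group (the copy of $G_0$ lives on $\bigcup_j V_{a_j}$, not on a single block), so an independent set of $G'$ need not be confined to one block; I would control this by choosing $\delta\ll\gamma$ and arguing that any independent set of $G'$ of size $\ge\gamma n$ must, after discarding its $O(\delta)$-small cross-block planted part, still contain either a large independent set inside one block of $G_0$ or a large independent set of $G_0$ itself, both of which are ruled out. A secondary technical point is making the reduction output a \emph{regular} graph: I would handle this exactly as in \cref{thm:kalg-partial}'s setup, by regularizing through parallel edges / a small gadget, or by noting the algorithmic and hardness statements are both robust to replacing regularity with the volume-normalized version, and flag this with a one-line remark rather than a full construction. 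Finally I would record the quantifier bookkeeping: given target $\gamma$, pick $\e\ll\gamma$ in \cref{thm:propA1}, then $\delta\ll\gamma$, then $\lambda$ as forced by the expansion estimate, so that all error terms $O(\e)+O(\delta)+o_\delta(1)$ are $<\gamma/2$, completing the reduction.
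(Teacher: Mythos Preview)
Your high-level plan—reduce from \cref{thm:propA1}, build a $k$-block graph whose model tends to $M$, and embed hard Bansal--Khot instances inside each maximal group of identical rows—matches the paper's. The YES-case and the expansion argument are essentially correct. The NO-case analysis, however, contains a genuine error.

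Your ``counting lemma'' asserts that, in the non-planted part of the structure on a group $S_i$, any independent set is confined to a single block. This is false: for $a,b\in S_i$ with identical rows, $M_{ab}=M_{bb}=0$ (zero diagonal), so the model puts \emph{no} edges at all between any two blocks in the same group. The only edges inside $\bigcup_{a\in S_i}V_a$ are the planted $G_0$-edges themselves—this is not a perturbation to be controlled by taking $\delta\ll\gamma$, it is the entire structure. Consequently your claimed bound ``$|V_{a_1}|+(\text{indep.\ set of }G_0)$'' on the coverage of a group is not what your argument proves, and in any case it is too loose: it would give $\sum_i\pi_{a_i^*}N$ rather than $\sum_i p_iN$, and these differ precisely when some $p_i=0$.

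The correct NO-case observation is simpler: since the induced graph on a group is exactly the planted copy of $G_0$ together with isolated padding vertices, $k$ disjoint independent sets of $G'$ restricted to that group cover at most $k\gamma|V(G_0)|$ of the $G_0$-vertices plus all the padding. For this to yield the target bound you need the padding to be exactly $p_iN$ and hence $|V(G_0^{(i)})|=(\sum_{a\in S_i}\pi_a-p_i)N$, which varies across groups while your input $G_0$ has a single fixed size $n$. The paper resolves both issues at once by placing \emph{many disjoint copies} of $G_0$ in each group (the multiplicity determined by a rational approximation of $\pi$), adding biregular Ramanujan graphs between groups, and then in the NO case using plain pigeonhole over all the copies to extract one containing a linear-size independent set. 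This sidesteps any delicate per-group accounting.
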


As an immediate corollary, we obtain the following hardness result for coloring all but a small fraction of the vertices of one-sided expander graphs corresponding to models with repeated rows:

\begin{corollary}[Hardness of models with repeated rows]
  \label[corollary]{thm: identicalrowshard}
  Let $M \in \R_{\geq 0}^{k \times k}$ be a reversible row stochastic matrix with zero on the diagonal and non-positive second eigenvalue, and let $\pi$ be its stationary distribution.
  Suppose \(M\) has repeated rows. %
  Then, assuming the Unique Games Conjecture, there exists $\delta > 0$ such that the following problem is NP-hard, for all $\e, \lambda > 0$:
  Given a regular $\lambda$-one-sided expander graph $G$ that admits an $(M, \e)$-coloring, find a $(k, \delta)$-coloring.
\end{corollary}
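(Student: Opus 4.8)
The plan is to deduce \Cref{thm: identicalrowshard} immediately from \Cref{thm: finegrainedrowshard}. The key point is to exhibit an absolute constant $\delta>0$ depending only on $M$ with $\delta<1-\sum_{i\in[k']}p_i$, and then to observe that any $(k,\delta)$-coloring is in particular a $(k,1-\sum_i p_i-\gamma)$-coloring whenever $\delta\le 1-\sum_i p_i-\gamma$, since tightening the error budget only shrinks the set of admissible colorings.

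\textbf{Step 1: the strict inequality $\sum_{i\in[k']}p_i<1$.} Because $M$ has non-positive second eigenvalue, its nonzero entries describe a connected graph, so $M$ is irreducible and, by \Cref{lemma:model-color-size}, its stationary distribution $\pi$ is unique with $\pi_a>0$ for all $a\in[k]$. Writing $\pi(S_i)=\sum_{a\in S_i}\pi_a$, a singleton class contributes $p_i=\pi_{a_i^*}=\pi(S_i)$, while a class $S_i$ with $|S_i|\ge2$ contributes
\[
p_i=\max\!\Paren{0,\ \pi_{a_i^*}-\bigl(\pi(S_i)-\pi_{a_i^*}\bigr)}<\pi(S_i)\,,
\]
since $\pi(S_i)-\pi_{a_i^*}=\sum_{a\in S_i\setminus\{a_i^*\}}\pi_a>0$ by full support. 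Summing over $i$, using $\sum_i\pi(S_i)=1$, and using the hypothesis that $M$ has a repeated row — equivalently, at least one class has size at least $2$ — gives $\sum_i p_i<1$. I would then fix $\delta:=\tfrac12\bigl(1-\sum_i p_i\bigr)>0$; this is the constant claimed by the corollary.

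\textbf{Step 2: the reduction.} Fix arbitrary $\e,\lambda>0$ and let $\e_0,\gamma_0,\lambda_0>0$ be the thresholds below which \Cref{thm: finegrainedrowshard} holds. Set $\e'=\min(\e,\e_0)$, $\lambda'=\min(\lambda,\lambda_0)$, and $\gamma'=\min(\gamma_0,\delta)>0$, so that $1-\sum_i p_i-\gamma'\ge 1-\sum_i p_i-\delta=\delta$. By \Cref{thm: finegrainedrowshard} with parameters $\e',\gamma',\lambda'$, under the Unique Games Conjecture it is NP-hard, given a regular $\lambda'$-one-sided expander admitting an $(M,\e')$-coloring, to find a $(k,1-\sum_i p_i-\gamma')$-coloring. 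Suppose toward a contradiction that some polynomial-time algorithm $\mathcal A$ outputs a $(k,\delta)$-coloring for every regular $\lambda$-one-sided expander admitting an $(M,\e)$-coloring. Since $\lambda'\le\lambda$ and $\e'\le\e$, every instance produced by the above hardness reduction lies in the input class of $\mathcal A$, and on it $\mathcal A$ returns $k$ independent sets covering $(1-\delta)n\ge(\sum_i p_i+\gamma')n$ vertices — i.e.\ a $(k,1-\sum_i p_i-\gamma')$-coloring. Hence $\mathcal A$ would solve the NP-hard search problem, a contradiction, which proves the corollary.

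I do not anticipate a real obstacle: the only content beyond unwinding definitions is the bound $\sum_i p_i<1$, which rests solely on the full support of $\pi$ (available from \Cref{lemma:model-color-size}, or directly from the irreducibility of $M$ forced by the non-positive second eigenvalue), together with the trivial monotonicity of the coloring notions in their error parameters.
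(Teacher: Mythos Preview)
Your proposal is correct and is exactly the intended approach: the paper states this corollary as an immediate consequence of \Cref{thm: finegrainedrowshard} without giving a separate proof, and your deduction---showing $\sum_i p_i<1$ via the full support of $\pi$ from \Cref{lemma:model-color-size}, then choosing $\delta$ in the resulting gap---is precisely what makes that immediacy work. Your handling of the ``small enough'' quantifiers via $\e',\lambda',\gamma'$ is, if anything, more careful than the paper's implicit treatment.
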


We deduce two more corollaries of \cref{thm: finegrainedrowshard}. First, assuming the UGC, given an almost-$3$-colorable graph, it is NP-hard to find a proper coloring on most of its vertices.
Hence, the dependence of \cref{thm:3alg} on an upper bound on the color class sizes is necessary.
Technically, this result is captured by \cref{thm: finegrainedrowshard} by noticing that, without an upper bound on the color class sizes, two rows of the corresponding model matrix can be identical.

\begin{corollary}[Hardness of unbalanced $3$-coloring]
\label[corollary]{thm: unbalancedhard}
    Let $\e, \gamma > 0$ be any small enough constants.
    Assuming the Unique Games Conjecture, given an $n$-vertex regular graph with $\lambda_2=o(1)$ that admits a $(3, \e)$-coloring, it is NP-hard to find a $(3, 1/2-\gamma)$-coloring.
\end{corollary}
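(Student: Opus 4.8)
The plan is to derive this corollary as a direct instantiation of \cref{thm: finegrainedrowshard} with a single degenerate $3\times 3$ matrix. I would take
\[
M = \begin{bmatrix} 0 & \tfrac12 & \tfrac12 \\ 1 & 0 & 0 \\ 1 & 0 & 0 \end{bmatrix}\,,
\]
the same matrix that appears in the introduction when one pushes a color class of a balanced $3$-coloring up to relative size $\tfrac12$ and the last two rows degenerate into one.

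The first step is to check that $M$ meets the hypotheses of \cref{thm: finegrainedrowshard}. It is row stochastic with zeros on the diagonal. It is reversible: from $\pi_1 M_{12} = \pi_2 M_{21}$ and $\pi_1 M_{13} = \pi_3 M_{31}$ one reads off the stationary distribution $\pi = \bigl(\tfrac12, \tfrac14, \tfrac14\bigr)$. And its characteristic polynomial is $-x(x-1)(x+1)$ (the trace is $0$ and the determinant vanishes since rows $2$ and $3$ coincide), so its eigenvalues are $1, 0, -1$ and in particular its second-largest eigenvalue is $0 \le 0$.

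The second step is to evaluate the quantity $\sum_i p_i$ from the statement of \cref{thm: finegrainedrowshard}. The maximal blocks of equal rows of $M$ are $S_1 = \{1\}$ and $S_2 = \{2,3\}$. For $S_1$ we have $a_1^* = 1$ and $p_1 = \max(0, \pi_1) = \tfrac12$; for $S_2$ we have $a_2^* \in \{2,3\}$ and $p_2 = \max(0, \pi_2 - \pi_3) = 0$ (by symmetry the choice of $a_2^*$ is irrelevant). Hence $\sum_i p_i = \tfrac12$, so \cref{thm: finegrainedrowshard} gives that, under the Unique Games Conjecture, for all sufficiently small $\e, \gamma, \lambda > 0$ it is NP-hard to find a $\bigl(3, 1 - \tfrac12 - \gamma\bigr) = \bigl(3, \tfrac12 - \gamma\bigr)$-coloring of a regular $\lambda$-one-sided expander admitting an $(M, \e)$-coloring.

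The last step is purely cosmetic: an $(M, \e)$-colorable graph is in particular $(3, \e)$-colorable, since the $(M, \e)$-coloring promise only adds the requirement that the model be $\e$-close to $M$; and a $\lambda$-one-sided expander has $\lambda_2 \le \lambda$, so letting $\lambda$ shrink to $0$ over the range where the theorem applies reproduces the ``$\lambda_2 = o(1)$'' phrasing. I do not expect any genuine obstacle: all the substantive work — the reduction from the Bansal--Khot independent-set problem embedded inside the block $S_2 = \{2,3\}$, and the argument that a $(3, \tfrac12 - \gamma)$-coloring of the reduced graph pulls back to a non-trivial independent set in the source instance — is already contained in \cref{thm: finegrainedrowshard}. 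The one point worth stating explicitly is that this instantiation is precisely what witnesses that the upper bound on color-class sizes in \cref{thm:3alg} cannot be removed.
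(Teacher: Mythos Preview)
Your proposal is correct and is essentially the same as the paper's proof: both choose the degenerate model with stationary distribution $\pi=(\tfrac12,\tfrac14,\tfrac14)$ (the paper invokes \cref{lemma: 3coloringfixedaveragedegree} to pin it down, but it is exactly your matrix), verify the spectrum $\{1,0,-1\}$ and that rows $2,3$ coincide, and then read off $\sum_i p_i=\tfrac12$ from \cref{thm: finegrainedrowshard}. Your final ``cosmetic'' remark that $(M,\e)$-colorable is a sub-promise of $(3,\e)$-colorable is the right way to pass from the theorem's conclusion to the corollary's phrasing.
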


The second corollary of \cref{thm: finegrainedrowshard} is a generalization of Proposition A.2 in \cite{bafna-hsieh-kothari} to any composite $k$. We show that, given an almost-$k$-colorable graph with identical color class sizes, assuming the UGC, it is NP-hard to find a linear-sized independent set in the graph.\footnote{It is easy to extend the lower bound to all $k \geq 4$ (not only composite $k$) if, instead of requiring the color class sizes to be identical, we merely require them to be lower bounded by $\Omega(1/k)$.}

\begin{corollary}[Generalization of Proposition A.2 in \cite{bafna-hsieh-kothari}]\label[corollary]{thm: kgeq4hard}
    Let $\e, \gamma > 0$ be any small enough constants and let $k$ be composite.
    Assuming the Unique Games Conjecture, given an $n$-vertex regular graph with $\lambda_2=o(1)$ that admits a $(k,\e)$-coloring with identical color class sizes, it is NP-hard to find a $(k, 1-\gamma)$-coloring.
\end{corollary}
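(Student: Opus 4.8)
The plan is to deduce the statement from \Cref{thm: finegrainedrowshard} by exhibiting one explicit model matrix $M$ that is ``perfectly balanced'' and all of whose rows come in repeated groups of size at least two, so that the quantity $\sum_{i}p_i$ in that theorem vanishes. Since $k$ is composite, write $k=k'\ell$ with $k',\ell\ge 2$; this is the only place compositeness is used. Index $[k]$ by pairs $(i,s)\in[k']\times[\ell]$ and set $M_{(i,s),(j,t)}=\tfrac{1}{(k'-1)\ell}$ when $i\ne j$ and $M_{(i,s),(j,t)}=0$ when $i=j$. Equivalently, $M=\tfrac1\ell\,N\otimes J_\ell$, where $N$ is the $k'\times k'$ normalized adjacency matrix of the complete graph $K_{k'}$ (zeros on the diagonal, $\tfrac1{k'-1}$ off the diagonal) and $J_\ell$ is the all-ones $\ell\times\ell$ matrix; thus $M$ is a complete-graph transition matrix blown up into equal-size blocks.

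First I would check that $M$ satisfies the hypotheses of \Cref{thm: finegrainedrowshard}. It is nonnegative with zero diagonal, each row sums to $1$, and it is symmetric --- hence reversible row stochastic with uniform stationary distribution $\pi\equiv 1/k$. From the Kronecker factorization and the spectra of $N$ (eigenvalue $1$ once, $-\tfrac1{k'-1}$ with multiplicity $k'-1$) and $J_\ell$ (eigenvalue $\ell$ once, $0$ with multiplicity $\ell-1$), the eigenvalues of $M$ are $1$, then $0$ with multiplicity $k-k'\ge 1$, then $-\tfrac1{k'-1}$ with multiplicity $k'-1$; in particular the second largest eigenvalue of $M$ is $0$, as required. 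The row of $M$ indexed by $(i,s)$ depends only on $i$, and rows with distinct first coordinates differ (on the block indexed by $i'$, the row for group $i\ne i'$ is positive while the row for group $i'$ is zero), so the partition $S_1,\dots,S_{k'}$ of \Cref{thm: finegrainedrowshard} is exactly $S_i=\{i\}\times[\ell]$, each of size $\ell\ge 2$. Therefore $p_i=\max(0,\ \tfrac1k-(\ell-1)\tfrac1k)=0$ for every $i$, so $\sum_{i\in[k']}p_i=0$.

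Instantiating \Cref{thm: finegrainedrowshard} with this $M$ gives, under the Unique Games Conjecture and for all small enough $\e,\gamma,\lambda>0$, that it is NP-hard to find a $(k,1-\gamma)$-coloring of a regular $\lambda$-one-sided expander that admits an $(M,\e)$-coloring. It remains only to cast these instances into the form required here. A regular $\lambda$-one-sided expander has $\lambda_2\le\lambda$, so letting $\lambda\to0$ yields a hard family with $\lambda_2=o(1)$. For the coloring: if $G$ admits an $(M,\e)$-coloring through a partition $\chi$, then $\bigl|\tfrac1n|\chi^{-1}(a)|-\tfrac1k\bigr|\le 8\sqrt\e\,k$ for all $a$ by \Cref{lemma:model-color-size}, so rebalancing the classes to size exactly $n/k$ moves only $O_k(\sqrt\e)n$ vertices; the moved vertices together with the original $\le\e n$ vertex cover of monochromatic edges cover all monochromatic edges of the rebalanced partition, and trimming the surviving classes to a common size then produces $k$ equal-size independent sets covering $(1-O_k(\sqrt\e))n$. (Equivalently, for this $M$ the YES instances of the reduction behind \Cref{thm: finegrainedrowshard} are built directly from $k$ blocks of size $n/k$, as in the sketch in \Cref{sec:techniques}.) Renaming $O_k(\sqrt\e)$ as $\e$ completes the argument; an algorithm for the problem in the corollary would, applied to these instances, solve the UG-hard problem of \Cref{thm: finegrainedrowshard}.

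I expect the only genuine subtlety to be the insistence on \emph{exactly} equal color class sizes, which is precisely why $k$ must be composite: compositeness is what lets the complete-graph blow-up $M$ above have a uniform stationary distribution, and hence equal block sizes. For arbitrary $k\ge 4$ one can only arrange color classes of size $\Omega(1/k)$, which is the variant mentioned in the footnote. The remaining ingredients --- the spectral computation for $M$, the vanishing of $\sum_i p_i$, and the rebalancing/trimming step --- are all routine.
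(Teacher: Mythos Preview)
Your proposal is correct and takes essentially the same approach as the paper: both construct the model matrix $M$ as the equal-size block blow-up of the normalized adjacency matrix of $K_{k'}$ (the paper writes it entry-wise as $M_{ij}=\tfrac{1}{k-q}$ for $i,j$ in different blocks, you via the Kronecker factorization $\tfrac1\ell N\otimes J_\ell$), verify the spectrum and that $\sum_i p_i=0$, and then invoke \Cref{thm: finegrainedrowshard}. The paper's proof is terser about the ``identical color class sizes'' point (it simply invokes the theorem, relying on the fact that for uniform $\pi$ the reduction produces equal blocks), whereas your rebalancing discussion is unnecessary for this particular $M$ but harmless.
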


In our final hardness result, we prove that if we only assume small $\lambda_3$, as opposed to small $\lambda_2$, then assuming the UGC, given an almost-$3$-colorable one-sided expander graph with identical color class sizes, it is NP-hard to find a proper coloring on most of the vertices.

\begin{theorem}[Hardness of balanced $3$-coloring with small $\lambda_3$]
\label{thm: lowtopthresholdrankhard}
    Let $\e, \gamma > 0$ be any small enough constants.
    Assuming the Unique Games Conjecture, given an $n$-vertex regular graph with $\lambda_3=o(1)$ that admits a $(3, \e)$-coloring with identical color class sizes, it is NP-hard to find a $(3, 2/9-\gamma)$-coloring.
\end{theorem}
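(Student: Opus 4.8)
The plan is to reuse the repeated-rows hardness reduction behind \cref{thm: finegrainedrowshard} (sketched in \cref{sec:techniques}), but fed with a carefully chosen model matrix $M$ whose \emph{second} eigenvalue is positive while its \emph{third} eigenvalue is non-positive; the output of the reduction is then a $\lambda_3=o(1)$ graph (rather than a one-sided expander), with a balanced $3$-coloring, and with recoverable fraction exactly $\tfrac79$. Concretely, start as in \cref{thm:propA1} from a Bansal--Khot instance $H$ on $m$ vertices of maximum degree $o(m)$, which is UG-hard to distinguish between having two disjoint independent sets of size $(\tfrac12-\e)m$ and having no independent set of size $\gamma m$. Take $M$ to be the $5\times5$ reversible row-stochastic matrix on blocks $W_1,W_2,V_1,V_2,V_3$ with stationary weights $\tfrac19,\tfrac19,\tfrac29,\tfrac29,\tfrac13$, where $W_1$ and $W_2$ are each adjacent only to $V_3$, $V_1$ is adjacent to $V_2$ and $V_3$, $V_2$ is adjacent to $V_3$, and everything else is zero; detailed balance pins down all off-diagonal entries (e.g.\ $M_{W_iV_3}=1$, $M_{V_1V_2}=\tfrac34$, $M_{V_1V_3}=\tfrac14$, $M_{V_3W_i}=\tfrac13$, $M_{V_3V_1}=\tfrac16$). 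Color $W_1,V_1$ with color $1$, $W_2,V_2$ with color $2$, $V_3$ with color $3$; same-color blocks are non-adjacent, so this is a proper $3$-coloring with all three classes of size $\tfrac13$. The two facts to verify about $M$ are: (i) the rows of $W_1$ and $W_2$ are \emph{identical} (both equal the indicator of $V_3$), and (ii) a direct computation of the spectrum of $M$ gives eigenvalues $1,\ \tfrac{-1+\sqrt{33}}{8}\approx0.59,\ 0,\ -\tfrac34,\ \tfrac{-1-\sqrt{33}}{8}\approx-0.84$, so $\lambda_2(M)>0$ but $\lambda_3(M)\le0$.

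\textbf{The instance.} From $M$ I would build $G$ exactly as in the repeated-rows reduction: let $G_0$ be a regular blow-up of $M$ on $n=\tfrac92 m$ vertices (each block an independent set, complete bipartite-like connections between distinct blocks with weights proportional to the entries of $M$, block sizes $\tfrac19 n,\tfrac19 n,\tfrac29 n,\tfrac29 n,\tfrac13 n$), identify the set $W_1\cup W_2$ (of size $\tfrac29 n=m$) with $V(H)$, and superimpose the edges of $H$ on $W_1\cup W_2$ with a tiny weight $\delta$ so that the total $H$-weight at any vertex is $o(1)$; a standard degree-regularization/sparsification step then produces an unweighted regular graph at the cost of only $o(1)$ loss in all parameters. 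Since the eigenvalues of $\tilde A_{G_0}$ are exactly those of $M$ together with $0$ of multiplicity $n-5$, its third eigenvalue is $0$, and because the $H$-part has operator norm $O(\delta)$, Weyl's inequality gives $\lambda_3(G)\le O(\delta)=o(1)$ (while $\lambda_2(G)=\Omega(1)$, consistent with \cref{thm:3alg} being inapplicable; incidentally, by \cref{cor:spectral_small_to_small} the bottom threshold rank of $G$ is also bounded).

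\textbf{Completeness and soundness.} In the YES case, given the two disjoint independent sets $I_1,I_2$ of $H$, re-color $W_1\cup W_2$ by putting $I_1$ in color $1$, $I_2$ in color $2$, and the $O(\e m)$ leftover vertices arbitrarily. Because $W_1$ and $W_2$ have identical rows, recoloring within $W_1\cup W_2$ among colors $1,2$ preserves properness on all $G_0$-edges, and the only monochromatic edges of $G$ are $H$-edges incident to the leftover set (those inside $I_1$ or $I_2$ vanish by independence), which have a vertex cover of size $O(\e m)=O(\e)n$; trimming an extra $O(\e)n$ vertices from $V_3$ equalizes the class sizes, so $G$ admits a $(3,O(\e))$-coloring with identical color-class sizes. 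In the NO case, suppose $J_1,J_2,J_3$ are independent sets of $G$; each $J_a\cap(W_1\cup W_2)$ is an independent set of $H$ (the $H$-edges are present in $G$), hence of size $<\gamma m$, so the three together cover at most $3\gamma m$ of the $\tfrac29 n=m$ vertices of $W_1\cup W_2$, and therefore at most $(\tfrac79+O(\gamma))n$ vertices of $G$ in total. Choosing the Bansal--Khot soundness parameter small relative to $\gamma$, no $(3,\tfrac29-\gamma)$-coloring of $G$ exists, whereas in the YES case a $(3,O(\e))$-coloring---in particular a $(3,\tfrac29-\gamma)$-coloring---does; this is the claimed UG-hardness.

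\textbf{Main obstacle.} The delicate part is the design and spectral analysis of $M$: one needs a reversible row-stochastic matrix with a repeated row (for the recoloring flexibility in completeness) whose blocks, once grouped into three colors, give a balanced proper $3$-coloring, whose ``hard block'' $W_1\cup W_2$ has relative size exactly $\tfrac29$ (so the recoverable fraction is $\tfrac79$), and---crucially---whose third eigenvalue is $\le 0$ even though its second eigenvalue is bounded away from $0$. Once $M$ is in hand with these properties verified, the remaining ingredients (the blow-up construction, the $O(\delta)$ spectral perturbation bound, the degree regularization, and the completeness/soundness bookkeeping) are essentially identical to the reduction described for \cref{thm: finegrainedrowshard} in \cref{sec:techniques}.
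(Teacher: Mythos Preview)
Your approach is correct and genuinely different from the paper's. The paper does \emph{not} go through a $5$-block model matrix; instead it builds the hard instance as a \emph{disjoint union} of two regular pieces: (i) the Bansal--Khot graph $G$ on $n$ vertices together with a fresh set $S$ of size $\approx n$, joined by an (almost) complete bipartite graph and trimmed to be $|S|$-regular, and (ii) an entirely separate $|S|$-regular tripartite expander $G''$ on parts of sizes $|S|,|S|,(\tfrac12-\e)n$. The two components each contribute one top eigenvalue $|S|$, and since each component individually has second eigenvalue $o(|S|)$, Weyl gives $\lambda_3(G^*)=o(1)$. Completeness comes from coloring $V(G)$ by $I_1,I_2$ and $S$ by the third color, and pairing these with the three parts of $G''$ to get equal-sized classes of size $(\tfrac12-\e)n+|S|$; soundness is a direct pigeonhole on the $\approx \tfrac29|V(G^*)|$ vertices of $V(G)$.

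Your route---designing a connected $5$-block reversible model $M$ with repeated rows $W_1,W_2$, third eigenvalue $\le 0$, and a block grouping into a balanced $3$-coloring with the ``hard'' mass exactly $\tfrac29$---is a clean conceptual extension of the repeated-rows machinery (\cref{thm: finegrainedrowshard}) to the $\lambda_3$ regime, and your eigenvalue computation $\{1,\tfrac{-1+\sqrt{33}}{8},0,-\tfrac34,\tfrac{-1-\sqrt{33}}{8}\}$ checks out. What your approach buys is a unified story: the same reduction template, just with a model whose spectrum is shifted; it also suggests how one might push to $\lambda_r=o(1)$ for larger $r$. What the paper's approach buys is a shorter spectral argument (no $5\times5$ characteristic polynomial, just block-diagonal eigenvalues) and no need to hunt for a model matrix with a prescribed eigenvalue profile.
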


We now turn to the proofs of our hardness results. To prove \cref{thm: finegrainedrowshard}, we first give a helper lemma showing that suitable blow-up graphs are expanding.

\begin{lemma}[Blow-up graphs are expanding]\label[lemma]{lem: blowupexpanding}
   Let $M \in \R_{\geq 0}^{k \times k}$ be a reversible row stochastic matrix with zero on the diagonal and non-positive second eigenvalue.
    Let $G$ be an $n$-vertex $d$-regular graph with adjacency matrix $A$, and let $\tilde{A} = \frac{1}{d} A$. 
    Suppose there exists a $k$-partition $\chi: [n] \to [k]$ such that $\Norm{M-M(\tilde{A},\chi)}=o(1)$.
    For $x \in [n]$ and $a \in [k]$, let $\D{x}{a} = \sum_{y \in \chi^{-1}(a)} \tilde{A}_{xy}$.
    For $a \neq b \in [k]$, let $G^{(ab)}$ be the bipartite graph induced between the vertices in $\chi^{-1}(a)$ and the vertices in $\chi^{-1}(b)$, and for $a \in [k]$ let $G^{(aa)}$ be the graph induced by the vertices in $\chi^{-1}(a)$.
    Suppose
    \begin{itemize}
        \item for all $a \in [k]$: $|\chi^{-1}(a)| \geq \Omega(n)$,
        \item for all $a, b \in [k]$: either $\lambda_1\Paren{\frac{1}{d}A_{G^{(ab)}}} = o(1)$, or $G^{(ab)}$ is biregular and $\lambda_2\Paren{\frac{1}{d}A_{G^{(ab)}}} = o(1)$.
    \end{itemize}
    Then $\lambda_2(\tilde{A})=o(1)$.
\end{lemma}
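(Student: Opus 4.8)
The plan is to bound the quadratic form $\langle f, \tilde A f\rangle$ for an arbitrary $f \colon [n]\to\R$ with $\E_{x\sim\operatorname{Unif}([n])} f(x)=0$ and $\E f(x)^2 = 1$, and show $\langle f,\tilde A f\rangle \le o(1)$. Since $G$ is $d$-regular with $\tilde A = \frac1d A$ symmetric, the adjacency action decomposes over the $\binom k2$ bipartite pieces $G^{(ab)}$ ($a\ne b$) and the $k$ induced pieces $G^{(aa)}$: writing $f_a$ for the restriction of $f$ to $\chi^{-1}(a)$ (extended by zero), we have
\begin{equation*}
  \langle f, Af\rangle \;=\; \sum_{a}\langle f_a, A_{G^{(aa)}}f_a\rangle \;+\; 2\sum_{a<b}\langle f_a, A_{G^{(ab)}} f_b\rangle\,.
\end{equation*}
First I would split $f$ on each block into its ``mean part'' and ``fluctuation part'': let $\mu_a = \E_{x\sim\chi^{-1}(a)} f(x)$ and write $f_a = \mu_a \one_{\chi^{-1}(a)} + g_a$ with $g_a \perp \one_{\chi^{-1}(a)}$. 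The key quantitative inputs are: (i) from $\lambda_1(\frac1d A_{G^{(ab)}})=o(1)$ or biregularity with $\lambda_2(\frac1d A_{G^{(ab)}})=o(1)$, the bipartite operator $\frac1d A_{G^{(ab)}}$ contracts the fluctuation parts by $o(1)$ (in the biregular case the top singular direction is exactly the constant vectors, which carry the $\mu_a$'s; in the $\lambda_1=o(1)$ case \emph{everything} is contracted); (ii) $\Norm{M - M(\tilde A,\chi)}=o(1)$ together with $|\chi^{-1}(a)|\ge\Omega(n)$ controls the ``mean-on-mean'' interaction, because the $(a,b)$ block-average of $\tilde A$ is exactly (a rescaling of) $M(\tilde A,\chi)_{ab}$, hence $o(1)$-close to $M_{ab}$; and (iii) $M$ itself has second eigenvalue $\le 0$, so the $k\times k$ quadratic form built from the $\mu_a$'s is at most $(1-\lambda_2(M))(\sum_a \pi_a\mu_a)^2 + \lambda_2(M)\sum_a\pi_a\mu_a^2 \le 0$ after subtracting the stationary component — and the stationary component of $(\mu_a)$ vanishes up to $o(1)$ because $\E f=0$ and the color-class fractions are $o(1)$-close to $\pi$ (here I would invoke, or re-derive as in \cref{lemma:model-color-size}, that $\frac1n|\chi^{-1}(a)| = \pi_a \pm o(1)$).

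Concretely I would carry it out as follows. \textbf{Step 1:} bound each same-class term $\langle f_a, A_{G^{(aa)}} f_a\rangle$: the hypothesis gives either $\lambda_1(\frac1d A_{G^{(aa)}})=o(1)$, yielding $|\langle f_a, \tilde A_{G^{(aa)}} f_a\rangle| \le o(1)\Norm{f_a}^2$, or biregularity of $G^{(aa)}$ with $\lambda_2=o(1)$ — but a biregular graph on one vertex set is $d'$-regular, and since $M_{aa}=0$ forces $M(\tilde A,\chi)_{aa}=o(1)$, i.e. almost no edges inside $\chi^{-1}(a)$, the regularity degree is $o(d)$, so this term is again $o(1)\Norm{f_a}^2$. \textbf{Step 2:} for each cross term, write
\begin{equation*}
  \langle f_a, A_{G^{(ab)}} f_b\rangle = \mu_a\mu_b\,\langle \one_{\chi^{-1}(a)}, A_{G^{(ab)}}\one_{\chi^{-1}(b)}\rangle + \mu_a\langle\one_{\chi^{-1}(a)}, A_{G^{(ab)}} g_b\rangle + \mu_b\langle g_a, A_{G^{(ab)}}\one_{\chi^{-1}(b)}\rangle + \langle g_a, A_{G^{(ab)}} g_b\rangle\,.
\end{equation*}
The last term is $o(d)\Norm{g_a}\Norm{g_b}$ by the spectral hypothesis on $G^{(ab)}$ applied to the fluctuation subspaces. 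In the biregular case, $A_{G^{(ab)}}\one_{\chi^{-1}(b)}$ is a constant multiple of $\one_{\chi^{-1}(a)}$, so the two middle terms vanish; in the $\lambda_1=o(1)$ case they are $o(d)|\mu_a|\Norm{g_b}$ and $o(d)|\mu_b|\Norm{g_a}$. The first term equals $\mu_a\mu_b \cdot |\chi^{-1}(a)|\,\D{\cdot}{b}$-average $= \mu_a\mu_b\cdot |\chi^{-1}(a)|\,M(\tilde A,\chi)_{ab}\,d$. \textbf{Step 3:} sum everything and divide by $nd$. The fluctuation contributions total $o(1)\sum_a \Norm{f_a}^2/n = o(1)\Normt{f}^2 = o(1)$ (using Cauchy–Schwarz and $\sum\Norm{g_a}^2 \le \sum\Norm{f_a}^2 = n$). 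The mean-on-mean contribution becomes, using $\frac1n|\chi^{-1}(a)| = \pi_a\pm o(1)$, the form $\sum_{a,b}\pi_a M_{ab}\mu_a\mu_b \pm o(1)\cdot(\max_a|\mu_a|)^2$; since $\pi$ is reversible for $M$ this is the quadratic form of a self-adjoint operator with spectrum in $[-1,0]\cup\{1\}$ on the $\pi$-weighted inner product, and its value on $(\mu_a)$ — whose $\pi$-mean is $\E f = 0$ up to $o(1)$ — is therefore $\le 0 + o(1)$. Combining, $\langle f,\tilde A f\rangle \le o(1)$, whence $\lambda_2(\tilde A)=o(1)$.

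\textbf{Main obstacle.} The delicate point is handling the cross terms in the non-biregular ($\lambda_1(\frac1d A_{G^{(ab)}})=o(1)$) case, where the mean part $\mu_a\one_{\chi^{-1}(a)}$ does \emph{not} lie in a singular subspace we control separately — here I must exploit that $\lambda_1=o(1)$ bounds the full operator norm $\Norm{\frac1d A_{G^{(ab)}}}_{\mathrm{op}} = o(1)$, so in fact $|\langle f_a, \tilde A_{G^{(ab)}}f_b\rangle| \le o(1)\Norm{f_a}\Norm{f_b}$ directly, and crucially $M(\tilde A,\chi)_{ab}=o(1)$ as well (consistency with the $o(1)$ operator norm), so the ``expected'' contribution $\pi_a M_{ab}$ of this block to the model is itself $o(1)$ and can be absorbed into the error rather than relying on the spectrum of $M$. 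Keeping track of which blocks are of which type, and verifying that the bookkeeping closes (i.e. that the blocks genuinely contributing to the $M$-quadratic-form are exactly the biregular ones, on which the mean parts are cleanly isolated), is the part that needs care; everything else is routine Cauchy–Schwarz and the already-established fact $\frac1n|\chi^{-1}(a)|=\pi_a\pm o(1)$.
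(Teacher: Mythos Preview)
Your proposal is correct and morally the same as the paper's proof, but the packaging differs. The paper works at the matrix level: it writes
\[
\tilde A \;=\; Z M B^{-1} Z^\top \;+\; Z\bigl(M(\tilde A,\chi)-M\bigr)B^{-1}Z^\top \;+\; \sum_{1\le a\le b\le k} E^{(ab)}\,,
\]
where $E^{(ab)}$ is the block of $\tilde A$ with its block-average subtracted out (i.e.\ the ``row-centered'' $\frac1d A_{G^{(ab)}}$, padded with zeros). Then $\lambda_2(ZMB^{-1}Z^\top)\le 0$ because this matrix has the same nonzero spectrum as $M$; the middle term has operator norm $o(1)$ by $\lVert M-M(\tilde A,\chi)\rVert=o(1)$ and the $\Omega(n)$ block-size bound; and each $\lVert E^{(ab)}\rVert=o(1)$ directly from the hypothesis on $G^{(ab)}$ (in the biregular case, centering kills the top singular pair, leaving $\lambda_2$; in the $\lambda_1=o(1)$ case centering only helps). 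A single application of Weyl's inequality then gives $\lambda_2(\tilde A)=o(1)$.

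Your argument carries out the identical decomposition at the level of a single test function $f$: your $\mu_a$-part is exactly what $ZMB^{-1}Z^\top$ sees, and your $g_a$-parts are exactly what the $E^{(ab)}$'s see. The payoff of the paper's formulation is that your ``main obstacle'' disappears: once you recognize $E^{(ab)}$ as the centered block, both hypothesis cases yield $\lVert E^{(ab)}\rVert=o(1)$ uniformly, so there is no need to separately track which $(a,b)$ pairs contribute to the $M$-quadratic-form versus which are absorbed as $o(1)$. Everything else (including your invocation of $\frac1n|\chi^{-1}(a)|=\pi_a\pm o(1)$, which the paper routes through the $\Norm{Z}^2=\Norm{B}$ bound instead) matches.
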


\begin{proof}
    For any $a,b\in[k]$, let $E^{(ab)}$ be the $n\times n$ matrix with entries $E^{(ab)}_{xy}=\tilde{A}_{xy}-\frac{M(\tilde{A},\chi)_{\chi(x)\chi(y)}}{\Abs{\chi^{-1}(\chi(y))}}$ for $\Set{\chi(x), \chi(y)}=\Set{a, b}$ and $E^{(ab)}_{xy}=0$ otherwise.
    Using the definitions of the partition and model matrices in \cref{def:partition_Z} and \cref{def:model}, we decompose $\tilde{A}$ as
    $$\tilde{A}=ZMB^{-1}Z^T+Z\Paren{M(\tilde{A},\chi)-M}B^{-1}Z^T+\sum_{1\leq a\leq b\leq k} E^{(ab)}\,.$$

    Then we argue about each term separately:
    \begin{itemize}
        \item the non-zero eigenvalues of $M$ and $ZMB^{-1}Z^T$ are the same (see the proof of \cref{thm:part-recovery}), so because $\lambda_2(M)\leq 0$ we also have $\lambda_2(ZMB^{-1}Z^T)\leq 0$,
        \item because $\Norm{M-M(\tilde{A},\chi)}=o(1)$ and $\Norm{Z}^2=\Norm{B}$ we have
    $$\Norm{Z\Paren{M(\tilde{A},\chi)-M}B^{-1}Z^T}\leq \Norm{Z}\Norm{M(\tilde{A},\chi)-M}\Norm{B^{-1}}\Norm{Z^T}=\frac{\max_{a\in[k]}\Abs{\chi^{-1}(a)}}{\min_{a\in[k]}\Abs{\chi^{-1}(a)}}\ o(1)=o(1)\,,$$
        \item if for some $1\leq a \leq b\leq k$ we have $\lambda_1\Paren{\frac{1}{d}A_{G^{(ab)}}} = o(1)$, then as $E^{(ab)}$ is obtained from $\frac{1}{d}A_{G^{(ab)}}$ by row-centering (which is an orthogonal projection) and padding with $0$s, we also get $\lambda_1(E^{(ab)})=o(1)$,
        \item if for some $1\leq a \leq b\leq k$ we have that $G^{(ab)}$ is biregular and $\lambda_2\Paren{\frac{1}{d}A_{G^{(ab)}}} = o(1)$, then row-centering sends the leading eigenvalue pair to $0$ while the others remain unchanged, so the spectrum of $E^{(ab)}$ is $\Set{0, 0}\cup \lambda_{\btw{2}{\Paren{\chi^{-1}(a)+\chi^{-1}(b)-1}}}\Paren{\frac{1}{d}A_{G^{(ab)}}}$, implying $\lambda_1(E^{(ab)})=o(1)$.
    \end{itemize}
    Hence, as all terms are symmetric, we can use Weyl's inequality to get $\lambda_2(\tilde{A})=o(1)$, as desired.
\end{proof}

We now prove \cref{thm: finegrainedrowshard}.

\begin{proof}[Proof of \cref{thm: finegrainedrowshard}]
    For the sake of contradiction, suppose an algorithm $\ALG$ solves the problem above efficiently and returns a $(k, 1 - \sum_{i \in [k']} p_i - \gamma)$-coloring.
    Let $r_1, \ldots, r_k$ be even integers bounded by $O_{M, \e, \gamma}(1)$\footnote{By \cref{lemma:model-color-size} we have $\pi_i=\Theta(1)$ for all $i\in[k]$.} such that $\Norm{\hat{\pi} - \pi}_{\infty}\leq \min\Paren{\frac{\gamma}{2k}, \frac{\e}{2}\min_{i\in [k]}\pi_i}$ and $\argmax_{a \in S_i} \pi_a=\argmax_{a \in S_i} \hat{\pi}_a$ for all $i\in[k']$, where $\hat{\pi}_i:=\frac{r_i}{\sum_{i\in[k]} r_i}$.
    Let $\widehat{M}$ be the $k\times k$ matrix defined by $\widehat{M}_{ij}=\frac{\hat{\pi}_j}{\pi_j}M_{ij}$, satisfying $\hat{\pi}_i\widehat{M}_{ij}=\hat{\pi}_j\widehat{M}_{ji}$.
    Finally, let \( f : [k'] \to [k] \) be a choice function selecting an element from each set \( S_i \), i.e., \( f(i) \in S_i \) for all \( i \in [k'] \).
    
    Then, we show that the following derived algorithm solves the problem in \cref{thm:propA1}:

    \begin{algorithm}[H]
        \caption{Fine-grained model hardness algorithm}
        \label{alg: finegrainedrowshard}
        \DontPrintSemicolon
        \textbf{Input:} A $2n$-vertex graph $G$ with maximum degree $d_{\max} = o(n)$ from \cref{thm:propA1} with parameters $\e/2$ and $\frac{\gamma}{2k}$\;
        For each $i \in [k']$ construct a graph $G_i$ as follows:\;
        \quad If $r_{a_i^*} \leq \sum_{\substack{a \neq a_i^{*} \in S_i}} r_a$, let $G_i$ have $\frac{1}{2} \sum_{a \in S_i} r_a$ disjoint copies of $G$\;
        \quad Else, let $G_i$ have $\sum_{a \neq a_i^* \in S_i} r_a$ disjoint copies of $G$ and let it also have $\Paren{r_{a_i^*} - \sum_{a \neq a_i^* \in S_i} r_a}n$ other isolated vertices\;
        For each $1\leq i<j\leq k'$ add a random $(\widehat{M}_{f(i)f(j)}n, \widehat{M}_{f(j)f(i)}n)$-biregular Ramanujan graph $G_{ij}$ between $V(G_i)$ and $V(G_j)$\;
        Decide based on whether running $\ALG$ on the final graph $G^*=\Paren{\bigcup_{i=1}^{k'} G_i} \cup \Paren{\bigcup_{1 \leq i < j \leq k'} G_{ij}}$ returns a $(k, 1 - \sum_{i \in [k']} p_i - \gamma)$-coloring\;
    \end{algorithm}

    Note that by an edge removal step analogous to that in \cref{alg: lowtopthresholdrankhard}, $G^*$ can be made $\Theta(d)$-regular.

    Suppose $G$ has two disjoint independent sets of size $(1-\e/2)n$ each. Then, by construction $G^*$ admits a $(k, \e)$-coloring $\chi$ such that $\Abs{\chi^{-1}(a)}=\hat{\pi}_a \Abs{V(G^*)}$ for all $a\in[k]$. Indeed, if for some $i\in[k']$ we have $r_{a_i^*} \leq \sum_{\substack{a \neq a_i^{*} \in S_i}} r_a$, then the corresponding color classes can be mapped to independent sets in copies of $G$ such that the two independent sets within any copy of $G$ get a different color. If $r_{a_i^*} > \sum_{\substack{a \neq a_i^{*} \in S_i}} r_a$, the same holds after mapping the isolated vertices to the color class corresponding to $a_i^*$.
    
    Let $\tilde{A}^*$ be the normalized adjacency matrix of $G^*$. Let us show that $\Norm{M(\tilde{A}^*, \allowbreak\chi)-M}_{\max}\leq \e$. Let $x\in S_i$ and $y\in S_j$ be arbitrary such that $f(i)=a$ and $f(j)=b$. By the expander mixing lemma,  we have $M(\tilde{A}^*, \chi)_{xy}=\widehat{M}_{ab}\pm o(1)$. On the other hand, using $\Norm{\hat{\pi} - \pi}_{\infty}\leq \frac{\e}{2} \min_{i\in [k]}\pi_i$ it follows that $\Abs{\widehat{M}_{ab}-M_{ab}}=\Abs{\frac{\hat{\pi}_b}{\pi_b}M_{ab} - M_{ab}}\leq \frac{\e}{2} M_{ab}\leq \e/2$. Combining the two by triangle inequality and using $M_{xy}=M_{ab}$ we get $\Abs{M(\tilde{A}^*, \chi)_{xy}-M_{xy}}\leq \e$ as desired.

    Hence, in this case $G^*$ would admit an $(M, \e)$-coloring. Furthermore, as the maximum degree within each $G_{S_j}$ is $o(d)$ and the subgraphs between the $G_{S_j}$s are biregular Ramanujan, we can apply \cref{lem: blowupexpanding} to get $\lambda_2(G^*)=o(1)$. Overall we can conclude that the preconditions of $\ALG$ would be satisfied, and by assumption it would return a $(k, 1 - \sum_{i \in [k']} p_i - \gamma)$-coloring of $G^*$.

    Now for the contrary, assume a $(k, 1 - \sum_{i \in [k']} p_i - \gamma)$-coloring of $G^*$. For $j\in[k']$, let
    
    $$q_j:=\max\Paren{0, r_{a_j^*} - \sum_{\substack{a \neq a_j^{*} \in S_j}} r_a}=\max\Paren{0, \hat{\pi}_{a_j^*} - \sum_{\substack{a \neq a_j^{*} \in S_j}} \hat{\pi}_a} \sum_{i\in[k]} r_i\,.$$
    
    Note that there are $\frac{\Paren{\sum_{i\in[k]} r_i-\sum_{j\in[k']} q_j}}{2}$ copies of $G$ each of size $2n$. Then, by the pigeonhole principle over the copies of $G$ and the colors, using $\Norm{\hat{\pi} - \pi}_{\infty}\leq \gamma/2k$, there must be an independent set of size at least 
    
    \begin{align*}
        &\frac1{k\frac{\Paren{\sum_{i\in[k]} r_i-\sum_{j\in[k']} q_j}}{2}}\Paren{\sum_{i\in[k]} r_i-\sum_{j\in[k']} q_j-\Paren{1-\sum_{j\in[k']} p_j-\gamma}\sum_{i\in[k]} r_i}n\\
        &= \frac{2n\sum_{i\in[k]} r_i}{k\Paren{\sum_{i\in[k]} r_i-\sum_{j\in[k']} q_j}}\Paren{\gamma-\sum_{j\in[k']} \Paren{\max\Paren{0, \hat{\pi}_{a_j^*} - \sum_{\substack{a \neq a_j^{*} \in S_j}} \hat{\pi}_a}-\max\Paren{0, \pi_{a_j^{*}} - \sum_{\substack{a \neq a_j^{*} \in S_j}} \pi_a}}}\\
        &\geq  \frac{2n}{k}\Paren{\gamma-k\frac{\gamma}{2k}}\\
        &=\frac{\gamma}{2k} \Abs{V(G)}
    \end{align*}
    in one of the copies of $G$, finishing the proof.
\end{proof}

We now provide proofs for the two corollaries of \cref{thm: finegrainedrowshard}.

\begin{proof}[Proof of \cref{thm: unbalancedhard}]
    Using \cref{lemma: 3coloringfixedaveragedegree}, consider the unique model $M$ corresponding to a stationary distribution with $\pi_1 = 1/2$, $\pi_2 = 1/4$, and $\pi_3=1/4$. The spectrum of $M$ is $\Set{1, 0, -1}$, so it has non-positive second eigenvalue. Noting that $\pi_2=\pi_3$ and that the rows of $M$ corresponding to $\pi_2$ and $\pi_3$ are identical.

    Therefore, it follows from \cref{thm: finegrainedrowshard} that it is NP-hard under the UGC to find a $(3, 1/2-\gamma)$-coloring of $\lambda$-one-sided expander graphs that admit an $(M, \epsilon)$-coloring.
\end{proof}

\begin{proof}[Proof of \cref{thm: kgeq4hard}]
    As $k$ is composite, it can be written as $k=k'\cdot q$ for some integers $k', q\geq 2$. Let $S_1, \ldots, S_{k'}$ be a partition of $[k]$ into groups of size $q$ each.
    Let $f: [k] \to [k']$ map each $i \in [k]$ to $x \in [k']$ such that $i \in S_x$.
    Let us define the model matrix $M \in \R^{k \times k}$ with zero on the diagonal and with entries
    
    \[
    M_{ij}=
    \begin{cases}
    \dfrac1{k-q}, & i\in S_{f(i)},\;j\notin S_{f(i)},\\[4pt]
    0,                                           & \text{otherwise},
    \end{cases}
    \]
    
    for $i \neq j \in [k]$. It is easy to verify that $M$ is row stochastic and reversible with stationary distribution $\pi = \frac1{k}\1\in \R^k$. The spectrum of $M$ is $\Set{1,\;0\ (\text{mult.\ }k-k'),\,-\tfrac1{k'-1}\ (\text{mult.\ }k'-1)}$ so it has non-positive second eigenvalue. By construction, for each $x \in [k']$, for all $i, j \in S_x$ the rows $M^{(i)}$ and $M^{(j)}$ of $M$ are equal, and for all $i \in S_x, j \notin S_x$ the rows $M^{(i)}$ and $M^{(j)}$ of $M$ are distinct.

    For all $x \in [k']$, let $a_x^* := \argmax_{a \in S_x} \pi_a$, and let $p_x := \max\Paren{0, \pi_{a_x^{*}} - \sum_{\substack{a \neq a_x^{*} \in S_x}} \pi_a}$. For all $x\in [k']$, using $q\geq 2$ we have $\Abs{S_x}\geq 2$, so $p_x=0$. 

    Therefore, as $1-\sum_{x\in [k']} p_x-\gamma=1-\gamma$, it follows from \cref{thm: finegrainedrowshard} that it is NP-hard under the UGC to find a $(k, 1-\gamma)$-coloring of $\lambda$-one-sided expander graphs that admit an $(M, \epsilon)$-coloring.
\end{proof}

We finish this section with the proof of \cref{thm: lowtopthresholdrankhard}.

\begin{proof}[Proof of \cref{thm: lowtopthresholdrankhard}]
    For the sake of contradiction, suppose an algorithm $\ALG$ solves the problem above efficiently.
    Then, we show that the following derived algorithm solves the problem in \cref{thm:propA1}:

    \begin{algorithm}[H]
        \caption{Low top-threshold rank hardness algorithm}
        \label{alg: lowtopthresholdrankhard}
        \DontPrintSemicolon
        \textbf{Input:} The $n$-vertex graph $G$ with average degree $d_{\avg} = o(n)$ and maximum degree $d_{\max} = o(n)$ from \cref{thm:propA1} with parameters $\e, \frac{3}{2}\gamma$\;
        Insert into the graph $\frac{n+\sqrt{n(n-4d_{\avg})}}{2}$ new vertices and denote their set by $S$\;
        Add edges $H$ between $V(G)$ and $S$ to form a complete bipartite graph\;
        Let $H'$ be a bipartite graph between $V(G)$ and $S$ with degrees $d_G(x)$ for $x\in V(G)$ and $\frac{nd_{\avg}}{\Abs{S}}$ for $y\in S$\;
        Remove $H'$ from $H$ to obtain a graph $G'$\;
        Let $G''$ be a new $\Abs{S}$-regular tripartite graph with $\lambda_2(A_{G''})=o(\Abs{S})$ and sides of size $\Abs{S}, \Abs{S}, \Paren{\frac{1}{2} - \e}n\;$\footnotemark{}\;
        Decide based on whether running $\ALG$ on the final graph $G^* = G' \cup G''$ returns a $\Paren{3, \frac{2}{9}-\gamma}$-coloring of $G^*$\;
    \end{algorithm}
    \footnotetext{Using \cref{lem: blowupexpanding} and $\Abs{S}\geq n/4$, such a graph exists based on the model matrix \(\left(\begin{smallmatrix}
0 & \Abs{S} - (1/4-\e/2)n & (1/4-\e/2)n \\
\Abs{S} - (1/4-\e/2)n & 0 & (1/4-\e/2)n \\
å\Abs{S}/2 & \Abs{S}/2 & 0
\end{smallmatrix}\right)\) with a non-positive second eigenvalue.}

    Let us first show that $G^*$ is $\Abs{S}$-regular. Note that the degrees of $x\in V(G)$ coming from $G$ are offset by the removal of $H'$, so they will exactly have degree $\Abs{S}$ coming from $H$. For a vertex $y\in S$, the degree coming from $H$ is $n$, and the degree removed by $H'$ is $\frac{nd_{\avg}}{\Abs{S}}$, so the degree of $y$ in $G^*$ is also
        
    $$n-\frac{nd_{\avg}}{\Abs{S}}=n-\frac{nd_{\avg}}{\frac{n+\sqrt{n(n-4d_{\avg})}}{2}}=\frac{n+\sqrt{n(n-4d_{\avg})}}{2}=\Abs{S}\,.$$

    Decomposing $A_{G'}=A_G+A_H-A_{H'}$ and using that $\lambda_2(A_H) = 0$ and that $G$ and $H'$ both have maximum degree at most $d_{\max} = o(n)$, we get by  Weyl's inequality that
    \[\lambda_2(A_{G'}) \leq \lambda_1(A_G)+\lambda_2(A_H)-\lambda_n(A_{H'}) = o(n)\,.\]
    Using $d_{\avg}=o(n)$ we have $|S| = \Omega(n)$, so $\lambda_2(A_{G'}) = o(|S|)$. Then, decomposing $A_{G^*}=A_{G'}+A_{G''}$ and using that $\lambda_2(A_{G''})=o(|S|)$ by construction, we finally get by Weyl's inequality that
    \[\lambda_3(A_{G^*}) \leq \lambda_2(A_{G'})+\lambda_2(A_{G''}) =o(|S|)\,.\]

    If $G$ has two disjoint independent sets of size $\Paren{\frac{1}{2}-\e}n$, then $G^*$ admits a \tcl with identical color class sizes on all but $2\e n \leq \e |V(G^*)|$ vertices.
    Hence, using the $|S|$-regularity of both $G'$ and $G''$ and that $\lambda_3(G^*)=o(1)$, $\ALG$ returns a $\Paren{3, \frac{2}{9}-\gamma}$-coloring of $G^*$.

    However, we show that if $G^*$ admits a $\Paren{3, \frac{2}{9}-\gamma}$-coloring of $G^*$, then $G$ has an independent set of size $\frac{3}{2}\gamma n$, and $\ALG$ is a distinguisher.
    Indeed, note that a $\Paren{3, \frac{2}{9}-\gamma}$-coloring of $G^*$ implies an independent set in $G$ of size at least
    $$\frac{1}{3}\Bigg(\Paren{\frac{7}{9}+\gamma}\Abs{V(G^*)}-\Paren{\Abs{S}+|V(G'')|}\Bigg)\,,$$
    which after plugging in the size of the components, and using that $\Abs{S}\leq n$, is lower bounded by $\frac{3}{2} \gamma n$.
\end{proof}

\section*{Acknowledgements}
This project has received funding from the European Research Council (ERC) under the European Union’s Horizon 2020 research and innovation programme (grant agreement No 815464).

\bibliographystyle{alpha}
\bibliography{references}

\appendix
\section{Proof of top threshold rank lower bound}
\label{sec:rank-lb}

In this section, we prove for the sake of completeness a slight generalization of Lemma 6.1 in \cite{barak-raghavendra-steurer}, such that it extends to all bounded-norm symmetric matrices.
The proof is essentially the same as the original proof.
We note that the original proof has a minor mistake in an application of Cauchy-Schwarz, and to fix it we need to modify the guarantee of the lemma to give a threshold rank lower bound of $(1-1/C)^2r$ instead of the original lower bound of $(1-1/C)r$.

\begin{lemma}[Generalized restatement of Lemma 6.1 in \cite{barak-raghavendra-steurer}]
    Let $A \in \R^{n \times n}$ be symmetric with $\Norm{A} \leq 1$.
    Suppose there exists $M \in \R^{n \times n}$ positive semidefinite such that
    \begin{equation*}
        \iprod{A, M} \geq 1-\e \,, \quad
        \Norm{M}_F^2 \leq \frac{1}{r} \,, \quad
        \Tr(M) = 1 \,.
    \end{equation*}
    Then for all $C > 1$,
    \begin{equation*}
        \rank_{\geq 1- C \cdot \e} (A) \geq \Paren{1-\frac{1}{C}}^2 r \,.
    \end{equation*}
\end{lemma}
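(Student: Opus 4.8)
The plan is to run the standard threshold-rank lower bound argument of Barak--Raghavendra--Steurer, taking care that the only spectral property of $A$ we invoke is that its eigenvalues are bounded above by $1$ --- which is all that $\Norm{A} \le 1$ provides and all that is actually needed.

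First I would diagonalize $A = \sum_{i=1}^n \lambda_i v_i v_i^\top$ with $\set{v_i}$ an orthonormal eigenbasis and $\lambda_i \le 1$ for all $i$, and set $m_i := v_i^\top M v_i = \iprod{v_i v_i^\top, M}$. Since $M$ is positive semidefinite we have $m_i \ge 0$; since $\Tr(M) = 1$ and $\set{v_i}$ is an orthonormal basis, $\sum_i m_i = \Tr(M) = 1$; and since the $m_i$ are precisely the diagonal entries of $M$ written in the eigenbasis of $A$, the sum of their squares is at most the squared Frobenius norm of that (orthogonally conjugated) matrix, so $\sum_i m_i^2 \le \Norm{M}_F^2 \le 1/r$. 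Finally $\iprod{A, M} = \Tr(AM) = \sum_i \lambda_i m_i$.

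Next I would isolate the top of the spectrum. Let $S := \set{i : \lambda_i \ge 1 - C \e}$, so $\Card{S} = \rank_{\ge 1 - C\e}(A)$. Using that $1 - \lambda_i \ge 0$ for every $i$, that $1 - \lambda_i > C\e$ for $i \notin S$, and that $\sum_i m_i (1 - \lambda_i) = 1 - \iprod{A, M} \le \e$, I obtain $C\e \sum_{i \notin S} m_i \le \sum_{i \notin S} m_i (1-\lambda_i) \le \e$, hence $\sum_{i \notin S} m_i \le 1/C$ and therefore $\sum_{i \in S} m_i \ge 1 - 1/C$. Then Cauchy--Schwarz gives $\bigl(1 - \tfrac1C\bigr)^2 \le \bigl(\sum_{i \in S} m_i\bigr)^2 \le \Card{S} \cdot \sum_{i \in S} m_i^2 \le \Card{S}/r$, which rearranges to $\Card{S} \ge \Paren{1 - \tfrac1C}^2 r$, as claimed.

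There is no serious obstacle here; the one subtle point --- and precisely where the original argument slips --- is the final Cauchy--Schwarz step, which must be applied to $\sum_{i \in S} m_i$ so as to produce the factor $(1 - 1/C)^2$ rather than $(1 - 1/C)$. I would also note explicitly that the argument never uses any lower bound on the eigenvalues of $A$, which is exactly what lets the statement hold for all bounded-norm symmetric matrices and not just normalized adjacency matrices.
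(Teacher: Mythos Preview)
Your proof is correct and follows essentially the same approach as the paper's. Both arguments diagonalize $A$, split $\iprod{A,M}$ into contributions from the top eigenspace $S=\{i:\lambda_i\ge 1-C\e\}$ and its complement, use $\Tr(M)=1$ and $M\succeq 0$ to control the non-top part, and invoke Cauchy--Schwarz together with $\|M\|_F^2\le 1/r$ to finish. The only cosmetic difference is that the paper applies matrix Cauchy--Schwarz to $\iprod{\Lambda_{\ge 1-C\e},M}\le\|\Lambda_{\ge 1-C\e}\|_F\|M\|_F=\sqrt{k/r}$ and then combines with $\iprod{A,M}\ge 1-\e$, whereas you first extract $\sum_{i\in S}m_i\ge 1-1/C$ and then apply scalar Cauchy--Schwarz to that sum; the resulting inequality $1-1/C\le\sum_{i\in S}m_i\le\sqrt{|S|/r}$ is identical either way.
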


\begin{proof}
    Let the spectral decomposition of $A$ be $A = \sum_{i} \lambda_i u_i u_i^{\top}$ and let $k = \rank_{\geq 1- C \e} (A)$.
    We use $\Lambda_{\geq 1- C\e} = \sum_{i=1}^{k} u_i u_i^{\top}$ to denote the subspace spanned by the eigenvectors with corresponding eigenvalues at least $1- C\e$. 
    Then we can rewrite $\iprod{A, M}$ as
    \begin{align}
    \label{eq:BRS11_proof_1}
    \begin{split}
        \iprod{A, M}
        & = \sum_{i=1}^{k} \lambda_i \iprod{u_i u_i^{\top}, M} + \sum_{i=k+1}^{n} \lambda_i \iprod{u_i u_i^{\top}, M} \\
        & \leq \iprod{\Lambda_{\geq 1- C\e}, M} + (1- C\e) \iprod{\Id_n - \Lambda_{\geq 1- C\e}, M} \\
        & =  C \e \iprod{\Lambda_{\geq 1- C\e}, M} + (1- C\e) \iprod{\Id_n, M} \,.
    \end{split}
    \end{align}
    For the first term, it follows by the assumption $\Norm{M}_F^2 \leq \frac{1}{r}$ and Cauchy–Schwarz that
    \begin{equation}
    \label{eq:BRS11_proof_2}
        \iprod{\Lambda_{\geq 1- C\e}, M}
        \leq \Norm{\Lambda_{\geq 1- C\e}}_F \Norm{M}_F
        \leq  \sqrt{k} \cdot \frac{1}{\sqrt{r}}\,.
    \end{equation}
    For the second term, it follows by the assumption $\Tr(M) = 1$ that
    \begin{equation}
    \label{eq:BRS11_proof_3}
        \iprod{\Id_n, M} = \Tr(M) = 1 \,.
    \end{equation}
    Plugging \cref{eq:BRS11_proof_2} and \cref{eq:BRS11_proof_3} into \cref{eq:BRS11_proof_1}, it follows that
    \begin{equation*}
        \iprod{A, M}
        \leq C \e \sqrt{\frac{k}{r}} + 1- C\e \,.
    \end{equation*}
    Since by assumption $\iprod{A, M} \geq 1-\e$, it follows by rearranging the terms that
    \begin{equation*}
        \rank_{\geq 1- C \cdot \e} (A)
        = k
        \geq \Paren{1-\frac{1}{C}}^2 r \,.
    \end{equation*}
\end{proof}

\section{Full coloring in the random planted model} \label{app: plantedcoloring}

This part is dedicated to proving \cref{thm: random-planting-full}. We start with the key first step that guarantees the recovery of a \emph{list} of candidate $k$-colorings, such that at least one of them is close to the planted $k$-coloring.

\begin{theorem}[Result for random planting, partial recovery of a list]
    \label{thm: random-planting-partial-list}
    Let $H$ be an $n$-vertex $d$-regular graph with adjacency matrix $A_H$.
    Suppose $\rank_{\geq \lambda}(\frac{1}{d}A_H) \leq t$ for some $0 < \lambda < 1$ and $t \in \mathbb{N}$.
    Let $G$ be the graph obtained by randomly planting a \kcl in $H$, and let $\chi: [n] \to [k]$ be the associated \kcl.
    Then, for $\lambda > 0$ small enough, there exists an algorithm that runs in time $n^{O(1)} \cdot \Paren{O_{k}(1)}^{t}$ and outputs a list of $\Paren{O_{k}(1)}^{t}$ $k$-colorings $\hat{\chi}: [n] \to [k]$ such that with high probability, for at least one of them, it holds up to a permutation of the color classes that $\mathbb{E}_{\bm{x} \sim \operatorname{Unif}([n])} \1[\hat\chi(\bm{x}) \neq \chi(\bm{x})] \leq O_k(1/d^{\Omega(1)})$.
\end{theorem}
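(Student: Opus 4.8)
The plan is to mirror the proof of \cref{thm: random-planting-partial} almost verbatim, stopping one step earlier. The key observation is that \cref{thm:part-recovery} already produces a \emph{list} of candidate partitions, at least one of which is close to the target partition; the reason \cref{thm: random-planting-partial} returns a single coloring is only that it post-processes this list with the rounding algorithm of \cref{lemma: generalrounding}. So I would simply skip that post-processing and hand back the list produced by \cref{thm:part-recovery} directly, reinterpreted as a list of $k$-colorings.

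In more detail: I would set $A$ to be the adjacency matrix of $G$, take $\tilde{A} = \frac{k}{(k-1)d} A$, and $M = M(\tilde{A}, \chi)$ as in \cref{def:model}, and then verify the hypotheses of \cref{thm:part-recovery} for the pair $(M, \tilde{A})$ exactly as in the proof of \cref{thm: random-planting-partial}. Concretely: $B^{1/2} M B^{-1/2}$ is symmetric by construction; \cref{lemma: plantedmodelconcentration} with the triangle inequality gives $\Norm{M} \leq 1 + o(1)$; \cref{fact: balancedparts} gives $|\chi^{-1}(a)| \geq (1/k - o(1)) n$ for all $a$ with high probability; since $M$ has zero diagonal and off-diagonal entries $\frac{1}{k-1} \pm o(1)$, any two of its rows differ by at least $2/k$ in norm; combining \cref{fact: balancedparts}, \cref{lemma: randomplantinglowvariance}, and \cref{lem:small-var-to-indic} gives $\Norm{\tilde{A}\frac{Z_a}{\Norm{Z_a}} - W\frac{Z_a}{\Norm{Z_a}}}^2 \leq O_k(1/d^{\Omega(1)})$ for $W = Z M B^{-1} Z^\top$; and the threshold-rank hypothesis follows from \cref{cor:spectral_small_to_small} (converting the assumed bound on $\rank_{\geq \lambda}(\tfrac{1}{d} A_H)$ into one on $\rank_{\leq -\sqrt{2\lambda}}(\tfrac{1}{d} A_H)$) together with \cref{lemma: threshold_rank_kept} (transferring these from $H$ to $G$, up to rescaling), yielding $\rank_{\geq 3\lambda'}(\tilde{A}) + \rank_{\leq -3\lambda'}(\tilde{A}) = O_k(t)$ for $\lambda' = \max\{\sqrt{2\lambda}, \Omega_k(1)\}$.

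Once these are in place, \cref{thm:part-recovery} outputs, in time $n^{O(1)} \cdot (O_k(1))^t$, a list of $(O_k(1))^t$ $k$-partitions $\hat{\chi}$ such that with high probability at least one of them satisfies $\E_{\bm{x} \sim \operatorname{Unif}([n])} \1[\hat\chi(\bm{x}) \neq \chi(\bm{x})] \leq O_k(1/d^{\Omega(1)})$ up to a permutation of the color classes, and I would return exactly this list.

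I do not expect any genuine obstacle here, since all of the real work lives in the supporting lemmas; the content of the theorem is organizational, recording that the list produced internally by \cref{thm:part-recovery} can be exposed to the caller rather than immediately collapsed to a single coloring. The one thing to be careful about is that I must \emph{not} apply \cref{lemma: generalrounding} at the end: doing so would retain only a size guarantee and destroy the closeness-to-$\chi$ property, which is precisely what the subsequent full-coloring argument in \cref{sec:full-col} relies on.
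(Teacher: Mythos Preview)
Your proposal is correct and matches the paper's own proof, which consists of the single sentence ``The proof is identical to the proof of \cref{thm: random-planting-partial}, but omitting the last rounding step.'' You have correctly identified both the content (return the list from \cref{thm:part-recovery} directly) and the subtlety (applying \cref{lemma: generalrounding} would destroy the closeness-to-$\chi$ guarantee needed downstream).
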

\begin{proof}
    The proof is identical to the proof of \cref{thm: random-planting-partial}, but omitting the last rounding step.
\end{proof}

To convert a coloring returned by \cref{thm: random-planting-partial-list} into a full coloring, we will need several helper lemmata. We start with the proofs of the two lemmata showing properties of \textbf{one}-sided expansion.

\begin{proof}[Proof of \cref{lemma: emlonesided}]
    Let $\1_S \in \{0,1\}^n$ be the indicator vector of $S$ and write $\1_S$ in the orthonormal eigenbasis $e_{[n]}$ of $A$:
    $$\1_S = \sum_{i=1}^n \alpha_i e_i\,.$$

    Hence, by orthonormality, we can write

    \begin{align*}
        &\Abs{E(H_S)}\\
        &=\frac{1}{2} \1_S^\top A \1_S\\
        &=\frac{1}{2} \Paren{\sum_{i=1}^n \alpha_i e_i^\top } \Paren{\sum_{i=1}^n \lambda_i(A) e_i e_i^\top } \Paren{\sum_{i=1}^n \alpha_i e_i}\\
        &=\frac{1}{2} \sum_{i=1}^n \lambda_i(A) \alpha_i^2
    \end{align*}

    Using $d$-regularity, we have $\lambda_1(A)=d$.
    Then, as $e_1=\frac{1}{\sqrt{n}}\1$, we get $\alpha_1=\Iprod{e_1, \1_S }=\frac{|S|}{\sqrt{n}}$, so $\lambda_1(A) \alpha_1^2=\frac{d|S|^2}{n}$.
    For the terms with $i>1$, as $\alpha_i^2\geq 0$, we can upper bound $\lambda_i(A) \alpha_i^2$ by $\lambda_2(A) \alpha_i^2\leq cd\alpha_i^2$.
    Plugging these back we get
    $$\Abs{E(H_S)}\leq \frac{1}{2}\Paren{\frac{d|S|^2}{n} + cd\sum_{i=2}^n \alpha_i^2}\,.$$

    Then, using $\sum_{i=2}^n \alpha_i^2\leq \sum_{i=1}^n \alpha_i^2=\Norm{\1_S}_2^2=|S|$ gives the desired bound.
\end{proof}

\begin{proof}[Proof of \cref{lemma: ssveonesided}]
    Let $T:=N(S)\cup S$.
    Then, as all edges touching $S$ are in $E(H_{T})$, we have
    $$e(H_{T})\geq \frac{1}{2}|S|d\,.$$ 

    On the other hand, by \cref{lemma: emlonesided}, writing $|T|=x|S|$ and using $|S|\leq \alpha n$, we get
    $$e(H_{T})\leq \frac{d|T|}{2}\Paren{\frac{|T|}{n}+c}\leq\frac{1}{2}x\Paren{c+\alpha x}|S|d\,.$$

    Comparing the two bounds, and solving the resulting quadratic inequality, we get 
    $$x\geq \frac{\sqrt{c^2 + 4\alpha}-c}{2\alpha}=\frac{2}{c+\sqrt{c^2+4\alpha}}\,.$$

    By using $\sqrt{a+b}\leq \sqrt{a}+\sqrt{b}$ and that $\Abs{N(S)\setminus S}\geq \Abs{N(S)}-|S|$ we get the desired result.
\end{proof}

Next, we need an algorithm that takes a coloring that agrees with most of the planted coloring and turns it into a coloring that leaves some of the vertices uncolored, but agrees with the planted coloring on the colored vertices. We define this notion of coloring formally now.

\begin{definition}[Partial coloring]
    \label[definition]{def: partialcoloring}
    Let $G$ be an $n$-vertex graph that admits a \kcl $\chi$.
    A {\em $(k, \beta)$-partial} coloring w.r.t.\ $\chi$ is a \kcl of at least $(1-\beta)n$ vertices that is identical to $\chi$ on these vertices.
    The remaining $\beta n$ vertices are left uncolored and are referred to as {\em free}. In the random planted model we always implicitly take $\chi$ to be the planted \kcl.
\end{definition}

Let us state now the corresponding algorithm and its guarantees.

\begin{algorithm}[H]
    As long as there is a vertex $x$ that has color $c_1$, and a color $c_2\neq c_1$ such that $x$ has less than $\frac{1}{6k}d$ neighbors with color $c_2$, uncolor $x$.%
    \protect\caption{\label{alg: uncoloring}Uncoloring algorithm}
\end{algorithm}

\begin{lemma}[Adaptation of Lemma B.20 of \cite{MR3536556-David16}]\label[lemma]{lemma: approximatetopartial}
    Let $\alpha\leq\frac{1}{24k}$ be a parameter. Let $H$ be a $d\geq O(1)$-regular $n$-vertex  graph with $\lambda_2(H)\leq \frac{1}{9k}$. 
    Let $G$ be the graph obtained from $H$ by randomly planting a \kcl $\chi$ in it. Given an \kcl $\hat{\chi}$ of $G$ with $\mathbb{E}_{\bm{x} \sim \operatorname{Unif}([n])} \1[\hat{\chi}(\bm{x}) \neq \chi(\bm{x})] \leq \alpha$, \cref{alg: uncoloring} \whp returns an $(k, 4\alpha)$-partial coloring of $G$.
\end{lemma}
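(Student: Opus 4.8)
The plan is to analyze the set $F$ of vertices that \cref{alg: uncoloring} leaves uncolored. After relabeling colors so that $\hat\chi$ agrees with $\chi$ outside a set $B_0 = \{x : \hat\chi(x)\ne\chi(x)\}$ with $|B_0|\le\alpha n$, it suffices to prove (i) every vertex still colored at termination has $\hat\chi(x)=\chi(x)$, and (ii) $|F|\le 4\alpha n$: since the algorithm only ever uncolors vertices, its output is automatically a proper coloring of the colored set, so (i)–(ii) say it is a $(k,4\alpha)$-partial coloring w.r.t.\ $\chi$. Throughout I will use that $G\subseteq H$ (so edges inside any vertex set in $G$ are at most the edges inside it in $H$), and apply the one-sided Expander Mixing Lemma \cref{lemma: emlonesided} to $H$ with parameter $c=\lambda_2(H)\le\tfrac1{9k}$.

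For (i), let $B^\ast$ be the set of still-colored misclassified vertices at termination; note $B^\ast\subseteq B_0$. If $x\in B^\ast$ has $\hat\chi(x)=c_1\ne c^\ast=\chi(x)$, the termination condition forces $x$ to have at least $\tfrac d{6k}$ colored $G$-neighbors of color $c^\ast$; any such neighbor $y$ has $\chi(y)\ne\chi(x)=c^\ast=\hat\chi(y)$ since $\chi$ is proper on $G$, so $y$ is itself still colored and misclassified, i.e.\ $y\in B^\ast$. Hence every vertex of $B^\ast$ has $\ge\tfrac d{6k}$ neighbors inside $B^\ast$, so $|E_G(B^\ast)|\ge\tfrac d{12k}|B^\ast|$, while \cref{lemma: emlonesided} gives $|E_G(B^\ast)|\le|E_H(B^\ast)|\le\tfrac{d|B^\ast|}2\bigl(\tfrac{|B^\ast|}n+\tfrac1{9k}\bigr)$. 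If $B^\ast\ne\emptyset$, combining these yields $\tfrac{|B^\ast|}n\ge\tfrac1{6k}-\tfrac1{9k}=\tfrac1{18k}$, contradicting $|B^\ast|\le|B_0|\le\alpha n\le\tfrac n{24k}$. So $B^\ast=\emptyset$.

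For (ii) the random planting enters. Since for fixed $x$ the colors of its $d$ $H$-neighbors are i.i.d.\ uniform, $D_x^c:=|N_H(x)\cap\chi^{-1}(c)|\sim\mathrm{Bin}(d,1/k)$; a Chernoff bound on the lower tail together with McDiarmid concentration of $|A|$ over $\chi$ shows that \whp the "deficient" set $A=\{x:\exists\,c\ne\chi(x),\ D_x^c<\tfrac{7}{10k}d\}$ has $|A|\le k e^{-\Omega(d/k)} n$, which is $\le\alpha n$ once $d$ is a large enough constant (depending on $k$ and $1/\alpha$). The structural point is: if a good, non-deficient $x$ is uncolored at time $t_x$ with witnessing color $c_2$, then among its $\ge\tfrac{7}{10k}d$ neighbors in $\chi^{-1}(c_2)$ — all of them $G$-neighbors, and all but at most $|N_H(x)\cap B_0|$ of them carrying $\hat\chi$-color $c_2$ — fewer than $\tfrac d{6k}$ are still colored at time $t_x$, so strictly more than $\bigl(\tfrac7{10k}-\tfrac1{6k}\bigr)d-|N_H(x)\cap B_0|=\tfrac8{15k}d-|N_H(x)\cap B_0|$ of them already lie in $F_{<t_x}\setminus B_0$. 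Now run a first-bad-time argument: suppose for contradiction that $|F|>4\alpha n$ and let $t^\ast$ be the first moment with $|F_{\le t^\ast}|>4\alpha n$, so $|F_{\le t^\ast}|\le 4\alpha n+1$. Put $F'=F_{\le t^\ast}\setminus B_0$ and $S=F'\setminus A$ (with $|S|\ge 3\alpha n - |A| - o(\alpha n)$). Summing the bound above over $x\in S$, and noting that each contributing edge lies inside $F'$ and is counted at most once (its endpoint uncolored earlier cannot later name the other as a witness), gives $|E_H(F')|\ge\tfrac8{15k}d\,|S|-e_H(S,B_0)$; on the other hand \cref{lemma: emlonesided} bounds $|E_H(F')|$ and $e_H(S,B_0)\le|E_H(S\cup B_0)|$ from above, and since $|F'|,|S\cup B_0|=O(\alpha n)=O(n/k)$ while $\alpha\le\tfrac1{24k}$ and $|A|\le\alpha n$, these estimates are contradictory for $d$ large enough. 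Hence $|F|\le 4\alpha n$.

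The main obstacle is the bookkeeping in step (ii): the deficiency threshold must be chosen carefully — large enough to beat $\tfrac1{6k}$ plus the slack $\tfrac1{9k}$ that the EML leaves, yet strictly below $\tfrac1k$ so that deficiency is an exponentially rare lower-tail event — and the correction terms involving $B_0$ must be tracked so that the lower and upper edge-count estimates collide with room to spare. This is also the only place where the hypothesis "$d\ge O(1)$" has to be read as "$d$ at least a sufficiently large constant relative to $k$ and $1/\alpha$", since that is exactly what makes the deficient set $A$ negligible.
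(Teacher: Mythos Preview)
Your part (i) — showing that no misclassified vertex survives uncoloring — is exactly the paper's argument (down to the same EML contradiction). Part (ii) is where you diverge. The paper does not run a first-bad-time edge-counting argument at all. Instead it invokes the peeling lemma \cref{lemma: highmindegreesubgraph} applied to $B_0\cup SB$, where $SB$ is the statistically-bad set with threshold $\tfrac1{2k}d$ and $|SB|\le n2^{-\Omega(d)}\le\alpha n$ w.h.p. This directly produces a set $R$ with $|R|\ge(1-4\alpha)n$ and minimum $H_R$-degree at least $(1-4\alpha-\tfrac1{9k})d$. Every $x\in R$ then has, for each $c\ne\chi(x)$, at least $\bigl(\tfrac1{2k}-4\alpha-\tfrac1{9k}\bigr)d\ge\tfrac2{9k}d>\tfrac1{6k}d$ neighbors in $R\cap\chi^{-1}(c)$; since these all agree with $\hat\chi$ and (by induction on the uncoloring steps) remain colored, $x$ is never uncolored. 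Hence $|F|\le n-|R|\le 4\alpha n$ with no further bookkeeping.

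Your route can be pushed through, but the constants you wrote do not close. With threshold $\tfrac{7}{10k}$ and only $|A|\le\alpha n$, you get $|S|\ge 2\alpha n$, so your lower bound is $\tfrac{8}{15k}d\cdot 2\alpha n=\tfrac{16}{15k}\alpha dn$, while (using $\alpha\le\tfrac1{24k}$) the EML gives
\[
|E_H(F')|+|E_H(S\cup B_0)|\;\le\;2\alpha dn\cdot\tfrac{5}{18k}+\tfrac{5\alpha dn}{2}\cdot\tfrac{23}{72k}\;=\;\tfrac{65}{48k}\alpha dn,
\]
and $\tfrac{16}{15}<\tfrac{65}{48}$ — no contradiction. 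The repair is easy: take $d$ large enough that $|A|=o(\alpha n)$ (this is free, since $|A|\le ke^{-\Omega(d/k)}n$), so that $|S|\ge(3-o(1))\alpha n$ and the lower bound becomes $\tfrac{8}{5k}\alpha dn>\tfrac{65}{48k}\alpha dn$; alternatively, raise the deficiency threshold past $\tfrac{27}{32k}$. The paper's peeling argument sidesteps this whole balancing act — the structural lemma absorbs all the constant-tracking into one clean induction.
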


To prove \cref{lemma: approximatetopartial}, we will need a definition and two lemmata: one upper bounding the number of atypical vertices with high probability, and a structural lemma showing the existence of a ``nice'' subgraph.

\begin{definition}\label[definition]{def: statisticallybad}
    Let $H$ be a $d$-regular graph. Let $G$ be the graph obtained from $H$ by randomly planting a \kcl $\chi$ in it. We call a vertex $x\in V(G)$ \textit{statistically bad}, if there is a color $c\neq \chi(x)$ such that $x$ has less than $\frac{1}{2k}d$ neighbors $y$ with $\chi(y)=c$. We denote the set of statistically bad vertices by $SB(G)$, writing $SB$ when it is clear from context which graph we are referring to.
\end{definition}

\begin{lemma}[Restatement of Lemma B.14 of \cite{MR3536556-David16}]\label[lemma]{lemma: sbupperbound}
   Let $H$ be a $d\geq O(1)$-regular $n$-vertex graph. Let $G$ be the graph obtained from $H$ by randomly planting a \kcl in it. It holds that $\Abs{SB(G)}\leq n2^{-\Omega(d)}$ \whp.
\end{lemma}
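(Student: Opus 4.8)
The plan is to first bound the probability that a single fixed vertex is statistically bad, obtaining $\E[|SB(G)|] \le n\,2^{-\Omega(d)}$, and then to upgrade this to a high-probability statement by applying the bounded-differences (McDiarmid) inequality to $|SB(G)|$ viewed as a function of the $n$ independent colour choices $\chi(1),\dots,\chi(n)$.

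For the single-vertex bound, fix $x \in [n]$ and condition on the planted colour $\chi(x)$. Since the planted $k$-colouring is obtained by drawing $\chi(v) \sim \operatorname{Unif}([k])$ independently for each vertex $v$, the colours of the $d$ neighbours of $x$ in $H$ (there are exactly $d$ by $d$-regularity) are i.i.d.\ uniform on $[k]$ and independent of $\chi(x)$. Moreover, for any colour $c \neq \chi(x)$, the $H$-neighbours of $x$ with colour $c$ are precisely the $G$-neighbours of $x$ with colour $c$, because the planting step deletes only edges whose endpoints both have colour $\chi(x)$; this is the only place where one must be careful that the edge deletions do not interfere with the count. Hence the number $N_c(x)$ of $G$-neighbours of $x$ with colour $c$ is distributed as $\operatorname{Bin}(d,1/k)$, with mean $d/k$, and a multiplicative Chernoff bound gives $\Pr[N_c(x) < \tfrac{d}{2k}] \le e^{-d/(8k)} = 2^{-\Omega(d)}$, where the implicit constant depends only on the fixed $k$. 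A union bound over the at most $k-1$ colours $c \neq \chi(x)$ yields $\Pr[x \in SB(G)] \le (k-1)\,2^{-\Omega(d)} = 2^{-\Omega(d)}$, and summing over $x$ gives $\E[|SB(G)|] \le n\,2^{-\Omega(d)}$.

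For the concentration step, regard $|SB(G)| = F(\chi(1),\dots,\chi(n))$ as a function of the $n$ independent colour choices. Whether a vertex $x$ is statistically bad is determined only by the colours on its closed $H$-neighbourhood $N_H[x]$; consequently, changing a single coordinate $\chi(v)$ can flip the statistical-badness status only of those $x$ with $v \in N_H[x]$, i.e.\ of $v$ itself and its at most $d$ neighbours, so $F$ has the bounded-differences property with constant $d+1$. McDiarmid's inequality then gives, for a suitable constant $c=c(k)>0$ with $\E[|SB(G)|] \le n\,2^{-cd}$, that $\Pr\!\left[|SB(G)| \ge 2n\,2^{-cd}\right] \le \exp\!\left(-\Omega\!\left(n\,2^{-2cd}/d^2\right)\right)$, which is $e^{-\Omega(n)} = o(1)$ for fixed $d$ and $k$. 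Combining with the expectation bound yields $|SB(G)| \le n\,2^{-\Omega(d)}$ with high probability.

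The argument is essentially routine, and I do not expect a real obstacle. The one point worth flagging is that here $d$ is only a large constant rather than a growing parameter, so the expectation bound together with Markov's inequality would give only a constant failure probability; obtaining a $1-o(1)$ guarantee genuinely requires the concentration step, which exploits the $n$-dimensional randomness of $\chi$ together with the Lipschitz dependence of $|SB(G)|$ on it. A secondary bookkeeping point is the identification of $N_c(x)$ for $c \neq \chi(x)$ with a clean $\operatorname{Bin}(d,1/k)$ variable, as explained above.
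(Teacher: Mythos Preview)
The paper does not supply its own proof of this lemma; it is quoted as a restatement of Lemma~B.14 of David--Feige without further argument. Your proof is correct and is the natural one: a Chernoff bound for a single vertex (correctly identifying that for $c\neq\chi(x)$ the $G$-neighbours of $x$ with colour $c$ coincide with the $H$-neighbours, so $N_c(x)\sim\operatorname{Bin}(d,1/k)$), linearity of expectation, and then McDiarmid with Lipschitz constant $d+1$ to get a high-probability bound in the constant-$d$ regime, which is indeed the only regime in which the lemma is invoked here.

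One small remark: the lemma as stated does not exclude $d$ growing with $n$, and your McDiarmid tail $\exp\!\bigl(-\Omega(n\,2^{-2cd}/d^2)\bigr)$ ceases to be $o(1)$ once $d\gtrsim \tfrac{1}{2c}\log_2 n$. In that regime one can instead apply Markov's inequality with a halved exponent, $\Pr\bigl[|SB(G)|\ge n\,2^{-cd/2}\bigr]\le 2^{-cd/2}=o(1)$, so the full statement follows by combining the two cases. This is a bookkeeping point and does not affect the paper's application.
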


\begin{lemma}[Adaptation of Lemma B.19 of \cite{MR3536556-David16}]\label[lemma]{lemma: highmindegreesubgraph}
    Let $0<\alpha, c<1$ be parameters. Let $H$ be a $d$-regular $n$-vertex graph with $\lambda_2(H)\leq c$. For any $S\subseteq V$ with $\Abs{S}<\alpha n$, there exists $R\subseteq V\setminus S$ such that $\Abs{R}\geq \Paren{1-2\alpha}n$ and the minimum degree in $H_R$ is at least $(1-2\alpha-c)d$.
\end{lemma}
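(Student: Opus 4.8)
The plan is to build $R$ by an iterative peeling process and to control the number of deleted vertices using the one-sided expander mixing lemma, \cref{lemma: emlonesided}. Concretely, I would initialize $R_0 = V\setminus S$ and, as long as the current set $R_i$ contains a vertex $v$ with fewer than $(1-2\alpha-c)d$ neighbours inside $R_i$, set $R_{i+1} = R_i\setminus\{v\}$; let $R$ be the set at which this halts and $T = (V\setminus S)\setminus R$ the set of removed vertices. By construction every vertex of $R$ has at least $(1-2\alpha-c)d$ neighbours in $R$, so the minimum degree of $H_R$ is as claimed, and it only remains to show $|T| < \alpha n$ — which immediately gives $|R| = n - |S| - |T| > n - \alpha n - \alpha n = (1-2\alpha)n$.

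To bound $|T|$ I would argue by contradiction. Suppose $|T|\geq\alpha n$ and let $T' = \{v_1,\dots,v_m\}$ with $m = \lceil\alpha n\rceil$ be the first $m$ vertices removed, in removal order. Since total degrees are $d$ and each $v_i$ had fewer than $(1-2\alpha-c)d$ neighbours in its then-current set, $v_i$ has strictly more than $(2\alpha+c)d$ neighbours in $S\cup\{v_1,\dots,v_{i-1}\}$. Summing over $i$ and counting each edge inside $T'$ exactly once — the ``backward'' count, as opposed to the symmetric degree sum, which is what lets the single-set mixing lemma suffice and avoids a factor-two loss — gives $e_H(T',S) + e_H(T',T') > m(2\alpha+c)d \geq \alpha n(2\alpha+c)d$. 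On the other hand $e_H(T',S) + e_H(T',T')\leq|E(H_{S\cup T'})|$, and since $|S\cup T'| = |S| + m < \alpha n + \alpha n = 2\alpha n$ (using $|S| < \alpha n$), \cref{lemma: emlonesided} applied to the set $S\cup T'$ bounds $|E(H_{S\cup T'})|$ by $\tfrac{d|S\cup T'|}{2}\Paren{\tfrac{|S\cup T'|}{n}+c} < \tfrac{d\cdot 2\alpha n}{2}(2\alpha+c) = \alpha n(2\alpha+c)d$, using that $w\mapsto \tfrac{dw}{2}(\tfrac wn + c)$ is increasing. The two bounds contradict one another, so $|T| < \alpha n$.

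The main point to be careful about — essentially the only obstacle — is that David and Feige's original lemma uses the two-sided expander mixing lemma for a pair of disjoint sets, which is unavailable under one-sided expansion; so the edge bookkeeping must be arranged so that the weaker statement \cref{lemma: emlonesided} (only the $W=W$ case) still closes the inequality, and the constant $(1-2\alpha-c)d$ has to be exactly this so that the snapshot $S\cup T'$ has size just below $2\alpha n$ and the mixing bound undercuts the peeling lower bound. Integrality of $\alpha n$ is handled in the standard way by replacing $\alpha$ with $\lfloor\alpha n\rfloor/n\leq\alpha$; the strict inequality $|S|<\alpha n$ supplies the slack this needs.
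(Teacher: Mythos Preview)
Your proposal is correct and follows essentially the same approach as the paper: peel vertices with too many neighbours in the complement, then apply the one-sided mixing lemma (\cref{lemma: emlonesided}) to the union $S\cup T'$ to cap the number of peeled vertices. The only difference is that the paper takes the snapshot after exactly $|S|$ removals rather than $\lceil\alpha n\rceil$; with that choice $|S\cup T'| = 2|S| < 2\alpha n$ holds directly and the integrality discussion you flag at the end becomes unnecessary.
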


\begin{proof}
    Let $\e:=2\alpha + c$. Consider the following iterative procedure. Repeatedly remove a vertex $v_t$ from $V\setminus \Paren{S \cup \bigcup_{i=1}^{t-1} v_i}$ if $v_t$ has more than $\e d$ neighbors in $S_t := S \cup \bigcup_{i=1}^{t-1} v_i$. Assume towards a contradiction that this process stops after more than $t:=\Abs{S}$ steps. Then, by using the vertices in $S_t\setminus S$, we can lower bound the number of edges in $H_{S_t}$:

    $$e(H_{S_t})\geq \Abs{S_t\setminus S} \e d= \frac{1}{2}\Abs{S_t}\e d\,.$$

    On the other hand, by \cref{lemma: emlonesided},

    $$e(H_{S_t})\leq \frac{\Abs{S_t}d}{2}\Paren{\frac{\Abs{S_t}}{n} + c}\,.$$

    Comparing the two and using $\Abs{S}=\frac{1}{2}\Abs{S_t}$ we get

    $$\Abs{S}\geq\frac{1}{2}\Paren{\e-c}n=\alpha n$$

    which contradicts the upper bound assumption on $\Abs{S}$. Hence, letting $R:=V\setminus S_t$ it must hold that $\Abs{R}\geq n-2\Abs{S}\geq \Paren{1-2\alpha}n$ and the minimum degree in $H_R$ is at least $(1-\e)d=(1-2\alpha-c)d$ by construction.
\end{proof}

Now we are ready to prove \cref{lemma: approximatetopartial}.

\begin{proof}[Proof of \cref{lemma: approximatetopartial}]
    Let $c:=\frac{1}{9k}$. When $d$ is at least a large enough constant, $2^{-\Omega(d)}\leq \alpha$ holds. Let $B:=\Set{x\in V\mid \chi(x)\neq \hat{\chi}(x)}$ denote the set of vertices wrongly colored by $\hat{\chi}$. Note that $\Abs{B}\leq \alpha n$ by assumption, and $\Abs{SB(G)}\leq 2^{-\Omega(d)}\leq \alpha n$ by \cref{lemma: sbupperbound}. Hence, applying \cref{lemma: highmindegreesubgraph} on $H$ to $B\cup SB$ we get a set $R$ with $|R|=(1-4\alpha)n$ such that the minimum degree in $H_R$ is at least $\Paren{1-4\alpha-c}d$. Using $R\cap SB=\emptyset$, for any $x\in R$ and $c\neq \chi(x)$ we get that $x$ has degree at least $\Paren{\frac{1}{2k}-4\alpha-c}d\geq \frac{1}{6k}d$ into $\Set{y\in R\mid \chi(y)=c}$, that is, into the vertices in $R$ colored with $c$. It follows by induction that no vertices in $R$ will be uncolored at any step of \cref{alg: uncoloring}. Hence, the coloring returned has size at least $|R|=(1-4\alpha)n$. It only remains to show that the coloring returned is proper. In fact, we will show that all vertices in $B$ will get uncolored, from which the statement follows.

    For the sake of contradiction, assume there exist vertices $B_2\subseteq B$ that remain colored by the end of the process. Let $x\in B_2$ be any such vertex. As $x$ has never been uncolored, at the end of the process it has at least $\frac{1}{6k}d$ neighbors $y$ with $\hat{\chi}(y)=c\neq \hat{\chi}(x)$. In particular, choosing $c=\chi(x)$, let $Q$ denote those neighbors $y$ of $x$ with $\hat{\chi}(y)=\chi(x)$. As $x$ and $y$ are neighbors in $G$, we have $\chi(x)\neq \chi(y)$, so $\hat{\chi}(y)\neq \chi(y)$, which implies $Q\subseteq B_2$. Summarizing, any vertex in $G_{B_2}$ has degree at least $\frac{1}{3k}d$. Therefore, $$e(G_{B_2})\geq \frac{\frac{1}{6k}d|B_2|}{2}= \frac{d|B_2|}{12k}\,.$$

    On the other hand, by \cref{lemma: emlonesided}, $$e(G_{B_2})\leq e(H_{B_2})\leq \frac{d\Abs{B_2}}{2}\Paren{\frac{\Abs{B_2}}{n}+c}\,.$$
    
    Using $\Abs{B_2}>0$ and the bounds on $\alpha$ and $c$, we can combine and simplify the above two inequalities to get $$\Abs{B_2}\geq \Paren{\frac{1}{6k}-c}n> \alpha n\,.$$

    However, as $\Abs{B_2}\leq \Abs{B}\leq \alpha n$ by assumption, we get a contradiction, so $\Abs{B_2}=0$ finishing the proof.
\end{proof}

Next we need an algorithm that recolors some of the uncolored vertices in a way that guarantees that the remaining uncolored vertices split up into small components.

\begin{algorithm}[H]
    As long as there is an uncolored vertex $x$ that has neighbors in every color except for some color $c$, color $x$ using $c$. 
    \protect\caption{\label{alg: saferecoloring} Safe recoloring algorithm}
\end{algorithm}

\begin{lemma}[Adaptation of Lemma B.21 of \cite{MR3536556-David16}]\label[lemma]{lemma: partialtologncomps}
    Let $H$ be a $d$-regular $n$-vertex  graph with $\lambda_2(H)\leq \frac{1}{16k^2}$. 
    Let $G$ be the graph obtained from $H$ by randomly planting a \kcl $\chi$ in it. Starting from a $\Paren{k, \frac{1}{256k^4}}$-partial coloring $\hat{\chi}$ of $G$, let $U$ be the set of vertices left uncolored by \cref{alg: saferecoloring}. Then, \whp each component in $G_U$ has size at most $\ln{n}$.
\end{lemma}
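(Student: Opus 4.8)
The plan is to follow the proof of Lemma~B.21 in \cite{MR3536556-David16}, replacing the two uses of two-sided expansion there by the one-sided expander mixing lemma \cref{lemma: emlonesided} and the one-sided vertex-expansion bound \cref{lemma: ssveonesided}. First I would extract the structural consequence of \cref{alg: saferecoloring} having terminated: for every vertex $x$ still uncolored, the colored neighbors of $x$ use at most $k-2$ colors. Indeed, the colored vertices are colored according to the planted coloring $\chi$, so they all avoid color $\chi(x)$ (leaving at most $k-1$ colors), and if exactly $k-1$ colors appeared then the algorithm would have colored $x$. Hence there is a color $c \neq \chi(x)$ missing among $x$'s colored neighbors, which forces \emph{every} neighbor $y$ of $x$ with $\chi(y)=c$ to be uncolored, i.e.\ to lie in $U$ — and, being adjacent to $x$ in $G$, to lie in the same connected component of $G_U$. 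Calling $x$ \emph{statistically bad} ($x \in SB$, as in \cref{def: statisticallybad}) if it has fewer than $\tfrac{d}{2k}$ neighbors of some color $c\neq\chi(x)$, we conclude that every $x\in U\setminus SB$ has at least $\tfrac{d}{2k}$ neighbors inside its own component of $G_U$. Since \cref{alg: saferecoloring} only ever \emph{colors} vertices, $U$ is contained in the free set of the starting $(k,\tfrac1{256k^4})$-partial coloring, so $|U|\le \tfrac{n}{256k^4}$.

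Next I would fix a connected component $C$ of $G_U$ and combine the above with edge counting. On one hand, \cref{lemma: emlonesided} applied to $H$ (using $e(G_C)\le e(H_C)$, $|C|\le |U|\le \tfrac{n}{256k^4}$, and $\lambda_2(\tfrac1d A_H)\le \tfrac1{16k^2}$) gives $e(G_C)\le \tfrac{d|C|}{2}\bigl(\tfrac{|C|}{n}+\tfrac1{16k^2}\bigr)\le \tfrac{d|C|}{8k^2}$. On the other hand, summing degrees over the statistically-good vertices of $C$ gives $2\,e(G_C)\ge \tfrac{d}{2k}\,|C\setminus SB|$. Together these force $|C\setminus SB|\le \tfrac{|C|}{4k}$: all but a $\tfrac1{4k}$-fraction of any component of $G_U$ consists of statistically bad vertices.

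It then remains to show that, with high probability over the random planting, no connected subset $C$ of $H$ with $\ln n < |C|\le \tfrac{n}{256k^4}$ can have $|C\setminus SB|\le \tfrac1{4k}|C|$; this is the technical heart and the step I expect to be the main obstacle. The easy part is that $\Pr[x\in SB]\le 2^{-\Omega(d/k)}$ by a Chernoff bound on the colors of $x$'s $d$ neighbors together with a union bound over colors, and that $|SB|\le 2^{-\Omega(d)}n$ with high probability (\cref{lemma: sbupperbound}), so taking $d$ large compared to $k$ makes $|SB|$ an arbitrarily small fraction of $n$. The delicate point is that $SB$ can nonetheless \emph{cluster}, so one must rule out that it contains a $(1-\tfrac1{4k})$-dense subset of a large connected set: for a fixed connected set one bounds the probability that almost all of it is bad by passing to a pairwise-far subset whose statistical-badness events become independent, and then union-bounds over connected sets; where this estimate degrades one brings in the one-sided vertex expansion \cref{lemma: ssveonesided} (and \cref{lemma: highmindegreesubgraph}) applied to the component $C$ — as \cite{MR3536556-David16} do in their Lemma~B.21, and which is the reason the strengthened hypothesis $\lambda_2(\tfrac1d A_H)\le \tfrac1{16k^2}$ is imposed — to contradict the smallness of $|U|$ and the color constraints that uncolorable vertices place on their colored neighborhoods. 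Granting this, a component of $G_U$ of size larger than $\ln n$ would contradict the second paragraph, so all components of $G_U$ have size at most $\ln n$.
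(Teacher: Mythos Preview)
Your first two paragraphs are correct and yield the clean structural fact that every connected component $C$ of $G_U$ satisfies $|C\setminus SB|\le |C|/(4k)$; this is a pleasant deduction that the paper does not make.

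The gap is your third paragraph. The far-subset scheme cannot close the union bound. There are on the order of $n(ed)^t$ many $H$-connected sets of size $t$, whereas a maximal distance-$3$ subset of $C$ (or of $C\cap SB$) has only $\sim t/d^2$ vertices, so even after union-bounding over which far subset lands inside $SB$ you obtain a probability estimate of the form $\exp\bigl(-\Omega(t/(kd))\bigr)$. For $d$ a large constant this is hopelessly weak against $(ed)^t$, and neither \cref{lemma: ssveonesided} nor \cref{lemma: highmindegreesubgraph} helps here, since the randomness you are exploiting lives entirely \emph{inside} $C$ and is insensitive to the size of its boundary.

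The paper takes a different route and does not pass through $SB$. It union-bounds over $H$-connected sets $C$ with $D:=N_H(C)\setminus C$ entirely colored (any large $G_U$-component is contained in such a $C$, namely the $H$-connected component of $U$ containing it). Partitioning $D$ into disjoint pieces $D_v\subseteq N_H(v)$ over $v\in C$, termination of \cref{alg: saferecoloring} forces the colors on each $D_v$ to avoid some $c_v\ne\chi(v)$; independence across the disjoint $D_v$ gives $\Pr[C\subseteq U,\,D\subseteq V\setminus U]\le (k-1)^{|C|}\bigl(\tfrac{k-1}{k}\bigr)^{|D|}$. The crucial point is that the decay is in $|D|$, and one-sided vertex expansion (\cref{lemma: ssveonesided}) supplies $|D|\ge 7k^2|C|$; combined with the count $\le n\binom{|C|+|D|}{|C|}$ for such sets, the union bound for $|C|\ge\ln n$ goes through. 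Your phrase ``color constraints that uncolorable vertices place on their colored neighborhoods'' is precisely this mechanism --- but it is the entire argument, not a fallback, and it must be applied to the boundary $D$ rather than to $C$ itself.
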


\begin{proof}
    The proof closely follows that of Lemma B.21 of \cite{MR3536556-David16}, so we only provide the parts that are changed. Let us first prove a helper lemma similar to Claim B.22 of \cite{MR3536556-David16}:
    \begin{lemma}\label[lemma]{lemma: largecoloredneighborhoodprob}
        Let $C$ be any set of vertices of size $t_1$, let $D:=N_{H}\Paren{C}\setminus C$
        and let $t_2:=\Abs{D}$. If $t_2\ge 2k^2t_1$ then $C\subseteq U$ and $D\subseteq V\setminus U$ with probability at most $e^{-\frac{1}{2k}t_2}$.
    \end{lemma}
    \begin{proof}
        Partition the vertices of $D$ into disjoint subsets $\Set{D_{v}}$, where $v\in C$ and $D_{v}\subseteq N_{H}\Paren{v}$. Consider some $D_{v}$ and assume that $v$ is colored by $1$. If both $v\in U$ and $D_{v}\subseteq V\setminus U$ then the set of colors that vertices in $D_{v}$ are using must be contained in exactly one
        of the sets $[k]\setminus {c}$ for some $c\neq 1$ (otherwise \cref{alg: saferecoloring} would color $v$).
        Therefore the probability that both $v\in U$ and $D_{v}\subseteq V\setminus U$
        is at most $(k-1)\Paren{\frac{k-1}{k}}^{\Abs{D_{v}}}$. Hence, the probability that both $C$ is in $U$ and $D\subseteq V\setminus U$ is at most
        \begin{align*}
            \prod_{D_{v}\in\Set{ D_{v}} }(k-1)\Paren{\frac{k-1}{k}}^{\Abs{D_{v}}} & =(k-1)^{\Abs{\Set{ D_{v}} }}\Paren{\frac{k-1}{k}}^{t_2}\\
            & \le(k-1)^{t_1}\Paren{\frac{k-1}{k}}^{t_2}\\
            & =e^{(t_1+t_2)\ln(k-1)-t_2\ln{k}}\\
            & \le e^{-\frac{1}{2k}t_2}
        \end{align*}
        using $t_2\ge 2k^2t_1$ and $\ln{k}-\ln(k-1)\geq \frac1{k}$ on the last line.
    \end{proof}

    Letting $N_{H}\Paren{n,t_1,t_2}$ denote the number of connected components of size $t_1$ with a neighborhood of size exactly $t_2$ and letting $P_{t_1,t_2}:=\max_{C\subset H,\Abs{C}=t_1,\Abs{N\Paren{C}}=t_2}\Pr\Brac{\text{\ensuremath{C}\,\ is in\,\ensuremath{U}}}$, we get by union bound that the probability that there exists a connected component in $G_{U}$ of size $\Omega(\ln n)$ is at most
    
    \begin{align}
    \sum_{t_1=\ln n\,}^{\Abs{U}}\sum_{t_2=1}^{d t_1}N_{H}\Paren{n,t_1,t_2}P_{t_1,t_2} & =\sum_{t_1=\ln n\,}^{\Abs{U}}\sum_{t_2= 7k^2t_1 }^{d t_1}N_{G}\Paren{n,t_1,t_2}e^{-\frac{1}{2k} t_2}\label{eq:alpha1}\\
    & \le\sum_{t_1=\ln n\,}^{\Abs{U}}\sum_{t_2= 7k^2t_1 }^{d t_1}n\binom{t_1+t_2 }{ t_1}e^{-\frac{1}{2k} t_2}\label{eq:alpha2}\\
    & \le\sum_{t_1=\ln n\,}^{\Abs{U}}\sum_{t_2= 7k^2t_1 }^{d t_1}n\Paren{e\frac{t_1+t_2}{t_1}}^{t_1}e^{-\frac{1}{2k} t_2}\label{eq:alpha3}\\
    & \le\sum_{t_1=\ln n\,}^{\Abs{U}}ndt_1\max_{7k^2\le x\le d}e^{\Paren{1+\ln(x+1)-\frac{1}{2k} x}t_1}\nonumber \\
    & \le n^{4}\max_{7k^2\le x\le d}e^{\Paren{1+\ln(x+1)-\frac{1}{2k} x}\ln{n}}\nonumber\\
    & \le n^{-1}\,.\label{eq:alpha4}
    \end{align}

    Equality~\ref{eq:alpha1} follows by \cref{lemma: largecoloredneighborhoodprob} and that by \cref{lemma: ssveonesided}
    if $\lambda_2(H)\le \frac{1}{16k^2}$ then for a set of size $t_1\le\Abs{U}\le\frac{1}{256k^4}n$ it holds that its neighborhood is of size $t_2\ge7k^2t_1$. Inequality~\ref{eq:alpha2} follows by Lemma B.23 of \cite{MR3536556-David16}. Inequality~\ref{eq:alpha3} follows by the binomial inequality ${n \choose k}\le\Paren{e\frac{n}{k}}^{k}$. Inequality~\ref{eq:alpha4} follows by $\ln\Paren{7k^2+1}\leq \frac{7k}{2}-6$ which holds for $k\geq 3$. This completes the proof.
\end{proof}

We are now finally ready to prove \cref{thm: random-planting-full}:

\begin{proof}[Proof of \cref{thm: random-planting-full}]
    We can apply \cref{thm: random-planting-partial-list} to get a list $L$ of $O_{k}(1)$ $k$-colorings $\hat{\chi}: [n] \to [k]$ such that, for at least one of them, it holds up to a permutation of the color classes that $\mathbb{E}_{\bm{x} \sim \operatorname{Unif}([n])} \1[\hat\chi(\bm{x}) \neq \chi(\bm{x})] \leq O_k\Paren{1/d^{\Omega(1)}}$.

    Then, running \cref{alg: uncoloring} on all colorings in $L$ returns a list of colorings $L'$. For $d$ at least a large enough constant compared to $k$, \cref{lemma: approximatetopartial} guarantees that the coloring $\hat{\chi}$ in $L$ with $\mathbb{E}_{\bm{x} \sim \operatorname{Unif}([n])} \1[\hat\chi(\bm{x}) \neq \chi(\bm{x})] \leq O_k\Paren{1/d^{\Omega(1)}}$ turns into a $\Paren{k, O_k\Paren{1/d^{\Omega(1)}}}$-partial coloring in $L'$. 
    
    Then, running \cref{alg: saferecoloring} on all colorings in $L'$ returns a list of colorings $L''$. For $d$ at least a large enough constant compared to $k$, \cref{lemma: partialtologncomps} guarantees that the $\Paren{k, O_k\Paren{1/d^{\Omega(1)}}}$-partial coloring in $L'$ turns into a coloring in $L''$ such that all the remaining still uncolored vertices are in connected components of size $O(\ln{n})$.

    Hence, we can try a brute force search on each component of those colorings in $L''$ that have component sizes $O(\ln{n})$. By the above, we are guaranteed to find a full coloring for at least one of them. This finishes the proof.
\end{proof}

\end{document}